\let\orgsetcounter\setcounter 
\newcommand{\MMM}{\mathbb{M}}
\newcommand{\LLL}{\mathbb{L}}
\newcommand{\VV}{\mathsf{V}}
\newcommand{\PP}{\mathsf{P}}
\newcommand{\AAA}{\mathsf{A}}
\newcommand{\SSS}{\mathsf{S}}
\newcommand{\XX}{\mathsf{L}}
\newcommand{\ccompa}[1]{\stackrel{\mbox{\scriptsize $\alpha_{#1}$}}{\rightarrow}}
\newcommand{\wkth}[1]{\mbox{\scriptsize $\stackrel{\mbox{\scriptsize $#1$}}{\subseteq}$}}
\newcommand{\absstcslcg}{\emph{abstract static slicing}}
\newcommand{\absdynslcg}{\emph{abstract dynamic slicing}}
\newcommand{\absconslcg}{\emph{abstract conditioned slicing}}
\newcommand{\stm}{\mbox{\sl Stm}}
\newcommand{\caX}{\cX}
\newenvironment{nomedef}
{\rm\bfseries}
{}
\newcommand{\defthename}[1]{\begin{nomedef}(#1)\quad\!\!\!\end{nomedef}}
\lstdefinelanguage{pseudocode}{
  morekeywords={data,do,else,for,to,downto,let,call,foreach,function,procedure,if,return,returns,takes,then,until,while,repeat,read,skip},
  sensitive=false,
  morecomment=[l]{//},
  morestring=[b]'',}
\def\prog{\ensuremath{P}\xspace}
\def\progq{\ensuremath{Q}\xspace}
\def\progr{\ensuremath{R}\xspace}
\def\progss{\ensuremath{S}\xspace}
\def\progs{\ensuremath{\mathbb{P}}\xspace}
\def\state{\ensuremath{\sigma}\xspace}
\def\states{\ensuremath{\Sigma}\xspace}
\def\istates{\ensuremath{\Sigma_\iota}\xspace}
\def\trace{\ensuremath{\tau}\xspace}
\def\traces{\ensuremath{\mathbb{T}}\xspace}
\def\atraces{\ensuremath{\mathbb{T^\rho}}\xspace}
\def\variables{\ensuremath{\mathbb{X}}\xspace}
\def\values{\ensuremath{\mathbb{V}}\xspace}
\def\zvalues{\ensuremath{\mathbb{Z}}\xspace}
\def\rvalues{\ensuremath{\mathbb{R}}\xspace}
\def\memory{\ensuremath{\mu}\xspace}
\def\abmemory{\ensuremath{\mu^\rho}\xspace}
\def\memories{\ensuremath{\mathbb{M}}\xspace}
\def\store{\ensuremath{\varepsilon}\xspace}
\def\astore{\ensuremath{\varepsilon^\rho}\xspace}
\def\heap{\ensuremath{h}\xspace}
\newcommand{\BIND}[2]{#1\!\leftarrow\!#2}
\def\lnums{\ensuremath{\mathbb{L}}\xspace}
\def\crit{\ensuremath{C}\xspace}
\def\uco{\ensuremath{\rho}\xspace}
\def\ucos{\ensuremath{\mbox{\sl uco}}\xspace}
\def\avalue{\ensuremath{\mbox{{\sc{v}}}}\xspace}
\def\avaluee{\ensuremath{\mbox{{\sc{u}}}}\xspace}
\def\astate{{\sigma^\rho}\xspace}
\newcommand{\astatei}[1]{{\sigma^\rho_{#1}}\xspace}
\def\astates{{\Sigma^\rho}\xspace}
\def\iastates{{\Sigma_\iota^\rho}\xspace}
\def\KL{\ensuremath{\mathtt{KL}}\xspace}
\def\KLi{\ensuremath{\mathtt{KL}i}\xspace}
\def\IC{\ensuremath{\mathtt{IC}}\xspace}
\def\SIM{\ensuremath{\mathtt{SIM}}\xspace}
\def\ABSMEMop{\upharpoonright^\alpha}
\newcommand{\ABSMEM}[3]{#1 \upharpoonright^\alpha_{#3} #2}
\newcommand{\MEM}[2]{#1 \upharpoonright #2}
\def\Proj{\ensuremath{\mathit{Proj}}\xspace}
\def\UEA{\ensuremath{\mathcal{U}^{\mathcal{A}}}}
\newcommand{\ABSEQUALSPY}[4]{#3\left(#1\right) =_{y} #3\left(#2\right)}
\def\ASG{\ensuremath{\mathit{ASG}}\xspace}
\def\SA{\ensuremath{\mathit{SA}}\xspace}
\def\EA{\ensuremath{\mathit{EA}}\xspace}
\newcommand{\li}{\ar@{-}} 
\newcommand{\lp}{\ar@{.}} 
\newcommand{\fp}{\ar@{.>}}
\newcounter{todos}
\newcommand{\irule}[2]{\frac{\textstyle\rule[-
1.3ex]{0cm}{3ex}#1}{\textstyle\rule[-.5ex]{0cm}{3ex}#2}}
\newcommand{\val}{\mathbb{V}}
\newcommand*{\disj}[1]   {{\bigcurlyvee_{}^{}\left(#1\right)}}
\newcommand{\iid}{\mbox{\sl\tiny id}}
\newcommand{\Var}{\mathbb{X}}
\newcommand{\ra}{\rightarrow}
\newcommand{\la}{\leftarrow}
\newcommand{\Ra}{\Rightarrow}
\newcommand{\Lra}{\Leftrightarrow}
\newcommand{\lra}{\longrightarrow}
\newcommand{\ov}{\overline}
\def\defi{\mbox{\raisebox{0ex}[1ex][1ex]{$\stackrel{\mbox{\tiny
def}}{\; =\;}$}}}
\def\ok#1{\mbox{\raisebox{0ex}[1ex][1ex]{$#1$}}}
\newcommand{\COMMENT} [1]{}
\def\etal{{\it et al.\ }}
\def\comp{\mathrel{\hbox{\footnotesize${}\!{\circ}\!{}$\normalsize}}}
\def\rarr#1{\mbox{\raisebox{0ex}[1ex][1ex]{$
  \mathrel{\mathop{
\hspace*{1pt}\longrightarrow\hspace*{1pt}}\limits^{\,_{\mbox{\tiny 
\hspace*{-2.2pt}#1}}}}$}}}
\def\defemb#1#2{\expandafter\def\csname #1\endcsname
                              {\relax\ifmmode #2\else\hbox{$#2$}\fi}}
\def\2c-math#1#2{{\par\medskip\noindent ${#1}$
                      \par\smallskip
                        \noindent\hspace*{\fill} ${#2}$}
                           \\[10pt]}
\def\cal{\mathcal}			   
\newcommand{\sset}[2]{\left\{~#1 ~ \left |
                               \begin{array}{l}#2\end{array}
                          \right.     \right\}}
\newcommand{\ifc}{\mbox{\bf if}}
\newcommand{\thenc}{\mbox{\bf then}}
\newcommand{\elsec}{\mbox{\bf else}}
\newcommand{\nil}{\mbox{\bf null}}
\newcommand{\false}{\mbox{\sl false}}
\newcommand{\true}{\mbox{\sl true}}
\newcommand{\while}{\mbox{\bf while}}
\newcommand{\dow}{\mbox{\bf do}}
\def\tuple#1{\langle #1 \rangle}
\newcommand{\lgfp}[3]{\mbox{\sl #1}^{\mbox{\tiny {$#2$}}}_{\mbox{\tiny
{$#3$}}}}
\newcommand{\NANI}[3]{\left[#1\right]#2\left(#3\right)}
\newcommand{\f}{\ar@{->}}
\newcommand{\relc}[1]{\mbox{\sl\small Rel}^{#1}}
\newcommand{\clor}[1]{\mbox{\sl\small Clo}^{#1}}
\DeclareSymbolFont{italics}{OT1}{cmr}{m}{it}
\DeclareMathSymbol{a}{\mathalpha}{italics}{"61}
\DeclareMathSymbol{b}{\mathalpha}{italics}{"62}
\DeclareMathSymbol{c}{\mathalpha}{italics}{"63}
\DeclareMathSymbol{d}{\mathalpha}{italics}{"64}
\DeclareMathSymbol{e}{\mathalpha}{italics}{"65}
\DeclareMathSymbol{f}{\mathalpha}{italics}{"66}
\DeclareMathSymbol{g}{\mathalpha}{italics}{"67}
\DeclareMathSymbol{h}{\mathalpha}{italics}{"68}
\DeclareMathSymbol{i}{\mathalpha}{italics}{"69}
\DeclareMathSymbol{j}{\mathalpha}{italics}{"6A}
\DeclareMathSymbol{k}{\mathalpha}{italics}{"6B}
\DeclareMathSymbol{l}{\mathalpha}{italics}{"6C}
\DeclareMathSymbol{m}{\mathalpha}{italics}{"6D}
\DeclareMathSymbol{n}{\mathalpha}{italics}{"6E}
\DeclareMathSymbol{o}{\mathalpha}{italics}{"6F}
\DeclareMathSymbol{p}{\mathalpha}{italics}{"70}
\DeclareMathSymbol{q}{\mathalpha}{italics}{"71}
\DeclareMathSymbol{r}{\mathalpha}{italics}{"72}
\DeclareMathSymbol{s}{\mathalpha}{italics}{"73}
\DeclareMathSymbol{t}{\mathalpha}{italics}{"74}
\DeclareMathSymbol{u}{\mathalpha}{italics}{"75}
\DeclareMathSymbol{v}{\mathalpha}{italics}{"76}
\DeclareMathSymbol{w}{\mathalpha}{italics}{"77}
\DeclareMathSymbol{x}{\mathalpha}{italics}{"78}
\DeclareMathSymbol{y}{\mathalpha}{italics}{"79}
\DeclareMathSymbol{z}{\mathalpha}{italics}{"7A}
\DeclareMathSymbol{A}{\mathalpha}{italics}{"41}
\DeclareMathSymbol{B}{\mathalpha}{italics}{"42}
\DeclareMathSymbol{C}{\mathalpha}{italics}{"43}
\DeclareMathSymbol{D}{\mathalpha}{italics}{"44}
\DeclareMathSymbol{E}{\mathalpha}{italics}{"45}
\DeclareMathSymbol{F}{\mathalpha}{italics}{"46}
\DeclareMathSymbol{G}{\mathalpha}{italics}{"47}
\DeclareMathSymbol{H}{\mathalpha}{italics}{"48}
\DeclareMathSymbol{I}{\mathalpha}{italics}{"49}
\DeclareMathSymbol{J}{\mathalpha}{italics}{"4A}
\DeclareMathSymbol{K}{\mathalpha}{italics}{"4B}
\DeclareMathSymbol{L}{\mathalpha}{italics}{"4C}
\DeclareMathSymbol{M}{\mathalpha}{italics}{"4D}
\DeclareMathSymbol{N}{\mathalpha}{italics}{"4E}
\DeclareMathSymbol{O}{\mathalpha}{italics}{"4F}
\DeclareMathSymbol{P}{\mathalpha}{italics}{"50}
\DeclareMathSymbol{Q}{\mathalpha}{italics}{"51}
\DeclareMathSymbol{R}{\mathalpha}{italics}{"52}
\DeclareMathSymbol{S}{\mathalpha}{italics}{"53}
\DeclareMathSymbol{T}{\mathalpha}{italics}{"54}
\DeclareMathSymbol{U}{\mathalpha}{italics}{"55}
\DeclareMathSymbol{V}{\mathalpha}{italics}{"56}
\DeclareMathSymbol{W}{\mathalpha}{italics}{"57}
\DeclareMathSymbol{X}{\mathalpha}{italics}{"58}
\DeclareMathSymbol{Y}{\mathalpha}{italics}{"59}
\DeclareMathSymbol{Z}{\mathalpha}{italics}{"5A}
\newtheorem{mydefinition}[theorem]{Definition}
\newcommand{\tM}{\mathbb{M}}
\def\defemb#1#2{\expandafter\def\csname #1\endcsname{\relax\ifmmode #2 
\else\hbox{$#2$}\fi}}
\def\ok#1{\mbox{\raisebox{0ex}[1ex][1ex]{$#1$}}}
\newcommand{\UNARYFUNCTION}[2]{#1\ifthenelse{\equal{#2}{}}{}{\left(#2\right)}}
\newcommand{\BINARYFUNCTION}[3]{#1\ifthenelse{\not\equal{#2}{}}{\left(#2\ifthenelse{\equal{#3}{}}{}{,#3}\right)}{}}
\newcommand{\TERNARYFUNCTION}[4]{#1\ifthenelse{\not\equal{#2}{}}{\left(#2\ifthenelse{\equal{#3}{}}{}{,#3}\ifthenelse{\equal{#4}{}}{}{,#4}\right)}{}}
\newcommand{\FOURARYFUNCTION}[5]{#1
\ifthenelse{\equal{#2}{}}{}{\left(#2\ifthenelse{\equal{#3}{}}{}{,#3}\ifthenelse{\equal{#4}{}}{}{,#4}\ifthenelse{\equal{#5}{}}{}{,#5}\right)}}
\newcommand{\UNARYFUNCTIONWITHSUBSCRIPT}[3]{#1\ifthenelse{\equal{#2}{}}{}{_{#2}}\ifthenelse{\equal{#3}{}}{}{\left(#3\right)}}
\newcommand{\BINARYFUNCTIONWITHSUBSCRIPT}[4]{#1\ifthenelse{\equal{#2}{}}{}{_{#2}}\ifthenelse{\equal{#3}{}}{}{\left(#3,#4\right)}}
\newcommand{\BINARYINFIXFUNCTION}[3]{\ifthenelse{\equal{#2}{}}{}{#2} #1 \ifthenelse{\equal{#3}{}}{}{#3}}
\newcommand{\CODE}[1]{\ensuremath{\mbox{\lstinline!#1!}\xspace}\xspace}
\def\xx{\CODE{x}}
\def\yy{\CODE{y}}
\def\zz{\CODE{z}}
\def\ii{\CODE{i}}
\def\jj{\CODE{j}}
\def\nn{\CODE{n}}
\def\ss{\CODE{s}}
\def\ww{\CODE{w}}
\newcommand{\UPDATE}[3]{#1\left[#2 \leftarrow #3\right]}
\newcommand{\VARS}[1]{\UNARYFUNCTION{\textsc{vars}}{#1}}
\newcommand{\SEMANTICS}[1]{\left\llbracket #1 \right\rrbracket}
\def\EQDEF{\mbox{\raisebox{0ex}[1ex][1ex]{$\stackrel{\mbox{\tiny def}}{\; =\;}$}}}
\newcommand{\PROOFSTEP}[1]{\ifthenelse{\equal{#1}{}}{}{[\mbox{#1}]}}
\def\INTEGERS{\mathbb{Z}}
\def\NATURALS{\mathbb{N}}
\def\IMPEXP{{\mbox{\sc Exp}}}
\def\exp{e}
\newcommand{\IMPASSIGN}[2]{#1\ \CODE{:=}\ #2}
\newcommand{\IMPIF}[3]{\CODE{if (}#1\CODE{)}~#2~\CODE{else}~#3}
\newcommand{\IMPWHILE}[2]{\CODE{while (}#1\CODE{)}~#2}
\def\IDDOM{\ensuremath{\uco_{\textsc{id}}}\xspace}
\def\PARDOM{\ensuremath{\uco_{\textsc{par}}}\xspace}
\def\INTDOM{\ensuremath{\uco_{\textsc{int}}}\xspace}
\def\EVEN{\ABSVAL{even}}
\def\ODD{\ABSVAL{odd}}
\def\TOP{\ABSVAL{top}}
\def\BOT{\ABSVAL{bot}}
\def\SIGNDOM{\ensuremath{\uco_{\textsc{sign}}}\xspace}
\def\POS{\ABSVAL{pos}}
\def\NEG{\ABSVAL{neg}}
\def\ZERO{\ABSVAL{zero}}
\def\PARSIGNDOM{\ensuremath{\uco_{\textsc{parSign}}}\xspace}
\def\POSEVEN{\ABSVAL{poseven}}
\def\POSODD{\ABSVAL{posodd}}
\def\NEGEVEN{\ABSVAL{negeven}}
\def\NEGODD{\ABSVAL{negodd}}
\def\ZERODOM{\ensuremath{\uco_{\textsc{zero}}}\xspace}
\def\NULLDOM{\ensuremath{\uco_{\textsc{null}}}\xspace}
\def\NULL{\ABSVAL{null}}
\def\NONNULL{\ABSVAL{non{-}null}}
\def\CYCLEDOM{\ensuremath{\uco_{\textsc{cyc}}}\xspace}
\def\CYCLIC{\ABSVAL{cyc}}
\def\ACYCLIC{\ABSVAL{acyc}}
\def\TOPDOM{\ensuremath{\uco_{\top}}\xspace}
\newcommand{\ABSVAL}[1]{\left[\mathbf{#1}\right]}
\newcommand{\REFSET}[1]{{\bf ref}\left(#1\right)}
\newcommand{\DEFSET}[1]{{\bf def}\left(#1\right)}
\newcommand{\CDEPENDS}[3]{\BINARYINFIXFUNCTION{{\rightsquigarrow}_{\mbox{\tiny\sc $#1$}}}{#2}{#3}}
\newcommand{\NCDEPENDS}[3]{\BINARYINFIXFUNCTION{{\not\rightsquigarrow}_{\mbox{\tiny\sc $#1$}}}{#2}{#3}}
\newcommand{\CLOSETOARROW}[2]{%
  \mathrel{\vbox{\offinterlineskip\tabskip=0pt\halign{%
    \hfil##\hfil\cr
    {\tiny $#1,#2$}\cr
    \noalign{\kern-0.1ex}
    $\rightsquigarrow\!\rightsquigarrow$\cr
}}}}
\newcommand{\ANARROWDEPENDS}[5]{\ok{\BINARYINFIXFUNCTION{\CLOSETOARROWA{#5}{#2}{#3}}{#1}{#4}}}
\newcommand{\CLOSETOARROWA}[3]{%
  \mathrel{\vbox{\offinterlineskip\tabskip=0pt\halign{%
    \hfil##\hfil\cr
    {\tiny $#1#2,#3$}\cr
    \noalign{\kern-0.1ex}
    $\rightsquigarrow\!\rightsquigarrow\!\rightsquigarrow$\cr
}}}}
\newcommand{\ATOMDEPENDS}[5]{\ok{\BINARYINFIXFUNCTION{\CLOSETOARROWA{#5}{#2}{#3}_{\mbox{\tiny\sc at}}}{#1}{#4}}}
\newcommand{\RELEVANT}[1]{\UNARYFUNCTION{\textsc{rel}}{#1}}
\newcommand{\EVAL}[2]{\ok{\UNARYFUNCTION{\SEMANTICS{#1}}{#2}}}
\newcommand{\ABSEVAL}[2]{\ok{\UNARYFUNCTION{\SEMANTICS{#1}^\rho}{#2}}}
\newcommand{\AC}[1]{\UNARYFUNCTION{\mathbb{A}_e}{#1}}
\newcommand{\ACC}[2]{\BINARYFUNCTION{\mathbb{A}'_e}{#1}{#2}}
\newcommand{\ACu}[2]{\UNARYFUNCTION{\mathbb{A}^{#1}_e}{#2}}
\def\FINDNDEPS{\textsc{findNdeps}}
\def\EDEPALG{\textsc{Edep}}
\def\PROVE{\textsc{prove}}
\newcommand{\SUBS}[2]{\left[#1|#2\right]}
\newcommand{\ISATOM}[2]{\textsc{Atom}_{#2}\left(#1\right)}
\newcommand{\ATOMS}[1]{\UNARYFUNCTION{\textsc{Atoms}}{#1}}
\newcommand{\EDEP}[3]{\TERNARYFUNCTION{\textsc{Edep}}{#1}{#2}{#3}}
\newcommand{\ATOMIZE}[2]{\BINARYFUNCTION{\textsc{atomize}}{#1}{#2}}
\def\PRED{\beta}
\def\AGREEM{\mathcal{G}}
\newcommand{\AGRS}[2]{[#2{::}#1]}
\newcommand{\RULENAME}[1]{\textsc{#1}}
\newcommand{\GRULENAME}[1]{\textsc{g-#1}}
\newcommand{\GRULE}[3]{\frac{#1}{#2}\ \GRULENAME{#3}}
\newcommand{\PRULENAME}[1]{\textsc{pp-#1}}
\newcommand{\PRULE}[3]{\frac{#1}{#2}\ \PRULENAME{#3}}
\newcommand{\PSMTH}[1]{$\RULENAME{pp}$-#1}
\def\PSYSTEM{\PSMTH{system}\xspace}
\newcommand{\GSMTH}[1]{$\RULENAME{g}$-#1}
\def\GSYSTEM{\GSMTH{system}\xspace}
\def\GSOUNDNESS{\GSMTH{soundness}}
\newcommand{\FSEQ}[1]{\left\langle #1 \right\rangle}
\newcommand{\FSET}[1]{\mathit{fields}(#1)}
\newcommand{\SEQSHARE}[4]{\textsc{sh}_{seq}^{#3,#4}\left(#1,#2\right)}
\newcommand{\PRESERVES}[2]{\textsc{pp}\left(#2,#1\right)}
\newcommand{\PRESERVESB}[3]{\textsc{pp}^{#1}\left(#3,#2\right)}
\newcommand{\PRES}[2]{\vdash #1 : #2}
\newcommand{\TRIPLE}[3]{\left\{#1\right\}\ #2\ \left\{#3\right\}}
\newcommand{\TRIPLEB}[4]{\left\{#1\right\}^{#2}\ #3\ \left\{#4\right\}}
\newcommand{\SHARE}[1]{\textsc{sh}\left(#1\right)}
\newcommand{\DALIAS}[1]{\textsc{dal}\left(#1\right)}
\newcommand{\lcancel}[1]{%
    \tikz[baseline=(tocancel.base)]{
        \node[inner sep=0pt,outer sep=0pt] (tocancel) {#1};
        \draw[->] ($(tocancel)+(-5pt,5pt)$) -- ($(tocancel)+(5pt,-5pt)$);
    }%
}%
\newcommand{\rcancel}[1]{%
    \tikz[baseline=(tocancel.base)]{
        \node[inner sep=0pt,outer sep=0pt] (tocancel) {#1};
        \draw[->] ($(tocancel)+(5pt,5pt)$) -- ($(tocancel)+(-5pt,-5pt)$);
    }%
}%
\newcommand{\fpropositionl}[1]{\mbox{\lcancel{\ensuremath{#1}}}}
\newcommand{\fpropositionr}[1]{\mbox{\rcancel{\ensuremath{#1}}}}
\newcommand{\FSHARE}[2]{\textsc{sh}_{#1}\left(#2\right)}
\begin{document}

\pagestyle{headings}  

\markboth{I.~Mastroeni and D.~Zanardini}{Abstract Program Slicing: an Abstract Interpretation-based approach to Program Slicing}

\title{Abstract Program Slicing:\\ an Abstract Interpretation-based approach to\\ Program Slicing}

\author{
  ISABELLA MASTROENI \affil{Universit\`a di Verona, Italy}
  and DAMIANO ZANARDINI \affil{Technical University of Madrid (UPM),
    Spain}
  }

\begin{abstract}
  In the present paper we formally define the notion of \emph{abstract
    program slicing}, a general form of program slicing where
  properties of data are considered instead of their exact value.
  This approach is applied to a language with numeric and reference
  values, and relies on the notion of \emph{abstract dependencies}
  between program components (statements).
  
  The different forms of (backward) abstract slicing are added to an
  existing formal framework where traditional, non-abstract forms of
  slicing could be compared.  The extended framework allows us to
  appreciate that abstract slicing is a generalization of traditional
  slicing, since traditional slicing (dealing with syntactic
  dependencies) is generalized by (semantic) non-abstract forms of
  slicing, which are actually equivalent to an abstract form where the
  \emph{identity} abstraction is performed on data.
  
  Sound algorithms for computing abstract dependencies and a
  systematic characterization of program slices are provided, which
  rely on the notion of \emph{agreement} between program states.
\end{abstract}

\keywords{Program Slicing, Semantics, Static Program Analysis, Abstract
  Interpretation}

{\let\setcounter\orgsetcounter
\begin{bottomstuff}
  Authors' addresses: Isabella Mastroeni, Dipartimento di Informatica,
  Facolt\`a di Scienze, Universit\`a di Verona, Strada Le Grazie 15,
  37134 Verona, Italy; Damiano Zanardini, Departamento de Inteligencia
  Artificial, Escuela T\'ecnica Superior de Ingenieros Inform\'aticos,
  Campus de Montegancedo, Boadilla del Monte, 28660 Madrid, Spain.
\end{bottomstuff}
} 

\maketitle

\section{Introduction}
It is well-known that, as the size of programs increases, it becomes
impractical to maintain them as monolithic structures.  Indeed,
splitting programs into smaller pieces allows to construct, understand
and maintain large programs much more easily.  \emph{Program slicing}
\cite{BinGalla96,DeLucia,Tip95,weiser} is a program-manipulation
technique that extracts, from programs, those statements which are
\emph{relevant} to a particular computation.  In the most traditional
definition, a \emph{program slice} is an executable program whose
behavior must be identical to a specific subset of the original
program's behavior.  The specification of this subset is called the
\emph{slicing criterion}, and can be expressed as the value of some
set of variables at some set of statements and/or program points
\cite{weiser}.  Slicing\footnote{We use \emph{slicing} (\emph{slice})
  and \emph{program slicing} (\emph{program slice}) as interchangeable
  terms.} can be and is used in several areas like debugging
\cite{weiser}, software maintenance \cite{GL91ieee}, comprehension
\cite{Conditioned,FRT96}, or re-engineering \cite{CDM96}.

Since the seminal paper introducing slicing \cite{weiser}, there have
been many works proposing several notions of slicing, and different
algorithms to compute slices (see \cite{DeLucia,Tip95} for good
surveys about existing slicing techniques).  Program slicing is a
transformation technique that reduces the size of programs to analyze.
Nevertheless, the reduction obtained by means of standard slicing
techniques may be not sufficient for simplifying program analyses
since it may keep more statements than those strictly necessary for
the desired analysis.  Suppose we are analyzing a program, and suppose
we want a variable $x$ to have a particular property $\uco$ at a given
program point $n$.  If we realize that $x$ does not have that property
at $n$, then we may want to understand which statements affect that
property of $x$, in order to find out more easily where the
computation went wrong.  In this case, we are not interested in the
exact value of $x$, so that we may not need \emph{all} the statements
that a standard slicing algorithm would return.  Instead, we would
need a technique that returns the minimal amount of statements that
actually affect that specific property of $x$.

\paragraph{Abstract program slicing}

This paper introduces and discusses a "semantic" general notion of
slicing, called \emph{abstract program slicing}, looking for those
statements affecting a property (modeled in the context of abstract
interpretation \cite{CC77}) of a set of variables of interest, the so
called \emph{abstract criterion}.  The idea behind this new notion of
slicing is investigating more \emph{semantically} precise notions of
dependency between variables.  In other words, when a syntactic
dependency is detected, such as the dependency, in an assignment, of
\emph{defined} variables from \emph{used} variables, we look further
for semantic dependencies, i.e., dependencies between \emph{values} of
variables.

Consider the program $\prog$ in Fig.~\ref{fig:ProgP33}, and suppose
that we are interested in the variable \CODE{d} at the end of the
execution.  Standard slicing algorithms extract slices by computing
syntactic dependencies; in this sense, \CODE{d} depends on all
\CODE{c}, \CODE{b} and \CODE{a}, so that a sound slice would have to
take all the statements involving all these variables.  In the figure,
$\progq$ is a \emph{slice} of $\prog$ with respect to that criterion.
However, if we are interested in a more precise, semantic notion of
slicing, then we could observe that the value of \CODE{d} only depends
on the values of variables \CODE{c} and \CODE{b}, so that a more
precise slice would be represented by $\progr$.  Finally, if we are
interested in the parity of \CODE{d} at that point, then we observe
that parity of \CODE{d} does not depend on the value of \CODE{c}, and
$\progss$ is an \emph{abstract slice} of $\prog$ with respect to the
specified criterion.  Even in this simple case, the \emph{abstract
  slice} gives more precise information about the statements affecting
the property of interest.

\begin{figure}
  \begin{center}
    \begin{tabular}{cccc}
      \begin{lstlisting}
   a:=1;
   b:=b+1;
   c:=c+2;
   e:=e+1;
   d:=2*c+b+a-a;
      \end{lstlisting}
      &
      \begin{lstlisting}
   a:=1;
   b:=b+1;
   c:=c+2;
   
   d:=2*c+b+a-a;
      \end{lstlisting}
      &
      \begin{lstlisting}
  
   b:=b+1;
   c:=c+2;
   
   d:=2*c+b+a-a;
      \end{lstlisting}
      &
      \begin{lstlisting}
   
   b:=b+1;
   
   d:=2*c+b+a-a;
      \end{lstlisting} \\[8mm]
      Program \prog & Program \progq & Program \progr &Program \progss
    \end{tabular}
  \end{center}
  \caption{$\progq$, $\progr$ and $\progss$ are, respectively, a
    \emph{slice}, a \emph{semantic slice} and an \emph{abstract
      slice} of $\prog$.\label{fig:ProgP33}}
\end{figure}

\paragraph{Contributions}

In this paper, we aim at introducing a generalized notion of slicing,
allowing us to weaken the notion of "dependency" (from syntax, to
semantics, to abstract semantics) with respect to what is considered
\emph{relevant} for computing the slice.  Since our generalization is
a semantic one, we start from the unifying framework proposed in
\cite{AForm,TheoFoun}, where different forms of slicing are defined
and compared w.r.t.~their characteristics (static/dynamic,
iteration-count/non-iteration-count, etc.), into a comprehensive
formal framework.  The structure of this framework is based on the
formal definition of the criterion, inducing a semantic equivalence
relation $\cE$ which uniquely characterizes the set of possible slices
of a program $\prog$ as the set of all the sub-programs\footnote{The
  framework proposed in \cite{AForm,TheoFoun} is parametric on the
  syntactic relation, but here we only consider the relation of being
  a subprogram.} equivalent to $\prog$ w.r.t. $\cE$.  This structure
makes the framework suitable for the introduction and the formal
definition of an abstract form of slicing, since abstraction
corresponds simply to consider a weaker criterion, which implies
weakening the equivalence relation $\cE$ defining slicing.

Once we have the equivalence relation defining a desired notion of
slicing w.r.t. a given criterion, we show how this corresponds to
fixing the notion of dependency we are interested in (namely, the
notion of dependency determining what has to be considered relevant in
the construction of slicing), and we show how the extension to
semantic dependencies may be used to extend the program dependency
graph-based approach to computing slices \cite{horPR89}.  Finally, we
define a notion of abstract dependencies implying abstract criteria.

We show that this new notion of dependency is not suitable for
computing slices by using Program Dependency Graphs, and propose
algorithms for computing (abstract) dependencies and a systematic
approach to compute backward slices.  Such an approach relies on two
systems of logical rules in order to prove (1) Hoare-style tuples
capturing the effect of executing a statement $s$ on a pair of states
for which some similarity (\emph{agreement}) is required by the
slicing criterion (indeed, this similarity corresponds to the semantic
equivalent relation); and (2) when some properties of the state do not
change (are \emph{preserved}) after $s$ is executed.  The combination
of the results provided by these rule systems allows to decide whether
it is safe to remove a statement from a program without changing the
observation corresponding to the criterion.

Importantly, the rule systems and algorithms provided in Section
\ref{sec:theQuestForAbstractSlices} rely on the knowledge and
manipulation of a ``library'' of abstract properties.  For example, in
order to infer that \CODE{2*x} is always even, the abstract domain
representing the parity of number must be known.  If no abstract
property is known except the identity (which is the most precise
property, and is not really abstract), then the approach boils down to
standard slicing.  Importantly, it becomes clear in this case that
slices on the same variables (properties of them in the abstract case;
exact values in the concrete case) are generally bigger in the
concrete setting (when identity is the only available property) with
respect to the corresponding abstract slicing.  Needless to say, this
does not mean that every algorithm for abstract slicing will perform
better than any algorithm for non-abstract slicing; rather, it
provides a practical insight of how optimal (purely semantic-based)
abstract slices may not include statements which are included in
concrete slices.

Part of this work has been previously published in conference
proceedings \cite{MastroeniZanardini,Zanardini,MastroeniNicolic}.  The
present paper joins these works into a coherent framework, and
contains a number of novel contributions
\begin{itemize}
\item[$\bullet$] We formally prove that abstract slicing in the formal
  framework of \cite{AForm} generalizes concrete forms of slicing.
\item[$\bullet$] We formally define the notion of dependency induced
  by a particular criterion, i.e., by the equivalence relation among
  programs induced, in the formal framework, by the chosen criterion.
\item[$\bullet$] We define and prove how we can approximate this
  (concrete semantic) dependency in order to use it for pruning PDGs
  and computing slicing with the well known PDG-based algorithm for
  slicing \cite{RY88}.
\item[$\bullet$] We discuss why the idea of pruning PDGs is not
  applicable to the abstract notion of dependency motivating the need
  of providing different approaches for computing abstract slices.
\item[$\bullet$] The treatment of non-numerical values when computing
  slices was already considered in \cite{Zanardini}.  However, the
  language under study in the present paper is different in that it is
  closer to standard object-oriented languages.  More concretely, that
  work used complex identifiers $x.f.g$ as if they were normal
  variables, thus obtaining that sharing between variables was easier
  to deal with.  However, this came at the cost of increasing the
  number of ``variables'' to be tracked by the analysis.  Moreover,
  examples have been provided to illustrate how properties of the heap
  can be taken into account.
\item[$\bullet$] The \GSYSTEM introduced here is a quite refined
  version of the $\RULENAME{a}$-system \cite{Zanardini}; rules for
  variable assignment and field update have been changed according to
  the new language (which implies a number of technical issues); there
  is a new rule $\GRULENAME{id}$; the overall discussion has been
  improved.
\item[$\bullet$] The rule system for proving the preservation of
  properties (the \PSYSTEM) is explicitly introduced here.
\item[$\bullet$] The description of how statements can be erased has
  been improved; an algorithm has been explicitly introduced, which
  labels each program point with agreements according to the \GSYSTEM.
  A thorough discussion and proofs are provided, so that it is
  guaranteed that the conditions for erasing a statement (relying on
  the \GSYSTEM, the \PSYSTEM, and the \CODE{labelSequence} procedure
  for labeling program points with agreements) are sound.
\item[$\bullet$] Recent work on field-sensitive sharing analysis
  \cite{ZanardiniG15sh} is included in the computation of abstract
  slices, which results in improving the precision when data structure
  in the heap overlap.
\end{itemize}

\section{Preliminaries}
\label{sec:preliminaries}
\subsection{The programming language}
\label{sec:theProgrammingLanguage}

The language is a simple imperative language with basic
object-oriented features, whose syntax will be easy to understand for
anyone who is familiar with imperative programming and object
orientation.  The language syntax includes the usual arithmetic
expressions $\IMPEXP$ and access to object fields via ``dot''
selectors.  A statement can be \CODE{skip}, a variable assignment
\CODE{x:=e}, a field update \CODE{x.f:=e}, a conditional or a
\CODE{while} loop.  In addition, there exist special statements (1)
\CODE{read} which reads the value of some variable from the input,
simulating the use of parameters; this kind of statement can only
appear at the beginning of the program; and (2) \CODE{write}, which
can only appear at the end of the program and outputs the current
value of some variables\footnote{As a matter of fact, this kind of
  statement is only included in the language for back-compatibility
  and readability.}.  For simplicity, guards in conditionals and loops
are supposed not to have \emph{side effects}.  We denote by $\progs$
the set of all programs.

$\variables$ is the set of program variables and $\values$ denotes the
set of values, which can be either integer or reference values, or the
$\nil$ constant ($\values = \zvalues \cup \rvalues \cup \{\nil\}$);
every variable is supposed to be well-typed (as integer or reference)
at every program point.  $\lnums$ denotes the set of \emph{line
  numbers} (program points).  Let $l\in\lnums$, and $\stm(l)$ be the
statement at program line $l$.  For a given program $\prog$, we denote
by $\lnums_\prog\subseteq\lnums$ the set of all and only the line
numbers corresponding to statements of the program $\prog$, i.e.,
$\lnums_\prog=\sset{l\in\lnums}{\stm(l)\in\prog}$.  This definition is
necessary since when we look for slicing we erase statements without
changing the numeration of line numbers; for instance, in
Figure~\ref{fig:ProgP33}, we have that $\stm(4)\notin\progq$, so that
$\lnums_\progq=\{1,2,3,5\}$.

A \emph{program state} $\state\in\states$ is a pair
$\tuple{n^k,\memory}$ where $n$ is the executed program point, $k$ is
the number of times the statement at $n$ has been reached so far,
$\memory$ is the memory.
A \emph{memory} is a pair $(\store,\heap)$ where the \emph{store}
$\store:\Var\ra\values$ maps variables to values, and the \emph{heap}
$\heap$ is a sequence of locations where objects can be stored; a
reference value corresponds to one of such locations.  An
\emph{object} $o$ maps field identifiers to values, in the usual way;
$o.f$ is the value corresponding to the field $f$ of the object $o$,
and can be either a number, the location in which another object is
stored, or $\nil$.  For the sake of simplicity, \emph{classes} are
supposed to be declared somewhere, and field accesses are supposed to
be consistent with class declarations.

Unless ambiguity may arise, a memory (or even an entire program state)
can be represented directly as a store, so that $\memory(x)$ (resp.,
$\state(x)$) will be the value of $x$ in the store contained in
$\memory$ (resp., in $\state$).  Moreover, a store $\store$ can be
represented as $\{ \BIND{x_1}{v_1}, .., \BIND{x_m}{v_m} \}$, meaning
that $\store(x_i) = v_i$ for every $i$, and, again, $\memory = \{
\BIND{x_1}{v_1}, .., \BIND{x_m}{v_m} \}$ (resp., $\state = \{
\BIND{x_1}{v_1}, .., \BIND{x_m}{v_m} \}$) can be used instead of
$\store = \{ \BIND{x_1}{v_1}, .., \BIND{x_m}{v_m} \}$ whenever the
store is the only relevant part of the memory (resp., the state).

A \emph{state trajectory} $\trace\in\traces=\states^*$ is a sequence
of program states through which a program goes during the execution.
State trajectories are actually traces equipped with the
$\overline{k}$ component.
The state trajectory obtained by executing program $\prog$ from the
input memory $\memory$ is denoted $\trace_\prog^\memory$.  Moreover,
$\trace[n]$ will be the set of states in $\trace$ where the program
point is $n$. Any initial state has $n=1$, i.e., the set of initial
states is
$\istates=\sset{\tuple{1^{1},\memory}}{\memory\in\memories}$.

In the following,
$\SEMANTICS{\cdot}:\progs\times\wp(\istates)\ra\wp(\traces)$ denotes
the program semantics where $\SEMANTICS{\prog}(S)$ returns the set of
state trajectories obtained by executing the program $\prog$ starting
from any initial state in $S\subseteq\istates$, i.e.,
$\SEMANTICS{\prog}(S)=\sset{\trace_\prog^\memory}{\tuple{1^{1},\memory}\in
  S}$.  We abuse notation by denoting in the same way also the
semantics of expressions, namely,
$\SEMANTICS{\cdot}:\IMPEXP\times\states\ra\values$, which is such that
$\SEMANTICS{\exp}(\state)$ ($\exp\in\IMPEXP$) returns the evaluation
of $\exp$ in $\state$.  Finally, if $S\subseteq\states$, in sake of
simplicity, we still abuse notation by denoting in the same way also
the additive lift of semantics, i.e.,
$\SEMANTICS{\exp}(S)=\sset{\SEMANTICS{\exp}(\state)}{\state\in S}$.

\subsection{Basic Abstract Interpretation}
\label{sec:basicAbstractInterpretation}

This section introduces the lattice of \emph{abstract
  interpretations}~\cite{CC77}.  Let $\langle C, \leq, \vee, \wedge,
\top, \bot \rangle$ denote a complete lattice $C$, with ordering
$\leq$, lub $\vee$, glb $\wedge$, top and bottom element $\top$ and
$\bot$, respectively.  A \emph{Galois connection} (G.c.) is a pair of
monotone functions $\alpha:C \rightarrow A$ and $\gamma:A \rightarrow
C$ such that $\alpha(x)\leq_{A} y \Leftrightarrow x \leq_{C}
\gamma(y)$.  In standard terminology, $C$ and $A$ are, respectively,
the concrete and the abstract domain.  Abstract domains can be
formulated as upper closure operators ($\uco$) \cite{CC77}.  Given an
ordered set $C$ with ordering $\leq_C$, a uco on $C$, $\uco: C
\rightarrow C$, is a monotone, idempotent ($\uco(\uco(x))=\uco(x)$)
and extensive ($\forall x \in C.~x \leq_C \uco(x)$) map.  Each uco
$\uco$ is uniquely determined by the set of its fixpoints, which is
its image; i.e., $\uco(C) = \{x \in C \mid \uco(x) = x\}$.  When
$C=\wp(D)$ for some set $D$, and $v\in D$ then we usually write
$\uco(v)$ instead of $\uco(\{v\})$ (and in general for any function,
$f(v)$ instead of $f(\{v\})$).  If $C$ is a complete lattice, then
$\langle \ucos(C), \sqsubseteq, \sqcup, \sqcap, \lambda x . \top,
\lambda x . x\rangle$ is a complete lattice, where $\ucos(C)$ is the
domain of all the upper closure operators on the lattice $C$; for
every two ucos $\uco_1, \uco_2 \in \ucos(C)$, $\uco_1 \sqsubseteq
\uco_2$ if and only if $\forall y \in C.~\uco_1(y) \leq \uco_2(y)$ iff
$\uco_2(C) \subseteq \uco_1(C)$; and, for every $\{\uco_i\}_{i\in I}
\subseteq \ucos(C)$, $(\sqcap_{i \in I}\uco_i)(x) = \wedge_{i \in
  I}\uco_i$ and $(\sqcup_{i \in I}\uco_i)(x) = x \Leftrightarrow
\forall i \in I.~\uco_i(x) = x$.  In the following we will denote by
$\IDDOM$ the most concrete uco on a domain, i.e., $\lambda x.x$, and
by $\TOPDOM$ the most abstract one $\lambda x.\top$. $A_1$ is more
precise than $A_2$ (i.e., $A_2$ is an abstraction of $A_1$) iff $A_1
\sqsubseteq A_2$ in $\ucos(C)$.  The \emph{reduced product} of a
family $\{\uco_i\}_{i\in I}$ is $\sqcap_{i\in I}\uco_i$ and is one of
the best-known operations for composing abstract domains.

\begin{example}[Numerical abstract domains]
  Let the concrete domain $C$ be $\wp(\INTEGERS)$: the \emph{parity}
  abstract domain $\PARDOM$ in Figure~\ref{dom1} (on the left)
  represents the parity of numbers, and is determined by fix-points $\{
  \BOT, \EVEN, \ODD, \TOP \}$ where $\EVEN$ and $\ODD$ denote even and
  odd numbers, respectively; $\BOT$ is the empty set, and $\TOP =
  \INTEGERS$.  For example, $\PARDOM(\{2,4,10\}) = \EVEN$ (all numbers
  are even), $\PARDOM(\{3,7\}) = \ODD$ (both numbers are odd), and
  $\PARDOM(\{4,5\}) = \TOP$ (there are both even and odd numbers).
  The \emph{sign} abstract domain $\SIGNDOM$ in Figure~\ref{dom1} (on
  the right) is characterized by fix-points $\{ \BOT, \ZERO, \POS,
  \NEG, \TOP \}$ and tracks the sign of integers (zero, positive,
  negative, etc.).  For example, $\SIGNDOM(\{0\}) = \ZERO$,
  $\SIGNDOM(\{-3,-4,-5\}) = \NEG$, $\SIGNDOM(\{1,2,4\}) = \POS$,
  $\SIGNDOM(\{1,-1\}) = \TOP$.  Finally, the \emph{parity-sign} domain
  $\PARSIGNDOM$, which is the reduced product $\sqcap$ of $\PARDOM$
  and $\SIGNDOM$, captures both properties (the parity and the sign),
  and has fix-points $\BOT$, $\ZERO$, $\POSEVEN$, $\POSODD$,
  $\NEGEVEN$, $\NEGODD$, $\EVEN$, $\ODD$, $\POS$, $\NEG$, and $\TOP$.
  \begin{figure}[h]
    \begin{center}
      \begin{tikzpicture}
        \node (partop) at (0.2,0) {$\TOP$};
        \node (pareven) at (-0.6,-1.5) {$\EVEN$};
        \node (parodd) at (1,-1.5) {$\ODD$};
        \node (parbot) at (0.2,-3) {$\BOT$};
        \draw (partop) -- (pareven);
        \draw (partop) -- (parodd);
        \draw (pareven) -- (parbot);
        \draw (parodd) -- (parbot);
        
        \node (signtop) at (3,0) {$\TOP$};
        \node (signpos) at (2,-1.5) {$\POS$};
        \node (signneg) at (4,-1.5) {$\NEG$};
        \node (signzero) at (3,-1.5) {$\ZERO$};
        \node (signbot) at (3,-3) {$\BOT$};
        \draw (signtop) -- (signpos);
        \draw (signtop) -- (signneg);
        \draw (signtop) -- (signzero);
        \draw (signpos) -- (signbot);
        \draw (signneg) -- (signbot);
        \draw (signzero) -- (signbot);

        \node (parsigntop) at (7.7,0) {$\TOP$};
        \node (parsignpos) at (6.2,-1) {$\POS$};
        \node (parsigneven) at (7.2,-1) {$\EVEN$};
        \node (parsignodd) at (8.2,-1) {$\ODD$};
        \node (parsignneg) at (9.2,-1) {$\NEG$};
        \node (parsignposeven) at (5.4,-2) {$\POSEVEN$};
        \node (parsignposodd) at (6.7,-2.5) {$\POSODD$};
        \node (parsignnegeven) at (8.7,-2.5) {$\NEGEVEN$};
        \node (parsignnegodd) at (10,-2) {$\NEGODD$};
        \node (parsignzero) at (7.7,-3) {$\ZERO$};
        \node (parsignbot) at (7.7,-4) {$\BOT$};
        \draw (parsigntop) -- (parsignpos);
        \draw (parsigntop) -- (parsignneg);
        \draw (parsigntop) -- (parsigneven);
        \draw (parsigntop) -- (parsignodd);
        \draw (parsignpos) -- (parsignposeven);
        \draw (parsignpos) -- (parsignposodd);
        \draw (parsignneg) -- (parsignnegeven);
        \draw (parsignneg) -- (parsignnegodd);
        \draw (parsigneven) -- (parsignposeven);
        \draw (parsigneven) -- (parsignnegeven);
        \draw (parsigneven) -- (parsignzero);
        \draw (parsignodd) -- (parsignposodd);
        \draw (parsignodd) -- (parsignnegodd);
        \draw (parsignposeven) -- (6,-3.2) -- (parsignbot);
        \draw (parsignnegeven) -- (parsignbot);
        \draw (parsignposodd) -- (parsignbot);
        \draw (parsignnegodd) -- (9.4,-3.2) -- (parsignbot);
        \draw (parsignzero) -- (parsignbot);
      \end{tikzpicture}
    \end{center}
    \caption{The $\PARDOM$, $\SIGNDOM$ and $\PARSIGNDOM$
      domains.} \label{dom1}
  \end{figure} 
\end{example}

Formally speaking, the value of a reference variable is either a
location $\ell$ or $\nil$.  However, the domains introduced in the
next example classify variables not only with respect to $\ell$
itself, but also on the data structure in the heap which is reachable
from $\ell$.  This point of view is similar to previous work on static
analysis of properties of the heap like \emph{sharing}
\cite{DBLP:conf/sas/SecciS05} or \emph{cyclicity}
\cite{DBLP:conf/vmcai/RossignoliS06,tcs13}.

\begin{example}[Reference abstract domains]
  \label{ex:referenceAbstractDomains}
  Let $C$ be $\wp(\rvalues{\cup}\{\nil\})$, i.e., the possible values
  of reference variables.  The \emph{nullity} domain $\NULLDOM$
  classifies values on nullity, and has fix-points $\{ \BOT, \NULL,
  \NONNULL, \TOP \}$ where the concretizations of $\NULL$ and
  $\NONNULL$ are, respectively, $\{ \nil \}$ and $\rvalues$.
  
  On the other hand, it is possible to define a \emph{cyclicity}
  domain $\CYCLEDOM$ which classifies variables on whether they point
  to \emph{cyclic} or \emph{acyclic} data structures \cite{tcs13}.  A
  \emph{cycle} in the heap is a path in which the same location is
  reached more than once; a double-linked list (one which can be
  traversed in both directions) is a good example of a cyclic data
  structure.  The fix-points of this domain are $\{ \BOT, \CYCLIC,
  \ACYCLIC, \TOP \}$, where all acyclic values (including $\nil$) are
  abstracted to $\ACYCLIC$, and all cyclic values (i.e., locations
  from which a cycle is reachable) are abstracted to $\CYCLIC$.  Both
  domains and their reduced product are depicted in Figure \ref{dom2};
  note that there are values which are both null and cyclic, so that
  their intersection collapses to $\BOT$.
  
  Finally, the identity domain $\IDDOM$, abstracts two concrete values
  to the same abstract value only if they are equal.  Two references
  are \emph{equal} if (1) their are both null; or (2) they are both
  non-null and the objects stored in the corresponding locations are
  equal, where equality on objects means that all their numeric fields
  must be the same number and all reference fields must be equal
  (w.r.t.~this same notion of equality on references).

  \begin{figure}[h]
    \begin{center}
      \begin{tikzpicture}
        \node (nulltop) at (0.2,0) {$\TOP$};
        \node (nullnull) at (-0.6,-1.5) {$\NULL$};
        \node (nullnonnull) at (1,-1.5) {$\NONNULL$};
        \node (nullbot) at (0.2,-3) {$\BOT$};
        \draw (nulltop) -- (nullnull);
        \draw (nulltop) -- (nullnonnull);
        \draw (nullnull) -- (nullbot);
        \draw (nullnonnull) -- (nullbot);
        
        \node (cycletop) at (4,0) {$\TOP$};
        \node (cyclecyclic) at (3,-1.5) {$\CYCLIC$};
        \node (cycleacyclic) at (5,-1.5) {$\ACYCLIC$};
        \node (cyclebot) at (4,-3) {$\BOT$};
        \draw (cycletop) -- (cyclecyclic);
        \draw (cycletop) -- (cycleacyclic);
        \draw (cyclecyclic) -- (cyclebot);
        \draw (cycleacyclic) -- (cyclebot);

        \node (top) at (9.7,0) {$\TOP$};
        \node (cyclic) at (7.7,-1.2)
              {$\left(\!\!\begin{array}{c}\NONNULL\\\CYCLIC\end{array}\!\!\right)$};
        \node (null) at (9.7,-1.7)
              {$\left(\!\!\begin{array}{c}\NULL\\\ACYCLIC\end{array}\!\!\right)$};
        \node (acyclic) at (11.7,-1.2)
              {$\left(\!\!\begin{array}{c}\NONNULL\\\ACYCLIC\end{array}\!\!\right)$};
        \node (bot) at (9.7,-3) {$\BOT$};
        \draw (top) -- (null);
        \draw (top) -- (cyclic);
        \draw (top) -- (acyclic);
        \draw (null) -- (bot);
        \draw (cyclic) -- (bot);
        \draw (acyclic) -- (bot);
      \end{tikzpicture}
    \end{center}
    \caption{The $\NULLDOM$ and the $\CYCLEDOM$ domains, and their
      reduced product.} \label{dom2}
  \end{figure} 
\end{example}

Let us consider now $D=C^n$ ($C$ lattice and $n\in\NATURALS$), namely
$x\in D$ is a $n$-tuple of elements of $C$, and consider
$\uco\in\ucos(D)$.  In this case, we can distinguish between two kinds
of abstractions:\label{rel-nonrel} \emph{non-relational} and
\emph{relational} abstractions \cite{C01-Dag,CC79}.  The
non-relational or \emph{attribute-independent} one \cite[Example
  6.2.0.2]{CC79} consists in ignoring the possible relationships
between the values of the abstracted inputs.  For instance, if $\uco$
is applied to the values of variables $x$ and $y$, then $\uco$ can be
approximated through projection by a pair of abstractions on the
single variables, analyzing the single variables in isolation.  In
sake of simplicity, without losing generality, consider $n=2$, i.e.,
$D=C^2=C\times C$.  Formally, $\uco\in\ucos(C\times C)$ is
non-relational if there exist $\delta_1,\delta_2\in\ucos(C)$ such that
$\uco(x,y)=\tuple{\delta_1(x),\delta_2(y)}$, i.e,
$\uco\in\ucos(C)\times\ucos(C)\subset\ucos(C\times C)$. For instance,
let $\PARDOM$ be the abstract domain depicted in Figure~\ref{dom1}
expressing the parity of integer values; the $\PARDOM$ non-relational
property of $\tuple{x,y}$ provides the parity of $x$ and $y$
independently one from each other, meaning that all the possible
combinations of parity of $x$ and $y$ are possible as results
($\tuple{\EVEN,\EVEN},\tuple{\EVEN,\ODD},\tuple{\ODD,\EVEN},\tuple{\ODD,\ODD}$
and all combinations where at least one variable is $\TOP$ or $\BOT$).
Relational abstractions may preserve some of the relationship between
the analyzed values \cite{C01-Dag}. For instance, we could define an
abstraction preserving the fact that $x$ is even ($\EVEN$) if and only
if $y$ is odd ($\ODD$). It is clear that, in this case, we are more
precise since the only possible analysis results are
$\tuple{\EVEN,\ODD}$, $\tuple{\ODD,\EVEN}$, $\tuple{\TOP,\TOP}$ and
$\tuple{\BOT,\BOT}$.

If $\uco\in\ucos(C)$, $\ok{f\in C\rarr{} C}$, and $\ok{f^\sharp\in
  \uco(C)\rarr{} \uco(C)}$, then $\ok{f^\sharp}$ is a {\em sound\/}
approximation of $f$ if $\ok{\uco\comp f \sqsubseteq
  f^\sharp\comp\uco}$.  $\ok{f^{\alpha}\defi\uco\comp f\comp\uco}$ is
known as the {\em best correct approximation\/} (bca) of $f$ in
$\uco$, which is always sound by construction.  Soundness naturally
implies fix-point soundness, that is,
$\ok{\uco(\lgfp{lfp}{\leq_{\mbox{\tiny $C$}}}{\bot_{\mbox{\tiny $C$}}}
  f_C) \leq_{\uco} \lgfp{lfp}{\leq_{\mbox{\tiny
        $C$}}}{\bot_{\mbox{\tiny $C$}}} f^\uco}$. If $\uco\comp f =
\uco\comp f\comp\uco$ then we say that $f^\uco$ is a {\em complete\/}
approximation of $f$ \cite{CC79,GRSjacm}.  In this case,
$\ok{\uco(\lgfp{lfp}{\leq_{\mbox{\tiny $C$}}}{\bot_{\mbox{\tiny $C$}}}
  f) = \lgfp{lfp}{\leq_{\mbox{\tiny $C$}}}{\bot_{\mbox{\tiny $C$}}}
  f^\uco}$.

\subsection{Equivalence relations, abstractions and partitions}
\label{Sect:partit}

Closure operators and equivalence relations are related concepts
\cite{CC79}.  Recently, this connection has been further studied in
the field of abstract model checking and language based-security
\cite{RT02,HM05}.  In particular, there exists an isomorphism between
equivalence relations and a subclass of upper closure
operators. Consider a set $S$: for each equivalence relation
${\tR}\subseteq S\times S$ we can define an upper closure operator,
$\clor{\tR}\in\ucos(\wp(S))$ such that $\forall x\in
S\:.\:\clor{\tR}(\{x\})=[x]_\tR$ and $\forall X\subseteq
S\:.\:\clor{\tR}(X)=\bigcup_{x\in X}[x]_\tR$.  Conversely, for each
upper closure operator $\eta\in\ucos(\wp(S))$, we are able to define
an equivalence relation $\relc{\eta}\subseteq S\times S$ such that
$\forall x,y\in S\:.\: x \mathrel{\relc{\eta}}
y\ \Lra\ \eta(\{x\})=\eta(\{y\})$.  It is immediate to prove that
$\relc{\eta}$ is an equivalence relation, and this comes from $\eta$
being merely a function, not necessarily a closure operator.
$\clor{\tR}$ is identified as the most concrete closure $\eta$ such
that ${\tR} = \relc{\eta}$ \cite{HM05}.
It is possible to associate with each upper closure operator the most
concrete closure inducing the same partition on the concrete domain
$S$:
\begin{equation}
  \label{partit}
  \Pi(\eta) \defi \clor{\relc{\eta}}
\end{equation}

\noindent
Note that, for all $\eta \in \ucos(\wp(S))$, $\Pi(\eta)$ is the
(unique) most concrete closure that induces the same equivalence
relation as $\eta$ ($\relc{\eta} = \relc{\Pi(\eta)}$).  The fix-points
of $\Pi$ are called the \emph{partitioning} closures.  Being $\wp(S)$
a complete Boolean lattice, an upper closure operator
$\eta\in\ucos(\wp(S))$ is partitioning, i.e., $\eta=\Pi(\eta)$, iff it
is complemented, namely if $\forall X\in\eta.\:\ov{X}\defi
S\smallsetminus X\in\eta$ \cite{HM05}.
  
\begin{figure}[t]  
  \begin{center}  
    \includegraphics[scale=.25]{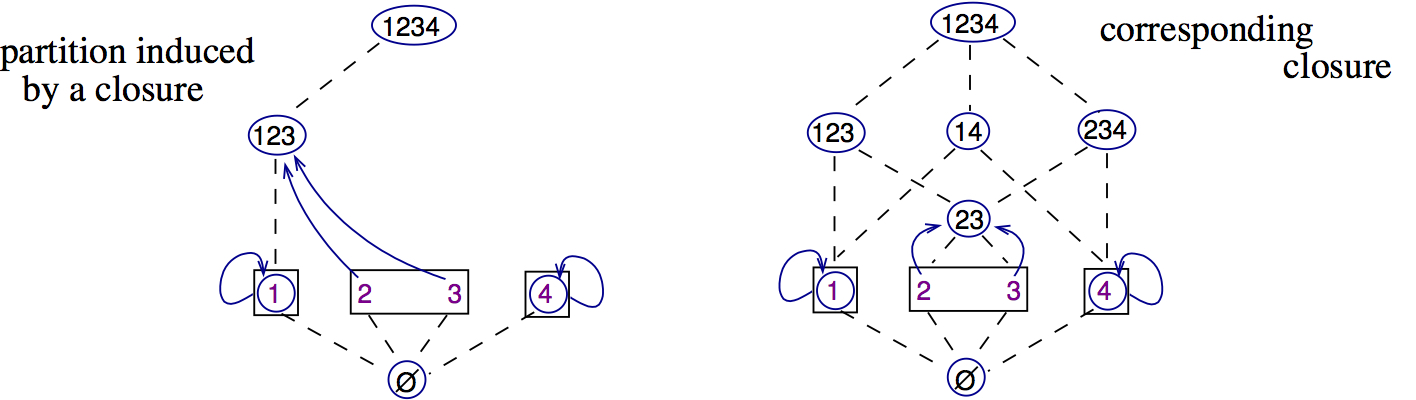}  
    \caption{A partitioning closure.}\label{partizioni}  
  \end{center}  
\end{figure}

\begin{example}  
  Consider the set $S=\{1,2,3,4\}$ and one of its possible partitions
  $\pi=\{\{1\},\{2,3\},\{4\}\}$.  The closure $\eta$ with fix-points
  $\{\emptyset,\{1\},\{4\},\{1,2,3\},S\}$ induces exactly $\pi$ as a
  state partition, but the most \emph{concrete} closure that induces
  $\pi$ is
  $\clor{\pi}=\Pi(\eta)=\disj{\{\emptyset,\{1\},\{2,3\},\{4\}\},S}$,
  which is the closure on the right of Figure \ref{partizioni}.
\end{example}

Given a partitioning upper closure operator $\uco$, an \emph{atom} is
an element $a$ of $\uco$ such that there does not exists another
element $b$ with $\BOT \sqsubset b \sqsubset a$.  For example, the
atoms of $\PARSIGNDOM$ are $\POSEVEN$, $\POSODD$, $\ZERO$, $\NEGEVEN$,
and $\NEGODD$.  In partitioning closures, atoms are all the possible
abstractions of singletons: in fact, $\PARSIGNDOM(\{n\})$ will never
give $\POS$ or $\ODD$ since there is always a more precise abstract
value describing $n$.  In the following, $\ISATOM{a}{\uco}$ holds iff
$a$ is an atom of $\uco$.

\subsection{Abstract semantics}
\label{sec:abstractSemantics}

An abstract program semantics is the abstract counterpart of the
concrete semantics w.r.t.\ an abstract program observation: it is
meant to compute, for each program point, an abstract state which
soundly represents \emph{invariant properties} of variables at that
point.  In general, it is computed by an abstract interpreter
\cite{CC79} collecting the set of all the possible values that each
variable may have in each program point and abstracting this set in
the chosen abstract domain.

Given a concrete program state $\state$ and an abstract domain
$\uco\in\ucos(\wp(\values))$, an \emph{abstract state}
$\astate\in\astates$ is obtained by applying the abstraction $\uco$ to
the values of variables stored in it.
Namely, $\astate=\tuple{n^{k_n},\abmemory}$, where
$\abmemory=\tuple{\astore}$ and $\astore$ is such that
$\astore(x)=\uco(\store(x))$.  For simplicity, we can write
$\astate(x)=\uco(\state(x))$, treating the whole state as a store when
applied to variables.  In the case of a reference variable $x$, the
abstraction $\astate(x)$ gives information about the data structure
pointed to by $x$ (e.g., if $\uco = \CYCLEDOM$, the cyclicity of the
data structure can be represented).  This explains why the heap is not
represented explicitly in the abstract state: instead, relevant
information about the heap is contained in the abstraction of
variables (see the previous discussion before Example
\ref{ex:referenceAbstractDomains}).

In the following, ordering $\leq$ on abstract states is variable-wise
comparison between abstract values:
\[ \state^\uco_1 \leq \state^\uco_2 \quad \Leftrightarrow \quad \forall
x. \state^\uco_1(x) \subseteq \state^\uco_2(x) \] The greater an
abstract state is, the wider is the set of concrete states it
represents.  Moreover, a \emph{covering} of $\astate$ is a set of
abstract states $\{ \state^\uco_1..\state^\uco_n \}$ such that $\vee_i
\state^\uco_i = \astate$.  The set of abstract state trajectories is
$\atraces=\astates^*$, namely an abstract trajectory is the
computation of a program on the set of abstract states. The trace in
$\atraces$ of a program $\prog$, starting from the abstract memory
$\abmemory$ is denoted by $\trace^{\abmemory}_\prog$.

Formally, the \emph{abstract program semantics}
$\ABSEVAL{\cdot}{}:\progs\times\iastates\ra\atraces$ is such that
$\ok{\ABSEVAL{\prog}{}(S)=\{\trace^{\abmemory}_\prog~|~\tuple{1^1,\abmemory}\in
  S\}}$ is the set of the sequences of abstract states computed
starting from the abstract initial states in $S\in\wp(\iastates)$.  We
also abuse notation by denoting $\ABSEVAL{\cdot}{}$ also the abstract
evaluation of expressions. Namely,
$\ok{\ABSEVAL{\cdot}{}:\IMPEXP\times\astates\ra\uco(\values)}$ is such
that $\forall
x.\:\ABSEVAL{\exp}{}(\astate(x))=\uco(\SEMANTICS{\exp}(\astate(x)))=\uco(\SEMANTICS{\exp}(\uco(\state(x))))$. This
definition is correct, since by construction, we have that any
abstract state $\astate$ corresponds to a set of concrete states,
i.e.,
$\ok{\astate=\sset{\ov{\state}\in\states}{\ov{\state}^\uco=\astate}}=\ok{\sset{\ov{\state}\in\states}{\forall
    x\in\variables.\:\ov{\state}^\uco(x)=\astate(x)}}$,
namely, it is the set of all the concrete states having as abstraction
in $\uco$ precisely $\astate$, and we abuse notation by denoting with
$\ABSEVAL{\cdot}{}$ also its additive lift.  In other words,
$\ABSEVAL{\exp}{}$ is the best correct approximation of
$\SEMANTICS{\exp}$ by means of an abstract value in $\uco$.
In general, in order to compute the abstract semantics of a program on
an abstract domain $\uco$, we have to equip the domain $\uco$ with the
abstract versions of all the operators used for defining expressions.
In our language, we should define, for example, the meaning of $+$,
$-$, $*$ and $/$ on abstract values, i.e., on sets of concrete values.
This is standard in abstract interpretation, and these operations are
defined for all the known numerical abstract domains.  For instance,
the sound approximation of the sum operation on $\PARDOM$ is the
following:
\[ \begin{array}{ccccl@{\qquad\qquad}ccccl}
  \EVEN & + & \EVEN & = & \EVEN & \TOP & + & \_ & = & \TOP \\
  \EVEN & + & \ODD & = & \ODD & \_ & + & \TOP & = & \TOP \\
  \ODD & + & \EVEN & = & \ODD & \BOT & + & \_ & = & \BOT \\
  \ODD & + & \ODD & = & \EVEN & \_ & + & \BOT & = & \BOT \\
\end{array} \]
We can reason similarly for all the other operators.  The use of
$\ABSEVAL{\cdot}{}$ in Section
\ref{section:algorithmicIdeasForCheckingNdep} and later in the paper
is twofold: (1) to infer invariant properties, as in Example
\ref{ex:invariantProperties}; and (2) to evaluate expressions at the
abstract level.

\begin{example}
  \label{ex:invariantProperties}
  Consider the following code fragment:
  \begin{lstlisting}
   i := 10;
   j := 0;
   while (i$\geq$0) {
     i := i-1;
     j := j+1;
   }
  \end{lstlisting}
    and an abstraction $\uco=\SIGNDOM$,
  i.e., the property of interest is the sign of both \ii and \jj.  By
  computing the abstract semantics of this simple program, we can
  observe that inside the loop we lose the sign of \ii since \ii
  starts being positive, but then the $\ii-1$ operation makes
  impossible to know statically the sign of \ii (the result may
  be positive, zero or negative starting from \ii positive or zero),
  while we have that \jj always remains positive. Moreover, if the
  loop terminates we can surely say that, at the end, $\ii<0$, namely it is negative
  (due to the negation of the while guard). Hence, we are able to
  infer that \ii is negative and \jj is positive after line 6.  This
  means that the final abstract state $\astate$ is such that
  $\astate(\mbox{\ii}) = \NEG$ and $\astate(\mbox{\jj}) = \POS$ (in
  the following, the extensional notation for $\astate$ will be $\{
  \BIND{\mbox{\ii}}{\NEG},\BIND{\mbox{\jj}}{\POS} \}$, similar to the
  notation for concrete states).
\end{example}

\section{Program Slicing}
Program slicing \cite{weiser} is a program-manipulation technique
which extracts from programs those statements which are relevant to a
particular portion of a computation.  In order to answer the question
about which are the relevant statements, an observer needs
a \emph{window} through which only a part of the computation can be
seen \cite{BinGalla96}.  Usually, what identifies the portion of
interest in the computation is the value of some set of variables at a
certain program point, so that a \emph{program slice} comes to be the
subset (syntactically, in terms of statements) of the original program
which contributes directly or indirectly to the values assumed by some
set of variables at the program point of interest.  The \emph{slicing
criterion} is what specifies the part of the computation which is
relevant to the analysis; in this case, a criterion is a pair
consisting of a set $\cX$ of variables and a program point (or line
number) $n$.  The following definition \cite{BinGalla96} is a possible
formalization the original idea of program slicing \cite{weiser}, in
the case of a single variable:
\begin{mydefinition}\cite{BinGalla96}
  \label{defSlice}
  For a statement $s$ (at program point $n$) and a variable $x$, the
  slice $P'$ of the program $P$ with respect to the slicing criterion
  $\tuple{s,\{x\}}$ is any executable program with the following
  properties:
  \begin{enumerate}
  \item $P'$ can be obtained by deleting zero or more statements from
    $P$;
  \item If $P$ halts on the input $I$, then, each time $s$ is reached
    in $P$, it is also reached in $P'$, and the value of $x$ at $s$ is
    the same in $P$ and in $P'$.  If $P$ fails to terminate, then $s$
    may be reached more times in $P'$ than in $P$, but $P$ and $P'$
    have the same value for $x$ each time $s$ is executed by $P$.
  \end{enumerate}
\end{mydefinition}

It is worth noting that Reps and Yang \cite{RY88}, in
their \emph{slicing theorem}, provide implicitly a similar definition
of program slicing, but it only considers terminating
computations. The following example provides the intuition of how
slicing works.

\begin{figure}
  \begin{lstlisting}
   int c, nl := 0, nw := 0, nc := 0;
   int in := false;
   while ((c=getchar())!=EOF) {
     nc := nc+1;
     if (c=' ' || c='\n' || c='\t') {
       in := false; }
     elseif (in = false) { 
       in := true;
       nw := nw+1; }
     if (c = '\n') {
       nl := nl+1; }
   }
  \end{lstlisting}
  \caption{Word-count program.}\label{programs}
\end{figure}
  
\begin{figure}
  \hspace{-3mm}  \begin{tabular}{c|c}
    \begin{lstlisting}
   int c, nl := 0;
   
   while ((c=getchar())!=EOF) {
     
     if (c = '\n') {
       nl := nl+1; }
   }
      \end{lstlisting}
      &
      \begin{lstlisting}
   int c, nw := 0;
   int in := false;
   while ((c=getchar())!=EOF) {
     
     if (c=' ' || c='\n' || c='\t') {
       in := false; }
     elseif (in = false) {
       in := true;
       nw := nw+1; }
     
   }
      \end{lstlisting}
  \end{tabular}
  \caption{Slices of the word-count program.}\label{example}
\end{figure}

\begin{example}
  Consider the word-count program \cite{MDT07} given in
  Figure \ref{programs}.  It takes in a block of text and outputs the
  number of lines (\CODE{nl}), words (\CODE{nw}) and characters
  (\CODE{nc}).  Suppose the slicing criterion only cares for the value
  of \CODE{nl} at the end of the program; then a possible slice is on
  the left in Figure~\ref{example}.  On the other hand, if the
  criterion is only interested in \CODE{nw}, then a correct slice is
  on the right.
\end{example}

Starting from the original definition \cite{weiser}, the notion of
slicing has gone through several generalizations and versions, but one
feature is constantly present: the fact that slicing is based on a
notion of \emph{semantic equivalence} that has to hold between a
program and its slices or on a corresponding notion of
\emph{dependency}, determining what we keep in the slice while
preserving the equivalence relation.  What we can observe about
definitions of slicing such as the one given in
Definition~\ref{defSlice} is that they are enough precise for finding
algorithms for soundly computing slicing, such as \cite{RY88}, but not
enough formal to become suitable to generalizations allowing us to
compare different forms of slicing and/or to define new weaker forms
of slicing.
  
In the following, we use the formal framework proposed in \cite{AForm}
where several notions and forms of slicing are modeled and compared.
This is not the only attempt to provide a formal framework for slicing
(see Section~\ref{sec:relatedWork}), but we believe that, due to its
semantic-based approach, it is suitable to include an abstraction
level to slicing, which can be easily compared with all the other
forms of slicing included in the original framework. Hence, in the
following section we don't rewrite a formal framework, but we
re-formalize the slicing criterion in order to allow us to easily
include abstraction simply as a new parameter. A brief introduction of
the formal framework together with some examples showing the
differences between the different forms of slicing introduced in the
following is given in the Appendix.

\subsection{Defining Program Slicing: the formal framework}
\label{section:Background}
In this section, our aim is to define the form of slicing that we can
lift to an abstract level. Namely, we consider the framework in
\cite{AForm,TheoFoun}, which allows us to define abstract slicing
simply by defining an abstract criterion which, independently from the
kind of slicing (static, dynamic, conditional, standard, etc.) allows
us to observe properties instead of concrete values. Since our aim is
to define abstract program slicing as a form of slicing, perfectly
integrated in the proposed hierarchy and where the criterion simply
has one more parameter describing the abstraction, we need to slightly
revise the construction in order to provide a completely unified
notation for the slicing criterion.
Note that, the present paper will only deal with \emph{backward}
slicing, where the interest is on the part of the program which
\emph{affects} the observation associated with the slicing criterion
and not on the part of the program which \emph{is affected} by such an
observation (called instead \emph{forward} slicing \cite{Tip95}).

\subsubsection*{Defining slicing criteria}

The slicing criterion characterizes what we have to observe of the
program in order to decide whether a program is a slice or not of
another program.  In particular, we have to fix which computations
have to be compared, i.e., the inputs and the observations on which
the slice and the program have to agree.

In the seminal Weiser approach, given a set of variables of interest
$\cX$ and program statement $s$, here referred by the program point
$n$ where $s$ is placed, a slicing criterion was modeled as $\crit =
(\cX,n)$.  In the following, we will gradually enrich and generalize
this model in order to include several different notions and forms of
slicing.  Weiser's approach is known as \emph{static slicing} since
the equivalence between the original program and the slice has,
implicitly, to hold for every possible input.  On the other hand,
Korel and Laski proposed a new technique called \emph{dynamic slicing}
\cite{KorelLaski} which only considers one particular computation, and
therefore one particular input, so that the dynamic slice only
preserves the (subset of the) meaning of the original program for that
input.
Hence, in order to characterize a slicing criterion including also
dynamic slicing we have to add a parameter describing the set of
initial memories $\cI \subseteq \memories$: The criterion is now
$\crit = (\cI,\cX,n)$, where $\cI = \memories$ for static slicing,
while $\cI = \{\memory\}$, with $\memory \in \memories$, for dynamic
slicing.  Finally, Canfora \etal proposed \emph{conditioned slicing}
\cite{Conditioned}, which requires that a conditioned slice preserves
the meaning of the original program for a set of inputs satisfying one
particular condition $\varphi$.  Let
$\cI=\sset{\memory\in\memories}{\memory\ \mbox{satisfies}\ \varphi}$
be the set of input memories satisfying $\varphi$ \cite{AForm}.
Hence, the slicing criterion still can be modeled as
$\crit=(\cI,\cX,n)$.

Each type of slicing comes in four forms which differ on what the
program and the slices must agree on, namely on the observable
semantics that has to agree.  In the following, we provide an informal
definition of these forms in order to provide the intuition of what
will be formally defined afterwards:
\begin{description}
\item[Standard] It considers one point in a program with respect to a
  set of variables. In other words, the standard form of slicing only
  tracks one program point.  Semantically, this form of slicing
  consists in comparing the program and the slices in terms of the
  ({\em denotational}) I/O semantics from the program inputs selected
  by the criterion. Namely, for each selected input, the results of
  the criterion variables in the point of observation must be the
  same, independently from the executed statements.
\item[Korel and Laski ($\KL$)] It is a stronger form where the program
  and the slice must follow identical paths
  \cite{KorelLaski}. Semantically, we could say that the program and
  the slice must have the same ({\em operational}) trace semantics
  w.r.t.\ the statements kept in the slice, starting from the program
  inputs selected by the criterion.  In other words, as before, the
  final value must be the same, but in this case these values must be
  obtained by executing precisely the same statements, i.e., following
  the same execution path.
\item[Iteration count ($\IC$)] When considering the trace semantics,
  the same program point inside a loop may be visited more than once,
  in the following we call {\em $k$-th iteration} of a program point
  $n$ the $k$-th time the program point $n$ is visited. The iteration
  count form of slicing requires that a program and its slice agree
  only at a particular $k$-th iteration of a program point of
  interest.  In this way, when a point of interest is inside a loop,
  we have the possibility to require that the variables must agree
  only at some iterations of the loop and not always.
\item[Korel and Laski iteration count ($\KLi$)] It is the combination
  of the last two forms.
\end{description}
In order to deal with these different forms of slicing, the slicing
criterion must be enriched with additional information.  In
particular, the $\KL$ form of slicing does not change where to observe
variables, but it does change the observed semantics up to that point.
Hence, we simply have to add a boolean parameter $\psi$: \true~means
that we are considering a $\KL$ form and we require that the slice
must agree with the program on the execution of statements that are in
the slice (and obviously also in the original program); on the other
hand, \false~indicates a standard, non-$\KL$ form of slicing.  Hence,
a criterion $\crit$ comes to be $(\cI,\cX,n,\psi)$.

The $\IC$ form, instead, affects the observation: in order to embed
this features in the criterion, the third parameter has to be changed.
Let $\{ k_1,..,k_j \} \subseteq \NATURALS$ be the iterations of the
program point $n \in \lnums$ we are interested in; then, instead of
$n$, in the third parameter of the criterion we should have
$\tuple{n,\{ k_1,..,k_j \}}$.  Therefore, $\crit$ takes the form
$(\cI,\cX,\cO,\psi)$, where $\cO \in \lnums \times \wp(\NATURALS)$.
Note that $\tuple{n,\NATURALS}$ represents the fact that we are
interested in all occurrences of $n$, as it happens in the standard
form.

There are also some \emph{simultaneous} ($\SIM$) forms of slicing that
consider more than one program point of interest.  In order to deal
with $\SIM$ forms of slicing, we simply extend the definition of a
slicing criterion by considering $\cO$ as a set instead of a
singleton, namely, $\cO \in \wp(\lnums \times \wp(\NATURALS))$.

In the Appendix there are some simple examples showing the main
differences between the several forms of slicing introduced so far.

\section{Abstract Program Slicing}
\label{section:AbstractProgramSlicing}
In this section we define a weaker notion of slicing based on Abstract
Interpretation.  In particular, we generalize the formal framework in
\cite{AForm} in order to include also the abstract versions of
slicing.

Program slicing is used for reducing the size of programs to analyze.
Nevertheless, sometimes this reduction is not sufficient for really
improving an analysis.  Suppose that some variables at some point of
execution do not have a desired property (for example, that they are
different from $0$, or from $\nil$); in order to understand where the
error occurred, it would be useful to find those statements which
affect such a property of these variables.  Standard slicing may
return too many statements, making it hard for the programmer to
realize which one caused the error.

\begin{example}
  \label{ex:abstractSlicing}
  Consider the following program $\prog$, that inserts a new element
  \CODE{elem} at position \CODE{pos} in a single-linked list.  For
  simplicity, let \CODE{pos} never exceed the length of \CODE{list}.
  {\em
    \begin{lstlisting}[firstnumber=34]
   y := null;
   x := list;
   while (pos>0) {
     y := x;
     x := x.next; // by hypothesis, this always succeeds
     pos := pos-1;
   }
   z := new Node(elem);
   z.next := x;
   if (y = null) {
     list := z;
   } else {
     y.next = z;
   }
    \end{lstlisting}
  }
  \noindent Suppose that \CODE{list} is cyclic after line 47, i.e., a
  traversal of the list visits the same node twice.  A close
  inspection of the code reveals that no cycle is created between
  lines 34 and 47: \CODE{list} is cyclic after line 47 if and only if
  it was cyclic before line 34.
  
  In the standard approach, it is possible to set the value of
  \CODE{list} after line 47 as the slicing criterion.  In this case,
  since \CODE{list} can be modified at lines 41--47, at least this
  piece of code must be included in the slice.

  On the other hand, let the cyclicity of \CODE{list} after line 47 be
  the property of interest, represented by $\CYCLEDOM$ (Example
  \ref{ex:referenceAbstractDomains}).  Since this property of
  \CODE{list} does not change, the entire code can be removed from the
  slice.
\end{example}

\subsection{Defining Abstract Program Slicing}
\label{sec:definingAbstractProgramSlicing}

We introduce \emph{abstract program slicing}, which compares a program
and its abstract slices by considering \emph{properties} instead of
exact values of program variables.  Such properties are represented as
abstract domains, based on the theory of \emph{Abstract
  Interpretation} (Section \ref{sec:basicAbstractInterpretation}).

We first introduce the notion of \emph{abstract slicing criterion},
where the property of interest is also specified.  For the sake of
simplicity, the definition only refers to non-$\SIM$ forms (i.e.,
$\cO$ is a singleton instead of a set of occurrences: $\cO \in \lnums
\times \wp(\NATURALS)$). In order to make abstract the criterion we
have to formalize in it the {\em properties} that we aim at observing
on program variables. In particular we could think of observing
different properties for different variables.  Hence, we define a
criterion abstraction $\cA$ defined as a tuple of abstract domains,
each one relative to a specific subset of program variables: Let $\cX$
be a set of variables of interest in $\prog$ and
$\{X_i\}_{i\in[1,k]}\subseteq\wp(\cX)$
a partition of $\cX$, the notation $\cA = \tuple{X_1: \uco_1,..,X_k:
  \uco_k}$ means that each uco $\uco_i$ is applied to the set of
variable $X_i$ (left implicit when it is clear from the context),
meaning that $\uco_i$ is precisely the property to observe on
$X_i$. In the following, we denote by $\cA_{|X_i}$ the property
observed on $X_i$, formally $\cA_{|X_i}=\uco_i$.
This is the most general representation, accounting also for
relational domains. When ucos will be applied to singletons, the
notation will be simplified ($x:\uco$ instead of $\{x\}:\uco$).

\begin{example}
  Let $x$, $y$, $z$ and $w$ be the variables in $\cX$.  Let $\cA$ be
  $\tuple{x:\PARDOM,\{y,z\}:\INTDOM^+,w:\SIGNDOM}$, meaning that the
  interest is on the parity of $x$, the sign of $w$, and the
  (relational) property of \emph{intervals} \cite{CC79} of the value
  $x+y$.  When abstracting a criterion w.r.t.~$\cA$, the required
  observation at a program state $\state$ is
  \[\PARDOM(\state(x)) \qquad 
  \INTDOM^+(\state(y)+\state(z)) \qquad \SIGNDOM(\state(w)) \]
\end{example}

In order to be as general as possible, we consider \emph{relational}
properties of variables (see Section~\ref{sec:preliminaries}), so that
properties are associated with tuples instead of single variables.  In
this case, a property is said to \emph{involve} some set (tuple) of
variables.  Given a memory $\memory$, $\uco(\memory)$ is the result of
applying $\uco$ to the values in $\memory$ of the variables involved
by the abstract domain, and $\cA(\memory)$ is the corresponding notion
for tuples of ucos.

\begin{mydefinition}[Abstract criterion]
  \label{def:AbstractCriteria} Let $\cI \subseteq \memories$ be a set
  of input memories, $\cX \subseteq \variables$ be a set of variables
  of interest; $\cO \in \lnums \times \wp(\NATURALS)$ be a set of
  occurrences of interest; $\psi$ be a truth value indicating if the
  slicing is in $\KL$ form.  Moreover, let $\cX$ be the set of
  variables of interest and $\cA \defi \tuple{X_1:\uco_1, \ldots,
    X_k:\uco_k}$, with $\{X_i\}_{i\in[1,k]}$ a partition of $\cX$.
  Then, the \emph{abstract slicing criterion} is $\crit_\cA =
  (\cI, \caX, \cO, \psi, \cA)$, 
\end{mydefinition}

Note that, when dealing with non-abstract notions of slicing, we have
that each domains is the identity on each single variable, namely $\cA
= \tuple{x_1:\IDDOM, \ldots,x_k:\IDDOM}$, where $\IDDOM \defi \lambda
x. x$.
It is also worth pointing out that, exactly as it happens for
non-abstract forms, $\cI = \memories$ corresponds to static slicing,
and $|\cI| = 1$ corresponds to dynamic slicing; in the intermediate
cases, we have conditioned slicing.

\COMMENT{
\begin{mydefinition}[Abstract slicing]
  \label{def:AbstractSlicing} Let $\prog$ and $\progq$ be executable
  programs such that $\progq$ is obtained from $\prog$ by removing
  zero or more statements, and let $\crit_\cA =
  (\cI,\caX,\cO,\psi,\cA)$, with $\cA \defi \tuple{X_1:\uco_1, \ldots, X_k:\uco_k}$.  
  $\progq$ is an \emph{abstract slice} of
  $\prog$ with respect to $\crit_\cA$ if (1) for each
  $\memory \in \cI$, when the execution of $\prog$ from input
  $\memory$ reaches a point (or occurrence) in $\cO$, the execution of
  $\progq$ from $\memory$ reaches $\cO$ as well; and (2) for each $i \in [1,k]$, each $X_i$
  has the same property $\uco_i$ both in $\prog$ and in $\progq$,
  i.e., $\uco_i(\memory'_i) = \uco_i(\memory''_i)$ where $\memory'_i$
  and $\memory''_i$ are the current memories of $\prog$ and $\progq$,
  respectively.  Moreover, if $\psi = \true$, then the executions of
  $\prog$ and $\progq$ have to follow identical paths as explained in
  Definition \ref{def:UnifiedEquivalence}.
\end{mydefinition}}

\subsection{The extended formal framework}
\label{subsection:theFormalFramework}

In this section, we extend a formal framework in which all forms of abstract
slicing can be formally represented.  It is an extension of the
mathematical structure introduced by Binkley.  Following their framework, we represent a
form of abstract slicing by a pair $(\sqsubseteq, \cE_{\cA})$, where
$\sqsubseteq$ is the traditional syntactic ordering, and $\cE_{\cA}$
is a function mapping abstract slicing criteria to semantic
equivalence relations on programs.  Given two programs $\prog$ and
$\progq$, and an abstract slicing criterion $\crit_\cA$, we say that
$\progq$ is a $(\sqsubseteq, \cE_{\cA})$-\emph{(abstract)-slice} of
$\prog$ with respect to $\crit_\cA$ iff $\progq \sqsubseteq \prog$ and
$\tuple{\prog,\progq} \in \cE_{\cA}(\crit_\cA)$ (i.e., $\prog$ and $\progq$
are equivalent w.r.t.~$\cE_{\cA}$).  Some preliminary notions are
needed to define $\cE_{\cA}$ in the context of abstract slicing.

An \emph{abstract memory} w.r.t.\ a set of variables of interests
$\cX$ (partitioned in $\{X_i\}_{i\in[1,k]}$) is obtained from a memory
by restricting its domain to the variables of interest, and assigning
to each set $X_i$ of variables an abstract value determined by the
corresponding abstract property of interest $\uco_i$.

\begin{mydefinition}
  \label{def:AbstractStateRestriction} Let $\memory\!\in\!\memories$
  be a memory, $\cX$ be the set of 
  a tuple of sets of variables of interest, and $\cA =
  \tuple{X_1:\uco_{1},\ldots,X_k:\uco_{k}}$ be the corresponding tuple
  of properties of interest such that $\{X_i\}_{i\in[1,k]}$ is a
  partition of $\cX$.  The \emph{abstract restriction} of a memory
  $\memory$ w.r.t.~the state abstraction $\cA$ is defined as
  $\ABSMEM{\memory}{\caX}{\cA} \defi \cA \circ \memory(\caX) \defi
  \tuple{\uco_{1}(\memory(X_{1})),\ldots, \uco_{k}(\memory(X_{k}))}$.
\end{mydefinition}

\begin{example}
  \label{ex:Var}
  Let $\variables = \{x_1, x_2, x_3, x_4\}$ be a set of variables, and
  suppose that the properties of interest are the (relational) sign of the product
  $x_1 x_2$ and the parity of $x_3$ (both defined in
  Section~\ref{sec:preliminaries}).
  We slightly abuse notation by denoting as $\SIGNDOM$
  also its extension to pairs $(v,t)$ where the sign of their product
  matters: e.g., $\SIGNDOM(3,-5)=(\NEG)$.  In our formal
  framework, $\cA$ is defined as
  $\tuple{\{x_1,x_2\}:\SIGNDOM, x_3:\PARDOM}$.  Let
  $\memory(x_1) = 1$, $\memory(x_2) = 2$, $\memory(x_3) = 3$, and
  $\memory(x_4) = 4$; then, $\ABSMEM{\memory}{\caX}{\cA}$ comes to be
  $\cA \circ \memory(\caX) = \langle \POS, \ODD \rangle$.
\end{example}

The \emph{abstract projection} operator modifies a state trajectory by
removing all those states which do not contain occurrences or points
of interest.  If there is a state that contains an occurrence of
interest, then its memory state is restricted via $\ABSMEMop$ to the
variables of interest, and only a property is considered for each
tuple.  In the following, the \emph{abstract projection}
$\Proj^\alpha$ is formally defined.

\begin{mydefinition}[Abstract Projection]
  \label{def:ProjA}
  Let $\crit_\cA=(\cI,\caX,\cO,\psi,\cA)$, and $\cL\subseteq\lnums$
  such that $\cL\neq\emptyset$ if $\psi=\true$, $\cL=\emptyset$
  otherwise. For any $n \in \lnums$, $k \in \NATURALS$, $\memory \in
  \memories$, we define a function $\Proj^{0\alpha}$ as:
  \[ \Proj^{0\alpha}_{(\caX, \cO, \cL,
    \cA)}(n^k, \memory) \defi \left\{
  \begin{array}{l l}
    \tuple{n^k, \ABSMEM{\memory}{\caX}{\cA}} & \mbox{if }
      \exists\tuple{n,K}\in \cO.\:k \in K \\ \tuple{n^k,\bot}
      & \mbox{if }
      \nexists(n,K) \in \cO.\:k \in K\mbox{ and }n\in \cL \\
    \varepsilon &
    \mbox{otherwise}
  \end{array}
  \right.
  \]
  The abstract projection $\Proj^{\alpha}$ is the extension of
  $\Proj^{0\alpha}$ to sequences:
  \[ \begin{array}{l}
    \Proj_{(\caX, \cO, \cL, \cA)}^{\alpha} (\tuple{(n_1^{k_1},\memory_1) \ldots 
    (n_l^{k_l}, \memory_l)})  =\\
    \qquad\qquad\qquad\Proj^{0\alpha}_{(\caX, \cO, \cL, \cA)}(n_1^{k_1},
    \memory_1) \circ \ldots \circ {\Proj^{0\alpha}}_{(\caX, \cO, \cL,
        \cA)}(n_l^{k_l}, \memory_l)
  \end{array}
  \]
\end{mydefinition}

\noindent
$\Proj^{0\alpha}$ takes a state from a state trajectory, and returns
either one pair or an empty sequence $\varepsilon$.  Abstract
projection allows us to define all the semantic equivalence relations
we need for representing the abstract forms of slicing.

\begin{example}
  \label{absSli} Consider the program $\prog$ in
  Figure~\ref{fig:exsl}.
\begin{figure}
    \begin{center}
      {\small
        \begin{tabular}{c|c}
          \begin{lstlisting}
  read(n);
  i := 1; 
  s := 0; 
  p := 1;
  while (i <= n) {
    s := s+i;
    p := p*i;
    i := i+1; }
  write(i,n,s,p);
          \end{lstlisting} 
          &
          \begin{lstlisting}
  read(n);
  i := 1; 
  s := 0; 
  
  while (i <= n) {
    s := s+i;
    
    i := i+1; }
  write(i,n,s);
          \end{lstlisting} 
          \\
          &\\
          Program $\prog$ & Program $\progq$ 
      \end{tabular}}
    \end{center}
    
    \caption{Program $\prog$ and its slice.}\label{fig:exsl}
  \end{figure}  
  Consider $\cI=\tM$ (meaning that we are considering static slicing),
  $\caX=\{\ii,\ss\}$, $\cO=\tuple{8,\NATURALS}$ (meaning that we check
  the value of variables of interest at each iteration of program
  point $8$).  Moreover, we consider $\cA=\tuple{\mbox{\ii}:\SIGNDOM,
    \mbox{\ss}:\PARDOM}$.  Then in Figure~\ref{absFig} we have the
  corresponding abstract projection (the concrete trace is given in
  the Appendix, Example~\ref{ExTrace}).  In this figure, we depict
  states as set of boxes, the first one contains the number of the
  executed program point (with the iteration counter as apex), while
  the other boxes are the different variables associations. The cross
  on a box means that the projection does not consider that variable
  or state. So for instance, in this example we care only of states
  $6^i$ and $8^i$, and in particular, in states $6^i$ we are not
  interested in the values of variables, while in states $8^i$ we are
  interested in the sign of $i$ and in the parity of $s$ (if we would
  be interested in the value of these variables we would have the
  value instead of their property, as it happens in the examples in
  the Appendix).
  
  \begin{figure}
    \begin{center}
      \includegraphics[scale=.4]{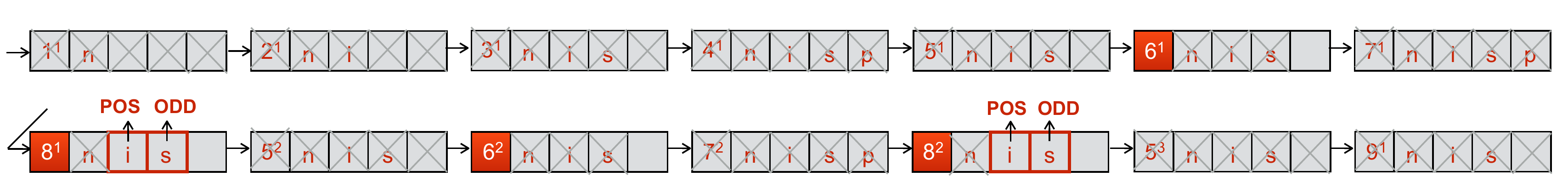}
    \end{center}
    \caption{Abstract trajectory projection for program $\prog$ in
      Example~\ref{absSli}}
    \label{absFig}
  \end{figure}
\end{example}

\subsection{Abstract Unified Equivalence}
\label{subsection:AbstractUnifiedEquivalence}

The only missing step for completing the formal definition of abstract
slicing in the formal framework is to characterize the functions
mapping abstract slicing criteria to abstract semantic equivalence
relations.

\begin{mydefinition}[Abstract Unifying Equivalence]
  \label{def:AbstractUnifiedEquivalence} Let $\prog$ and $\progq$ be
  executable programs, and $\crit_\cA=(\cI,\caX,\cO,\psi,\cA)$ be an
  abstract criterion.  Then $\prog$ is \emph{abstract-equivalent} to
  $\progq$ if and only if, for every $\memory\in \cI$, it holds that
  $\Proj^{\alpha}_{(\caX, \cO, \cL,\cA)}(\trace_\prog^\memory)
  = \Proj^{\alpha}_{(\caX,\cO,\cL,\cA)}(\trace_\progq^\memory)$, where
  $\cL=\lnums_\prog\cap \lnums_\progq$ if $\psi=\true$.  The
  function $\cE_\cA$ maps each criterion $\crit_{\cA}$ to a
  corresponding abstract semantic equivalence relation.
\end{mydefinition}

Therefore, a generic form of slicing can be represented as
$(\sqsubseteq,\cE_{\cA})$.  This can be used to formally define both
traditional and abstract forms of slicing in the presented abstract
formal framework, so that the latter comes to be a generalization of
the original formal framework.  The following examples show how it is
possible to use these definitions in order to check whether a program
is an abstract slice of another one.

\begin{figure}
  \begin{center}
    \begin{tabular}{c@{\quad}|@{\quad}c}
      \begin{lstlisting}
   read(n);
   read(s);
   i := 1;
   while (i<=n) {
     s := s+2*i;
     i := i+1;  }
   write(i,n,s);
      \end{lstlisting}
      &
      \begin{lstlisting}
   read(n);
   read(s);
   
   write(n,s);
      \end{lstlisting}
    \end{tabular}
  \end{center}
  \caption{Programs \prog and \progq}\label{fig:PandQ}
\end{figure}

\begin{example}
  Consider the programs $\prog$ and $\progq$ in
  Figure~\ref{fig:PandQ}. Let $\crit_{\cA} = (\memories, \{\mbox{\ss}
  \},\{\tuple{7,\NATURALS}\}, \false,
  \tuple{\mbox{\ss}:\uco_{\textsc{par}}})$, meaning that we are
  interested in the parity of \ss
  ($\cA=\tuple{\mbox{\ss}:\uco_{\textsc{par}}}$) at the end of
  execution ($\cO=\{\tuple{7,\NATURALS}\}$) for all possible inputs
  ($\cI = \memories$), in non-$\KL$ form.  Since $\progq \sqsubseteq
  \prog$, in order to show that $\progq$ is an abstract static slice
  of $\prog$ with respect to $\crit_{\cA}$, we have to show that
  $\tuple{\prog,\progq} \in \cE_{\cA}(\crit_{\cA})$ holds.  Let
  $\memory = \{\BIND{n}{a}, \BIND{s}{b}\}$ for some $a,b\in\NATURALS$
  be an initial memory.  The trajectory of $\prog$ from $\memory$
  contains the following steps of computation:
\begin{center}
\includegraphics[scale=.4]{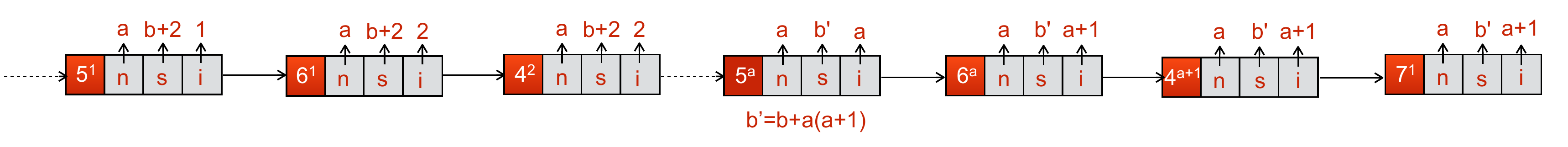}
\end{center}
  Applying $\Proj^\alpha$ (with $\cL=\emptyset$ since $\psi=\false$)
  to $\trace_\prog^\memory$ returns only the abstract value of the
  variable \ss at point $7$ (due to $\crit_\cA$):
  \[
  \begin{array}{rcl}
    \Proj^\alpha_{(\caX,\cO,\emptyset,\cA)}(\trace_\prog^\memory) & =
    &
    \tuple{7^1, \ABSMEM{\{ \BIND{n}{a}, \BIND{s}{b{+}a(a{+}1)},
      \BIND{i}{a{+}1} \}}{\{\langle s \rangle\}}{\cA})} \\ & = & \tuple{7^1,
    \PARDOM(b{+}a(a{+}1))}  =  \tuple{7^1, \PARDOM(b)}
  \end{array}
  \]
  \noindent Since we have $7^1\!=\!\tuple{7, 1}\!\in\!\cO$,
  $\Proj^{0\alpha}_{(\caX,\cO,\emptyset,\cA)}(7^{1}, \{\BIND{n}{a},
  \BIND{s}{b{+}a(a{+}1)}, \BIND{i}{a{+}1} \})$ returns
  $\tuple{7^1,\ABSMEM{\memory}{\caX}{\cA}} = \tuple{7^1, \ABSMEM{\{
      \BIND{n}{a}, \BIND{s}{b{+}a(a{+}1)}, \BIND{i}{a{+}1}
      \}}{\{\langle s \rangle\}}{\cA}}$.  The abstract memory
  restricts the domain of $\memory$ to variables of interest, so we
  consider only the part of $\memory$ regarding $s$, i.e.,
  $b\!+\!a(a\!+\!1)$.  Hence, we have $\tuple{7^1, \PARDOM(\langle
    b\!+\!a(a\!+\!1)\rangle)}$.  Since the parity of $b{+}a(a{+}1)$
  only depends on the parity of $b$, being either $a$ or $a+1$ even,
  the final result is $\tuple{7^1, \PARDOM(b)}$.  Consider now the
  execution of $\progq$ from $\memory$, which corresponds to the
  following state trajectory:
  \begin{center}
\includegraphics[scale=.4]{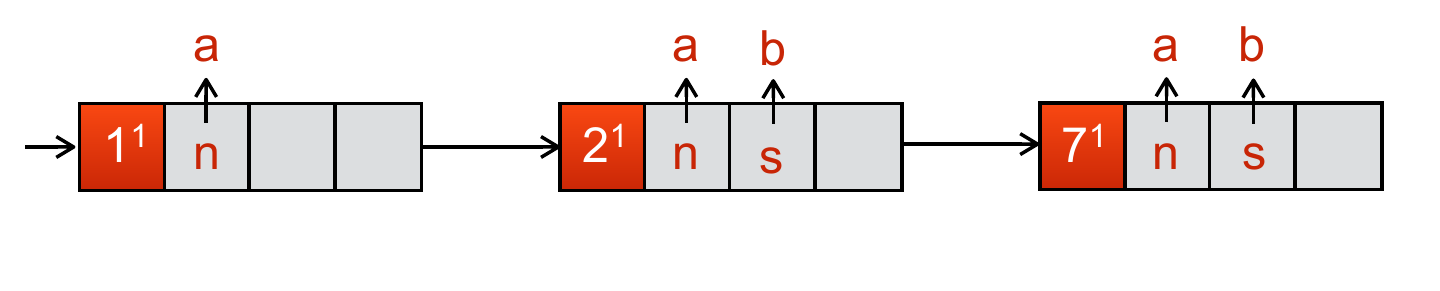}
\end{center}
  Applying $\Proj^\alpha$ to $\trace_\progq^\memory$
  gives: \[ \begin{array}{rcl}
    \Proj^\alpha_{(\caX,\cO,\emptyset,\cA)}(\trace_\progq^\memory)
   =\tuple{7^1, \ABSMEM{\{\BIND{n}{a}, \BIND{s}{b}\}}{\{\langle s
       \rangle\}}{\cA}} = \tuple{7^1, \PARDOM(b)} \end{array} \]
  Therefore,
  $\Proj^\alpha_{(\caX,\cO,\emptyset,\cA)}(\trace_\prog^\memory)$ is
  equal to
  $\Proj^\alpha_{(\caX,\cO,\emptyset,\cA)}(\trace_\progq^\memory)$.
  As $\memory$ is an arbitrary input, this equation holds for each
  $\memory \in \memories$, so that $\tuple{\prog,\progq} \in
  \cE_{\cA}(\crit_{\cA})$, and this implies that $\progq$ is an
  abstract static slice of $\prog$ w.r.t.~$\crit_{\cA}$.
\end{example}

\begin{figure}
  \begin{center}
    \begin{tabular}[t]{c@{\quad}|@{\quad}c}
      \begin{lstlisting}
   read(n);
   s := 0;
   i := 1;
   while (i<=n) {
     s := s+i;
     i := i+1; }
   write(i,n,s);
      \end{lstlisting}
      &
      \begin{lstlisting}
   read(n);
   s := 0;
   
   write(n,s);
      \end{lstlisting}
    \end{tabular}
  \end{center}
  \caption{Programs \progr and \progss}\label{fig:RandS}
\end{figure}
\begin{example}
  Consider the programs $\progr$ and $\progss$ in
  Figure~\ref{fig:RandS}, and let $\crit_{\cA}$ be $(\cI,
  \{\mbox{\ss}\},\{\tuple{7,\NATURALS}\},\false,\tuple{\ss:\PARDOM})$,
  where $\cI =
  \{~\memory\mid\memory(\mbox{\nn})\!\in\!4\mathbb{Z}~\}$; i.e., we
  are interested in the parity of \ss at the end of the execution for
  all inputs where \nn is a multiple of $4$.  Since
  $\progss{\sqsubseteq}\progr$, in order to show that $\progss$ is an
  abstract conditioned slice of $\progr$ w.r.t.~$\crit_{\cA}$, we have
  to show that $\tuple{\progr,\progss} \in \cE_{\cA}(\crit_{\cA})$
  holds, namely that they have the same abstract projection.  Let
  $\memory \in \cI$ be an initial memory, and suppose
  $\memory(\mbox{\nn}) = a = 4m$.  The trajectory
  $\trace_\progr^\memory$ of $\progr$ from $\memory$ contains the
  following computations:
   \begin{center}
\includegraphics[scale=.4]{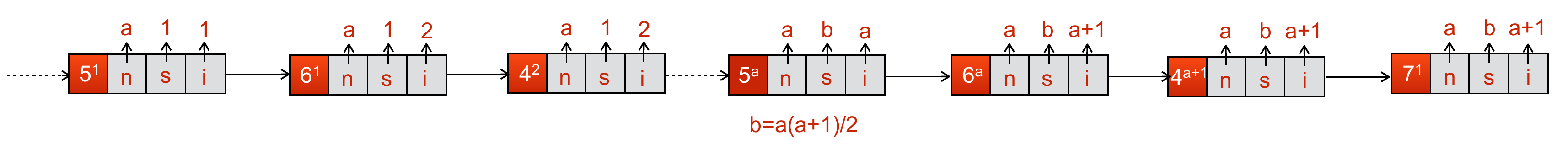}
\end{center}
While executing $\progss$ from $\memory$ gives the state
  trajectory $\trace_\progss^\memory$
   \begin{center}
\includegraphics[scale=.4]{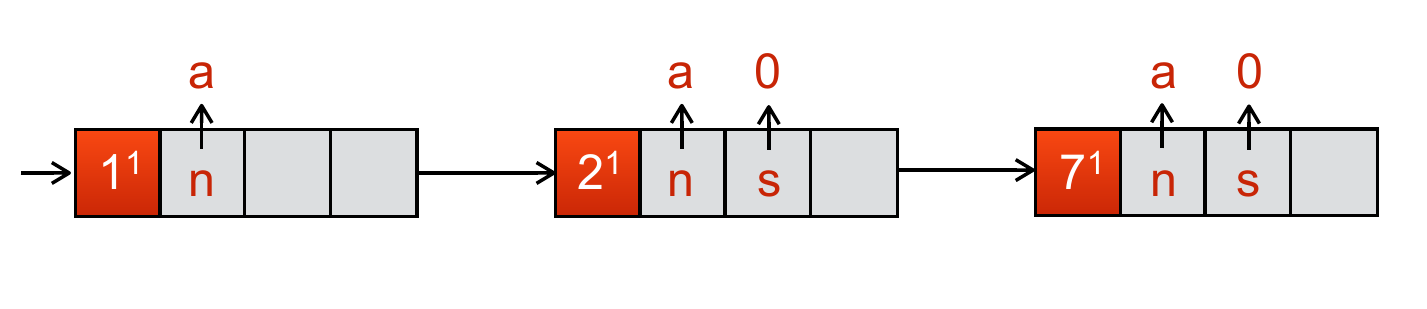}
\end{center}
  Applying $\Proj^\alpha$ to both state trajectories we have:
  \[
  \begin{array}{rcl}
    \Proj^\alpha_{(\caX,\cO,\emptyset,\cA)}(\trace_\progr^\memory) 
    & = &\tuple{7^1, \ABSMEM{\{ \BIND{n}{a}, \BIND{s}{a(a{+}1)/2},
      \BIND{i}{a{+}1} \}}{\tuple{\{s\}}}{\cA}} 
     =  \tuple{7^1, \PARDOM(a(a\!+\!1)/2)} \\
     &=&  \tuple{7^1, \PARDOM(2m(4m{+}1))} = \tuple{7^1, 2\mathbb{Z}} \\
     &=& \tuple{7^1, \PARDOM(0)} =\tuple{7^1, \ABSMEM{\{ \BIND{n}{a}, \BIND{s}{0}
    \}}{\tuple{\{s\}}}{\cA}} 
    =\Proj^\alpha_{(\caX,\cO,\emptyset,\cA)}(\trace_\progss^\memory)
  \end{array}
  \]
  Therefore, we have
  $\Proj^\alpha_{(\caX,\cO,\emptyset,\cA)}(\trace_\progr^\memory) =
  \Proj^\alpha_{(\caX,\cO,\emptyset,\cA)}(\trace_\progss^\memory)$ .
  As $\memory$ is an arbitrary input from $\cI$, this equation holds
  for each $\memory \in \cI$, so that $\tuple{\progr,\progss} \in
  \cE_{\cA}(\crit_{\cA})$, and this implies that $\progss$ is an
  abstract conditioned slice of $\progr$ w.r.t.~$\crit_{\cA}$. It is
  worth noting, that $\progss$ is not a {\em static} abstract slice of
  $\progr$ since for all the input values $a\notin 4\mathbb{Z}$ for
  $n$ the parity of the final value of \ss is not necessarily even.
\end{example}

\subsection{Comparing forms of Abstract Slicing}
\label{subsection:Comparing}
This section provides a formal theory allowing us to compare abstract
forms of slicing between themselves, and with non-abstract ones.
First of all, we show under which conditions an abstract semantic
equivalence relation \emph{subsumes} another one; analogously, we show
when the form of (\emph{abstract}) slicing, corresponding to the
former equivalence relation, subsumes the form of (\emph{abstract})
slicing corresponding to the latter one.  Such results are necessary
in order to obtain a precise characterization of the extension of the
\emph{weaker than} relation (whose original definition is recalled in
the Appendix) to the abstract forms of slicing.

\COMMENT{
  \begin{figure}[tbp]
    \centering
    \includegraphics[scale=.65,viewport=3.7in 6.7in 5.5in 8.45in]{SlicingTutto1.pdf}
    \caption{ \label{fig:SlcTto}}
  \end{figure}}

The following lemma shows under which conditions on the slicing
criteria there is a relation of subsumption between two semantic
equivalence relations. In the following, we denote
$\overline{\sqsubseteq}$ the relation "more concrete than" in the
lattice of abstract interpretations between tuples of abstractions.
Formally, let us consider
$\cA^1=\tuple{X_1^1:\uco_1^1,\ldots,X_{k^1}^1:\uco_{k^1}^1}$ defined
on the variables $\cX^1$ and
$\cA^2=\tuple{X_1^2:\uco_1^2,\ldots,X_{k^2}^2:\uco_{k^2}^2}$ defined
on the variables $\cX^2$, such that $\cX^1\subseteq\cX^2$, $k^1\leq
k^2$ and $\forall i\leq k^1$ we have $X^1_i=X^2_i$, namely the
variables in common are partitioned in the same way.  Then
$\cA^{2}\overline{\sqsubseteq}\cA^{1}$ iff $\forall
X_i\in\cX^1.\:\uco_i^2\sqsubseteq\uco^1_i$. Note that, for all the
variables in $\ok{\cX^2\smallsetminus\cX^1}$, the abstraction
$\ok{\cA^1}$ does not require any particular observation, hence on
these variables surely $\ok{\cA^2}$ is more precise. The following
relation is such that, when both $\cA^1$ and $\cA^2$ are the identity
on all the variables of interest, then the resulting criterion
relation is the same proposed in \cite{TheoFoun} (see the Appendix for
details) for characterizing the original formal framework.

\begin{lemma}
  \label{the:AbstractUnifiedEquivalenceLemma}
  Let two abstract slicing criteria $\crit_{\cA}^1 =
  (\cI^1,\caX^1,\cO^1,\psi^1,\cA^1)$ and
  $\crit_{\cA}^2=(\cI^2,\caX^2,\cO^2,\psi^2,\cA^2)$ be given.  If (1)
  $\cI^{1}\subseteq \cI^{2}$; (2) $\cO^{1}\subseteq\cO^{2}$; (3)
  $\cX^{1}\subseteq\cX^{2}$; (4) $\psi^1\Ra\psi^2$;
  and (5) $\cA^{2}\overline{\sqsubseteq}\cA^{1}$ (denoted $\crit_\cA^1\ra_{\mbox{\tiny $\cA$}}\crit^2_\cA$), then 
  $(\sqsubseteq,\cE(\crit_{\cA}^1))$ \emph{subsumes} $(\sqsubseteq,\cE(\crit_{\cA}^2))$, i.e., for
  every $\prog$ and $\progq$ such that $\progq\sqsubseteq\prog$, $\tuple{\prog,\progq} \in \cE(\crit_{\cA}^2)$
  implies $\tuple{\prog,\progq} \in \cE(\crit_{\cA}^1)$.
\end{lemma}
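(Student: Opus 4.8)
The plan is to prove the implication directly by exhibiting a single, program\nobreakdash-independent transformation $F$ on projected trajectories such that, for every program $\progr$ and every input memory $\memory$, $\Proj^{\alpha}_{(\caX^1,\cO^1,\cL^1,\cA^1)}(\trace_\progr^\memory) = F(\Proj^{\alpha}_{(\caX^2,\cO^2,\cL^2,\cA^2)}(\trace_\progr^\memory))$. Granting this factorisation, the lemma is immediate: assuming $\tuple{\prog,\progq}\in\cE(\crit_{\cA}^2)$ and fixing any $\memory\in\cI^1\subseteq\cI^2$ (by~(1)), Definition~\ref{def:AbstractUnifiedEquivalence} gives that the criterion\nobreakdash-$2$ projections of $\trace_\prog^\memory$ and of $\trace_\progq^\memory$ coincide, hence so do their images under $F$, which are exactly the criterion\nobreakdash-$1$ projections; as $\memory$ ranges over $\cI^1$ this yields $\tuple{\prog,\progq}\in\cE(\crit_{\cA}^1)$. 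A preliminary remark: hypothesis~(4), $\psi^1\Ra\psi^2$, forces $\cL^1\subseteq\cL^2$, since $\cL^1=\cL^2=\lnums_\prog\cap\lnums_\progq$ when $\psi^1=\true$ and $\cL^1=\emptyset$ otherwise.

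The key ingredient in the construction of $F$ is a closure-operator identity. By~(3) and~(5), $\cA^2\overline{\sqsubseteq}\cA^1$ means that $\cX^1\subseteq\cX^2$ with the common variables partitioned identically --- so the blocks of $\cX^1$ are the first $k^1$ blocks of the partition of $\cX^2$ --- and, block-wise, $\uco^2_i\sqsubseteq\uco^1_i$ on the corresponding (product) lattice. From monotonicity, idempotence and extensivity of closure operators one gets $\uco^1_i\circ\uco^2_i=\uco^1_i$; hence, for any memory $\memory'$, $\uco^1_i(\memory'(X^1_i))=\uco^1_i(\uco^2_i(\memory'(X^2_i)))$. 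Therefore the abstract restriction $\ABSMEM{\memory'}{\caX^1}{\cA^1}$ of Definition~\ref{def:AbstractStateRestriction} is recovered from $\ABSMEM{\memory'}{\caX^2}{\cA^2}$ by dropping the components relative to $\cX^2\smallsetminus\cX^1$ and applying the coarser $\uco^1_i$ to each surviving component --- a map depending only on $\crit_{\cA}^1$ and $\crit_{\cA}^2$, not on $\progr$ or $\memory'$.

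I would then define $F$ to act entry-wise on the output of $\Proj^{\alpha}_{(\caX^2,\cO^2,\cL^2,\cA^2)}$, each entry carrying a tag $n^k$. On such an entry: if some $\tuple{n,K}\in\cO^1$ has $k\in K$ --- in which case, by $\cO^1\subseteq\cO^2$ from~(2), the entry is a genuine abstract-memory pair and never a $\bot$-marker --- emit $n^k$ paired with the re-abstraction just described; otherwise, if $n\in\cL^1$, emit $\tuple{n^k,\bot}$; otherwise delete the entry. Extending $F$ to sequences by concatenation (with $\varepsilon$ meaning deletion), I would verify the factorisation by a case analysis on the three branches of $\Proj^{0\alpha}$ under the two criteria, using $\cO^1\subseteq\cO^2$ and $\cL^1\subseteq\cL^2$: wherever criterion~$1$ emits an abstract-memory pair at $n^k$, so does criterion~$2$, and the closure identity recovers it; wherever criterion~$1$ emits $\tuple{n^k,\bot}$ (so $n\in\cL^1$ but $k$ lies in no $K$ of $\cO^1$), criterion~$2$ emits at $n^k$ either an abstract-memory pair or $\tuple{n^k,\bot}$, both of which $F$ sends to $\tuple{n^k,\bot}$; and wherever criterion~$1$ discards the state, $F$ discards the corresponding criterion\nobreakdash-$2$ entry, if any.

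The closure identity and the comparisons $\cL^1\subseteq\cL^2$, $\cO^1\subseteq\cO^2$ are routine; I expect the main obstacle to be the case analysis showing that $F$ is well-defined and reproduces $\Proj^{\alpha}$ under criterion~$1$ exactly. The delicate point is the branch-wise asymmetry of $\Proj^{0\alpha}$ between the two criteria --- criterion~$2$ may return an abstract-memory pair where criterion~$1$ returns only a $\bot$-marker, or a $\bot$-marker where criterion~$1$ discards the state altogether --- so the entry-wise rule of $F$ must reconstruct criterion~$1$'s behaviour from the tag $n^k$ and the criterion\nobreakdash-$1$ parameters alone, without knowing which branch criterion~$2$ followed. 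As a sanity check, when both $\cA^1$ and $\cA^2$ are the identity on all variables of interest, $\overline{\sqsubseteq}$ collapses to $\cX^1\subseteq\cX^2$ and the argument specialises to the subsumption result underlying the non-abstract framework of~\cite{TheoFoun}.
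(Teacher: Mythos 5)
Your proposal is correct, and it reorganises the argument in a way that differs from the paper's proof in architecture though not in substance. The paper proves the implication by a direct two-program comparison: it takes the hypothesis that the criterion-$2$ projections of $\trace_\prog^\memory$ and $\trace_\progq^\memory$ coincide, matches corresponding states of the two trajectories, and runs a three-way case analysis on the branches of $\Proj^{0\alpha}$ under criterion $1$, showing in each branch that the criterion-$2$ agreement forces the criterion-$1$ agreement (the pair case uses exactly your uco step, in implication form: $\uco^2_i(\memory(X_i))=\uco^2_i(\memory'(X_i))$ together with $\uco^2_i\sqsubseteq\uco^1_i$ yields $\uco^1_i(\memory(X_i))=\uco^1_i(\memory'(X_i))$). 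You instead package the same three-way case analysis into a single-program statement, namely the factorisation $\Proj^{\alpha}_{(\caX^1,\cO^1,\cL^1,\cA^1)}=F\circ\Proj^{\alpha}_{(\caX^2,\cO^2,\cL^2,\cA^2)}$ built from $\cO^1\subseteq\cO^2$, $\cL^1\subseteq\cL^2$ and the identity $\uco^1_i\circ\uco^2_i=\uco^1_i$, and then obtain subsumption for free by applying $F$ to both sides of the criterion-$2$ equality. What this buys is modularity and a cleaner handling of the step the paper leaves slightly informal (the existence of a ``corresponding state'' in $\progq$'s trajectory with the same projection): equality of sequences plus functoriality of $F$ replaces the state-matching argument entirely. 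The only caveat is terminological: $F$ is not strictly program-independent, since $\cL^1$ and $\cL^2$ are determined by the pair $(\prog,\progq)$ under comparison; what matters, and what your argument uses, is that the \emph{same} $F$ applies to the projections of both $\prog$ and $\progq$ once that pair is fixed, so the conclusion is unaffected.
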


\begin{proof}
  First of all, note that, if $\cA^1=\cA^2=\uco_{\iid}$, namely if we
  are considering concrete criteria, then $\ra_{\mbox{\tiny $\cA$}}$
  collapses to the concrete relation defined in \cite{AForm} (which is
  the $\ra$ defined in Equation~\ref{eq:defrelcrit} in the
  Appendix). Hence, in this case, the results holds by
  \cite{AForm}.\\ Suppose $\tuple{\prog,\progq} \in
  \cE(\crit_{\cA}^2)$ with $\progq\sqsubseteq \prog$, namely $\progq$
  slice of $\prog$ w.r.t.\ $\cE(\crit_{\cA}^2)$.  This means that, for
  each $\memory_0\in \cI^2$ $\Proj^{\alpha}_{(\cX^2, \cO^2,
    \cL^2,\cA^2)}(\trace_\prog^{\memory_0}) =
  \Proj^{\alpha}_{(\cX^2,\cO^2,\cL^2,\cA^2)}(\trace_\progq^{\memory_0})$,
  where $\cL^2$ is defined as in
  Definition~\ref{def:AbstractUnifiedEquivalence}.  This means that,
  for each state $\tuple{n^{k},\memory}$ in the trajectory
  $\trace_\prog^{\memory_0}$, whose projection
  $\Proj^{0\alpha}_{(\caX^2, \cO^2, \cL^2, \cA^2)}(n^k,
  \memory)_\prog$\footnote{The notation $\Proj^{0\alpha}_{(\caX^2,
      \cO^2, \cL^2,\cA^2)}(n^k, \memory)_\prog$ means that we are
    projecting a state of the computation of $\prog$.}  is not empty,
  there exists a state $\tuple{n^{k},\memory'}$ in
  $\trace_\progq^{\memory_0}$ with the same projection.  Let us
  consider now, $\memory\in\cI^1\subseteq\cI^2$. We prove that on
  these states $\Proj^{0\alpha}_{(\caX^1, \cO^1, \cL^1, \cA^1)}(n^k,
  \memory)=\Proj^{0\alpha}_{(\caX^2, \cO^2, \cL^2, \cA^2)}(n^k,
  \memory)$ (and in this case there is a corresponding state in the
  trajectory of $\progq$), or it is empty (and in this case also the
  state in $\progq$ has empty projection).  Recall that
  \[ \Proj^{0\alpha}_{(\caX^1, \cO^1, \cL^1,
    \cA^1)}(n^k, \memory) \defi \left\{
  \begin{array}{l l}
    \tuple{n^k, \ABSMEM{\memory}{\caX^1}{\cA^1}} & \mbox{if }
      \exists\tuple{n,K}\in \cO^1.\:k \in K \\ \tuple{n^k,\bot}
      & \mbox{if }
      \nexists(n,K) \in \cO^1.\:k \in K\mbox{ and }n\in \cL^1 \\
    \varepsilon &
    \mbox{otherwise}
  \end{array}
  \right.
  \]
  where also $\cL^1$ is defined as in
  Definition~\ref{def:AbstractUnifiedEquivalence}.  Note that, since
  we are considering both the criteria on the same pair of programs,
  we have also that $\psi^1\Ra\psi^2$ corresponds to saying that
  $\cL^1\subseteq\cL^2$. At this point
  \begin{itemize}
  \item[$\bullet$] If $\Proj^{0\alpha}_{(\caX^1, \cO^1, \cL^1,
    \cA^1)}(n^k, \memory)_\prog=\tuple{n^k,
    \ABSMEM{\memory}{\caX^1}{\cA^1}}$ then $\exists\tuple{n,K}\in
    \cO^1.\:k \in K$, but $\cO^1\subseteq\cO^2$, hence $\tuple{n,K}\in
    \cO^2.\:k \in K$. This mean that $\Proj^{0\alpha}_{(\caX^2, \cO^2,
      \cL^2, \cA^2)}(n^k, \memory)_\prog=\tuple{n^k,
      \ABSMEM{\memory}{\caX^2}{\cA^2}}$, which by hypothesis is equal
    to $\Proj^{0\alpha}_{(\caX^2, \cO^2, \cL^2,\cA^2)}(n^k,
    \memory')_\progq$, for a memory $\memory'$. By definition and
    hypothesis, $\ABSMEM{\memory}{\caX^2}{\cA^2}=\cA^2 \circ
    \memory(\caX)=\cA^2 \circ \memory'(\caX)$. Namely,
    $\tuple{\uco_{1}^2(\memory(X_{1})),\ldots,
      \uco_{k^2}^{2}(\memory(X_{k^2}))}=\tuple{\uco_{1}^2(\memory'(X_{1})),\ldots,
      \uco_{k^2}^2(\memory'(X_{k^2}))}$.  Therefore, in particular,
    $\forall i\in[1,k^1]\subseteq[1,k^2]$ we have
    $\uco_i^2(\memory(X_i))=\uco_i^2(\memory'(X_i))$, but by
    hypothesis $\uco^2_i\sqsubseteq\uco^1_i$, hence we also have
    $\uco_i^1(\memory(X_i))=\uco_i^1(\memory'(X_i))$ (by properties of
    ucos). But then $\tuple{\uco_{1}^1(\memory(X_{1})),\ldots,
      \uco_{k^1}^{1}(\memory(X_{k^1}))}=\tuple{\uco_{1}^1(\memory'(X_{1})),\ldots,
      \uco_{k^1}^1(\memory'(X_{k^1}))}$, namely $\cA^1 \circ
    \memory(\caX)=\cA^1 \circ \memory'(\caX)$. Hence
    \[
    \begin{array}{lll}
    \Proj^{0\alpha}_{(\caX^1, \cO^1, \cL^1,
    \cA^1)}(n^k, \memory)_\prog&=&
    \tuple{n^k,\ABSMEM{\memory}{\caX^1}{\cA^1}}\\
    &=&\tuple{n^k,\cA^1 \circ \memory(\caX)}=\tuple{n^k,\cA^1 \circ \memory'(\caX)}\\
    &=&\tuple{n^k,\ABSMEM{\memory'}{\caX^1}{\cA^1}}=
        \Proj^{0\alpha}_{(\caX^1, \cO^1, \cL^1,
    \cA^1)}(n^k, \memory')_\progq
    \end{array}
    \]
  \item[$\bullet$] If $\Proj^{0\alpha}_{(\caX^1, \cO^1, \cL^1,
    \cA^1)}(n^k, \memory)_\prog=\tuple{n^k,\bot}$ then $\nexists(n,K)
    \in \cO^1.\:k \in K\mbox{ and }n\in \cL^1$. If $\nexists(n,K) \in
    \cO^2.\:k \in K$ then $n\in\cL^1\subseteq\cL^2$, then also
    $\Proj^{0\alpha}_{(\caX^2, \cO^2, \cL^2, \cA^2)}(n^k,
    \memory)_\prog=\tuple{n^k,\bot}$ but then by hypothesis we have
    that there exists a memory $\memory'$ such that
    $\Proj^{0\alpha}_{(\caX^2, \cO^2, \cL^2, \cA^2)}(n^k,
    \memory')_\progq=\tuple{n^k,\bot}$. But then we also have
    $\Proj^{0\alpha}_{(\caX^1, \cO^1, \cL^1, \cA^1)}(n^k,
    \memory)_\progq=\tuple{n^k,\bot}$.\\ If $\exists(n,K) \in
    \cO^2.\:k \in K$, then $\Proj^{0\alpha}_{(\caX^2, \cO^2, \cL^2,
      \cA^2)}(n^k,
    \memory)_\prog=\tuple{n^k,\ABSMEM{\memory}{\caX^2}{\cA^2}}$, but
    then there exists $\memory'$ such that also in $\progq$ we have
    $\Proj^{0\alpha}_{(\caX^2, \cO^2, \cL^2, \cA^2)}(n^k,
    \memory')_\progq=\tuple{n^k,\ABSMEM{\memory'}{\caX^2}{\cA^2}}$. But
    then, the same memory, in $\crit_\cA^1$ keep the program point but
    loses the state observation because $\nexists(n,K) \in \cO^1.\:k
    \in K\mbox{ and }n\in \cL^1$, hence $\Proj^{0\alpha}_{(\caX^1,
      \cO^1, \cL^1, \cA^1)}(n^k, \memory')_\progq=\tuple{n^k,\bot}$.
   \item[$\bullet$] Finally, if $\Proj^{0\alpha}_{(\caX^1, \cO^1,
     \cL^1, \cA^1)}(n^k, \memory)_\prog= \varepsilon$, then
     $\nexists(n,K) \in \cO^1.\:k \in K\mbox{ and }n\notin \cL^1$. But
     this means that, even if there exists $\memory'$ such that we
     have the state $\tuple{n^k,\memory'}$ in the trajectory of
     $\progq$, also in this case we have $\Proj^{0\alpha}_{(\caX^1,
       \cO^1, \cL^1, \cA^1)}(n^k, \memory')_\progq= \varepsilon$.
  \end{itemize}
\end{proof}

\COMMENT{
\begin{lemma}
  \label{the:AbstractUnifiedEquivalenceLemma}
  \defthename{$\UEA$-lemma} Let $\SSS_1$ and $\SSS_2$ be sets of
  initial memory states, $\VV_1$ and $\VV_2$ be sets of variables,
  $\PP_1$ and $\PP_2$ be sets of occurrences, $\XX_1$ and $\XX_2$ be
  functions from pairs of sets of line numbers to sets of line
  numbers, and $\AAA_1$ and $\AAA_2$ be partial functions from line
  numbers and variables to properties on variables, such that $\SSS_1
  \subseteq \SSS_2$, $\PP_1 \subseteq \PP_2$, $\VV_1 \subseteq \VV_2$,
  $\forall P, Q.\:\XX_1(\overline{P}, \overline{Q}) \subseteq
  \XX_2(\overline{P}, \overline{Q})$ and $\forall i \in C_1, x \in
  \VV_1.\:\AAA_2(i, x) \overline{\sqsubseteq} \AAA_1(i, x)$, where
  $C_1 = \{n \in \LLL \mid \exists k \in \mathbb{N}.(n, k) \in
  \PP_1\}$, and $\overline{\sqsubseteq}$ is the relation "more
  concrete than" in the lattice of abstract interpretations
  \footnote{Given the abstractions $\phi_{1}$ and $\phi_{2}$ modeled
    as \uco{}s, $\phi_{1}\overline{\sqsubseteq}\phi_{2}$ iff $\forall
    x.\:\phi_{1}(x)\leq\phi_{2}(x)$.}, then $P\quad\!\!\!\UEA(\SSS_2,
  \VV_2, \PP_2, \XX_2, \AAA_2)\quad\!\!\!Q \Rightarrow
  P\quad\!\!\!\UEA(\SSS_1, \VV_1, \PP_1, \XX_1, \AAA_1)\quad\!\!\!Q$.
\end{lemma}}

This lemma tells us how it is possible to find the relationship (in
the sense of subsumption) between two semantic equivalence relations
determined by two abstract slicing criteria.  In the following,
abstract notions of slicing will be denoted by adding an $\cA$; e.g.,
$\mathcal{AS}$ denotes static abstract slicing, whereas $\mathcal{AD}$
denotes dynamic abstract slicing.  By using this lemma we can show
that, given a slicing criterion $\crit_{\cA}$, all the abstract
equivalence relations introduced in
Sec.~\ref{subsection:AbstractUnifiedEquivalence} subsume the
corresponding non-abstract equivalence relations
$\mathcal{S}(\crit_{\cA})$, $\mathcal{D}(\crit_{\cA})$ and
$\mathcal{C}(\crit_{\cA})$.  Furthermore, by using this lemma we can
show that $\mathcal{AD}(\cC_{\cA})$ subsumes
$\mathcal{AC}(\cC_{\cA})$, which in turns subsumes
$\mathcal{AS}(\cC_{\cA})$.

\begin{theorem}
  \cite{TheoFoun} Let $\cR_{1}$ and $\cR_{2}$ be semantic equivalence
  relations such that $\cR_{2}$ subsumes $\cR_{1}$.  Then, for every
  $\prog$ and $\progq$, we have $\tuple{\prog,\progq} \in
  (\sqsubseteq,\cR_{1})~ \Ra~\tuple{\prog,\progq} \in
  (\sqsubseteq,\cR_{2})$.
\end{theorem}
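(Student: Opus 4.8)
The plan is simply to unfold the definition of a $(\sqsubseteq,\cR)$-slice and to invoke the meaning of \emph{subsumes} that was fixed in Lemma~\ref{the:AbstractUnifiedEquivalenceLemma}. Recall that $\tuple{\prog,\progq}\in(\sqsubseteq,\cR)$ abbreviates the conjunction of the syntactic requirement $\progq\sqsubseteq\prog$ (that $\progq$ be a subprogram of $\prog$) and the semantic requirement $\tuple{\prog,\progq}\in\cR$; and that ``$\cR_2$ subsumes $\cR_1$'' means that for every pair of programs $\prog,\progq$ with $\progq\sqsubseteq\prog$, the membership $\tuple{\prog,\progq}\in\cR_1$ entails $\tuple{\prog,\progq}\in\cR_2$.

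First I would fix arbitrary programs $\prog$ and $\progq$ and assume $\tuple{\prog,\progq}\in(\sqsubseteq,\cR_1)$, which by definition supplies both $\progq\sqsubseteq\prog$ and $\tuple{\prog,\progq}\in\cR_1$. The syntactic component $\progq\sqsubseteq\prog$ does not refer to the semantic relation at all, so it is carried over verbatim; the only thing to transport is membership in the equivalence relation. Since $\progq\sqsubseteq\prog$ holds and $\cR_2$ subsumes $\cR_1$, the hypothesis $\tuple{\prog,\progq}\in\cR_1$ yields $\tuple{\prog,\progq}\in\cR_2$. Conjoining this with $\progq\sqsubseteq\prog$ gives exactly $\tuple{\prog,\progq}\in(\sqsubseteq,\cR_2)$, the desired conclusion; since $\prog$ and $\progq$ were arbitrary, the implication holds in general.

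There is essentially no obstacle: the statement is a bookkeeping step that lifts, to the level of the pair $(\sqsubseteq,\cR)$, the monotonicity already available for the semantic equivalence relations in isolation. The only point deserving care is to keep the syntactic ordering $\sqsubseteq$ \emph{fixed} on both sides of the implication, since the subsumption hypothesis is only asserted for pairs already satisfying $\progq\sqsubseteq\prog$; because that inequality is kept in force throughout the argument, the restriction is automatically respected. Combined with Lemma~\ref{the:AbstractUnifiedEquivalenceLemma}, this theorem then immediately delivers the expected comparisons between the various abstract and concrete forms of slicing (for instance, that abstract slicing generalises its concrete counterparts whenever the underlying criteria are related by $\ra_{\mbox{\tiny $\cA$}}$).
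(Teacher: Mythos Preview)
Your argument is correct: it is a direct unfolding of the definitions of $(\sqsubseteq,\cR)$-slice and of subsumption. Note that the paper does not actually supply a proof of this theorem---it is stated with a citation to \cite{TheoFoun}---so there is no in-paper argument to compare against; your proof is the natural one-line verification that the cited result reduces to.
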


\COMMENT{
We define two relations for comparing slicing criterion, $\ccompa{1}$
and $\ccompa{2}$.  $\ccompa{1}$ permits to compare criteria of
abstract forms of slicing among themselves, while $\ccompa{2}$
compares criteria of abstract forms of slicing with criteria of
non-abstract forms of slicing.

\begin{mydefinition}
  \label{def:RelA1}
  The partial order $\ccompa{1}$ is defined by the following rule:
  $\forall M^{1},M^{2}\subseteq\!\memories$ such that $M^{1}\subseteq
  M^{2}$, $\forall \cX=\tuple{X_{1},\ldots,X_{k}}$, $\forall
  \cA=\tuple{\varphi_{1},\ldots,\varphi_{k}}$, $\forall
  \cO\in\lnums\times\wp(\NATURALS), \forall \cL\subseteq\lnums$
  \[
  \begin{array}{c}
    (M^{1}, \cV, \cO,\cL,\cA) \ccompa{1} (M^{2}, \cV,\cO,\cL,\cA)
  \end{array}
  \]
\end{mydefinition}

This rule allows comparing $(\sqsubseteq, \mathcal{AC})$ to
$(\sqsubseteq, \mathcal{AS})$ being $M \subseteq\memories$, and
$(\sqsubseteq, \mathcal{AD})$ to $(\sqsubseteq, \mathcal{AC})$
whenever $\memory\in M$.

\begin{mydefinition}
  \label{def:RelA2}
  The relation $\ccompa{2}$ for comparing slicing criteria is defined
  by the following rule: $\forall M\!\subseteq\!\memories$, $\forall
  \cX\!=\!\tuple{X_{1},\ldots,X_{k}}$, $\forall
  \cA\!=\!\tuple{\varphi_{1},\ldots,\varphi_{k}}$ where
  $X\!=\!\bigcup_{i\in[1,k]}X_{i}$, $\forall
  \cO\!\in\!\lnums\!\times\!\wp(\NATURALS)$, $\forall
  \cL\subseteq\lnums$,
  \[
  \begin{array}{l}
    (M, \cX, \cO,\cL,\cA)\ccompa{2}(M, \cX,\cO,\cL, \mathcal{ID}) =
    (M,V,\cO,\cL)
  \end{array}
  \]
\end{mydefinition}

This rules allows to compare $(\sqsubseteq, \mathcal{AS})$ to
$(\sqsubseteq, \mathcal{S})$, $(\sqsubseteq,\mathcal{AC})$ to
$(\sqsubseteq, \mathcal{C})$, and $(\sqsubseteq, \mathcal{AD})$ to
$(\sqsubseteq, \mathcal{D})$.

\COMMENT{
  \begin{theorem} The following relations hold:
    $\mbox{\small$
      \begin{array}{c}
        (\sqsubseteq, \mathcal{AD}) \wkth{\ccompa{1}} (\sqsubseteq,
        \mathcal{AC}) \quad (\sqsubseteq, \mathcal{AC})
        \wkth{\ccompa{1}} (\sqsubseteq, \mathcal{AS})\\ 
        (\sqsubseteq, \mathcal{AS}) \wkth{\ccompa{2}} (\sqsubseteq,
        \mathcal{S}) \quad (\sqsubseteq, \mathcal{AD})
        \wkth{\ccompa{2}} (\sqsubseteq, \mathcal{D}) \quad
        (\sqsubseteq, \mathcal{AC}) \wkth{\ccompa{2}} (\sqsubseteq,
        \mathcal{C})
      \end{array}$\normalsize}
    $
  \end{theorem}}}
\begin{figure}[htbp]
 \centering
    \includegraphics[scale=.6,viewport=3in 6.35in 4.5in 9in]{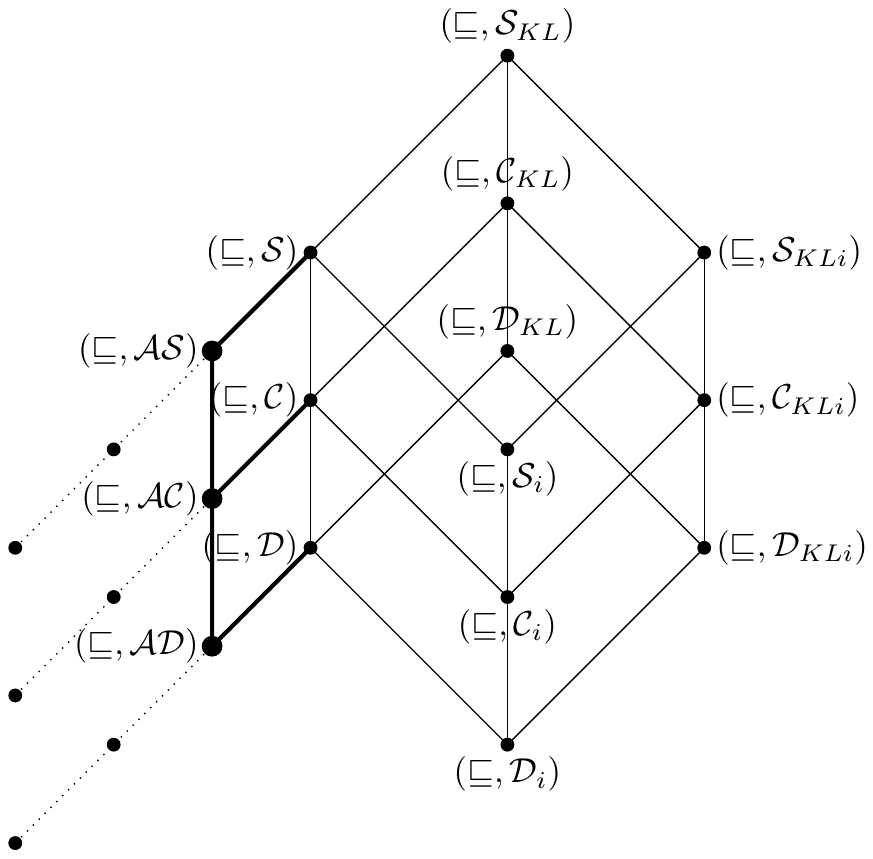}
  \caption{Extended hierarchy. \label{fig:Ret2}}
\end{figure}
Fig.~\ref{fig:Ret2} shows the non-\SIM hierarchy obtained by enriching
the hierarchy in Fig.~\ref{fig:Ret1} with standard forms of
\absstcslcg, \absdynslcg{}, and \absconslcg.  In general, we can
enrich this hierarchy with any abstract form of slicing simply by
using the comparison notions defined above.  Non-abstract forms are
particular cases of abstract forms of slicing, as they can be
instantiated by choosing the identity property, $\IDDOM$, for each
variable of interest.  Hence, non-abstract forms are the "strongest"
forms, since, for each property $\uco$, we have $\IDDOM \sqsubseteq
\uco$.  Moreover, if parameters $M, \caX, \cO, \psi$ are fixed, and
$\cA$ is made less precise or more abstract (i.e., the information
represented by the property is reduced), then the abstract slicing
form becomes weaker, as suggested by dotted lines in Figure
\ref{fig:Ret2}.

\section{Program Slicing and Dependencies}
\label{subsection:dependencies}
In the previous sections we introduced a formal framework of different
notions of program slicing.  In particular, we observed that a kind of
slicing is a pair: a syntactic preorder and a semantic equivalence
relation \cite{AForm}.  After discussing how the notion of ``to be a
slice of'' can be formally \emph{defined}, the focus will shift to how
to \emph{compute} a slice given a program and a slicing criterion.
Again, among all the possible definitions of slicing, we are
interested in slices obtained by erasing statements from the original
program, i.e., the slice is related to the original program by the
syntactic ordering relation $\sqsubseteq$.  Given a slicing criterion,
the idea is keeping all the statements affecting the \emph{semantic
  equivalence relation} defined by the criterion.  In other words, we
should have to \emph{translate} the formal definition into a
characterization of which statements has to be kept in a slice, or
vice versa which statements can be erased, in order to preserve the
semantic equivalence defining the chosen notion of slicing.
Intuitively, we have to keep all the statements \emph{affecting} the
semantics defined by the chosen slicing criterion.

The standard approach for characterizing slices and the corresponding
relation \emph{being slice of} is based on the notion of Program
Dependency Graph \cite{horPR89,RY88}, as described by Binkley and
Gallagher \cite{BinGalla96}.  \emph{Program Dependency Graphs} (PDGs)
can be built out of programs, and describe how data propagate at
runtime.  In program slicing, we could be interested in computing
dependencies on statements: $s''$ depends on $s'$ if some variables
which are used inside $s''$ are defined inside $s'$, and definitions
in $s'$ reach $s''$ through at least one possible execution path.
Also, $s$ depends \emph{implicitly} on an if-statement or a loop if
its execution depends on the boolean guard.
\begin{example}
  \label{example:statementDependency}
  Consider the program in Figure~\ref{pdgfig} and the derived PDG
  (edges which can be obtained by transitivity are omitted).
  \begin{figure}[h]
    \center{ 
\begin{picture}(0,0)%
\includegraphics{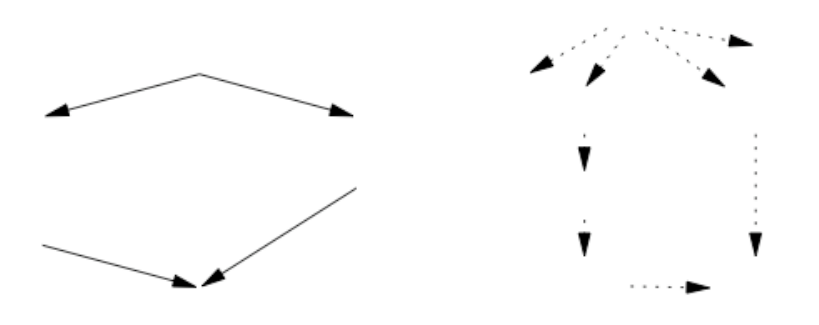}%
\end{picture}%
\setlength{\unitlength}{3947sp}%
\begingroup\makeatletter\ifx\SetFigFont\undefined%
\gdef\SetFigFont#1#2#3#4#5{%
  \reset@font\fontsize{#1}{#2pt}%
  \fontfamily{#3}\fontseries{#4}\fontshape{#5}%
  \selectfont}%
\fi\endgroup%
\begin{picture}(3918,1557)(1651,-1221)
\put(2540,-1185){\makebox(0,0)[lb]{\smash{{\SetFigFont{8}{9.6}{\familydefault}{\mddefault}{\updefault}{\color[rgb]{0,0,0}\IMPASSIGN{y}{v+1}}%
}}}}
\put(4593,252){\makebox(0,0)[lb]{\smash{{\SetFigFont{8}{9.6}{\familydefault}{\mddefault}{\updefault}{\color[rgb]{0,0,0}$s_1$}%
}}}}
\put(5277, 47){\makebox(0,0)[lb]{\smash{{\SetFigFont{8}{9.6}{\familydefault}{\mddefault}{\updefault}{\color[rgb]{0,0,0}$s_6$}%
}}}}
\put(5140,-227){\makebox(0,0)[lb]{\smash{{\SetFigFont{8}{9.6}{\familydefault}{\mddefault}{\updefault}{\color[rgb]{0,0,0}$s_7$}%
}}}}
\put(4388,-227){\makebox(0,0)[lb]{\smash{{\SetFigFont{8}{9.6}{\familydefault}{\mddefault}{\updefault}{\color[rgb]{0,0,0}$s_3$}%
}}}}
\put(4388,-637){\makebox(0,0)[lb]{\smash{{\SetFigFont{8}{9.6}{\familydefault}{\mddefault}{\updefault}{\color[rgb]{0,0,0}$s_4$}%
}}}}
\put(4388,-1048){\makebox(0,0)[lb]{\smash{{\SetFigFont{8}{9.6}{\familydefault}{\mddefault}{\updefault}{\color[rgb]{0,0,0}$s_5$}%
}}}}
\put(5140,-1048){\makebox(0,0)[lb]{\smash{{\SetFigFont{8}{9.6}{\familydefault}{\mddefault}{\updefault}{\color[rgb]{0,0,0}$s_8$}%
}}}}
\put(4046,-158){\makebox(0,0)[lb]{\smash{{\SetFigFont{8}{9.6}{\familydefault}{\mddefault}{\updefault}{\color[rgb]{0,0,0}$s_2$}%
}}}}
\put(1925,-364){\makebox(0,0)[lb]{\smash{{\SetFigFont{8}{9.6}{\familydefault}{\mddefault}{\updefault}{\color[rgb]{0,0,0}\IMPASSIGN{w}{3}}%
}}}}
\put(1651,-364){\makebox(0,0)[lb]{\smash{{\SetFigFont{8}{9.6}{\familydefault}{\mddefault}{\updefault}{\color[rgb]{0,0,0}$s_2$}%
}}}}
\put(3156,-364){\makebox(0,0)[lb]{\smash{{\SetFigFont{8}{9.6}{\familydefault}{\mddefault}{\updefault}{\color[rgb]{0,0,0}$s_6$}%
}}}}
\put(3430,-364){\makebox(0,0)[lb]{\smash{{\SetFigFont{8}{9.6}{\familydefault}{\mddefault}{\updefault}{\color[rgb]{0,0,0}\IMPASSIGN{z}{3}}%
}}}}
\put(1651,-500){\makebox(0,0)[lb]{\smash{{\SetFigFont{8}{9.6}{\familydefault}{\mddefault}{\updefault}{\color[rgb]{0,0,0}$s_3$}%
}}}}
\put(1651,-774){\makebox(0,0)[lb]{\smash{{\SetFigFont{8}{9.6}{\familydefault}{\mddefault}{\updefault}{\color[rgb]{0,0,0}$s_5$}%
}}}}
\put(1651,-637){\makebox(0,0)[lb]{\smash{{\SetFigFont{8}{9.6}{\familydefault}{\mddefault}{\updefault}{\color[rgb]{0,0,0}$s_4$}%
}}}}
\put(1925,-774){\makebox(0,0)[lb]{\smash{{\SetFigFont{8}{9.6}{\familydefault}{\mddefault}{\updefault}{\color[rgb]{0,0,0}\IMPASSIGN{v}{z+w}}%
}}}}
\put(3430,-500){\makebox(0,0)[lb]{\smash{{\SetFigFont{8}{9.6}{\familydefault}{\mddefault}{\updefault}{\color[rgb]{0,0,0}\IMPASSIGN{v}{4}}%
}}}}
\put(1925,-637){\makebox(0,0)[lb]{\smash{{\SetFigFont{8}{9.6}{\familydefault}{\mddefault}{\updefault}{\color[rgb]{0,0,0}\IMPASSIGN{w}{z+4}}%
}}}}
\put(1925,-500){\makebox(0,0)[lb]{\smash{{\SetFigFont{8}{9.6}{\familydefault}{\mddefault}{\updefault}{\color[rgb]{0,0,0}\IMPASSIGN{z}{1}}%
}}}}
\put(3156,-500){\makebox(0,0)[lb]{\smash{{\SetFigFont{8}{9.6}{\familydefault}{\mddefault}{\updefault}{\color[rgb]{0,0,0}$s_7$}%
}}}}
\put(2267, 47){\makebox(0,0)[lb]{\smash{{\SetFigFont{8}{9.6}{\familydefault}{\mddefault}{\updefault}{\color[rgb]{0,0,0}$s_1$}%
}}}}
\put(2540, 47){\makebox(0,0)[lb]{\smash{{\SetFigFont{8}{9.6}{\familydefault}{\mddefault}{\updefault}{\color[rgb]{0,0,0}$(x \leq y)?$}%
}}}}
\put(2267,-1185){\makebox(0,0)[lb]{\smash{{\SetFigFont{8}{9.6}{\familydefault}{\mddefault}{\updefault}{\color[rgb]{0,0,0}$s_8$}%
}}}}
\end{picture}%
}
\caption{PDG example.}  \label{pdgfig}
  \end{figure}
  $s_8$ depends on both $s_5$ and $s_7$ (and, by transitivity, $s_1$)
  since $v$ is not known statically when entering $s_8$.  On the other
  hand, there is \emph{no} dependency of $s_8$ on either (i) $s_6$,
  since $z$ is not used in $s_8$; or (ii) $s_2$, since $w$ is always
  redefined before $s_8$.  The dependency of $s_7$ on $s_1$ is
  implicit since $4$ does not depend on $x$ nor $y$, but $s_7$ is
  executed conditionally on $s_1$.
\end{example}
Formally, a \emph{Program Dependence Graph} \cite{GL91ieee}
$\cG_\prog$ for a program $\prog$ is a directed graph with nodes
denoting program components and edges denoting \emph{dependencies}
between components.  The nodes of $\cG_\prog$ represent the assignment
statements and control predicates in $\prog$.  In addition, nodes
include a distinguished node called \emph{Entry}, denoting where the
execution starts.  An edge represents either a \emph{control
  dependency} or a \emph{flow} (\emph{data}) {\em dependency}.
Control dependency edges $u\lra_c v$ are such that (1) $u$ is the
entry node and $v$ represents a component of $\prog$ that is not
nested within any control predicate; or (2) $u$ represents a control
predicate and $v$ represents a component of $\prog$ immediately nested
within the control predicate represented by $u$.  Flow dependency
edges $u\lra_f v$ are such that (1) $u$ is a node that defines the
variable $x$ (usually an assignment), (2) $v$ is a node that
\emph{uses} $x$, and (3) control can reach $v$ from $u$ via an
execution path along which there is no intervening re-definition of
$x$.

Unfortunately, there is a clear \emph{gap} between the definition of
slicing given in Definition~\ref{defSlice} and the standard
implementation based on program dependency graphs (PDG)
\cite{horPR89,reps91}.  This happens because slicing and dependencies
are usually defined at \emph{different levels} of approximation.  In
particular, we can note that the slicing definition in the formal
framework defines slicing by requiring the same \emph{behavior}, with
respect to a criterion, between the program and the slice, i.e., we
are specifying what is \emph{relevant} as a \emph{semantic}
requirement.  On the other hand, dependency-based approaches consider
a notion of dependency between statements which corresponds to the
\emph{syntactic} presence of a variable in the definition of another
variable.  In other words, slices are usually defined at the
\emph{semantic} level, while dependencies are defined at the
\emph{syntactic} level.  The idea presented in this paper consists,
first of all, in identifying a notion of \emph{semantic} dependency
corresponding to the slicing definition given above, in order to
characterize the implicit parametricity of the notion of slicing on a
corresponding notion of dependency.  This way, we can precisely
identify the semantic definition of slicing corresponding to a given
dependency-based algorithm, characterizing so far the loss of
precision of a given algorithm w.r.t.~the semantic definition.
\begin{figure}
  \begin{center}
    \includegraphics[scale=.3]{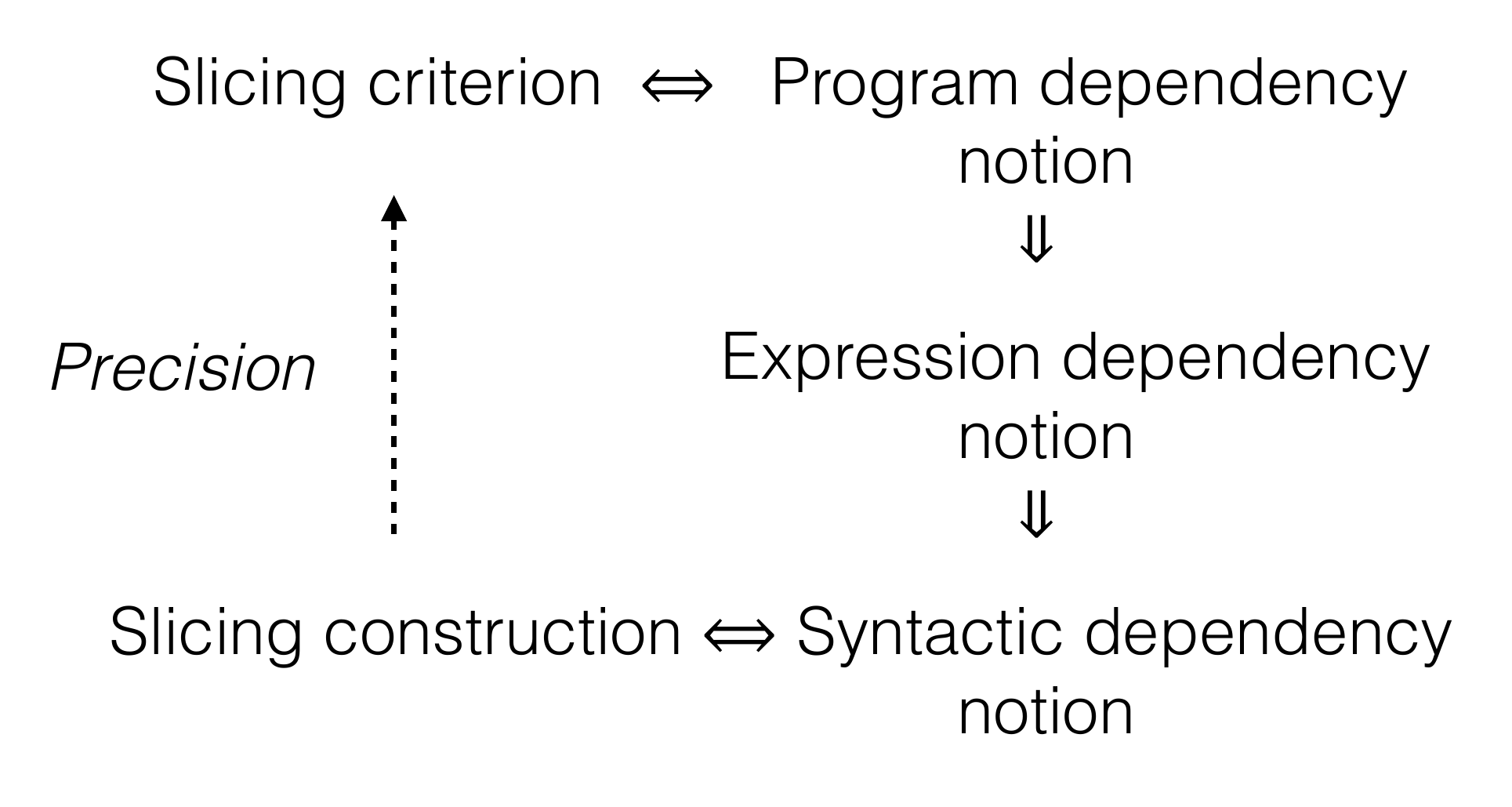}
  \end{center}
  \caption{Schema of dependency-based slicing notions.}\label{fig:schema}
\end{figure}
In Figure~\ref{fig:schema} we show these relations. In particular,
starting from the criterion, we can define an equivalent notion of
dependency which allows us to identify which variables should be kept
in a slice, affecting the whole program semantics (program dependency
notion).
\begin{mydefinition}[(Semantic) Program dependency]
Let $\crit=(\cI,\cX,\cO,\psi)$ be a slicing criterion, and $\prog$ be
a program. The program depends on $x$, denoted
$\CDEPENDS{\crit}{x}{\prog}$ iff
\[
\exists\memory_1,\memory_2\in\cI.\:\forall y\neq x.\:\memory_1(y)=\memory_2(y)\ \wedge\ 
\Proj_{(\cX, \cO, \cL)}(\trace_\prog^{\memory_1})\neq
\Proj_{(\cX, \cO, \cL)}(\trace_\prog^{\memory_2})
\]
\end{mydefinition}
This means that the variable $x$ affects the observable semantics of
$\prog$.

Unfortunately, this characterization is not effective due to
undecidability of the program semantics.
In particular, the amount of traces to compare could be infinite, and
also the traces themselves could be infinite. Hence, we consider a
stronger notion of dependency that looks for {\em local} semantic
dependencies, identifying all the variables affecting at least one
expression used in the program (expression dependency notion), and
this is precisely the {\em semantic} generalization of the syntactic
dependency notion used, for instance, in PDG-based algorithm for
slicing.  In other words we characterize when a variable affects the
semantics of an expression in $\prog$.

Our idea is to make semantic the standard notion of syntactic
dependency, by substituting the notion of \emph{uses} with the notion
of \emph{depends on} \cite{GiacobazziJM12}.  In order to obtain this
characterization, we have to find which variables might affect the
evaluation of the expression $\exp$ in the assignment \CODE{z:=e} or
in a control statement guarded by $\exp$, i.e., which variables belong
to the set $\RELEVANT{\exp}$ of the variables \emph{relevant} to the
evaluation of $\exp$.  As already pointed out, standard syntactic
dependency calculi compute $\RELEVANT{\exp}$ as $\VARS{\exp}$.
\begin{mydefinition}[(Semantic) Expression dependency]
  \label{def:concreteDependencies}
  Let $\crit=(\cI,\cX,\cO,\psi)$ be a slicing criterion.  Let $x\in
  \variables$, $\cY \subseteq \variables$.
    \[
    \begin{array}{rcl} \CDEPENDS{\crit}{x}{\exp} & \Leftrightarrow & \exists
      \state_1,\state_2\in\states.\ \forall y\neq
      x.\:\state_1(y)=\state_2(y) 
       \wedge\ \EVAL{\exp}{\state_1} \neq \EVAL{\exp}{\state_2} \\
      \CDEPENDS{\crit}{\cY}{\exp}& \Leftrightarrow &\exists y \in \cY.\
      \CDEPENDS{\crit}{y}{\exp} \\
      & \Leftrightarrow & \exists
      \state_1,\state_2\in\states.\ \forall w\notin
      \cY.\:\state_1(w)=\state_2(w) 
       \wedge\ \EVAL{\exp}{\state_1} \neq \EVAL{\exp}{\state_2} \\
    \end{array}
    \]
  The formulation of $\CDEPENDS{\crit}{x}{\exp}$ can be rewritten as
  \[
  \exists\state\in\states,
  v_1,v_2\in\values.\: \EVAL{\exp}{\UPDATE{\state}{x}{v_1}} \neq
  \EVAL{\exp}{\UPDATE{\state}{x}{v_2}}
  \]
\end{mydefinition}

\begin{proposition}\label{prop:approxdep}
  Let $\crit$ be a slicing criterion. If $\CDEPENDS{\crit}{x}{\prog}$
  then there exists $\exp$ in $\prog$ such that
  $\CDEPENDS{\crit}{x}{\exp}$.
\end{proposition}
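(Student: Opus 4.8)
The plan is to prove the contrapositive: assuming that $\CDEPENDS{\crit}{x}{\exp}$ fails for \emph{every} expression $\exp$ occurring in $\prog$, I will show that $\CDEPENDS{\crit}{x}{\prog}$ fails as well. By the reformulation given at the end of Definition~\ref{def:concreteDependencies}, the hypothesis means that for every expression $\exp$ of $\prog$, every state $\state$, and all values $v_1,v_2$, one has $\EVAL{\exp}{\UPDATE{\state}{x}{v_1}}=\EVAL{\exp}{\UPDATE{\state}{x}{v_2}}$; that is, the value of each expression of $\prog$ is insensitive to the value of $x$. From this I must derive that $\Proj_{(\cX,\cO,\cL)}(\trace_\prog^{\memory_1})=\Proj_{(\cX,\cO,\cL)}(\trace_\prog^{\memory_2})$ for all $\memory_1,\memory_2\in\cI$ that coincide on every variable other than $x$.

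The core is a lock-step simulation of the two executions. I will prove, by induction on the number of executed steps, the invariant that $\trace_\prog^{\memory_1}$ and $\trace_\prog^{\memory_2}$ pass through exactly the same sequence of program points with the same iteration counters, and that at each position the two current memories agree on all variables except possibly $x$ (and, in the object-oriented setting, have ``equal'' heaps --- see below). The base case is immediate, since the initial states are $\tuple{1^1,\memory_1}$ and $\tuple{1^1,\memory_2}$. For the inductive step I do a case analysis on the statement $\stm(n)$ executed at the current point. \textbf{(i)}~For \IMPSKIP, \CODE{read} and \CODE{write} the store is essentially untouched, so the invariant is preserved and the successor point and counter coincide. \textbf{(ii)}~For the guard of an \CODE{if} or a \CODE{while}: the guard is an expression of $\prog$, hence insensitive to $x$; since the two memories agree off $x$, it evaluates to the same truth value in both runs, so the same branch is taken and the program point and counter stay aligned. \textbf{(iii)}~For \CODE{y:=e} with $y\neq x$: $e$ is insensitive to $x$ and the memories agree off $x$, so $e$ takes the same value in both runs, whence the updated memories still agree off $x$. \textbf{(iv)}~For \CODE{x:=e}: again $e$ takes a common value in both runs, and since the only variable on which the two memories could differ, $x$, is overwritten with this value, the memories become \emph{identical} --- this ``self-healing'' is the crucial point. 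Once the invariant is established for the whole computation, applying the projection state by state produces the same output at every position: $\tuple{n^k,\MEM{\memory_i}{\cX}}$ at a selected occurrence (the same on both sides, because the restriction only inspects variables on which the memories now agree), or $\tuple{n^k,\bot}$, or $\varepsilon$, simultaneously in both runs. Hence the two projected trajectories are equal, which is the contrapositive of the claim.

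The step I expect to be the main obstacle is the field-update case \CODE{x.f:=e} and, more generally, the treatment of reference values and the heap. Definition~\ref{def:concreteDependencies} compares states only variable-wise, via $\state_1(y)=\state_2(y)$, so the ``agreement off $x$'' invariant has to be strengthened to also assert that the two heaps agree on the portion reachable from the variables other than $x$, and one must verify that a field update \CODE{z.f:=e} preserves this: the written value is common by insensitivity, and the written location is $\state(z)$, which is common when $z\neq x$ and --- when $z=x$ --- is common precisely when $x$ has already been reassigned (the self-healing point above), while if $x$ still carries its differing input reference the update lands inside the part of the heap that the strengthened invariant leaves unconstrained. Making this heap bookkeeping precise, and likewise handling object creation (\CODE{new}), is the delicate part; it is exactly what the \emph{agreement} machinery developed later in the paper is designed to manage cleanly, and in the concrete setting of this proposition the same reasoning can be carried out directly with a suitably chosen heap-agreement predicate. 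One minor loose end to dispatch is the degenerate case in which a criterion variable in $\cX$ is never the target of any assignment: there $\prog$ would ``depend'' on that variable without any expression doing so, so the statement is to be read under the standard assumption that the variables of interest receive a value in $\prog$.
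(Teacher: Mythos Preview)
Your approach is the same contrapositive as the paper's, but you actually do the work that the paper omits. The paper's proof is literally two sentences: it assumes no expression depends on $x$ and concludes, without further argument, that ``independently from $x$, $\prog$ provides precisely the same results.'' There is no induction, no case analysis, no treatment of the heap. Your lock-step simulation with the invariant ``the two memories agree off $x$'' is exactly the argument one would want to see behind that sentence, and your case analysis on statement forms is the standard way to establish it.

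Two remarks. First, the complications you flag --- field updates, aliasing, object creation, and the need to strengthen the invariant to a heap-agreement predicate --- are real, and the paper's proof does not address them at all; in fact the paper's definitions of $\CDEPENDS{\crit}{x}{\exp}$ only compare states variable-wise, so the heap issues you raise are genuinely delicate in this setting. Second, the ``loose end'' you identify (a criterion variable in $\cX$ that is never assigned, so $\prog$ depends on it while no expression does) is a legitimate edge case that the paper's one-line proof also ignores; your reading --- that the proposition is meant under the usual assumption that observed variables are defined in $\prog$ --- is the natural one. In short: same idea, but your write-up is strictly more careful than the paper's.
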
 

\begin{proof}
  Let us reason by contradiction. If for each $\exp$ in $\prog$ we
  have $\NCDEPENDS{\crit_\cA}{x}{\exp}$, then all the expressions in
  $\prog$ do not depend on $x$ and therefore, independently from $x$,
  $\prog$ provides precisely the same results.
\end{proof}

By using this notion of dependency, we can characterize the subset
$\RELEVANT{\exp}\subseteq\VARS{\exp}$ containing exactly those
variables which are \emph{semantically} relevant for the evaluation of
$\exp$.  This way, we obtain a notion of dependency which allows us to
derive more precise slices, i.e., to remove statements that a merely
syntactic analysis would leave.

\begin{example}
  \label{ese1} Consider the program $\prog$: \vspace{-.3cm}
  \begin{lstlisting}
   x:=$e_x$;
   y:=$e_y$; 
   w:=$e_w$;
   z:=w+y+2(x$^2$)-w;
  \end{lstlisting}
  \noindent where $\exp_x$, $\exp_y$ and $\exp_w$ are expressions.  We
  want to compute the static slice $\progq$ of $\prog$ affecting the
  final value of \zz (i.e., the \emph{slicing criterion} $\crit =
  (\memories,\{\mbox{\zz}\},\tuple{4,\NATURALS}, \false)$ is
  interested in the final value of \zz).  If we consider the
  traditional notion of slicing, then it is clear that we can erase
  line 3 without changing the final result for \zz.  However, in the
  usual syntactic approach, we would have a dependency between \zz and
  \ww, since \ww is used where \zz is defined.  Consequently, the
  slice obtained by applying this form of dependency would leave the
  program unchanged.  On the other hand, if the semantic dependency is
  considered, then the evaluation of
  $\mbox{\ww}+\mbox{\yy}+2(\mbox{\xx}^2)-\mbox{\ww}$ does not depend
  on the possible variations of \ww, which implies that we are able to
  erase line 3 from the slice.
\end{example}

Next we show how the PDG-based approach to slicing can be modified in
order to cope with semantic slicing.  The PDG approach is based on the
computation of the set of variables \emph{used} in a expression
$\exp$.  In the following, we wonder if this set can be rewritten by
considering a semantic form of dependency.

Hence, let us define the new notion of \emph{semantic} PDG, where all
the flow dependencies are {\em semantic}, i.e., we substitute the flow
edges defined above with semantic flow dependency edges $u\lra_{sf} v$
which are such that (1) $u$ is a node that defines the variable $x$
(usually an assignment), (2) $v$ is a node containing an expression
$\exp$ such that $\CDEPENDS{\crit}{x}{\exp}$ (where $\crit$ is the
criterion with respect with we are computing the slice), and (3)
control can reach $v$ from $u$ via an execution path along which there
is no intervening re-definition of $x$. A (semantic) flow path is a
sequence of (semantic) flow edges.

\begin{proposition}\label{prop:flowdep}
  Let $\prog$ be a program and $\crit$ be a slicing criterion. Let
  $\cG_\prog$ the PDG with flow dependency edges $\lra_f$, and
  $\cG_\prog^{s}$ be the semantic PDG where the flow dependency edge
  are semantic $\lra_{sf}$. If $u\lra_{sf}v$ then $u\lra_{f}v$.
\end{proposition}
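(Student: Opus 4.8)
The plan is to exploit that the definitions of $\lra_f$ and $\lra_{sf}$ are literally the same except for their middle clause. Both require (1) that $u$ defines some variable $x$, and (3) that control can reach $v$ from $u$ along an execution path with no intervening re-definition of $x$. The only difference is clause (2): for $\lra_f$ it asks that the component at $v$ \emph{uses} $x$, i.e.\ $x \in \VARS{\exp}$ where $\exp$ is the expression occurring at $v$; for $\lra_{sf}$ it asks the a priori stronger property $\CDEPENDS{\crit}{x}{\exp}$. So the whole proof reduces to proving the containment $\RELEVANT{\exp} \subseteq \VARS{\exp}$ announced informally earlier in the section --- concretely, that $\CDEPENDS{\crit}{x}{\exp}$ implies $x \in \VARS{\exp}$.

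I would prove this containment by contraposition and structural induction on $\exp$. Suppose $x \notin \VARS{\exp}$; by the rewritten formulation in Definition~\ref{def:concreteDependencies}, $\NCDEPENDS{\crit}{x}{\exp}$ amounts to $\EVAL{\exp}{\UPDATE{\state}{x}{v_1}} = \EVAL{\exp}{\UPDATE{\state}{x}{v_2}}$ for all $\state \in \states$ and $v_1, v_2 \in \values$. If $\exp$ is a constant, its evaluation ignores the store. If $\exp$ is a variable $y$, or a field-selector expression rooted at a variable $y$, then $x \notin \VARS{\exp}$ forces $y \neq x$, so both updated states carry the same binding for $y$ and the same heap, and the two evaluations coincide. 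If $\exp = \exp_1 \mathbin{\op} \exp_2$, then $x$ occurs in neither $\exp_i$; the inductive hypothesis gives $\EVAL{\exp_i}{\UPDATE{\state}{x}{v_1}} = \EVAL{\exp_i}{\UPDATE{\state}{x}{v_2}}$ for $i = 1,2$, and applying $\op$ preserves this equality. Hence $\NCDEPENDS{\crit}{x}{\exp}$, establishing the contrapositive.

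With the containment in hand, the proposition follows immediately. Assume $u \lra_{sf} v$, witnessed by a variable $x$ defined at $u$, an expression $\exp$ at $v$ with $\CDEPENDS{\crit}{x}{\exp}$, and a re-definition-free path from $u$ to $v$. By the containment, $x \in \VARS{\exp}$, so $v$ uses $x$; clauses (1) and (3) for $\lra_f$ are satisfied verbatim by the same $x$ and the same path. Therefore $u \lra_f v$, i.e.\ the flow edges of $\cG_\prog^{s}$ form a subset of those of $\cG_\prog$.

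I do not expect a genuine obstacle: the statement is essentially a formalization of the informal remark $\RELEVANT{\exp} \subseteq \VARS{\exp}$. The one place meriting explicit care is the field-access base case, where one must observe that overwriting the store binding of $x$ with an arbitrary value leaves the heap --- and therefore every object reachable from any variable $y \neq x$ --- untouched, so that the value of a selector expression rooted at $y$ is unaffected; this is what makes the induction go through for the object-oriented fragment of the language, not merely for its purely numeric part.
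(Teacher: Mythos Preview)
Your proposal is correct and takes essentially the same approach as the paper: the paper's proof is the one-liner ``Trivially, since if $\CDEPENDS{\crit}{x}{\exp}$, then $\exp$ must use the variable $x$,'' which is exactly your key containment $\RELEVANT{\exp} \subseteq \VARS{\exp}$. You simply unpack this ``trivially'' into an explicit structural induction on $\exp$, which is a reasonable level of detail and makes the field-access case (the only mildly delicate point) explicit.
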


\begin{proof}
  Trivially, since if $\CDEPENDS{\crit}{x}{\exp}$, then $\exp$ must
  use the variable $x$.
\end{proof}

In principle, a (backward) slice is composed by all the statements
(i.e., nodes) such that there exists a path from the corresponding
node to the relevant (according to the slicing criterion) use of a
variable of interest (in the criterion) \cite{RY88}.
In other words, we follow backward the (semantic) flow edges from the
nodes identified by the criterion, and we keep all the
nodes/statements we reach.  Hence, the criterion, and therefore the
dependency notion, defines the edges that we can follow for computing
the slice.
 
By using the semantic flow dependency edges, we can draw a new
\emph{semantic} PDG containing less flow edges, i.e., only those
corresponding to semantic dependencies.  At this point, the type of
slicing (either static, dynamic or conditional) characterized by the
criterion decides which nodes can be kept in the PDG.

\begin{theorem}\label{th:slispdg}
  Let $\cC=\tuple{\cI,\cX,\cO,\psi}$ be a slicing criterion. Let
  $\prog$ be a program and $\cG^s_\prog$ its \emph{semantic PDG},
  i.e., a PDG whose flow edges are $\lra_{sf}$.  Let $\progq$ the
  subprogram of $\prog$ containing all the statements corresponding to
  nodes such that there exists a semantic flow path in $\cG^s_\prog$
  from them to a node in
  $\cN_{\cO}\defi\sset{n}{\exists\tuple{n,K}\in\cO}$. Then $\progq$ is
  a slice w.r.t.\ the criterion $\cC$.
\end{theorem}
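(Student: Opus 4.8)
The plan is to reduce the statement to the classical correctness of PDG-based slicing \cite{RY88,BinGalla96} and then argue that passing from the syntactic flow edges $\lra_f$ to the semantic ones $\lra_{sf}$ only removes statements that are not observable through the criterion. First I would let $R$ be the subprogram of $\prog$ obtained by the usual backward reachability on the \emph{syntactic} PDG $\cG_\prog$ from the set $\cN_\cO$, following the flow edges $\lra_f$ together with the control edges $\lra_c$ (so that the result is executable and the execution order of retained statements is preserved). By the classical PDG-slicing theorem, instantiated with the parameters of $\cC$, $R$ is a slice of $\prog$ w.r.t.\ $\cC$, i.e.\ $\tuple{\prog,R}\in\cE(\cC)$.

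Next I would record the containment $\progq\sqsubseteq R$. The semantic PDG $\cG^s_\prog$ has the same nodes and control edges as $\cG_\prog$, and by Proposition~\ref{prop:flowdep} every semantic flow edge $u\lra_{sf}v$ is in particular a syntactic flow edge $u\lra_f v$; hence any node reachable from $\cN_\cO$ in $\cG^s_\prog$ is reachable in $\cG_\prog$, so the statements kept in $\progq$ form a subset of those kept in $R$, giving $\progq\sqsubseteq R\sqsubseteq\prog$ (and, completing the construction with the control predicates guarding retained statements, as in the standard case, $\progq$ is executable).

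The heart of the proof is to show $\tuple{R,\progq}\in\cE(\cC)$; since $\cE(\cC)$ is an equivalence relation, transitivity then combines this with $\tuple{\prog,R}\in\cE(\cC)$ to yield $\tuple{\prog,\progq}\in\cE(\cC)$, which is the claim. For $\tuple{R,\progq}\in\cE(\cC)$, fix $\memory\in\cI$ and compare $\trace_R^\memory$ with $\trace_\progq^\memory$ step by step. The statements of $R$ absent from $\progq$ are exactly those assignments $s$ defining a variable $x$ all of whose downstream uses on paths toward $\cN_\cO$ lie in expressions $\exp$ with $\NCDEPENDS{\crit}{x}{\exp}$; by Definition~\ref{def:concreteDependencies} --- in particular the reformulation $\EVAL{\exp}{\UPDATE{\state}{x}{v_1}}=\EVAL{\exp}{\UPDATE{\state}{x}{v_2}}$ --- the value written by such an $s$ changes neither the value of any relevant downstream expression nor which branch of any relevant conditional or loop is taken. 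One then proceeds by induction on the length of the computation, maintaining the invariant that, at each step, the two executions are at the same program point whenever that point lies on a semantic flow path to $\cN_\cO$, and that they agree on the values of all variables semantically relevant for reaching $\cN_\cO$ from that point. Applying $\Proj_{(\cX,\cO,\cL)}$ to the two trajectories (with $\cL=\lnums_R\cap\lnums_\progq$ when $\psi=\true$, $\cL=\emptyset$ otherwise, as in Definition~\ref{def:AbstractUnifiedEquivalence}) then yields equal sequences, as required.

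I expect the main obstacle to be making the inductive invariant precise and verifying its stability under one execution step --- especially with loops, where a point of $\cO$ can be revisited and a semantically irrelevant definition must stay irrelevant on every iteration, and in the $\KL$ case ($\psi=\true$), where one must additionally check that the statements retained in $\progq$ are executed in the same relative order in both programs (which holds because the control predicates guarding any retained statement are themselves retained and, by the invariant, evaluated identically). This bookkeeping mirrors the classical proof of the Reps--Yang slicing theorem; the only genuinely new ingredient is replacing ``$v$ uses $x$'' by ``$x$ is semantically relevant for the expression at $v$'', which is exactly what Proposition~\ref{prop:approxdep} and Definition~\ref{def:concreteDependencies} license.
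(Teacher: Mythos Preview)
Your proposal is correct but follows a genuinely different decomposition from the paper's. The paper does not introduce the intermediate syntactic slice $R$ at all; instead it argues directly that the semantic-PDG slice $\progq$ satisfies the \emph{strongest} criterion $\crit^{\textsc{pdg}}=(\memories,\cX,\cN_\cO\times\{\NATURALS\},\true)$, claiming this as a direct consequence of the Reps--Yang slicing theorem, and then invokes the subsumption relation $\crit\rightarrow\crit^{\textsc{pdg}}$ from \cite{AForm} (Equation~\ref{eq:defrelcrit} in the Appendix) to conclude that $\progq$ is a slice for the given, weaker $\cC$. In particular, the paper never compares $\progq$ with a syntactic slice and never runs an explicit induction on trajectories.

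What each approach buys: your route via $R$ and the transitivity of $\cE(\cC)$ makes explicit \emph{why} passing from $\lra_f$ to $\lra_{sf}$ is sound---you invoke Definition~\ref{def:concreteDependencies} to argue that the dropped assignments are semantically invisible to every relevant downstream expression, which the paper essentially leaves implicit inside the appeal to \cite{RY88}. Conversely, the paper's route handles the generality of $\cC$ (arbitrary $\cI$, arbitrary iteration sets $K$, both values of $\psi$) in a single stroke via the criterion-subsumption lemma, whereas in your approach that generality is smuggled into step~2 (``$R$ is a slice w.r.t.\ $\cC$''), which in fact relies on the very same subsumption since Reps--Yang proves $R$ correct for the strong criterion, not directly for arbitrary $\cC$. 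One minor caution on your transitivity step in the $\KL$ case: since $\cL$ depends on the pair of programs being compared, you are really using that $\lnums_\progq\subseteq\lnums_R$, so that equality of the $\cL_R$-projections of $\prog$ and $R$ entails equality of their $\cL_\progq$-projections; this is true but worth stating.
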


\begin{proof}
  Note that, the PDG construction is syntactic, and therefore
  independent from the input set $\cI$, hence any slice computed by
  using the PDG holds for any possible input memory in
  $\memories$. Moreover, since we simply collects the statements
  potentially affecting the program observation, we cannot decide
  which iteration to observe, for this reason each program point is
  taken in the slice independently from the iteration to observe, for
  this reason the obtained slice will provide the same result for any
  possible iteration, i.e., the set of interesting points to observe
  are
  $\cN_\cO\times\{\NATURALS\}=\sset{\tuple{n,\NATURALS}}{n\in\cN_{\cO}}$. Finally,
  by construction we cannot have statements executed in the slice
  which are not executed in the original programs. Hence, the
  criterion enforced by the PDG slice construction is
  $\crit^{\mbox{\tiny
      PDG}}=(\memories,\cX,\cN_\cO\times\{\NATURALS\},\true)$ (direct
  consequence of the \emph{Slicing theorem} in \cite{RY88}), which for
  each $\cI\subseteq\memories$ and
  $\{K_n\}_{n\in\cN_\cO}\subseteq\wp(\NATURALS)$ is a slice also
  w.r.t.\ the criterion
  $(\cI,\cX,\{\tuple{n,K_n}~|~n\in\cN_\cO\},\psi)$, by \cite{AForm}
  (Equation~\ref{eq:defrelcrit} in the Appendix). Hence, we have that
  the results is a slice w.r.t.\ $\crit$ \cite{AForm}.
\end{proof}

We have pointed out so far the difference between syntactic and
semantic dependencies: it can be the case that a variable
syntactically appears in an expression without affecting its value.
Actually, one could argue that the case is not very likely to happen:
the possibility to find an assignment like \CODE{x:=y-y} in code
written by a professional software engineer is remote to the very
least.  However, when it comes to abstract dependencies, the picture
is quite different, and we could even say that, in the present work,
(concrete) semantic dependencies have been mainly introduced to
prepare the discussion about their abstract counterpart.  Indeed, it
is much more likely that some variables are not semantically relevant
to an expression if the value of interest is an abstract one, e.g.,
the parity or the sign of a numeric expression, or the nullity of a
pointer.  This justifies the definition of a semantic notion of
dependency at the abstract level.

\section{Abstract Dependencies}
\label{sec:abstractDependencies}
This section discusses the problem of defining and computing abstract
dependencies allowing us to capture the dependency relation between
variables w.r.t.\ a given abstract criterion.  In the previous
section, we formalized this relation in the concrete semantic case;
the following example takes it to the abstract level.

\begin{example}
  Consider the expression $\exp=$~\CODE{2x$^2$+y}: although both
  variables are semantically relevant to the result, only \yy can
  affect its parity, since \CODE{2x$^2$} will always be even.  On the
  other hand, note that both variables are relevant to the sign of
  $\exp$, in spite of the positivity of \CODE{x$^2$}.  In fact, given
  a negative value for \yy, a change in the value of \xx can alter the
  sign of the entire expression.
\end{example}

First of all, it is worth noting that the notion of \emph{semantic
  program dependency} can be easily extended to abstract criteria,
simply by changing the projection considered.  In this case we will
write that $\CDEPENDS{\crit_\cA}{x}{\prog}$, meaning that $x$ has
effect on the abstract projection of $\prog$ determined by
$\crit_\cA$.  Also in the abstract case we inherit the undecidability
of the concrete semantics of $\prog$; hence, again, we have to
approximate the semantic program dependency with a local notion of
\emph{abstract semantic expression dependency}.  Unfortunately, when
dealing with abstract criteria $\crit_\cA$, some aspects become more
complicated.

\subsection{Abstract slicing and dependencies}

In the previous section, we defined the concrete semantic dependency
by identifying those variables that do not \emph{interfere} with the
final observation of each expression.  Analogously, in order to define
a general notion of abstract semantic dependency, we need to consider
the \emph{abstract} interference between a property of a variable and
a property of an expression.

The definition below follows the same philosophy as \emph{narrow
  abstract non-interference} \cite{GM04popl,Mastroeni13}, where
abstractions for observing input and output are considered, but these
abstractions are observations of the \emph{concrete} executions.
\begin{mydefinition}[(Abstract) Semantic expression dependency, \emph{Ndep}]
  \label{def:abstractDependencies}\label{def:narrowDependencies}
  Consider the abstractions $\uco\in\ucos(\wp(\values))$ and
  $\ov{\eta}\in\ucos(\wp(\val))^n$, where $n$ is the number of
  variables, i.e., $\ov{\eta}$ is a tuple of properties such that
  $\eta_y$ is the property on the variable $y$.
    \[
    \begin{array}{rcl} \ANARROWDEPENDS{x}{}{\uco}{\exp}{\ov{\eta}} & \Leftrightarrow & \exists \state_1,\state_2\in\states\:.\ 
    \left(\forall y\neq x.\:\eta_y(\state_1(y))=\eta_y(\state_2(y))\ \wedge\ 
    \uco(\EVAL{\exp}{\state_1}) \neq \uco(\EVAL{\exp}{\state_2})\right) \\
    \end{array}
    \]
  \end{mydefinition}

This notion is a generalization of
Definition~\ref{def:concreteDependencies} where we abstract the
observation of the result ($\uco$) and the information that we fix
about all the variables different from \xx ($\ov{\eta}$).  Still, this
notion characterizes whether the variation of the value of \xx affects
the abstract evaluation in $\uco$ of $\exp$.
 
As an important result, we have
$\ANARROWDEPENDS{x}{}{\uco^1}{\exp}{\ov{\eta}} \Leftrightarrow
\ANARROWDEPENDS{x}{}{\uco^2}{\exp}{\ov{\eta}}$ whenever $\uco^1$ and
$\uco^2$ induce the same partitions (either on values, or tuples of
values).  This happens because, in
Definition~\ref{def:abstractDependencies}, both abstractions are only
applied to singletons.  In the following, only partitioning domains
will be considered since it is straightforward to note that
$\ANARROWDEPENDS{x}{}{\uco}{\exp}{\ov{\eta}}$ is affected only by
$\Pi(\uco)$, rather than by $\uco$ itself.

When dealing with abstractions, and therefore with abstract
computations, some more considerations have to be taken into account.
Consider the program in Example~\ref{ese1}, and consider the $\PARDOM$
property (Section \ref{sec:basicAbstractInterpretation}) for all
variables on both input and output .  If we compute the set of
variables on which the parity of $\exp =$ \CODE{w+y+2(x$^2$)-w}
depends on, then we can observe that $\exp$ is still independent from
\ww, but is also independent from any possible variation of \xx.
At a first sight, the parity of \CODE{w+y+2(x$^2$)-w} is independent
from \xx just because \CODE{2(x$^2$)} is constantly even.  However, it
is not only a matter of constancy: a deeper analysis would note that
we can look simply at the abstract value of \xx only because the
operation involved (the sum) in the evaluation is \emph{complete}
(Section \ref{sec:basicAbstractInterpretation}), i.e., precise,
w.r.t.~the abstract domain considered ($\PARDOM$).  In particular,
when we deal with abstract domains which are complete for the
considered operations, then it is enough to look at the abstract value
of variables in order to compute dependencies.  Indeed, consider
the $\SIGNDOM$ domain (Section \ref{sec:basicAbstractInterpretation}).
In this case, even if the sign of \CODE{2(x$^{2}$)} is constantly
positive, the final sign of \zz might be affected by a concrete
variation of \xx (e.g., consider $\mbox{\yy} = -4$ and two executions
in which \xx is, respectively, 1 and 5).  Therefore, \xx has to be
considered relevant, although the sign of \CODE{2(x$^{2}$)} (the only
sub-expression containing \xx) is constant.  This can be also derived
by considering the logic of independencies from \cite{AB07} since, by
varying the value of \xx, we can change the sign of $\exp$.

Unfortunately, the notion of abstract dependencies given in
Definition~\ref{def:abstractDependencies} is not suitable for
weakening the PDG approach, as we have done in the concrete semantic
case.  Let us explain the problem in the following example.
\begin{example}\label{ex:probabs}
  Consider the program $\prog=$\CODE{C; x:=y>0?0:1;}, where \CODE{C}
  is come code fragment and the expression \CODE{b?e1:e2} evaluates to
  \CODE{e1} if \CODE{b} is true, and to \CODE{e2} otherwise.

  \noindent Suppose the criterion requires the observation of the
  parity of $x$ at this program point, i.e., $\cA=\tuple{x:\PARDOM}$.
  Then it is straightforward to observe that the expression depends on
  $y$, but we would like to be more specific (being in the context of
  abstract slicing), and we can observe that it is the sign of $y$
  that affects the parity of the expression, and therefore of $x$. At
  this point, in the code \CODE{C} we should look for the variables
  affecting not simply $y$ (as expected in standard slicing
  approaches), but more specifically the {\em sign} of $y$, a
  requirement not considered in the abstract criterion.
\end{example}

This example shows that, if we aim at computing abstract dependencies
without losing too much information, we would need an algorithm able
to keep trace backwards, not only of the different variables that
become of interest (affecting the desired criterion), but also of the
different properties to observe on variables affecting the desired
property of the criterion. This means that, while for the concrete
semantic program dependency we can provide a definition depending only
on the criterion, this is not possible in a more abstract context,
where each flow edge should be defined depending on abstract
properties potentially different from those in the abstract criterion,
and which should be characterized dynamically backward starting from
the criterion. Unfortunately, this is not possible in
Definition~\ref{def:abstractDependencies}, where we always look for
the variation of the value and not of an abstract property of \xx,
hence if we would use this notion for substituting the semantic
dependency in PDGs we would not have so much advantage.

These observations make clear that, if we aim at constructively
characterize abstract slicing by means of the abstract dependency
notion provided in Definition~\ref{def:abstractDependencies}, we need
to build from scratch a systematic approach for characterizing
abstract slicing. Towards this direction, the first step we propose is
a computable approximation of the abstract dependencies of
Definition~\ref{def:abstractDependencies}.

\subsection{A constructive approach to Abstract Dependencies}
\label{sec:constructiveApproachesToAbstractProgramSlicing}
By means of the (uco-dependent) definition of operations on abstract
values, it is possible to automatically obtain (an over-approximation
of) the set of relevant variables.  The starting point is the
\emph{brute-force} approach which uses the abstract version of
concrete operations, and explicitly \emph{goes into} the quantifiers
involved in Definition \ref{def:narrowDependencies}.

\begin{example}
  Consider the program in Example~\ref{ese1} and let $\eta_y=\PARDOM$.
  In order to decide whether
  $\ANARROWDEPENDS{\mbox{\xx}}{}{\uco}{\exp}{\ov{\eta}}$ holds, the
  brute-force approach considers the abstract evaluation of $\exp$ in
  all contexts where all variables different from \xx does not change,
  up to $\ov{\eta}$, while \xx may change.  In this example, this
  boils down to consider pairs of memories where \yy has the same
  parity (with no information about sign), and we take memories where
  \xx changes value, and see whether the final values of $\exp$ agree
  on the expression observation $\uco$.  Suppose $\PARDOM(\yy)=\EVEN$
  (meaning that it is even but the sign is unknown) and suppose
  $\uco=\SIGNDOM$, then we should have to compute the abstract value
  of $\exp$ for each possible value for \xx. It is clear that we can
  easily find $\sigma_1$ and $\sigma_2$ such that
  \[ 2 * \sigma_1(\xx)^2 + \sigma_1(\yy) = \NEG, \qquad 2 * \sigma_2(\xx)^2 + \sigma_2(\yy)
  = \POS \] even if $\PARDOM(\sigma_1(\yy))=\PARDOM(\sigma_2(\yy))$
  (for instance $\sigma_1(\yy)=\sigma_2(\yy)=-4$ while
  $\sigma_1(\xx)=1$ and $\sigma_2(\xx)=5$).  It is clear that the sign
  of $\exp$ may depend on the value of \xx since to ``fix'' the
  abstract property of the other variables is not enough to ``fix''
  the final value of $\exp$ w.r.t.~$\uco$. On the other hand, the
  parity of $\exp$ does not depend on \xx, in particular if we fix the
  property $\PARDOM$ of \yy, for instance to $\EVEN$ (but it holds
  also for the other abstract values) then
  \[ 2 * \NEGODD^2 + \EVEN = \POSEVEN, \qquad 2 * \POSEVEN^2 + \EVEN
  = \EVEN \qquad\ldots \] Namely, if we fix all the (abstract values
  of the) variables but \xx, then the parity of the result does not
  change, hence the variation of \xx does not affect the parity of the
  expression $\exp$.
\end{example}

In the following, we introduce an algorithm able to improve the
computational complexity of the brute-force approach, especially on
bigger ucos, and when (1) several variables are involved in
expressions, and (2) a significant part of them is irrelevant.

\subsubsection{Checking \emph{Ndep}}
\label{section:algorithmicIdeasForCheckingNdep}

In the following, we discuss how we can constructively compute narrow
dependencies.  Unfortunately, in static analysis, the concrete
semantics cannot be used directly as it appears in
Definition~\ref{def:narrowDependencies}, hence we need to approximate
this abstract notion. The following definition introduces a stronger
notion of dependency based on a sound abstract semantics
$\ABSEVAL{\cdot}{}$ (Section \ref{sec:abstractSemantics}), which
approximates narrow dependencies.
\begin{mydefinition}[\emph{Atom-dep}]
  \label{def:atomicAbstractDependencies}
  An expression $\exp$ \emph{atom-depends} on $x$ (written
  $\ATOMDEPENDS{x}{}{\uco}{\exp}{\ov{\eta}}$) with respect to
  $\uco\in\ucos(\wp(\values))$ and $\ov{\eta}\in\ucos(\wp(\values))^n$
  ($n$ number of variables) if and only if there exist $\state_1,
  \state_2 \in \states$ such that
  \[ \forall y\neq x.\eta_y(\state_1(y))=\eta_y(\state_2(y))\ \quad \wedge\ \quad \neg
  \ISATOM{\ABSEVAL{\exp}{\{\state_1,\state_2\}}}{\uco} \]
\end{mydefinition}
Being domains partitioning, the \emph{non-atomicity} requirement $\neg
\ISATOM{\cdot}{\uco}$ amounts to say that all $\uco$-abstract
evaluations of $\exp$, starting from different values for $x$, may not
be abstracted in the same abstract value (this is the crucial issue in
\emph{Ndep}), i.e., $\state_1$ and $\state_2$ may lead to different
abstract values for $e$. Next results shows that {\em Atom-dep} is an
approximation of {\em Ndep}, since {\em Ndep} implies {\em Atom-dep},
meaning that {\em Atom-dep} may only add false dependencies, but
cannot lose abstract dependencies characterized by {\em Ndep}.

\begin{proposition}\label{theorem:equivalenza}
  Consider the abstractions $\uco\in\ucos(\wp(\values))$ and
  $\ov{\eta}\in\ucos(\wp(\val))^n$, where $n$ is the number of
  variables, i.e., $\ov{\eta}$ is a tuples of properties.  For every
  $\exp$ and $x$, $\ANARROWDEPENDS{x}{}{\uco}{\exp}{\ov{\eta}}$
  implies $\ATOMDEPENDS{x}{}{\uco}{\exp}{\ov{\eta}}$.
\end{proposition}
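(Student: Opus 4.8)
The plan is to show that the very pair of states witnessing \emph{Ndep} also witnesses \emph{Atom-dep}. So I would begin by assuming $\ANARROWDEPENDS{x}{}{\uco}{\exp}{\ov{\eta}}$ and fixing $\state_1,\state_2\in\states$ with $\eta_y(\state_1(y))=\eta_y(\state_2(y))$ for all $y\neq x$ and $\uco(\EVAL{\exp}{\state_1})\neq\uco(\EVAL{\exp}{\state_2})$; write $v_1=\EVAL{\exp}{\state_1}$ and $v_2=\EVAL{\exp}{\state_2}$, so that $\uco(\{v_1\})\neq\uco(\{v_2\})$. The agreement condition on the $\eta_y$'s is exactly the one demanded by Definition~\ref{def:atomicAbstractDependencies}, so the whole task reduces to proving $\neg\ISATOM{\ABSEVAL{\exp}{\{\state_1,\state_2\}}}{\uco}$ for these two states.

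The first key step is to show that $\uco(\{v_1,v_2\})$ is not an atom of $\uco$. Both $\uco(\{v_1\})$ and $\uco(\{v_2\})$ are $\sqsubseteq\uco(\{v_1,v_2\})$ by monotonicity; since they are distinct, at least one of them — say $\uco(\{v_j\})$ — is \emph{strictly} below $\uco(\{v_1,v_2\})$, for otherwise both would coincide with $\uco(\{v_1,v_2\})$ and hence with each other. Moreover $\uco(\{v_j\})\neq\bot$: by extensivity $\uco(\{v_j\})\supseteq\{v_j\}\neq\emptyset$, and the bottom of an abstract domain of value properties is $\uco(\emptyset)=\emptyset$. Hence $\bot\sqsubset\uco(\{v_j\})\sqsubset\uco(\{v_1,v_2\})$, and since $\uco(\{v_j\})$ is a fixpoint of $\uco$ this witnesses, by the definition of atom, that $\uco(\{v_1,v_2\})$ is not an atom.

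The second key step invokes soundness of the abstract semantics. Because $\ABSEVAL{\cdot}{}$ is a sound (and monotone) abstract semantics (Section~\ref{sec:abstractSemantics}), $\ABSEVAL{\exp}{\{\state_1,\state_2\}}$ over-approximates the concrete evaluation on $\{\state_1,\state_2\}$, i.e.\ $\uco(\SEMANTICS{\exp}(\{\state_1,\state_2\}))=\uco(\{v_1,v_2\})\sqsubseteq\ABSEVAL{\exp}{\{\state_1,\state_2\}}$. Setting $b:=\uco(\{v_j\})$, a fixpoint of $\uco$, we then get $\bot\sqsubset b\sqsubset\uco(\{v_1,v_2\})\sqsubseteq\ABSEVAL{\exp}{\{\state_1,\state_2\}}$, so in particular $\bot\sqsubset b\sqsubset\ABSEVAL{\exp}{\{\state_1,\state_2\}}$. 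Therefore $\ABSEVAL{\exp}{\{\state_1,\state_2\}}$ is not an atom of $\uco$, and $\state_1,\state_2$ witness $\ATOMDEPENDS{x}{}{\uco}{\exp}{\ov{\eta}}$, which completes the proof.

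The only genuinely delicate points I expect are bookkeeping ones: (i) pinning down the meaning of $\ABSEVAL{\exp}{\{\state_1,\state_2\}}$ and applying soundness/monotonicity so that it provably over-approximates \emph{both} $v_1$ and $v_2$, hence $\uco(\{v_1,v_2\})$; and (ii) the harmless but necessary observation that $\uco$, being extensive with $\bot=\uco(\emptyset)=\emptyset$, never maps a nonempty set of concrete values to $\bot$, which is what makes $b\neq\bot$. Everything else is elementary order-theoretic reasoning inside the lattice $\uco(\wp(\values))$, so I do not foresee a real obstacle; note also that the converse implication is not claimed (and indeed fails), consistently with \emph{Atom-dep} being allowed to introduce spurious dependencies.
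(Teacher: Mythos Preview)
Your proof is correct and follows essentially the same route as the paper's: use the very same witnesses $\state_1,\state_2$ from \emph{Ndep}, note that the two distinct fixpoints $\uco(\{v_1\})$ and $\uco(\{v_2\})$ both sit below $\ABSEVAL{\exp}{\{\state_1,\state_2\}}$ (via monotonicity/soundness of the best correct approximation), and conclude non-atomicity. The only difference is presentational: the paper's argument is terser and implicitly leans on the standing assumption that $\uco$ is partitioning (so $\uco(\{v_i\})$ are themselves atoms), whereas you spell out the order-theoretic step explicitly by exhibiting a strictly-between element and checking $b\neq\bot$ via extensivity---which actually makes your argument work without the partitioning hypothesis.
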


\begin{proof}
  Suppose $\ANARROWDEPENDS{x}{}{\uco}{\exp}{\ov{\eta}}$, i.e.,
  $\exists \state_1,\state_2\in\states\:.\ (1)\:\forall y\neq
  x.\:\ABSEQUALSPY{\state_1}{\state_2}{\eta_y}{y}\ \wedge\ (2)\:\uco(\EVAL{\exp}{\state_1})
  \neq \uco(\EVAL{\exp}{\state_2})$, then we have to prove
  $\ATOMDEPENDS{x}{}{\uco}{\exp}{\ov{\eta}}$, i.e., there exist
  $\state_1, \state_2 \in \states$ such that $(3)\:\forall y\neq
  x.\ABSEQUALSPY{\state_1}{\state_2}{\eta_y}{y}$ and $\ok{(4)\:\neg
    \ISATOM{\ABSEVAL{\exp}{\{\state_1,\state_2\}}}{\uco}}$. Since
  conditions $(1)$ and $(3)$ are the same, we have to prove that
  $\exists \state_1,\state_2\in\states.\:(2)\Ra(4)$.
  
  Consider $\state_1,\state_2\in\states$ satisfying $(1)$ and such
  that $\uco(\EVAL{\exp}{\state_1}) \neq \uco(\EVAL{\exp}{\state_2})$,
  then $\uco(\EVAL{\exp}{\state_1}),\uco(\EVAL{\exp}{\state_2})\in
  \uco(\EVAL{\exp}{\{\state_1,\state_2\}})\subseteq\uco(\EVAL{\exp}{\uco(\{\state_1,\state_2\})})=
  \ABSEVAL{\exp}{\{\state_1,\state_2\}}$. At this point, since
  $\ABSEVAL{\exp}{\{\state_1,\state_2\}}$ contains two different
  values of $\uco$, then, by definition, it cannot be an atom of
  $\uco$.
\end{proof}

Starting from this new approximate notion, our idea is to provide an
algorithm
over-approximating the set of variables relevant for a given
expression $\exp$ when $\forall y\neq x.\eta_y=\uco$, namely the
abstraction observed in output is the same that we fix on the input
variables.  The idea is to start from an empty set of not relevant
variables $X$ for an expression $\exp$, and incrementally increase
this set adding all those variables that \emph{surely} are not
relevant for the expression (obtaining an under-approximation of
abstract dependencies). Finally, the complement of such set is
returned, which is an over-approximation of relevant variables.

In order to check the dependency relation, we aim at checking whether
a change of the values of a variables makes a difference in the
evaluation of the expression.  Dependencies are computed according to
\emph{Atom-dep}, in order to approximate \emph{Ndep}.  In a
brute-force approach, \emph{Atom-dep} would be verified by checking
for each $\astate$ associating atomic values to variables we have that
$\ok{\ISATOM{\ABSEVAL{\exp}{\astate}}{\uco}}$ is always the same atom.

\begin{example}
  \label{example:dependenciesOnSign}
  Let $e$ be an expression involving variables $x$, $y$ and $z$, and
  $\uco=\PARSIGNDOM$.  In principle, in order to compute the set of
  $\uco$-dependencies on $\exp$, we must compute $\ABSEVAL{\exp}{}$ on
  every possible atomic value\footnote{Remember that atoms in $\uco$
    are $\ZERO$, $\POSEVEN$, $\POSODD$, $\NEGEVEN$, and $\NEGODD$;
    since this is a partition of concrete values, we describe all
    concrete inputs by computing $\ABSEVAL{\exp}{}$ on atoms.} of \xx,
  \yy and \zz, i.e., $\ABSEVAL{\exp}{}$ must be computed $5^3 = 125$
  times.  \yy is \emph{not} relevant to $\exp$ if, for any abstract
  values $\avalue_x, \avalue_z \in \ATOMS{\uco}$, there exists an
  atomic abstract value $\avaluee \in \ATOMS{\uco}$ such that $\forall
  \avalue \in \ATOMS{\uco}.\ \avaluee =
  \ABSEVAL{\exp}{\{\BIND{x}{\avalue_x},\BIND{y}{\avalue},\BIND{z}{\avalue_z}\}}$.
  This amounts to say that changing the value of \yy does not affect
  $\exp$, since we require the same output atomic evaluation for each
  possible abstract value for \yy. Indeed, if the result is not
  atomic, it means that we have at least two different abstract
  results for different values of \yy. Analogously, for different
  (abstract) values for \yy we have different atomic results, then
  again it means that there exists a variation of \yy affecting the
  abstract evaluation of $\exp$.
\end{example}

\noindent
However, it is possible to be smarter:

\begin{itemize}
\item[$\bullet$] \emph{Excluding states:} consider dependencies of
  $\exp$ in Example \ref{example:dependenciesOnSign}, computed at
  program point $n$.  Suppose $\ABSEVAL{\cdot}{}$ (used as a tool to
  infer invariant properties, as discussed in Section
  \ref{sec:abstractSemantics}) is able to infer, at point $n$, that
  the abstract state $\astatei{n}$ is such that $\astatei{n}(\yy) =
  \ABSVAL{posodd}$ correctly approximates the value of variables at
  $n$.  Then, we only need to consider states of the form
  $\{\BIND{x}{\avalue_x},\BIND{y}{\ABSVAL{posodd}},\BIND{z}{\avalue_z}\}$
  as inputs for $\ABSEVAL{\exp}{}$ (now considered as the abstract
  computation of expressions, according to Definition
  \ref{def:atomicAbstractDependencies}) at $n$.
\item[$\bullet$] \emph{Computing on non-atomic states:} let $E = \{
  \POSEVEN, \ZERO, \NEGEVEN \}$ and $O = \{ \POSODD, \NEGODD \}$.  In
  this case,
  \[ \forall \avalue' \in E, \avalue'' \in O. \
  \ABSEVAL{\exp}{\{\BIND{x}{\avalue_x},\BIND{y}{\avalue'},\BIND{z}{\avalue}\}}
  \leq \avaluee \] is implied by the more general result
  \[ \ABSEVAL{\exp}{\{\BIND{x}{\avalue_x},\BIND{y}{\EVEN},\BIND{z}{\ODD}\}} \leq
  \avaluee \] since $E$ and $O$ are partitions, respectively, of
  $\EVEN$ and $\ODD$, and $\ABSEVAL{\exp}{}$ is monotone: $\astatei{1} \leq
  \astatei{2}$ implies $\Rightarrow \ABSEVAL{\exp}{\astatei{1}} \leq
  \ABSEVAL{\exp}{\astatei{2}}$.  This means that results obtained on
  $\astate$ can be used on $\astatei{1} \leq \astate$.
\end{itemize}

In the following, we compute dependencies w.r.t. the abstract state
$\astatei{n}$, which is the abstract state computed at $n$ as an
invariant at that program point. In the worse case, we don't have any
information about the different variables, and therefore $\astatei{n}$
associates $\TOP$ (of the considered abstraction $\uco$) to all the
unknown variables.  At this point, we aim at proving that $\exp$ is
independent from a set of variable $X$, and in order to prove this
fact we need to prove that the evaluation of $\exp$ is always the same
atom in $\uco$, independently from the value of the variables in $X$,
without proving for all these values. Hence, our idea is to prove this
atomicity, if it holds, by iteratively refining the starting abstract
state $\astatei{n}$. Let us explain the intuition in the following
example.

\begin{example}
  \label{example:funesharppMeaning}
  Let $\exp \equiv$ \CODE{x * x + 1} and $\uco = \SIGNDOM$.  
  Let also
  $\ABSEVAL{\exp}{}$ follow the usual rules on $*$ and $+$: $\POS * \POS
  = \POS$, $\NEG * \NEG = \POS$, $\TOP * \TOP = \TOP$, $\POS + \POS =
  \POS$, $\TOP + \POS = \TOP$, etc.  Suppose to start from a memory such that $\{\BIND{\xx}{\TOP}\}$, then we observe that 
  $\ABSEVAL{\exp}{\{\BIND{\xx}{\TOP}\}} \in\ATOMS{\SIGNDOM}$ cannot be proved by using these
  rules, since $\TOP * \TOP + \POS = \TOP$, which is not atomic, and therefore there may be a dependency. Then, consider the possible refinements w.r.t. \xx in $\SIGNDOM$, namely 
  $\{\{\BIND{\xx}{\POS}\},\{\BIND{\xx}{\ZERO}\},\{\BIND{\xx}{\NEG}\}\}$. 
  This is enough to compute
  \[ \ABSEVAL{\exp}{\{\BIND{\xx}{\POS}\}} = \ABSEVAL{\exp}{\{\BIND{\xx}{\ZERO}\}} =
  \ABSEVAL{\exp}{\{\BIND{\xx}{\NEG}\}} = \POS\] meaning that any
  variation of \xx provides the same atomic result in $\uco$.
\end{example}

Given an initial abstract state $\astate$, we define the set of all
the abstract states that refine its abstract value on the variables in
$X$, while leaving unchanged atomic values of other variables.
\[
\SUBS{\astate}{X}=\sset{\astatei{i}}{\forall y\notin X.\:\astate(y)=\uco(\state(y))=\astatei{i}(y)\in\ATOMS{\uco},\:\forall x\in X.\: \astatei{i}(x)\leq\astate(x)}
\]
This is the set of all the possible abstract states that can be
obtained by restricting abstract values of variables in $X$ (the least
values are the atoms of $\uco$) starting from an initial memory
$\astate$, while all the other variables are atomic\footnote{Namely,
  all the other variables are fixed to the smallest possible abstract
  values in the abstract domain $\uco$.} and fixed by $\astate$.
Hence, the idea is that, once that all the variables different from
\xx are specified by atoms, we first compute the expression with
$\BIND{\xx}{\astate(x)}$, if the result is atomic, it means that all
the values for \xx provide the same results, and we can conclude that
surely $\exp$ does not depend on \xx. If the result is not atomic, it
may be because there is a dependency or because the abstraction is
incomplete for the semantics of the expressions, hence, in order not
to be too coarse, we consider the covering of the $\TOP$ in $\uco$,
namely we take the elements under the top and we repeat for all these
values. If all the computations provide the same atomic results then
we surely have $\exp$ independent form \xx, otherwise we continue to
refine the abstract values. If, we reach atomic values for \xx then we
terminate the recursion and we conclude that there may be
dependency. Hence we have to define a recursive predicate computing
this iteration: Given an expression $\exp$ and an atom $\avaluee$, the
\emph{atomicity condition} $\ACu{\avaluee}{\astate}$ holds iff
$\ABSEVAL{\exp}{\astate}$ gives $\avaluee$.
\[
\ACC{\astate}{X}=
\left \{
\begin{array}{ll}
\avaluee & \mbox{if}\ \ACu{\avaluee}{\astate}\ \vee\ \exists\avaluee\in\ATOMS{\rho}.\:\\
& \exists\{\astatei{i}\}_{i\in[1,k]}\in X\mbox{\em -covering}(\astate).\:\forall i\in[1,k].\:\ACC{\astatei{i}}{X}=\avaluee\\
\bot & \mbox{if}\ \forall x\in X.\:\astate(x)\in\ATOMS{\rho}\ \wedge\ \nexists\avaluee\in\ATOMS{\rho}.\:\ACu{\avaluee}{\astate}\\
\end{array}
\right .
\]
with
\[
X\mbox{\em -covering}(\astate)=\sset{\{\astatei{1},\ldots,\astatei{k}\}}{\forall i\in[1,k].\:\astatei{i}\in \SUBS{\astate}{X}\ \wedge\ \forall x\in X.\:\astatei{i}(x)\leq_\iota\astate(x),\\
\mbox{and}\ \bigvee_i \astatei{i}(x)=\astate(x)}
\]
where $\astatei{1}(x)\leq_\iota\astatei{2}(x)$ iff
$\astatei{1}(x)=\astatei{2}(x)$ or $\astatei{1}(x)$ is a direct
sub-value of $\astatei{2}(x)$ in $\uco$. Intuitively,
$\ACC{\astate}{X}$ terminates the iterations either when the
evaluation of the expression is always the same atom independently
from the abstract value of the variables in $X$ (meaning that there is
no dependency from $X$) or when, for all the possible atomic values
for variables in $X$, the evaluation of the expression is not atomic
or may have different atomic values (meaning that there may be
dependency from variables in $X$).  Hence, a judgment
$\ACC{\astatei{n}}{X}$ means that an atomic value for $\exp$ was
obtained without the need of further restricting the variables in $X$,
when all the other variables are atomic; therefore, $X$ only contains
non-relevant variables.
\COMMENT{
Let the set $\SUBS{\astate}{X}$ denote all the abstract states
$\astatei{1} \leq \astate$\footnote{Comparison on abstract states is
  variable-wise comparison on abstract values.} such that (1) $\forall
x \in X.\ \astatei{1}(x) = \astate(x)$; and (2)
$\ISATOM{\astatei{1}(y)}{\uco}$ holds for every $y\notin X$.  Namely,
$\SUBS{\astate}{X}$ is the set of abstract states which are equal to
$\astate$ on variables belonging to $X$, and yield atomic abstract
values on other variables.  To prove $e$ independent from $x$, we need
to prove $\ok{\ISATOM{\ABSEVAL{\exp}{\astatei{x}}}{\uco}}$ for any $\astatei{x}
\in \SUBS{\astatei{n}}{\{x\}}$, where $\astatei{n}$ is the abstract state
computed at $n$ as an invariant at that program point.  This amounts
to say that any variation in $x$ (up to the abstract value $\astatei{n}(x)$) does not lead to an observable
variation in $\exp$, whenever all the other variables are fully specified
as atoms.  Given an expression $\exp$ and an atom $\avaluee$, the
\emph{atomicity condition} $\ACu{\avaluee}{\astate}$ holds iff
$\ABSEVAL{\exp}{\astate}$ gives $\avaluee$, or there exists a covering
(Section \ref{sec:abstractSemantics}) $\{ \astatei{1},..,\astatei{k} \}$
of $\astate$ such that $\ACu{\avaluee}{\astatei{i}}$ holds for every
$i$.  Importantly, $\ACu{\avaluee}{\astate}$ implies that
$\uco(\{\EVAL{\exp}{\state'}|\state'\in\astate\})$ is an atom, and
the second disjunct helps if, due to some loss of information,
$\ABSEVAL{\exp}{\astate} > \avaluee$ although $\forall \astatei{1} <
\astate.\ \ABSEVAL{\exp}{\astatei{1}} = \avaluee$. 
According to Definition \ref{def:atomicAbstractDependencies}, in order
to prove the non-relevance of \xx it is enough to have $\exists
\avalue. \ACu{\avalue}{\astatei{x}}$ for every $\astatei{x}$ belonging to
$\SUBS{\astatei{n}}{\{\xx\}}$.  
Given a set of variables
$X$, a covering $\{\astatei{1}..\astatei{k}\}$ of $\astate$ is said to be
an $X$\emph{-covering} if (i) for every $x \in X$, and for every $i$,
it holds that $\astatei{i}(x) = \astate(x)$; and (ii) for every $y
\notin X$, and for every $i$, $\astatei{i}(y)$ is either $\astate(y)$ or
one of its \emph{direct} sub-values (i.e., some $\avalue{<}\astate(y)$
such that there is no $\avalue'$ with
$\avalue{<}\avalue'{<}\astate(y)$).

The assertion $\ACC{\astate}{X}$ holds iff (1) there exists an
abstract value $\avaluee$ such that $\ACu{\avaluee}{\astate}$; or (2)
there is an $X$-covering $\{\astatei{1}..\astatei{k}\}$ of $\astate$,
such that $\forall i.\ \ACC{\astatei{i}}{X}$.
Intuitively, an $X$-covering is a covering set of ``restrictions'' on
an abstract state, which do not involve variables in $X$.  Clearly,
the condition for the non-relevance of a set $X$ to an expression is
related to the definition of $X$-covering, since $\SUBS{\astatei{n}}{X}$
can be obtained by repeatedly applying to $\astatei{n}$ (and the newly
obtained states) the ``compute an $X$-covering of an abstract state''
operation.  The statement $\ACC{\astate}{X}$ implies that
$\ABSEVAL{\exp}{\astatei{X}}$ is an atom for every $\astatei{X} \in
\SUBS{\astate}{X}$ and, therefore, the non-relevance of the whole set
$X$.
The $\FINDNDEPS$ algorithm (Figure \ref{fig:NdepAlgorithm}) starts by
trying to prove $\ACC{\astatei{n}}{\variables}$; since $\{\astate\}$ is
the only $\variables$-covering of any $\astate$, this condition is
only satisfiable if $\ACu{\avaluee}{\astatei{n}}$ holds, i.e., if $\exp$
depends on no variables.  Otherwise, the set $X$ is decreased
non-deterministically (one element at a time, randomly) until some
judgment $\ACC{\astatei{n}}{X}$ is proved.  A judgment
$\ACC{\astatei{n}}{X}$ means that an atomic value for $\exp$ was obtained
without the need of restricting $X$ variables; therefore, $X$ only
contains non-relevant variables.
\begin{proposition}[Soundness]
  \label{proposition:correctness}
Let $\uco\in\ucos(\wp(\values))$, and let $\ov{\uco}$ denote the tuple of $\uco$, on each variable.  
  If $\ACC{\astatei{n}}{X}$ can be proved, then there is no $x \in
  X$ such that $\ATOMDEPENDS{x}{}{\uco}{\exp}{\ov{\uco}}$.
\end{proposition} 

\begin{proof}
  Consider the definition of $\ACC{}{}$: if
  $\ACu{\avaluee}{\astatei{n}}$ holds for some atomic abstract value
  $\avaluee$, then there are no dependencies at all.  Otherwise, let
  $\astatei{i}$ be one of the states belonging to the $X$-covering of
  $\astatei{n}$; for every $\state_i \in \astatei{i}$, the concrete value
  $\EVAL{\exp}{\state_i}$ is abstracted to the same atom, i.e., there is
  no way to distinguish between two computations.  This means that it
  is possible to change $X$ variables to any value, without changing
  the property of $e$.  Since this is true for every $\astate_i$, and
  these states are a covering of $\astatei{p}$, the thesis follows.
  \end{proof}
}

The $\FINDNDEPS$ algorithm (Figure \ref{fig:NdepAlgorithm}) starts by
trying to prove that $\ACC{\astatei{n}}{\variables}$ is atomic.  If it
find a restriction of the abstract values of all the variables making
the $\exp$ evaluation atomic, then $\exp$ depends on no variables.
Otherwise, the set $X$ is decreased non-deterministically (one element
at a time, randomly) until some judgment $\ACC{\astatei{n}}{X}$ is
proved.
\begin{proposition}[Soundness]
  \label{proposition:correctness}
  Let $\uco\in\ucos(\wp(\values))$ be partitioning and
  $\astate$\footnote{This represent the initial information about $X$,
    if there is no information then it works mapping all the variables
    in $X$ to $\top$} an abstract state. Let $\ov{\uco}$ denote the
  tuple of $\uco$, on each variable.  For all $\astatei{n}\in
  \SUBS{\astate}{X}$ we have that if
  $\ACC{\astatei{n}}{X}\in\ATOMS{\uco}$ (namely not $\bot$), then
  there is no $x \in X$ such that
  $\ATOMDEPENDS{x}{}{\uco}{\exp}{\ov{\uco}}$.
\end{proposition}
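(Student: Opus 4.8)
The plan is to reduce the proposition to a single invariant lemma about the recursive predicate $\ACC{\cdot}{X}$, prove that lemma by induction on its unfolding, and then obtain the non-existence of atom-dependencies by contradiction. The lemma I would state is: \emph{for every $\astatei{m}\in\SUBS{\astate}{X}$ such that $\ACC{\astatei{m}}{X}=\avaluee\in\ATOMS{\uco}$, and for every concrete state $\state\in\states$ that refines $\astatei{m}$ — i.e.\ $\uco(\{\state(y)\})=\astatei{m}(y)$ for each $y\notin X$ (these are atoms) and $\uco(\{\state(x)\})\leq\astatei{m}(x)$ for each $x\in X$ — one has $\uco(\{\EVAL{\exp}{\state}\})=\avaluee$.} Intuitively: once the non-$X$ variables are pinned to their atomic blocks, the concrete value of $\exp$ always falls in the single block $\avaluee$ no matter how the $X$-variables are instantiated, which is exactly what it means for $\exp$ not to atom-depend on any $x\in X$ in that context.

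I would prove the lemma by well-founded induction on the recursive unfolding of $\ACC{\cdot}{X}$. Well-foundedness holds because each recursive call replaces $\astatei{m}$ by a member of an $X$-covering, which strictly refines at least one $X$-variable along $\leq_\iota$ (a direct-sub-value step), and the $\bot$-branch terminates the recursion once all $X$-variables have reached atoms. \emph{Base case:} $\ACu{\avaluee}{\astatei{m}}$ holds, i.e.\ $\ABSEVAL{\exp}{\astatei{m}}=\avaluee$. Since $\ABSEVAL{\exp}{\cdot}$ is the best correct approximation of $\SEMANTICS{\exp}$ in $\uco$ and $\state$ belongs to the concretization of $\astatei{m}$, soundness gives $\uco(\{\EVAL{\exp}{\state}\})\leq\ABSEVAL{\exp}{\astatei{m}}=\avaluee$; as $\uco(\{\EVAL{\exp}{\state}\})\neq\bot$ and $\avaluee$ is an atom, equality follows. \emph{Inductive case:} there is an $X$-covering $\{\astatei{i}\}_i$ of $\astatei{m}$ with $\ACC{\astatei{i}}{X}=\avaluee$ for every $i$. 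Given a refining state $\state$, I would use, for each $x\in X$, the covering condition $\bigvee_i\astatei{i}(x)=\astatei{m}(x)$ together with the fact that $\uco$ is partitioning — so $\uco(\wp(\values))$ is a complete Boolean lattice, in which an atom below a join lies below one of the joinands — to select a single index $i$ with $\state$ refining $\astatei{i}$; the induction hypothesis applied to $\astatei{i}$ then yields $\uco(\{\EVAL{\exp}{\state}\})=\avaluee$.

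With the lemma available, fix $\astatei{n}\in\SUBS{\astate}{X}$ with $\ACC{\astatei{n}}{X}=\avaluee\in\ATOMS{\uco}$ and suppose, for contradiction, that $\ATOMDEPENDS{x}{}{\uco}{\exp}{\ov{\uco}}$ holds for some $x\in X$. By Definition~\ref{def:atomicAbstractDependencies} there are $\state_1,\state_2\in\states$ with $\uco(\{\state_1(y)\})=\uco(\{\state_2(y)\})$ for all $y\neq x$ and $\neg\ISATOM{\ABSEVAL{\exp}{\{\state_1,\state_2\}}}{\uco}$. Since dependencies are computed relative to the invariant holding at $n$, the relevant $\state_1,\state_2$ concretize $\astatei{n}$ and in particular agree with $\astatei{n}$ on all (atomic) non-$X$ variables, differing only at $x\in X$; hence both refine $\astatei{n}$, and the lemma gives $\uco(\{\EVAL{\exp}{\state_1}\})=\uco(\{\EVAL{\exp}{\state_2}\})=\avaluee$. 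Then every concrete $\exp$-value over the region abstracted by $\uco(\{\state_1,\state_2\})$ lies in the atom $\avaluee$, so the best correct approximation $\ABSEVAL{\exp}{\{\state_1,\state_2\}}$ equals $\avaluee$ and is an atom — contradicting the choice of $\state_1,\state_2$. This shows no such $x$ exists; combined with Proposition~\ref{theorem:equivalenza} (since $\ANARROWDEPENDS{x}{}{\uco}{\exp}{\ov{\uco}}$ implies $\ATOMDEPENDS{x}{}{\uco}{\exp}{\ov{\uco}}$), it also rules out $\ANARROWDEPENDS{x}{}{\uco}{\exp}{\ov{\uco}}$ for $x\in X$.

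The step I expect to be the main obstacle is the inductive case of the lemma: showing that an $X$-covering genuinely covers the whole concretization of $\astatei{m}$, i.e.\ that every \emph{simultaneous} instantiation of the $X$-variables is captured by a single covering element and not merely coordinate-by-coordinate. This needs the complemented (Boolean) structure of partitioning closures on top of the join condition $\bigvee_i\astatei{i}(x)=\astatei{m}(x)$, and may require slightly tightening the definition of $X$-covering so that coverings are product-closed. A lesser, bookkeeping-level point is justifying, in the last paragraph, the identification of the arbitrary witnesses $\state_1,\state_2$ of atom-dep with states compatible with the computed invariant $\astatei{n}$ — sound precisely because $\ACC{\cdot}{X}$ is always invoked on the invariant abstract state at $n$, whose concretization contains all and only the executions reaching that program point.
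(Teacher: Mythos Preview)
Your argument follows the same line as the paper's --- show that when $\ACC{\astatei{n}}{X}$ returns an atom $\avaluee$, every compatible state has $\exp$ abstractly evaluating to $\avaluee$, so no atom-dependency witness can exist --- but you make explicit the induction on the recursive unfolding of $\ACC{\cdot}{X}$ that the paper compresses into a single sentence (the paper jumps directly from ``$\ACC{\astate}{X}\in\ATOMS{\uco}$'' to ``there exists a covering with $\ACu{\avaluee}{\astatei{i}}$ for all $i$, hence $\ACu{\avaluee}{\astatei{1}}$ and $\ACu{\avaluee}{\astatei{2}}$'', overloading the names $\astatei{1},\astatei{2}$ without spelling out why arbitrary members of $\SUBS{\astate}{X}$ inherit the atomicity). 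The two obstacles you flag --- whether an $X$-covering is product-closed enough that a single covering element captures any simultaneous instantiation of the $X$-variables, and the restriction of the atom-dep witnesses to states compatible with the invariant at $n$ --- are genuine and are equally glossed over in the paper's own proof; your treatment is strictly more careful on both points.
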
 

\begin{proof}
  Suppose to know that the variables in $X$ may soundly have only the
  values determined by $\astate$, namely for each $x\in X$ the
  concrete value of $x$ in the computation is contained in
  $\astate(x)$. This means that, for each $x\in X$, we can check
  dependency only for the values ranging over $\astate(x)$.  Consider
  $\astatei{1},\astatei{2}\in\SUBS{\astate}{X}$, by definition of
  $\SUBS{\astate}{X}$ we have that $\forall y\notin X$ we have
  $\astatei{1}(y)=\astatei{2}(y)=\astate(x)$, moreover again by
  definition these values are atomic hence
  $\astatei{1}(y)=\uco(\state_1(y))=\uco(\state_2(y))=\astatei{2}(y)$. At
  this point, in order to prove that there is no dependency, we have
  to show that
  $\ISATOM{\ABSEVAL{\exp}{\{\astatei{1},\astatei{2}\}}}{\uco}$
  holds. But, at this point, by hypothesis since
  $\ACC{\astate}{X}\in\ATOMS{\uco}$, meaning that there exists a
  covering of $\astate$
  $\{\astatei{i}\}_{i\in[1,k]}\subseteq\SUBS{\astate}{X}$ and
  $\avaluee\in\ATOMS{\uco}$ such that $\forall i\in[1,k]$ we have
  $\ACu{\avaluee}{\astatei{i}}$, but this implies that
  $\ACu{\avaluee}{\astatei{1}}$ and $\ACu{\avaluee}{\astatei{2}}$,
  which trivially implies the thesis.
\end{proof}

\noindent
Importantly, the assertions $\ACC{\astatei{n}}{X}$ and
$\ACC{\astatei{n}}{Y}$ guarantee $\ATOMDEPENDS{X\cup
  Y}{}{\uco}{\exp}{\ov{\uco}}$ not to hold, even if
$\ACC{\astatei{n}}{X\cup Y}$ cannot be directly proved.  The final
result of $\FINDNDEPS$ is $\VARS{\exp} \smallsetminus Z$, where $Z$ is
the union of all sets $Z_i$ such that $\ACC{\astatei{n}}{Z_i}$ can be
proved.  By Proposition \ref{proposition:correctness}, this set is an
over-approximation of relevant variables.
\begin{figure}
  \begin{pseudocode}
   function $\FINDNDEPS$ { 
     $\mbox{nonDep}$ := $\emptyset$; // $\mbox{{\it can be modified by prove()}}$
     $\PROVE$($\astatei{n}$,$\VARS{e}$);
     return $\VARS{e}\ \smallsetminus\ \mbox{nonDep}$; // $\mbox{{\it relevant variables}}$
   }
   procedure $\PROVE$($\astate$,$X$) {
     if($\ACC{\astate}{X}\neq\bot$) then $\mbox{nonDep}$ := $\mbox{nonDep}\ \cup\ X$;
     else foreach ($x \in X$) { $\PROVE$($\astate$,$X\ \smallsetminus\ \{x\}$); }
   }
  \end{pseudocode}
  \caption{The $\FINDNDEPS$ algorithm}
  \label{fig:NdepAlgorithm}
\end{figure}

The $\FINDNDEPS$ algorithm may deal, in principle, with
\emph{infinite} abstract domains, and in particular with abstract
domains with infinite descending chains, since non-dependency results
can be possibly proved without exploring the entire state-space; in
fact, if $\ACu{\avaluee}{\astate}$ can be proved, then it is not
needed to descend into the (possibly infinite) set of sub-states of
$\astate$.  This is not possible in the brute-force approach.  It is
also straightforward to add \emph{computational bounds} in order to
stop ``refining'' states if some amount of computational effort has
been reached.

\subsubsection{Dependency erasure in the abstract framework}
\label{section:simplifyingDomainsForEliminatingDependencies}

The problem of computing abstract dependencies can be observed from
another point of view: given $e$ and a set $X$ of variables, we may be
interested in soundly approximating the \emph{most concrete} $\uco$
such that $\ANARROWDEPENDS{X}{}{\uco}{\exp}{\ov{\uco}}$ does not
hold. Namely, the most concrete observation guaranteeing the
non-interference of the variables in $X$ on the evaluation of $\exp$
\cite{GM04popl,Mastroeni13}.  This can be accomplished by repeatedly
simplifying an initial domain $\uco_0$ in order to eliminate abstract
values which are \emph{responsible} for dependencies.  In order to
avoid dependencies on $X$, we should have $\ACC{\astatei{n}}{X}$,
i.e., $\ACu{\avalue}{\astatei{X}}$ should hold for any $\astatei{X}
\in \SUBS{\astatei{n}}{X}$.  If this does not hold for some $\astate$,
then $\uco$ is modified to obtain the atomicity of $\avalue =
\ABSEVAL{\exp}{\astate}$.

We design a simple algorithm $\EDEP{\exp}{\uco_0}{X}$ (Figure
\ref{edep}) which repeatedly checks if there exists $\avalue$ such
that $\ACu{\avalue}{\astate}$.  Initially, the current state $\astate$
is $\astatei{n}$ (remember that $\astatei{n}$ is the abstract state
correctly describing invariant properties of variables at the current
program point $n$); then, it is progressively specialized to states
belonging to one of its $X$-coverings, until one of the following
holds:
\begin{itemize}
\item[$\bullet$] $\ACu{\avalue}{\astate}$; in this case, $\uco$ is
  precise enough to exclude dependencies on $X$ in $\astate$, and is
  not further modified;
\item[$\bullet$] $\astate$ cannot be refined anymore (it is atomic on
  $\VARS{\exp} \smallsetminus X$) but $\avalue$ is non-atomic; in this
  case, $\uco$ needs to be simplified in order to obtain
  $\ISATOM{\avalue}{\uco}$.
\end{itemize}
States are processed by means of a queue; the algorithm stops when all
states have been consumed without any modification to $\uco$, i.e.,
when no non-atomic $\avalue$ has been found.  As in $\FINDNDEPS$,
states are progressively restricted, and computations on
$\ABSEVAL{\exp}{\astate}$ are avoided if the desired property already
holds for $\astate' > \astate$.

\begin{figure}
  \begin{pseudocode}
   $\uco$ := $\uco_0$; // $\mbox{{\it the initial domain}}$
   repeat {
     $\mbox{inputQueue}$ := [$\astatei{p}$]; // $\mbox{{\it one-element queue}}$
     while ($\mbox{notEmpty}$($\mbox{inputQueue}$)) {
       $\astate$ := $\mbox{extract}$($\mbox{inputQueue}$);
       if ($\nexists V.\ \ACu{V}{\astate}$) then {
         if ($\ISATOM{\astate}{\uco}$) then { // $\mbox{{\it on }}\VARS{e} \smallsetminus X$
           // $\mbox{{\it at this point,}}V \mbox{ {\it is not atomic}}$
           $V$ := $\ABSEVAL{e}{\astate}$;
           $\uco$ := $\ATOMIZE{\uco}{V}$;
           // $\mbox{{\it the queue still has 1 element}}$
           $\mbox{inputQueue}$ := [$\astatei{p}$];
         } else { // $\{\astatei{1}..\astatei{k}\}\ \mbox{{\it is an }}X\mbox{{\it -covering of }}\astate$
           foreach($i$) {
             $\mbox{insertInQueue}$($\mbox{inputQueue}$,$\astatei{i}$)}}}}
   } until ($\uco\ \mbox{has not been modified in the while loop}$);
   return $\uco$; // $\mbox{{\it the domain s.t. }}\ANARROWDEPENDS{X}{}{\uco}{e}{\ov{\uco}}\ \mbox{{\it does not hold}}$
  \end{pseudocode}
  \caption{The $\EDEPALG$ algorithm.}\label{edep}
\end{figure}

The simplifying operator $\ATOMIZE{\uco}{\avalue}$ is a \emph{domain
  transformer}, and works by removing abstract values in order to
obtain $\ISATOM{\avalue}{\uco}$.  Formally,
\[ \uco'\ =\ \ATOMIZE{\uco}{\avalue}\ \EQDEF\ \{ \avaluee \in \uco\ |\
\avalue{\sqcap}\avaluee{=}\bot\ \vee\ \avalue{\leq}\avaluee \} \]

\noindent
The final $\uco$ is an approximation of the most precise $\uco'$
s.t.\ $\ANARROWDEPENDS{X}{}{\uco'}{\exp}{\ov{\uco'}}$ is false:

\begin{theorem}
  \label{theorem:nonDependency}
  $\uco\defi\EDEP{\exp}{\uco_0}{X}$ makes $e$ not narrow-dependent on
  $X$.  In other words: the final $\uco$ satisfies
  non-narrow-dependency of $\exp$ on $X$, that is, for every $\astate$
  which is atomic on $\VARS{\exp} \smallsetminus X$,
  $\ABSEVAL{\exp}{\astate}$ is atomic.
\end{theorem}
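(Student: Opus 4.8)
The plan is to prove the ``atomic states'' reformulation stated in the theorem — for every abstract state $\astate$ over the returned domain $\uco$ which is atomic on $\VARS{\exp}\smallsetminus X$, the evaluation $\ABSEVAL{\exp}{\astate}$ is an atom of $\uco$ — and then obtain non-narrow-dependency as a short corollary. For the corollary: given $\state_1,\state_2\in\states$ with $\uco(\state_1(y))=\uco(\state_2(y))$ for every $y\notin X$, take $\astate$ to be the abstract state with $\astate(v)=\uco(\{\state_1(v),\state_2(v)\})$ for each variable $v$; its $\VARS{\exp}\smallsetminus X$ components are atomic by hypothesis, so $\ABSEVAL{\exp}{\astate}$ is an atom, and since $\state_1,\state_2$ are represented by $\astate$ and $\ABSEVAL{\cdot}{}$ is the best correct approximation, both $\uco(\EVAL{\exp}{\state_1})$ and $\uco(\EVAL{\exp}{\state_2})$ lie below this atom and therefore coincide with it, so $\ANARROWDEPENDS{X}{}{\uco}{\exp}{\ov{\uco}}$ cannot hold.

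First I would record the basic facts about the transformer $\ATOMIZE{\uco}{\avalue}$. It is a Moore family (closed under $\sqcap$ by a three-case check on whether the arguments dominate $\avalue$ or meet it in $\bot$), hence a uco; it contains $\avalue$; in it $\avalue$ is an atom, because the only elements $\sqsubseteq\avalue$ are $\bot$ and $\avalue$; and — the fact that makes the restart sound — if $\ABSEVAL{\exp}{\astate}=\avalue$ holds in $\uco$, then it still holds in $\uco'=\ATOMIZE{\uco}{\avalue}$, since $\uco'\subseteq\uco$ implies that every fixpoint of $\uco'$ above $\SEMANTICS{\exp}(\astate)$ is a fixpoint of $\uco$ above it, so the least such cannot slip below $\avalue$. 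Moreover $\ATOMIZE$ strictly shrinks $\uco$ whenever invoked (it is only invoked when $\avalue$ is non-atomic, and then some $b$ with $\bot\sqsubset b\sqsubset\avalue$ disappears). For a finite $\uco_0$ these give termination: the inner loop traverses a finite $X$-covering tree, and the outer loop runs at most $|\uco_0|$ times, each non-final pass deleting at least one abstract value.

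The heart of the proof is the analysis of the last iteration of the outer loop, which by the loop's stopping condition runs the inner loop over a \emph{fixed} domain $\uco$ (the returned one) without any atomization. By induction over the $X$-covering refinement tree rooted at $\astatei{p}$, one shows that every abstract state $\astate\leq\astatei{p}$ atomic on $\VARS{\exp}\smallsetminus X$ has $\ABSEVAL{\exp}{\astate}$ equal to an atom or to $\bot$: if such an $\astate$ is dequeued, then since no atomization happens in this pass the branch ``$\ISATOM{\astate}{\uco}$ on $\VARS{\exp}\smallsetminus X$ and $\nexists V.\ \ACu{V}{\astate}$'' is never taken, so $\ABSEVAL{\exp}{\astate}$ is an atom; if a non-leaf $\astate$ is dequeued whose evaluation already is an atom $\avaluee$, then monotonicity of $\ABSEVAL{\cdot}{}$ gives $\ABSEVAL{\exp}{\astate'}\leq\avaluee$ for all refinements $\astate'\leq\astate$, so every descendant evaluates to $\avaluee$ or $\bot$; and if a non-leaf $\astate$ is dequeued and refined, its $X$-covering children are enqueued and handled by the induction hypothesis, using that iterating direct-sub-value refinements on $\VARS{\exp}\smallsetminus X$ is complete (it reaches every atomic-on-$\VARS{\exp}\smallsetminus X$ state below $\astate$ and preserves joins). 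Finally, to pass from ``states below $\astatei{p}$'' to an arbitrary $\astate$ atomic on $\VARS{\exp}\smallsetminus X$ and consistent with the invariant at $n$, raise the $X$-components of $\astate$ to those of $\astatei{p}$ (which, in the information-free start, are $\top$); the result dominates $\astate$ and is a leaf below $\astatei{p}$, so by monotonicity $\ABSEVAL{\exp}{\astate}$ is bounded by the corresponding atom and hence is an atom or $\bot$.

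The main obstacle I anticipate is controlling the interaction between the two loops: because $\ABSEVAL{\cdot}{}$ is recomputed in the current domain after each atomization, one must argue that the certificate extracted from the final, change-free pass genuinely concerns the returned $\uco$, and that the intermediate atomizations neither break the Moore-family/atom invariants nor reintroduce a non-atomic evaluation for an already-handled state — which is precisely the stability lemma for $\ABSEVAL{\exp}{\astate}$ under $\uco'\subseteq\uco$ above. The remaining work is the essentially bureaucratic management of the covering tree together with the monotonicity-based pruning of branches, plus the minor point of treating $\bot$-valued evaluations as degenerate (states infeasible under the invariant $\astatei{n}$), which is dispatched — as elsewhere in the paper — by restricting attention to abstract states consistent with $\astatei{n}$.
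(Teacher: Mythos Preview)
Your proposal is correct and follows the same line as the paper's (much terser) proof: both rest on the observation that the algorithm halts only after a complete traversal of the $X$-covering tree from $\astatei{p}$ in which no atomization is triggered, so in the returned $\uco$ every state atomic on $\VARS{\exp}\smallsetminus X$ evaluates to an atom, with monotonicity of $\ABSEVAL{\exp}{}$ handling the pruning at non-leaf nodes. The ``interaction between the two loops'' you flag as the main obstacle is not actually a difficulty for this theorem---since the final change-free pass runs entirely over the returned, fixed domain, your stability lemma across atomizations is not needed here (though your additional details on termination, the Moore-family closure of $\ATOMIZE$, and the explicit corollary deriving narrow non-dependency are all sound and left implicit in the paper).
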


\begin{proof}
  The algorithm halts if, in processing $\astatei{n}$, $\uco$ is not
  changed.  Processing $\astatei{n}$ involves computing
  $\ABSEVAL{\exp}{}$ on sub-states when required, in order to prove
  the atomicity property on every concrete state represented by
  $\astatei{n}$ (to this end, we exploit monotonicity of
  $\ABSEVAL{\exp}{}$ on states).  This is precisely obtained if every
  state is removed from the queue before any modification to $\uco$
  occurs.
\end{proof}

\noindent
On the practical side, the loss of precision in abstract computations
may lead to remove more abstract values than strictly necessary from
the semantic point of view.  It is important to note that $\EDEPALG$
works as long as $\AC{}$ can be computed on the initial domain (in
this case, no problems arise in subsequent computations, since the
``complexity'' of $\uco$ can only decrease).  This can possibly happen
even if $\uco_0$ is infinite (see the end of Section
\ref{section:algorithmicIdeasForCheckingNdep}).  Moreover, unlike
$\FINDNDEPS$, there is no reasonable trivial counterpart, since any
brute-force approach would be definitely impractical.

\COMMENT{
\subsection{Pruning CFG for computing slices}
\label{section:VsImpl}

This section describes an algorithmic approach to the computation of
\emph{abstract slices}.  The idea is to define a notion of abstract
state, which observes properties of selected program variables.  Such
states are used for analyzing how properties of the variables of
interest evolve.  In order to perform such an \emph{evolution}
analysis, we construct the \emph{abstract} state graph ($\ASG$), whose
vertices are abstract states, and which models program executions at
some level of abstraction. At this point, we propose a technique for
\emph{pruning} the $\ASG$ in order to remove \emph{all} the statements
which are relevant to the properties of interest.

Two different approaches are considered: \emph{simple} ($\SA$) and
\emph{extended} ($\EA$).  The difference between them lies in the
definition of abstract states: $\EA$ adds information about the
relationship between input variables and properties of variables of
interest.  Moreover, the approaches also differ in the way pruning is
performed.  In particular, $\EA$ pruning consists of several
applications of $\SA$ pruning (see Fig.~\ref{fig:Schema} for a
graphical representation of both approaches).  Note that $\SA$ can be
only used to extract abstract static slices: since it does not
consider the relationship between variables of interest and input
variables, it is not precise enough in order to be applied to abstract
conditioned or dynamic slicing

\begin{figure}[tbp]
 \centering
  \includegraphics[scale=.7,viewport=3.5in 7.65in 6.5in 9.1in]{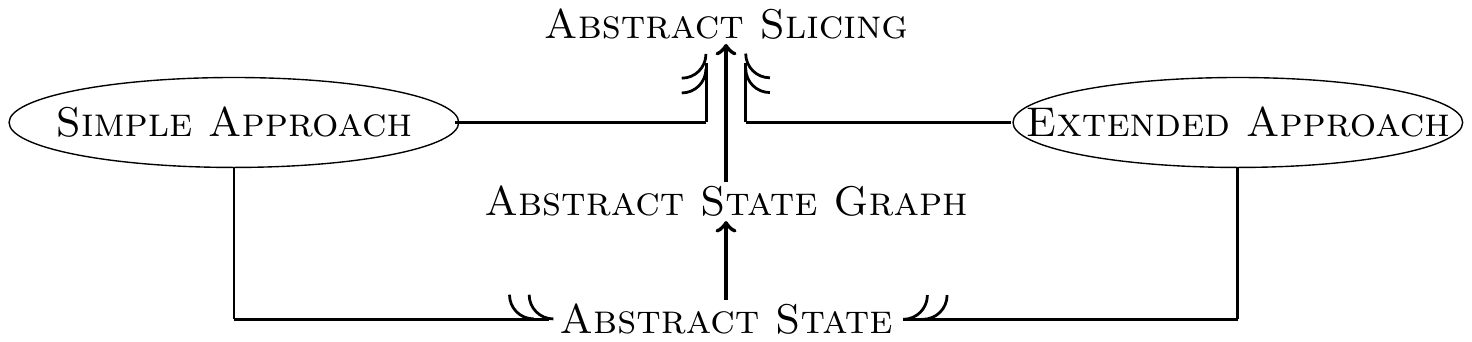}
  \caption{$\SA$ and $\EA$ \label{fig:Schema}}
\end{figure}
 
\COMMENT{We consider only programs written in a simple imperative
  programming language with assignments, sequential composition,
  conditional statements, and while statements. Let $P$ be such a
  program and let us consider a criterion
  \small$C_{\mathcal{AS}}\!=\!(\MMM, \mathsf{V}, end)$\normalsize,
  where $end$ represents the last statement of $P$\footnote{We suppose
    that original program has already been modified by a standard
    static slicer.}, and \small$\mathsf{V}$ \normalsize is a notion
  introduced previously. Moreover, suppose that each property of
  interest is modeled as a \emph{finite} abstract domain (e.g.,
  \small$\varphi_{\textsc{Par}}, \varphi_{\textsc{Sign}}$\normalsize).

  Let us define the notion of \emph{abstract state}. An \emph{abstract
    state} describes abstract properties, \small$A_1, \ldots,
  A_k$\normalsize, of variables of interest after execution of a
  particular statement $n_i$, and can be denoted as
  \small$\sigma^{\#}\!=\!\langle n_i, A_1, \ldots,
  A_k\rangle$\normalsize. Given a concrete state\footnote{Here we do
    not consider occurrences of points, since they are not of interest
    for standard forms} $\sigma_i\!=\!(n_i, \uco_i)$, the abstraction
  function $\alpha$ associating $\sigma_i$ with the corresponding
  \emph{abstract state} is defined \small$\alpha\!\defi\!(pre(n_i),
  \varphi_1(\uco_i(V_1)), \ldots, \varphi_k(\uco_i(V_k)))$\normalsize,
  where $pre(n_i)$ represents the line number of a statement that
  precedes a statement $n_i$. Remember that concrete states contain
  line number of the next statement to be executed, and \emph{abstract
    states} require line number of the last executed statement. It is
  worth noting that the actual abstraction is performed on variables
  of interest since \emph{abstract states} contain their abstract
  values. 

  Let \small$V_\alpha$ \normalsize be the set of abstract states. The
  $ASG$ of $P$ for \small$C_\cA$ \normalsize is a directed graph
  \small$G_\cA = (V_\alpha, E)$\normalsize, where
  \small$E\!\subseteq\!V_\alpha\!\times\!V_\alpha$\normalsize, and for
  each pair $a, b\!\in\!\mbox{\small$V_\alpha$\normalsize}$, there is
  an edge from $a$ to $b$ if there exist concrete states $\pi$ and $b$
  corresponding to abstract states $a$ and $b$ respectively (i.e.,
  $a\!=\!\alpha(\pi)$ and $b\!=\!\alpha(b)$), and if there exists a
  state trajectory $T_P^\uco$ such that $\pi$ and $b$ are two adjacent
  states of $T_P^\uco$.\\

  In order to simplify representation of vertices of $ASG$, we
  consider the reduced product, $\varphi\!=\!\sqcap_{j=1}^k\varphi_j$,
  of properties of tuples of interest as a unique property of
  variables of interest. For each line number, $n_i$, $ASG$ contains
  $m\!=\!\prod_{j=1}^{k}\!\!\mid\!\!\varphi_j\!\!\mid$ vertices, and
  we can represent them as $(n_i, A_h)$, where $\forall
  h\!\in\![1..m]$, $A_h$ represents a fix-point of $\varphi$. For each
  vertex, $a\!=\!(n, A)$, we denote $stmt(a)\!\defi\!n$ and
  $absv(a)\!\defi\!A$. We introduce a notion of \emph{block}, i.e., a
  set of connected vertices of $ASG$ that has only one input and only
  one output edge. Let $B$ be a block, then $in(B)$ and $out(B)$
  denote its predecessor and successor respectively. Note that $in(B),
  out(B)\!\notin\!B$. Moreover, we introduce a notion of strongly
  connected component (SCC) of $ASG$, that is a maximal set of its
  vertices, $C$, such that for every pair of vertices $u, v\!\in\!C$
  there is a path from $u$ to $v$ and a path from $v$ to $u$, i.e.,
  vertices $u$ and $v$ are reachable from each other.}

$\SA$ pruning is illustrated by the method \CODE{pruningSA} given in
Fig.~\ref{fig:SAAlgo}, and its application is clarified by the
following example.  On the other hand, $\EA$ pruning is characterized
by the method \CODE{pruningEA}, given in Fig.~\ref{fig:EAAlgo}, and
will be described only informally.

\begin{figure}[h]
  \hspace{1cm}
  \includegraphics[scale=.72,viewport=2.5in 5.35in 5.5in 8.7in]{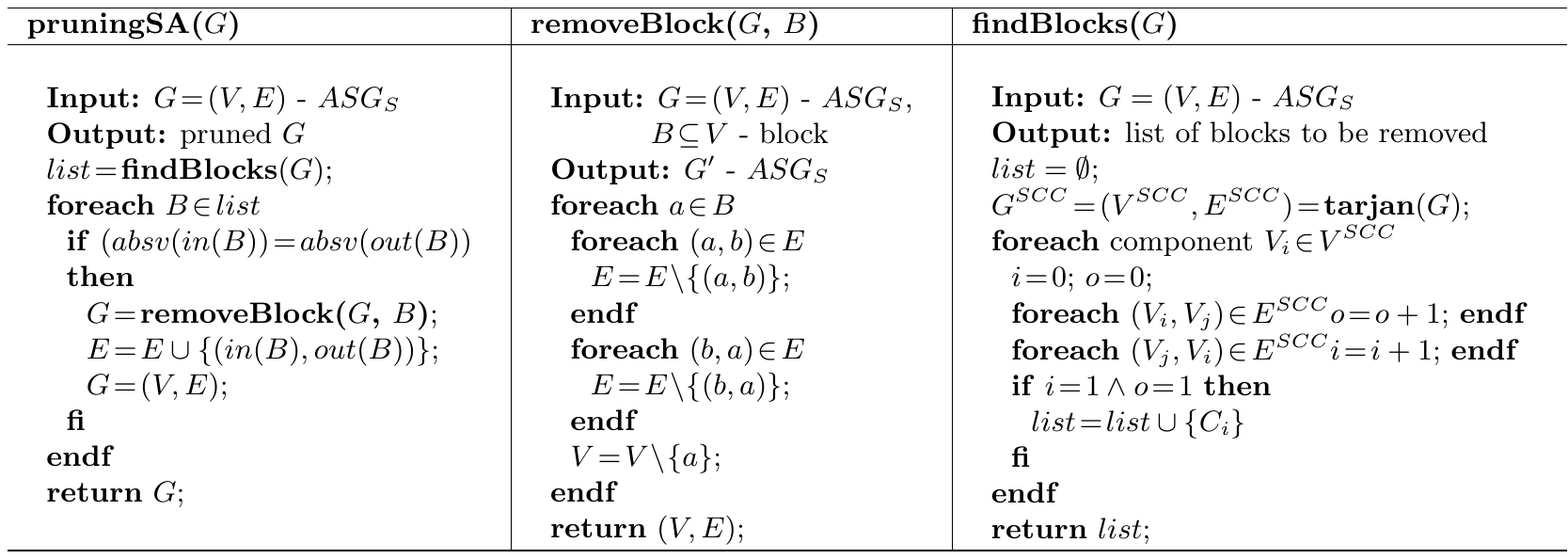}
  \vspace{-2cm}
  \caption{The $\SA$ approach \label{fig:SAAlgo}}
\end{figure}

\COMMENT{
  \paragraph*{\textbf{Simple approach pruning}}
  In the following we explain one possible way of \emph{pruning}
  $ASG$, introduced in~\cite{NikoSAS10}. The idea is to traverse
  $ASG$, to detect all its blocks and to remove the ones that are not
  relevant for the \emph{abstract} computation of interest. The
  remaining vertices determine the \emph{abstract slice}.\\
  \begin{figure}[h]
    \centering
    \includegraphics[scale=.72,viewport=2.5in 4.7in 5.5in 8.7in]{SAAlgo1.pdf}
    \caption{\label{fig:SAAlgo}}
  \end{figure}
  The \emph{simple approach pruning} is illustrated by the method
  \textbf{\texttt{pruningSA}} given in Fig.~\ref{fig:SAAlgo}. The
  invocation of \textbf{\texttt{findBlocks}} returns the set of all
  blocks of $ASG$. If it does not provide any block,
  \textbf{\texttt{pruningSA}} returns the current $ASG$. Otherwise,
  the blocks are analyzed in order to find out if it is possible to
  remove them. Let $B$ be a block and suppose $in(B)\!=\!(n, A_i)$ and
  $out(B)\!=\!(m, A_j)$. It is possible to remove $B$ if variables of
  interest have the same abstract values before and after execution of
  $B$, i.e., if $A_i\!=\!A_j$. It means that $B$ does not change
  abstract values of variables of interest, so we invoke the method
  \textbf{\texttt{removeBlock}}, which removes it from $ASG$, and we
  add the edge from $(n, A_i)$ to $(m, A_j)$. If $A_i\!\neq\!A_j$, we
  cannot remove $B$.\\

  The method \textbf{\texttt{findBlocks}}, which takes an $ASG$ and
  returns all its blocks, requires some clarifications. It is worth
  noting that a block can be either a single vertex or a set of
  vertices corresponding to the instructions that form a loop. The
  latter one represents a cycle in the $ASG$. In order to find all the
  blocks of $ASG$ $G$, we consider its SCCs, which can be both single
  vertices and cycles, and that can be determined by Tarjan's
  algorithm \cite{Tarjan}. Suppose that \small$C_1, \ldots, C_{S}$
  \normalsize are SCC of $G$, determined by this algorithm. We
  construct the component graph \small$\ok{G^{SCC}\!=\!(V^{SCC},
    E^{SCC})}$\normalsize, where the set \small$\ok{V^{SCC}\!=\!\{V_1,
    \ldots, V_s\}}$ \normalsize contains a vertex \small$V_i$
  \normalsize for each strongly connected component
  \small$C_i$\normalsize, and there is an edge \small$\ok{(V_i,
    V_j)\!\in\!E^{SCC}}$ \normalsize if $G$ contains a directed edge
  $(x, y)\!\in\!E$ for some $x\!\in\!\mbox{\small$C_i$\normalsize}$
  and some $y\!\in\!\mbox{\small$C_j$\normalsize}$. We determine SCCs
  of G and construct \small$\ok{G^{SCC}}$ \normalsize by invoking
  $\mbox{\textbf{\texttt{tarjan}}}$. For each
  \small$\ok{V_i\!\in\!V^{SCC}}$\normalsize, we analyze
  \small$\ok{E^{SCC}}$ \normalsize to understand if \small$V_i$
  \normalsize has only one input and only one output edge, and in that
  case, its corresponding SCC, \small$C_i$\normalsize, represents a
  block of $G$ and is added to the resulting list.\\

  \COMMENT{
    The method \textbf{\texttt{findBlocks}} require some
    clarifications. Its invocation finds all the blocks of $ASG$ that
    are connected (by the output edge) with one of $m$ vertices that
    represent the last statement. The outer loop is iterated $m$
    times, in order to consider all these vertices. Let
    $a\!=\!(end,A)$ be on of them. We find all the vertices $b$ such
    that $(b, a)\!\in\!E$, and for each of them we analyze its
    adjacent vertices. Consider the following iteration process:
    $\ok{D^{0}\defi\{b\}}$, and $\forall i\!\geq\!0$,
    $\ok{D^{i+1}\defi D^{i}\cup\{e_1,\ldots,e_k\!\in\!V\mid
      e_1,e_k\!\in\!D^{i}\wedge\forall j\!\in\![2..k].e_j\!\neq\!a
      \wedge (e_{j-1},e_j)\!\in\!E\}}$. For each $i\!\ge\!0$,
    $D^{i+1}$ contains all vertices (different from $a$) forming a
    path with source and destination in $D^i$. Let us define sets
    $D\!=\!\bigcup_{i\geq0}D^i$ and $C\!=\!\{(c,d)\!\in\!E\mid
    d\!\in\!D\}$. For each vertex $d\!\in\!D$, there is a sequence of
    edges that takes $d$ to $b$ (traversing only vertices of $D$), and
    a sequence of edges that takes $b$ do $d$ (traversing only
    vertices of $D$), i.e., the vertices of $D$ form cycles and each
    of these cycles contains $b$. $C$ contains all the edges of
    $ASG_S$ whose destination is in $D$. The set $D$ is a block if $C$
    has only one element (blocks have only one input edge), and if,
    for each vertex in $D$, the set of all its reachable vertices is
    $D\cup\{a\}$. Otherwise, there would be a vertex $d\!\in\!D$ such
    that: $\exists
    e_1,\ldots,e_{k-1}\!\in\!D,e_k\!\in\!V\!\setminus\!D.e_1\!=\!d\wedge
    e_k\!\neq\!a\wedge\forall
    i\!\in\![2..k].(e_{i-1},e_i)\!\in\!E$. Namely, $D$ would have at
    least two output edges: $a$ and $e_k$, and it is not a property of
    blocks. If there is no $D$ that satisfies this conditions, the
    empty set is returned.\\}
  The \emph{abstract slice} can be determined by removing from the
  starting program $P$ all the statements which vertices do not appear
  in the pruned $ASG$. Note that there can be some disconnected
  vertices, and they are not of interest. Let us illustrate this
  approach with an example.\\}
\COMMENT{In the following we explain the first way of \emph{pruning}
  $ASG_S$ in order to obtain \emph{abstract slices}. Let $B$ be a
  block, i.e., set of connected vertices of $ASG_S$, and suppose there
  are two vertices, $(n, A_i)$ and $(m, A_j)$, such that there is an
  edge from $(n, A_i)$ to $B$, and there is no other input edge of
  $B$. Analogously, suppose there is an edge from $B$ to $(m, A_j)$,
  and there is no other output edge of $B$. The block $B$ can be
  removed if variables of interest have the same abstract values
  before and after execution of $B$, i.e., if $A_i\!=\!A_j$. In that
  case, we add the edge from $(n, A_i)$ to $(m, A_j)$. If
  $A_i\!\neq\!A_j$, we cannot remove $B$, since its execution modifies
  abstract values of some variables of interest. We continue pruning
  $ASG_S$ this way until no further block of $ASG_S$ can be
  removed. All the statements which vertices do not appear in the
  pruned $ASG_S$ should be removed from the \emph{abstract slice}. It
  is worth noting that the blocks to be removed are detected by a
  \emph{bottom-up} strategy, i.e., we start searching for a block from
  the vertices of $ASG_S$ that correspond to the last statement of $P$
  and we move towards the vertices that correspond to the first
  statement of $P$. It is possible to have some disconnected vertices,
  and they are not of interest.\\}

\begin{example}
  \label{example:4}
  Consider $\prog$ in Fig.~\ref{fig:Ex4}, and let
  $\crit_{\cA}\!=\!(\memories, \tuple{s}, \{7\}\times\NATURALS,
  \false, \tuple{s:\PARDOM})$.  In the \emph{abstract states} induced
  by $\PARDOM$, $s$ can have two abstract values, depending on its
  parity: $\EVEN$ or $\ODD$.  Fig.~\ref{fig:Ex1} shows the $\ASG$ of
  $\prog$ for $\crit_\cA$, whose vertices are identified by an
  overlined number given in the top left corner. Consider only edges
  represented by a solid line.
  \begin{figure}[ht]
    \centering
    \includegraphics[scale=.6,angle=90,viewport=2.35in 4.00in 4.35in 9in]{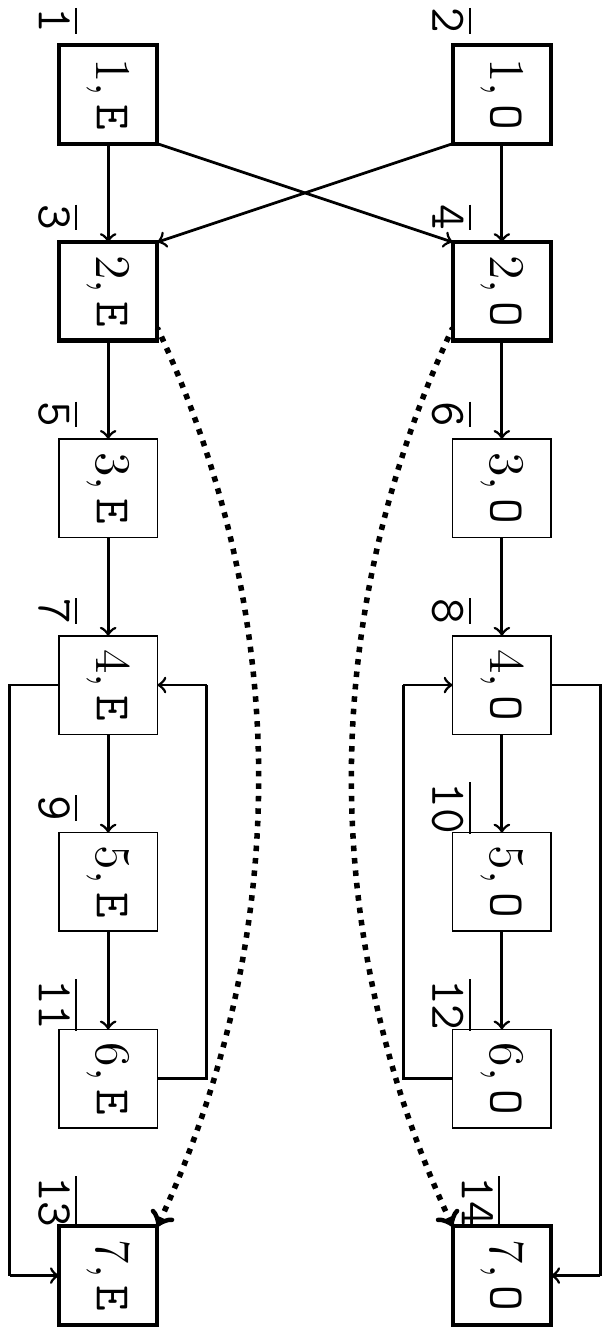}
    \caption{An $\ASG$\label{fig:Ex1}}
  \end{figure}
  \COMMENT{A concrete value of $s$ is modified by the statement $5$ in
    every iteration, but it is obvious that this modification does not
    affect the parity of $s$, since its concrete value increases of
    $2i$. Because of that fact, there is no edge from
    $\tt{\overline{7}}$ to $\tt{\overline{10}}$, and from
    $\tt{\overline{8}}$ to $\tt{\overline{9}}$.}

  Let us consider a block $B$ composed of vertices
  $\tt{\overline{7}}$, $\tt{\overline{9}}$ and
  $\tt{\overline{11}}$. The only input and output edges of $B$ are
  edges from $\tt{\overline{5}}$ and towards $\tt{\overline{13}}$
  respectively. The vertices $\tt{\overline{5}}$ and
  $\tt{\overline{13}}$ assign the same abstract value, $\texttt{E}$,
  to $s$. It means that we can remove the block $B$ and add an edge
  from $\tt{\overline{5}}$ to $\tt{\overline{13}}$. Analogously, it is
  possible to remove a block $C$, composed of vertices
  $\tt{\overline{8}}$, $\tt{\overline{10}}$ and $\tt{\overline{12}}$,
  and to add an edge from $\tt{\overline{6}}$ to
  $\tt{\overline{14}}$. In a similar way we can remove vertices
  $\tt{\overline{5}}$ and $\tt{\overline{6}}$ and add the edges from
  $\tt{\overline{3}}$ to $\tt{\overline{13}}$ and from
  $\tt{\overline{4}}$ to $\tt{\overline{14}}$. These edges are
  represented by dotted lines in Fig.~\ref{fig:Ex1}. The remaining
  vertices, $\tt{\overline{1}}$, $\tt{\overline{2}}$,
  $\tt{\overline{3}}$ and $\tt{\overline{4}}$\footnote{Vertices
    $\tt{\overline{13}}$ and $\tt{\overline{14}}$ represent the end of
    execution.}, represented in bold in Fig.~\ref{fig:Ex1}, correspond
  to the statements $1$ and $2$, which form the abstract slice
  $\progq$ given in Fig.~\ref{fig:Ex4}.
\end{example}

Abstract slices obtained this way can be larger than they are expected
to be. In the worst case, we do not remove any instruction from the
starting program $P$.

In order to extract abstract conditioned slices\footnote{Recall that
  abstract dynamic slicing is a particular case of abstract
  conditioned slicing.}, we refine \emph{abstract states}.  Such a
refinement allows to construct an abstract state graph $\ASG_{E}$
which contains more information and captures the relationship between
input and properties of variables.  $\EA$ pruning
(Fig.~\ref{fig:EAAlgo}) takes as input $G$, and applies the method
\CODE{pruningSA} (Fig.~\ref{fig:SAAlgo}) to all subgraphs $G_i$,
$i\!\in\![1..w]$, of $G$.  Each application returns a pruned subgraph
$G'_i$, which allows constructing an abstract slice, called
\emph{partial abstract slice}, containing only statements of $\prog$
whose vertices appear in $G'_i$.

\begin{figure}[h]
 \centering
    \includegraphics[scale=.9,viewport=2in 7.9in 4in 8.8in]{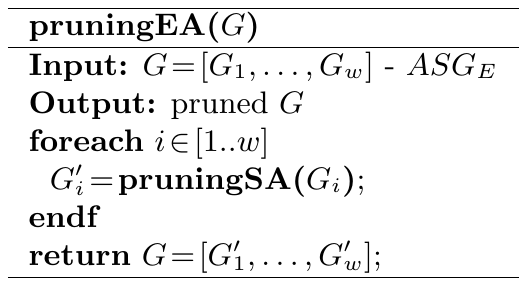}
  \caption{ \label{fig:EAAlgo}}
\end{figure}
From the point of view of complexity, we can note that, once we have
the graph, the complexity of both algorithms is linear w.r.t.~the size
of the graph.  Actually, the hardest part is the construction of
$\ASG$s (as a matter of fact, the graph used in our examples are
computed manually).  In order to obtain complete automatization, we
should use some automatic tool for building $\ASG$s.  Yorsh
\etal~\cite{YorshBallSagiv} presented a method for static program
analysis that leverages tests and concrete program executions.  They
introduce \emph{state abstractions} which generalize the set of
program states obtained from concrete executions, and define a notion
of \emph{abstract graphs} which is similar to ours.  Furthermore, they
use a theorem prover to check that the generalized set of concrete
states covers all potential executions and satisfies additional safety
properties, and they use these results to construct an approximation
of their \emph{abstract graphs}.  The relation between
\cite{YorshBallSagiv} and our work should be further analyzed, in
order to use their method for an automatic construction of $\ASG$s.}

\section{The quest for Abstract Slices}
\label{sec:theQuestForAbstractSlices}
This section introduces an algorithm for computing conditioned
abstract slices, based on abstract dependencies and the notion of
\emph{agreement} between states.

As explained in Section \ref{sec:practicalIssuesAndOptimizations},
this way to compute slices relies on \emph{a priori} knowledge of the
properties which will be of interest for the analysis.  In most cases,
the majority of the abstract domains to be taken into account are
quite simple (e.g., nullity).  On the other hand, Section
\ref{sec:constructiveApproachesToAbstractProgramSlicing} presents a
general way to compute abstract dependencies on more complex domains.
Indeed, those algorithms can be used here, and their complexity is
acceptable if small domains are dealt with.

The slices obtained by following this methodology will have the
standard form of backward slicing; \emph{predicates} or
\emph{conditions} on states can be specified, in the style of
conditioned slicing \cite{Conditioned}.  In this case, for a predicate
$\PRED$, the judgment $\state \models \PRED$ means that the state
$\state$ satisfies $\PRED$.  A predicate $\PRED$ at a certain program
point may include either user-provided or statically inferred
information: for example, after a \CODE{x:=new C()} statement,
judgments like ``\CODE{x} is not null'', ``\CODE{x} is not cyclic'',
``\CODE{x} is not sharing with \CODE{y}'' could be provided depending
of the kind of static analyses available (nullity, sharing, cyclicity,
etc.).  A way to decide which statements have to be included in an
abstract slice consists of two main steps:
\begin{itemize}
\item for each statement $s$, a specific static-analysis algorithm
  provides information about the relevant data \emph{after} that
  statement (below, the \emph{agreement}), according to the slicing
  criterion and the program code;
\item if the execution of $s$ \emph{does not affect} its corresponding
  agreement (i.e., some condition on states which must hold after
  $s$), then $s$ can be removed from the slice.
\end{itemize}

\begin{example}
  \label{ex:conditionsToSlice}
  Consider the following code fragment, and suppose that the slicing
  criterion is the nullity of \xx at the end:

  \vspace{-5mm}

  \begin{lstlisting}[firstnumber=21]
   ...
   y.f := $\mathit{exp}$;
   x := y;
  \end{lstlisting}

  \noindent
  The field update on \yy (line 22) can be removed from the slice
  because
  \begin{itemize}
  \item the question about the nullity of \xx after line $23$ is
    equivalent to the question about the nullity of \yy after line
    $22$; and
  \item the field update at line $22$ does not affect the nullity of
    \yy.
  \end{itemize}
\end{example}

\noindent
The rest of this section formalizes how these two main steps are
carried out.

\subsection{The logic for propagating agreements}
\label{section:theLogicForPropagatingAgreements}

This section describes how agreements are defined and propagated via a
system of logical rules: the \GSYSTEM\footnote{A version of this
  system was introduced in previous work \cite{Zanardini} as the
  $\RULENAME{a}$-system.  However, there are many differences between
  both systems, mainly due to the changes in the language under study
  (for example, variables are taken into account here instead of a
  more involved notion of \emph{pointer expression}).}.  Hoare-style
\emph{triples} \cite{Hoa69} are used for this purpose, in the spirit
of the \emph{weakest precondition calculus} \cite{Dij75}.

\begin{mydefinition}
  \label{def:agreement}
  An \emph{agreement} $\AGREEM$ is a set of conditions
  $\AGRS{\uco}{\bar{x}}$ where each uco $\rho$ involves a sequence of
  variables $\bar{x}$ (most usually, just one variable), and all
  conditions involve mutually disjoint sets of variables.  Two states
  $\state_1$ and $\state_2$ are said to \emph{agree} on $\AGREEM$,
  written $\AGREEM(\state_1,\state_2)$, iff, for every
  $\AGRS{\uco}{\bar{x}}$ in $\AGREEM$, $\uco(\state_1(\bar{x})) =
  \uco(\state_2(\bar{x}))$, where notation is abused by taking
  $\state_i(\bar{x})$ as the sequence of values of variables $\bar{x}$
  in $\state_i$, and $\uco(\state_i(\bar{x}))$ as the application of
  $\uco$ (which could be a relational domain) to the elements of such
  a sequence.

  Agreements are easily found to form a lattice, and a partial order
  $\sqsubseteq$ can be defined: $\AGREEM' \leq \AGREEM''$ iff, for
  every $\state_1$, $\state_2$ such that
  $\AGREEM'(\state_1,\state_2)$, then $\AGREEM''(\state_1,\state_2)$.
  Moreover, an intersection operator is induced by the partial order:
  $\AGREEM' \sqcap \AGREEM''$ is the greatest agreement which is less
  than or equal to both.
\end{mydefinition}

In the following, $\AGREEM(x)$ will be the uco corresponding to the
condition $\AGRS{\uco}{x}$ in $\AGREEM$, or $\TOPDOM$ if no condition
on $x$ belongs to $\AGREEM$.  For the sake of simplicity, the
discussion will be limited to domains each involving one single
variable.  In this case, ordering amounts to the following: $\AGREEM_1
\sqsubseteq \AGREEM_2$ if $\forall x.\AGREEM_1(x) \sqsubseteq
\AGREEM_2(x)$, where $\AGREEM_1(x) \sqsubseteq \AGREEM_2(x)$ is the
comparison on the \emph{precision} of abstract domains, meaning that
$\AGREEM_1(x)$ is \emph{more precise} than $\AGREEM_2(x)$.

\begin{example}
  \label{ex:agreement} Let $\state_1
  = \{\BIND{\mbox{\nn}}{2}, \BIND{\mbox{\ii}}{3}, \BIND{\mbox{\xx}}{\mbox{\CODE{null}}}\}$
  and $\state_2
  = \{\BIND{\mbox{\nn}}{0}, \BIND{\mbox{\ii}}{4}, \BIND{\mbox{\xx}}{\mbox{\CODE{null}}}\}$
  be two states.  Then, they agree on $\AGREEM
  = \{ \AGRS{\PARDOM}{\langle \mbox{\nn}\rangle}, \AGRS{\NULLDOM}{\langle \mbox{\xx}\rangle} \}$
  since
  \begin{itemize}
  \item \nn has the same parity in both states;
  \item \xx is \CODE{null} (therefore, it has ``the same nullity'')
    in both states; and
  \item there is no condition on \ii.
  \end{itemize}
  On the other hand, these states do not agree on $\AGREEM' = \{
  \AGRS{\PARDOM}{\langle \mbox{\ii}\rangle}, \AGRS{\NULLDOM}{\langle
    \mbox{\xx}\rangle} \}$ because \ii is odd in $\state_1$ whereas it
  is even in $\state_2$.
\end{example}

In a triple $\TRIPLEB{\AGREEM}{}{s}{\AGREEM'}$, the pre-condition
$\AGREEM$ is the weakest agreement on two states before a statement
$s$ such that the agreement specified by the post-condition $\AGREEM'$
holds after the statement.  Predicates on states can be used, so that
triples are, actually, 4-tuples which only take into account a subset
of the states.  Formally, the 4-tuple (or \emph{augmented triple})
$\TRIPLEB{\AGREEM}{\PRED}{s}{\AGREEM'}$ (where the $\true$ predicate
is often omitted) holds if, for every $\state_1$ and $\state_2$,
\[
  \begin{array}{rll}
    \state_1 \models \PRED\ \ \wedge\ \ \state_2 \models \PRED
    \ \ \wedge\ \ \AGREEM(\state_1,\state_2) & \Rightarrow &
    \AGREEM'(\SEMANTICS{s}(\state_1),\SEMANTICS{s}(\state_2))
  \end{array}
\]
\noindent
The rules of the \GSYSTEM are shown in Figure \ref{fig:gRules}.  The
\emph{transformed predicate} $s(\PRED)$ is one which is guaranteed to
hold after a statement $s$, given that $\PRED$ holds before, in the
style of \emph{strongest post-condition calculus} \cite{DS90,BN98}.
For example, if $\PRED = x\geq 0$, then the condition $s(\PRED) =
x\geq 1 \wedge y\neq \nil$ certainly holds after $s \equiv
\mbox{\CODE{x:=x+1; y:=new C()}}$.  The way predicates are transformed
is outside the scope of this paper; however, the cited works
introduced calculi for computing such strongest post-conditions.  In
the absence of such tools, $\true$ is always a consistent choice
(precision, but not soundness, may be affected since the set of states
to be considered grows larger, and to prove useful results could
become harder).

The \GSYSTEM is tightly related to narrow non-interference
\cite{GM04popl,GM04CSL}.  These works define a similar system of
rules, the $\RULENAME{n}$-rules, for assertions
$\NANI{\eta}{s}{\eta'}$, where $\eta$ and $\eta'$ are basically the
(tuples of) abstract domains corresponding to, resp., $\AGREEM$ and
$\AGREEM'$.  The systems differ in that:
\begin{itemize}
\item the use of pointers requires the rules for assignment to account
  for sharing, while $\RULENAME{n}$-rules only work on integers;
\item in the present approach, domains are supposed to be
  partitioning, so that there is no need to include explicitly the
  $\Pi$ operator (Section \ref{Sect:partit});
\item the \GSYSTEM~does not distinguish between \emph{public} and
  \emph{private} since this notion is not relevant in slicing;
\item the rule for conditional is not included in the
  $\RULENAME{n}$-system; indeed, this is quite a tricky rule, and, in
  general, expressing a conditional with loops and using the rule
  $\RULENAME{n6}$ for loops results in inferring less precise
  assertions;
\item in the $\RULENAME{n}$-system, predicates $\PRED$ on program
  states are not supported.
\end{itemize}

In the following, each rule of the \GSYSTEM is discussed.
Importantly, this rule system relies on the computation of
\emph{property preservation}.  We rely on a rule system which soundly
computes whether executing a statement affects properties of some
variables: the judgment

\centerline{$\PRESERVESB{\PRED}{s}{\AGREEM}$}
\noindent can only obtained by using those rules if it is possible to
prove that executing $s$ in a state $\state$ satisfying the condition
$\PRED$ results in a final state $\state'$ that is equal to the
initial $\state$ with respect to the agreement $\AGREEM$, i.e.,
$\AGREEM(\state,\state')$.  In other word, the statement is equivalent
to \CODE{skip} with respect to the properties of interest.  Note that,
here, the agreement is not used to compare two states at the same
program point; rather, it takes as input the states before and after
executing a statement.  The rule system for proving property
preservation is explained after introducing the \GSYSTEM (Section
\ref{sec:thePSystem}).

\begin{figure}
  \begin{center}
      \[
      \GRULE{\PRESERVESB{\PRED}{s}{\AGREEM}}
      {\TRIPLEB{\AGREEM}{\PRED}{s}{\AGREEM}}{pp}
      \qquad\qquad
      \GRULE{}{\TRIPLEB{\AGREEM}{\PRED}{\mbox{\CODE{skip}}}{\AGREEM}}{skip}
      \]
      \[
      \GRULE{\TRIPLEB{\AGREEM}{\PRED}{s}{\AGREEM'} \qquad
        \TRIPLEB{\AGREEM'}{s(\PRED)}{s'}{\AGREEM''}}
      {\TRIPLEB{\AGREEM}{\PRED}{s\mbox{\CODE{;}}s'}{\AGREEM''}}{concat}
      \qquad\qquad
      \GRULE{\forall x.~\AGREEM_{ID}(x){=}\IDDOM\quad s\neq\mbox{\CODE{read($\cdot$)}}}{\TRIPLEB{\AGREEM_{ID}}{\PRED}{s}{\AGREEM'}}{id}
      \]
      \[
      \GRULE{ \TRIPLEB{\AGREEM_2}{\PRED_2}{s}{\AGREEM'_2}\qquad \AGREEM_1
        \sqsubseteq \AGREEM_2\qquad \AGREEM'_2 \sqsubseteq\AGREEM'_1
        \qquad \PRED_1\Rightarrow\PRED_2
      }{\TRIPLEB{\AGREEM_1}{\PRED_1}{s}{\AGREEM'_1}}{sub}
      \]
      \[
      \GRULE{\forall y.~\lnot
        \left(\ATOMDEPENDS{y}{}{\AGREEM'(x)}{e}{\AGREEM}\right)^\PRED
        \qquad
        \forall y \neq x.~\AGREEM(y) = \AGREEM'(y)}
           {\TRIPLEB{\AGREEM}{\PRED}{x\mbox{\CODE{:=}}e}{\AGREEM'}}{assign}
      \]
      \[
      \GRULE{
        \begin{array}{c@{~}r@{~}l}
          ({*}) & \forall y \in \DALIAS{x}. & \forall
          \state_1 \models \PRED,\state_2 \models
          \PRED. \\
          & & ~~\AGREEM(\state_1,\state_2)
          \Rightarrow \\
          & & ~~\AGREEM'(\state_1[y.f \leftarrow
            \SEMANTICS{e}(\state_1)],\state_2[y.f \leftarrow
            \SEMANTICS{e}(\state_2)]) \\
          ({*}{*}) & \forall y \in \SHARE{x}. &
          \forall \bar{g}, \forall
          \state_1 \models \PRED,\state_2 \models
          \PRED. \\
          & & ~~\AGREEM(\state_1,\state_2)
          \Rightarrow \\
          & & ~~\AGREEM'(\state_1[y.\bar{g} \leftarrow
            \SEMANTICS{e}(\state_1)],\state_2[y.\bar{g} \leftarrow
            \SEMANTICS{e}(\state_2)]) \\
          ({*}{*}{*}) & \forall y \notin \DALIAS{x}. & \AGREEM(y)
          \sqsubseteq \AGREEM'(y)
        \end{array}
      }{
        \TRIPLEB{\AGREEM}{\PRED}{x.f\mbox{\CODE{:=}}e}{\AGREEM'}
      }{fassign}
      \]
      \[
      \GRULE{\TRIPLEB{\AGREEM}{\PRED}{s_t \diamond s_f}{\AGREEM'}}
      {\TRIPLEB{\AGREEM}{\PRED}{\mbox{\CODE{if ($b$) }~$s_t$~\CODE{else}~$s_f$}}{\AGREEM'}}{if1}
      \]
      \[
      \GRULE{\TRIPLEB{\AGREEM_t} {\PRED\wedge b} {s_t} {\AGREEM'}\qquad
        \TRIPLEB{\AGREEM_f} {\PRED\wedge\lnot b} {s_f}
        {\AGREEM'}}{\TRIPLEB{\AGREEM_b\sqcap\AGREEM_t\sqcap\AGREEM_f}{\PRED}
        {\mbox{\CODE{if ($b$) }~$s_t$~\CODE{else}~$s_f$}} {\AGREEM'}}{if2}
      \]
      \[
      \GRULE{\PRED \Rightarrow s(\PRED) \quad
        \TRIPLEB{\AGREEM\sqcap\AGREEM_b}{\PRED\wedge
          b}{s}{\AGREEM\sqcap\AGREEM_b}}
      {\TRIPLEB{\AGREEM\sqcap\AGREEM_b}{\PRED}{
          \mbox{\CODE{while ($b$) }~$s$}}{\AGREEM\sqcap\AGREEM_b}}{while}
      \]
  \end{center}
  \caption{The \GSYSTEM}
  \label{fig:gRules}
\end{figure}

\subsubsection{Rule $\GRULENAME{pp}$}

This rule makes the relation between property preservation and the
\GSYSTEM more clear.  The triple $\TRIPLE{\AGREEM}{s}{\AGREEM}$
amounts to say that two executions agree \emph{after} $s$, provided
they agree \emph{before} on the same $\AGREEM$.  On the other hand,
the preservation of $\AGREEM$ on $s$ means that any state
\emph{before} $s$ agrees on $\AGREEM$ with the corresponding state
\emph{after} $s$.  Property preservation is a stronger requirement
than the mere propagation $\TRIPLE{\AGREEM}{s}{\AGREEM}$ of
agreements, so that this rule is sound.  In fact, if
$\AGREEM(\state_1,\state_2)$ and both
$\AGREEM(\state_1,\SEMANTICS{s}(\state_1))$ and
$\AGREEM(\state_2,\SEMANTICS{s}(\state_2))$ hold, then
$\AGREEM(\SEMANTICS{s}(state_1),\SEMANTICS{s}(\state_2))$ follows,
which is, equivalent, by definition, to
$\TRIPLE{\AGREEM}{s}{\AGREEM}$.

\begin{example}
  Let the parity of $x$ be the property of interest.  In this case,
  \IMPASSIGN{x}{x+1} does not preserve the parity of $x$, but two
  initial states agreeing on $\AGRS{\PARDOM}{x}$ lead to final states
  which still agree on it.  Therefore,
  $\TRIPLE{\AGRS{\PARDOM}{x}}{\mbox{\IMPASSIGN{x}{x+1}}}{\AGRS{\PARDOM}{x}}$
  holds.  On the other hand, \IMPASSIGN{x}{x+2} also satisfies a
  stronger requirement: that $\PARDOM(x)$ does not change.  Therefore,
  besides having
  $\TRIPLE{\AGRS{\PARDOM}{x}}{\mbox{\IMPASSIGN{x}{x+2}}}{\AGRS{\PARDOM}{x}}$,
  the judgment
  $\PRESERVES{\mbox{\IMPASSIGN{x}{x+2}}}{\AGRS{\PARDOM}{x}}$ is also
  true.
\end{example}

\subsubsection{Rules $\GRULENAME{skip}$, $\GRULENAME{concat}$,
  $\GRULENAME{id}$, $\GRULENAME{sub}$}

The $\GRULENAME{skip}$ rule describes \emph{no-op}.  The assertion
holds for every $\AGREEM$ and $\PRED$ since
$\SEMANTICS{\mbox{\CODE{skip}}}(\state) = \state$.

$\GRULENAME{concat}$ is also easy: soundness holds by transitivity
(note also the use of $s(\PRED)$ to propagate conditions of states).

Rule $\GRULENAME{id}$ can be used when nothing else can be proved: it
always holds because execution is deterministic, so that two execution
starting from two states which are equal\footnote{The notion of
  equality on references and objects is recalled in Example
  \ref{ex:referenceAbstractDomains}.} on all variables will end in a
pair of states agreeing on any abstraction.  Note that, as pointed out
in Section \ref{sec:theProgrammingLanguage}, \CODE{read} statements
are supposed only to appear at the beginning of a program; therefore,

Finally, in $\GRULENAME{sub}$, remember that $\sqsubseteq$ is the
partial order on agreements.

\subsubsection{Rule $\GRULENAME{assign}$}

This rule means that, given a statement $x
\mbox{\CODE{:=}} e$, any agreement $\AGREEM$ which satisfies the two
conditions of the above part of the rule is a sound pre-condition for
the post-condition $\AGREEM'$.  The conditions are (1) that, given two
states which agree on $\AGREEM$, the computed results for the
expression $e$ in both states are abstracted by $\AGREEM'(x)$ to the
same abstract value; and (2) that $\AGREEM$ is as precise as
$\AGREEM'$ on all variables but $x$.  The first condition is
represented in terms of Definition
\ref{def:atomicAbstractDependencies}, and the superscript $\PRED$
indicates that only states satisfying $\PRED$ have to be considered.
Such a condition can be easily shown to imply the formula
\[ F
~~=~~ \forall \state_1 \models \PRED,\state_2 \models \PRED.~\left(\AGREEM(\sigma_1,\sigma_2) \Rightarrow
(\AGREEM'(x))(\SEMANTICS{e}(\state_1)) =
(\AGREEM'(x))(\SEMANTICS{e}(\state_2))\right) \] Note that, by Proposition~\ref{theorem:equivalenza}, the absence of abstract dependencies
w.r.t.~Definition \ref{def:atomicAbstractDependencies}
(\emph{Atom-dep}) implies the absence of abstract dependencies
w.r.t.~Definition \ref{def:abstractDependencies} (\emph{Ndep}) which,
in turn, implies $F$.  This clarifies the relation between abstract
dependencies and the computation of a slice.  Obviously, the second
condition guarantees that, for all variables which are not updated by
the assignment, the agreement required by $\AGREEM$ still holds when
$\AGREEM'$ is considered.

\subsubsection{Rule $\GRULENAME{fassign}$}

This rule accounts for the modification of the data structure pointed
to by a variable by means of a field update.  In the following, that a
variable is \emph{affected} means that the data structure pointed to
by it is updated.  Given a field update on a variable $x$, some other
variables (i.e., the data structures pointed to by them) could be
affected.  There exists a well-known static analysis which tries to
detect which variables point to a data structure which is updated by a
field update on $x$: this analysis is known as \emph{sharing analysis}
\cite{DBLP:conf/sas/SecciS05,ZanardiniG15sh}, and usually comes as
\emph{possible-sharing} analysis, where the set of variables which
\emph{could} be affected by a field update is over-approximated.
Moreover, \emph{aliasing analysis} \cite{Hind2001} can be used in
order to compute the set of variables pointing exactly (and directly)
to the same location as $x$; in this case, \emph{definite-aliasing}
analysis makes sense, which under-approximates the set of variables
which
\emph{certainly} alias with $x$.  According to the result of these
analyses, reference variables can be partitioned in three categories:
(1) variables which certainly alias with $x$, so that they can be
guaranteed to be updated in their field $f$; (2) variables which could
share with $x$, so that they could be affected by the update in many
ways; and (3) variables which certainly do not share with $x$, so that
they are unaffected by the update.  Let $\SHARE{x}$ be the set of
variables possibly sharing with $x$ before the update, and
$\DALIAS{x}$ be the set of variables definitely aliasing with $x$.  In
the absence of a definite-aliasing analysis, then $\DALIAS{x}$ can be
safely taken as $\{x\}$.

\begin{example}
  \label{ex:sharingFlavours}
  Consider the following code fragment:
  {\em \begin{lstlisting}[firstnumber=10]
   if (...) then { y.f := x; } else { y.f := z; }
   w := x;
   x.g := $e$;
  \end{lstlisting}}

  \noindent Suppose that, initially, no variable is sharing with any
  other variable (i.e., there is no overlapping between data
  structures referred by different variables), and that the truth
  value of the boolean guard cannot be determined statically, so that
  both branches of the conditional statement have to be considered as
  possible executions.  In this case, the sharing and aliasing
  information before line 12 is as follows: \[ \SHARE{\mbox{\xx}} = \{
  \mbox{\xx}, \mbox{\yy}, \mbox{\ww} \} \qquad\qquad
  \DALIAS{\mbox{\xx}} = \{ \mbox{\xx},\mbox{\ww} \} \] Note that every
  variable is aliasing with itself (none of them is \CODE{null}), \ww
  is certainly aliasing with \xx because of line 11, and \yy is
  possibly sharing with \xx (the actual sharing depends on the value
  of the guard).
\end{example}

The $\GRULENAME{fassign}$ rule comes with three pre-conditions.
Pre-condition $({*})$ only applies to variables in category (1): those
definitely aliasing with $x$.  Pre-condition $({*}{*})$ applies to
category (2), while pre-condition $({*}{*}{*})$ applies to categories
(2) and (3).

\begin{itemize}
\item $({*})$ requires that updating the field $f$ of the location
  pointed to by $x$ (and all variables definitely aliasing with it)
  leads to an agreement on $\AGREEM'$, provided that the initial
  states agree on $\AGREEM$.
\item $({*}{*})$ is similar, but states that the agreements must hold
  for \emph{every} sequence of field selectors $\bar{g}$\footnote{In
    the following, the notation $\FSEQ{.f_1.f_2......f_n}$ (starting
    with a dot is intentional) will be used to represents sequences of
    field selectors.}, possibly the empty sequence.  This is needed
  since, given that some $y$ may share with $x$, it cannot be known
  which fields of $y$ will be updated, and how.  In practice, only the
  sequences of field selectors which are compatible with the class
  hierarchy of the program under study have to be considered, as shown
  in Example \ref{ex:unfeasibleFieldSequences}.
\item $({*}{*}{*})$ applies to variables that could be unaffected by
  the update, i.e., variables in categories (2) and (3) (note that the
  conditions $y \in \SHARE{x}$ in $({*}{*})$ and $y \notin \DALIAS{x}$
  in $({*}{*}{*})$ are not mutually exclusive, so that variables in
  category (2) satisfy both).  The relation between $\AGREEM$ and
  $\AGREEM'$ is clear in this case, as agreement on $\AGREEM$ must
  entail agreement on $\AGREEM'$.
\end{itemize}

\begin{example}[Infeasible sequences of field selectors]
  \label{ex:unfeasibleFieldSequences}
  In a Java-like language, a sequence like
  $\FSEQ{.\mbox{\CODE{g}}.\mbox{\CODE{f}}}$ is not compatible with the
  following class hierarchy since the class of \CODE{g} is \CODE{D}
  whereas \CODE{f} is declared in \CODE{C}:
  \begin{center}
    \begin{lstlisting}[numbers=none]
   class C { D f; D g; }        class D { D h; }
    \end{lstlisting}
  \end{center}
\end{example}

\begin{example}
  \label{ex:fassignNullity}
  Consider the statement $s \equiv \CODE{x.f := y}$.  Let $\PRED =
  \true$ and the agreement $\AGREEM'$ after $s$ be $\{
  \AGRS{\NULLDOM}{\mbox{\xx}}, \AGRS{\NULLDOM}{\mbox{\zz}} \}$.  Let
  also $\SHARE{\mbox{\xx}}$ before $s$ be $\{ \mbox{\xx} \}$.  In this
  case, an agreement $\AGREEM$ which satisfies the judgment
  \[ \TRIPLEB{\AGREEM}{\true}{\mbox{\CODE{x.f:=y}}}{\AGREEM'} \]
  can be the same $\{ \AGRS{\NULLDOM}{\mbox{\xx}}, \AGRS{\NULLDOM}{\mbox{\zz}} \}$
  because
  \begin{itemize}
  \item \zz is unaffected by the update (it belongs to category (3)),
    so that $\AGREEM(\mbox{\zz})$ must be at least as precise as
    $\AGREEM'(\mbox{\zz})$;
  \item category (2) contains no variables; and
  \item category (1) only includes \xx itself, and the nullity of \xx
    is clearly unaffected by the update.
  \end{itemize}
\end{example}

One may think that the universal quantification on field sequences in
pre-condition $({*}{*})$ results in an unacceptable loss of precision.
Indeed, to require that all possible updates to possibly-sharing
variables preserve the desired agreements seems to be too strict.
However, there are a number of things to be considered:
\begin{itemize}
\item A closer look to the rule shows that there is no easier way to
  account for sharing if traditional sharing analysis is used.
\item Example \ref{ex:fassignNullity} shows that it is still possible
  to get meaningful results on domains working on pointer variables.
\item The state of the art in static analysis of object-oriented
  languages indicates that abstract domains on pointers are likely to
  be quite simple ($\NULLDOM$ being one of them).
\item There is recent work \cite{ZanardiniG15sh} introducing a more
  precise, \emph{field-sensitive} sharing analysis which computes
  \emph{how} variables share: this analysis is able to detect which
  \emph{fields} are or are not involved in paths in the heap
  converging from two variables to a shared location.  In order to
  keep the discussion as simple as possible, the definition of
  $\GRULENAME{fassign}$ given in Figure \ref{fig:gRules} uses
  traditional sharing analysis.  However, the impact of
  field-sensitive sharing analysis is discussed in Section
  \ref{sec:practicalIssuesAndOptimizations}, where a refined version
  of $\GRULENAME{fassign}$ is given.
\end{itemize}

The domain of cyclicity introduced in Section
\ref{ex:referenceAbstractDomains} represents information about
\emph{data structures} in the heap, not only program variables.  In
this sense, field updates have to be regarded as potentially affecting
the propagation of agreements.

\begin{example}
  \label{ex:cyclicityDomain} Let $s$ be, again, the
  statement \CODE{x.f:=y}, and $\AGREEM'$ be $\{ \AGRS{\CYCLEDOM}{x}
  \}$; i.e., the interest is on the cyclicity of the data structure
  pointed to by $x$ after executing $s$.  Suppose also that $x$ and
  $y$ are certainly not sharing before $s$, and that an object whose
  type is compatible with $x$ has two reference fields $f$ and $g$.
  Then, the agreement $\AGREEM = \{
  \AGRS{\CYCLEDOM}{x},\AGRS{\CYCLEDOM}{y} \}$ is \emph{not} a correct
  precondition for the Hoare tuple to hold, since there can be two
  states $\state_1$ and $\state_2$ (Figure \ref{fig:cyclicity}) such
  that \begin{itemize} \item the data structure corresponding to $y$
    is acyclic, and equal in both states (therefore, there is an
    agreement on the cyclicity of $y$); \item the data structure
    corresponding to $x$ is cyclic in both $\state_1$ and $\state_2$,
    but (a) in $\state_1$ there is only a cycle originating from the
    location bound to $x.f$; and (b) in $\state_2$ there is a cycle
    originating from $x.f$ and another one originating from
    $x.g$.  \end{itemize} In this case, $\AGREEM(\state_1,\state_2)$
  holds but the resulting final states $\state'_1$ and $\state'_2$ do
  \emph{not} agree on $\AGREEM'$ since $x$ is acyclic in $\state'_1$
  (the only cycle has been broken) while it is still cyclic in
  $\state'_2$.  This behavior is captured by $\GRULENAME{fassign}$
  because condition $({*})$ applied to $x$ itself (which, by
  hypothesis, is the only variable in $\DALIAS{x}$) does not hold, so
  that the augmented triple cannot be proven.  On the other hand, $\{
  \AGRS{\IDDOM}{x},\AGRS{\CYCLEDOM}{y} \}$ would be a correct
  precondition.

  \begin{figure}
    \begin{center}
      \begin{tikzpicture}
        \node (yv1) at (-2,-0.2) {$y$};
        \node[draw,minimum height=0.4cm,minimum width=0.8cm] (y1) at (-2,-1) {};
        \draw[dotted,->] (yv1) -- (y1);
        \node (xv1) at (0,-0.2) {$x$};
        \node[draw,minimum height=0.4cm,minimum width=0.8cm] (x1) at (0,-1) {};
        \node[draw,minimum height=0.4cm,minimum width=0.8cm] (xf1) at (-0.5,-2) {};
        \node[draw,minimum height=0.4cm,minimum width=0.8cm] (xg1) at
        (0.5,-2) {};
        \draw[dotted,->] (xv1) -- (x1);
        \draw[->] (x1) -- node[left] {$f$} (xf1);
        \draw[->] (x1) -- node[right] {$g$} (xg1);
        \draw[->,dashed] (xf1) .. controls (-1.5,-3.5) and (0.5,-3.5)
        .. (xf1);
        \node at (-1.35,-2.7) {\emph{\scriptsize{(cycle)}}};
        \draw (-2.5,0) rectangle (1,-3.3);
        \node at (1.3,-3.1) {$\state_1$};

        \draw[->,thick] (1.3,-1.65) -- node[above] {\CODE{x.f:=y}} (4.3,-1.65);

        \node (pyv1) at (5,-0.2) {$y$};
        \node[draw,minimum height=0.4cm,minimum width=0.8cm] (py1) at (5,-1) {};
        \draw[dotted,->] (pyv1) -- (py1);
        \node (pxv1) at (7,-0.2) {$x$};
        \node[draw,minimum height=0.4cm,minimum width=0.8cm] (px1) at (7,-1) {};
        \node[draw,minimum height=0.4cm,minimum width=0.8cm] (pxg1) at
        (7.5,-2) {};
        \draw[dotted,->] (pxv1) -- (px1);
        \draw[->] (px1) -- node[above] {$f$} (py1);
        \draw[->] (px1) -- node[right] {$g$} (pxg1);
        \draw (4.5,0) rectangle (8,-3.3);
        \node at (8.3,-3.1) {$\state'_1$};
      \end{tikzpicture}

      \begin{tikzpicture}
        \node (yv1) at (-2,-0.2) {$y$};
        \node[draw,minimum height=0.4cm,minimum width=0.8cm] (y1) at (-2,-1) {};
        \draw[dotted,->] (yv1) -- (y1);
        \node (xv1) at (0,-0.2) {$x$};
        \node[draw,minimum height=0.4cm,minimum width=0.8cm] (x1) at (0,-1) {};
        \node[draw,minimum height=0.4cm,minimum width=0.8cm] (xf1) at (-0.5,-2) {};
        \node[draw,minimum height=0.4cm,minimum width=0.8cm] (xg1) at
        (0.5,-2) {};
        \draw[dotted,->] (xv1) -- (x1);
        \draw[->] (x1) -- node[left] {$f$} (xf1);
        \draw[->] (x1) -- node[right] {$g$} (xg1);
        \draw[->,dashed] (xf1) .. controls (-1.5,-3.5) and (0.5,-3.5) .. (xf1);
        \draw[->,dashed] (xg1) .. controls (-0.5,-3.5) and (1.5,-3.5) .. (xg1);
        \node at (-1.4,-2.7) {\emph{\scriptsize{(cycles)}}};
        \draw (-2.5,0) rectangle (1,-3.3);
        \node at (1.3,-3.1) {$\state_2$};

        \draw[->,thick] (1.3,-1.65) -- node[above] {\CODE{x.f:=y}} (4.3,-1.65);

        \node (pyv1) at (5,-0.2) {$y$};
        \node[draw,minimum height=0.4cm,minimum width=0.8cm] (py1) at (5,-1) {};
        \draw[dotted,->] (pyv1) -- (py1);
        \node (pxv1) at (7,-0.2) {$x$};
        \node[draw,minimum height=0.4cm,minimum width=0.8cm] (px1) at (7,-1) {};
        \node[draw,minimum height=0.4cm,minimum width=0.8cm] (pxg1) at
        (7.5,-2) {};
        \draw[dotted,->] (pxv1) -- (px1);
        \draw[->] (px1) -- node[above] {$f$} (py1);
        \draw[->] (px1) -- node[right] {$g$} (pxg1);
        \draw[->,dashed] (pxg1) .. controls (6.5,-3.5) and (8.5,-3.5) .. (pxg1);
        \draw (4.5,0) rectangle (8,-3.3);
        \node at (8.3,-3.1) {$\state'_2$};
      \end{tikzpicture}
    \end{center}
    \caption{How two executions agreeing on $\{
      \AGRS{\CYCLEDOM}{x},\AGRS{\CYCLEDOM}{y} \}$ do not agree on $\{
      \AGRS{\CYCLEDOM}{x} \}$ after \CODE{x.f:=y}}
    \label{fig:cyclicity}
  \end{figure}
\end{example}

\subsubsection{Rules $\GRULENAME{if1}$ and $\GRULENAME{if2}$}

In a conditional
\CODE{if ($b$)} $s_t$ \CODE{else} $s_f$
there are two possibilities.  Rule $\GRULENAME{if1}$ states that an
input agreement which induces the output one whichever path is taken
is a sound precondition.  Here, the assertion
$\TRIPLEB{\AGREEM}{\PRED}{s_t \diamond s_f}{\AGREEM'}$ means that
\[
  \begin{array}{rl}
    \forall \state_1,\state_2. & \AGREEM(\state_1,\state_2) \wedge
    \state_1 \models \PRED \wedge \state_2 \models \PRED \Rightarrow \\
    & \AGREEM'(\SEMANTICS{s_t}(\state_1), \SEMANTICS{s_t}(\state_2),
    \SEMANTICS{s_f}(\state_1), \SEMANTICS{s_f}(\state_2))
  \end{array}
\]
\noindent
where the judgment $\AGREEM'(\cdot,\cdot,\cdot,\cdot)$ means that all
four states agree on $\AGREEM'$.  This rule requires $\AGREEM'$ to
hold on the output state independently from the value of $b$.
Soundness is easy (note that the above assertion implies
$\TRIPLEB{\AGREEM}{\PRED}{s_t}{\AGREEM'}$ and
$\TRIPLEB{\AGREEM}{\PRED}{s_f}{\AGREEM'}$).

Note that such a $\AGREEM$ can always be found (in the worst case, it
assigns the identity upper closure operator $\IDDOM$ to each variable,
so that two states agree only if they are exactly equal).  However,
sometimes it can be more convenient to exploit information about $b$.
In such cases, $\GRULENAME{if2}$ can be applied, which means that the
initial agreement $\AGREEM_t \sqcap \AGREEM_f$ is strong enough to
verify the final one, provided the same branch is taken in both
executions, as $\AGREEM_b$ requires.  In fact, $\AGREEM_b$ is built
from $b$, and separates states according to its value:
\[
  \AGREEM_b(\state_1,\state_2) \ \ \Leftrightarrow
  \ \ \SEMANTICS{b}(\state_1) = \SEMANTICS{b}(\state_2)
\]
\noindent
The rule means that, whenever two states agree on the branch to be
executed, and the triples on both branches hold, the whole triple
holds as well.

\begin{lemma}[soundness of $\GRULENAME{if2}$]
  \label{lemma:aIfPPSoundness} If $\state_1$ and $\state_2$ both
  satisfy $\PRED$ and agree on
  $\AGREEM_b\sqcap\AGREEM_t\sqcap\AGREEM_f$, then the corresponding
  output states $\state'_1$ and $\state'_2$ agree on $\AGREEM'$ under
  the hypotheses of the rule.
\end{lemma}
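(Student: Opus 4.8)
The plan is to reduce the conclusion of $\GRULENAME{if2}$ to its two premises by a case analysis on the branch that gets executed. First I would note that, by the order-theoretic characterization of agreements in Definition~\ref{def:agreement}, the meet $\AGREEM_b\sqcap\AGREEM_t\sqcap\AGREEM_f$ lies below each of $\AGREEM_b$, $\AGREEM_t$ and $\AGREEM_f$; hence the hypothesis $(\AGREEM_b\sqcap\AGREEM_t\sqcap\AGREEM_f)(\state_1,\state_2)$ immediately gives $\AGREEM_b(\state_1,\state_2)$, $\AGREEM_t(\state_1,\state_2)$ and $\AGREEM_f(\state_1,\state_2)$. By the definition of $\AGREEM_b$, the first of these says $\SEMANTICS{b}(\state_1)=\SEMANTICS{b}(\state_2)$, so the two runs of the conditional select the same branch; and since guards are assumed side-effect free (Section~\ref{sec:theProgrammingLanguage}), evaluating $b$ leaves $\state_1$ and $\state_2$ unchanged, so the states entering the chosen branch are literally $\state_1$ and $\state_2$.

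Next I would split on the common truth value of the guard. If $\SEMANTICS{b}(\state_1)=\SEMANTICS{b}(\state_2)=\true$, then both states satisfy $\PRED\wedge b$ (they satisfy $\PRED$ by assumption and $b$ by the case hypothesis) and they agree on $\AGREEM_t$; applying the first premise $\TRIPLEB{\AGREEM_t}{\PRED\wedge b}{s_t}{\AGREEM'}$ yields $\AGREEM'(\SEMANTICS{s_t}(\state_1),\SEMANTICS{s_t}(\state_2))$. In this case the output states of the conditional are exactly $\state'_i=\SEMANTICS{s_t}(\state_i)$, so $\AGREEM'(\state'_1,\state'_2)$ holds. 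The case $\SEMANTICS{b}=\false$ is symmetric, using $\PRED\wedge\lnot b$, the agreement on $\AGREEM_f$, the second premise $\TRIPLEB{\AGREEM_f}{\PRED\wedge\lnot b}{s_f}{\AGREEM'}$, and the identification $\state'_i=\SEMANTICS{s_f}(\state_i)$. In either case $\state'_1$ and $\state'_2$ agree on $\AGREEM'$, which is the lemma; quantifying over all such $\state_1,\state_2$ then gives exactly the soundness of $\GRULENAME{if2}$.

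I do not expect a serious obstacle here, as the argument is essentially bookkeeping. The two points that need care are: (i) $\AGREEM_b$ only forces the two guard values to coincide, not to take a particular value, which is why the case split on the common truth value (rather than assuming, say, $b$ true) is necessary; and (ii) the appeal to side-effect-freeness of guards, which is what licenses treating the branch input states as $\state_1,\state_2$ rather than as states possibly modified by evaluating $b$. A further minor subtlety is that $\AGREEM_b$ need not itself be a syntactic agreement of the shape described in Definition~\ref{def:agreement}, so the decomposition of the meet into its three components must be justified through the partial order $\sqsubseteq$ on agreements (and the fact that a meet is below each operand), not by reading off a set of conditions.
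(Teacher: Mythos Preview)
Your proposal is correct and follows essentially the same approach as the paper's proof: use agreement on $\AGREEM_b$ to force both executions into the same branch, then apply the corresponding premise (with the appropriate strengthened predicate $\PRED\wedge b$ or $\PRED\wedge\lnot b$) to obtain the output agreement. The paper's argument is considerably more terse, and your additional remarks about side-effect-freeness of guards and about justifying the decomposition of the meet via the partial order are reasonable elaborations rather than a different strategy.
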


\begin{proof}
  By hypothesis, the same branch is taken in both cases.  Conditions
  $\PRED\wedge b$ and $\PRED\wedge \lnot b$ are consistent since $s_t$
  (respectively, $s_f$) can only be executed when $b$ is true
  (respectively, false).  The agreement on $\AGREEM$ holds in both
  paths, so that the entire assertion is correct.
\end{proof}

Note that the rule to be chosen for the conditional depends on the
precision of the outcome: $\GRULENAME{if2}$ can be a good choice if
(1) it can be applied; and (2) the result is ``better'' than the one
obtained by $\GRULENAME{if1}$.  The second condition amounts to say
that, given the same final agreement $\AGREEM'$, the initial agreement
obtained by using $\GRULENAME{if2}$ is weaker (i.e., it is more likely
that two states agree on it) than the one obtained by using
$\GRULENAME{if1}$.

\begin{example}
  \label{ex:ifRules}
  Consider the code fragment

  \vspace{-5mm}

  \begin{lstlisting}[numbers=none]
   if (x>0) { x:=x+1; } else { x:=x-1; }
  \end{lstlisting}
  and let $\AGREEM = \{ \AGRS{\SIGNDOM}{x} \}$ be the agreement after
  the statement, i.e., the relevant property is the sign of \xx.  The
  rule $\GRULENAME{if2}$ is able to compute the same $\AGREEM$ as the
  input agreement because
  \begin{itemize}
  \item the triple $\TRIPLEB{\AGREEM} {\PRED\wedge b} {s_t} {\AGREEM}$
    holds since the condition $\PRED\wedge b$ guarantees that \xx is
    positive, and two states which agree on the sign before the
    increment will still agree after it (if \xx is positive in both
    states, then it will remain positive in both);
  \item similarly, the triple $\TRIPLEB{\AGREEM} {\PRED\wedge \lnot b}
    {s_f} {\AGREEM}$ also holds (if \xx is 0 in both states, then it
    will be negative in both; and if it is negative in both, it will
    remain negative in both);
  \item $\AGREEM_b$ is less precise than $\AGREEM$ (i.e., $\AGREEM
    \sqsubseteq \AGREEM_b$ since the latter only separates numbers
    into positive and non-positive), so that the input agreement
    $\AGREEM_b \sqcap \AGREEM \sqcap \AGREEM$ is equal to $\AGREEM$.
  \end{itemize}
  On the other hand, $\GRULENAME{if1}$ is not able to compute the same
  input agreement because the precondition
  $\TRIPLEB{\AGREEM}{\PRED}{s_t \diamond s_f}{\AGREEM}$ of the rule
  does not hold.  In fact, consider two states $\state_1 = \{
  \mbox{\xx} \la 1 \}$ and $\state_2 = \{ \mbox{\xx} \la 2 \}$: they
  agree on the sign of \xx, but
  $(\SEMANTICS{s_t}(\state_2))(\mbox{\xx})$ and
  $(\SEMANTICS{s_f}(\state_1))(\mbox{\xx})$ have different sign (the
  first is zero while the second is positive).
\end{example}

\subsubsection{Rule $\GRULENAME{while}$}

The meaning of the rule for loops can be understood by discussing its
soundness: if $\PRED$ is preserved after any iteration of the body,
and the agreement which is preserved by the body guarantees the same
number of iterations in both executions (i.e., it is more precise than
$\AGREEM_b$), then such an agreement is preserved through the entire
loop.

\begin{lemma}[soundness of $\GRULENAME{while}$]
  \label{lemma:aWhileSoundness}
  Let $\state_1^0$ and $\state_2^0$ satisfy $\PRED$, and agree on
  $\AGREEM\sqcap\AGREEM_b$.  Then, given $\state'_i =
  \SEMANTICS{\mbox{\CODE{while ($b$)}~$s_w$}}(\state^0_i)$, the result
  $(\AGREEM\sqcap\AGREEM_b)(\state'_1,\state'_2)$ holds.
\end{lemma}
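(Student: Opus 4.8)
The plan is to prove the lemma by a straightforward induction on the number of times the body $s_w$ is executed, using the hypotheses of rule $\GRULENAME{while}$ as the two ingredients we carry through the induction. The key observation is that the agreement $\AGREEM\sqcap\AGREEM_b$ is \emph{strong enough to control the loop}: since $\AGREEM\sqcap\AGREEM_b \sqsubseteq \AGREEM_b$, two states agreeing on $\AGREEM\sqcap\AGREEM_b$ assign the same truth value to $b$, hence either both enter the body or both exit. This lock-step property is what makes the induction well-defined on a shared iteration count.

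First I would set up the notation: let $\state_1^0,\state_2^0$ satisfy $\PRED$ and agree on $\AGREEM\sqcap\AGREEM_b$, and define $\state_i^{j+1} = \SEMANTICS{s_w}(\state_i^j)$ whenever the guard holds at $\state_i^j$. The induction hypothesis at step $j$ is: both $\state_1^j$ and $\state_2^j$ satisfy $\PRED$, they agree on $\AGREEM\sqcap\AGREEM_b$, and the loop has performed $j$ iterations in both executions (they are still in lock step). The base case $j=0$ is the hypothesis of the lemma. For the inductive step, assume the claim at $j$. Because $\AGREEM\sqcap\AGREEM_b \sqsubseteq \AGREEM_b$ and $\AGREEM_b(\state_1^j,\state_2^j)$ holds, we get $\SEMANTICS{b}(\state_1^j) = \SEMANTICS{b}(\state_2^j)$; if this common value is false, the loop terminates in both and $\state_i' = \state_i^j$, so the conclusion follows from the induction hypothesis (discarding the $\AGREEM_b$ part, which we no longer need). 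If the common value is true, both execute $s_w$ once more: by the second premise of the rule, $\TRIPLEB{\AGREEM\sqcap\AGREEM_b}{\PRED\wedge b}{s}{\AGREEM\sqcap\AGREEM_b}$ — and since both states satisfy $\PRED\wedge b$ and agree on $\AGREEM\sqcap\AGREEM_b$ — the resulting states $\state_1^{j+1},\state_2^{j+1}$ again agree on $\AGREEM\sqcap\AGREEM_b$. The premise $\PRED \Rightarrow s(\PRED)$ (together with the fact that $s(\PRED)$ is, by definition, guaranteed to hold after $s$) ensures $\PRED$ is re-established at $\state_i^{j+1}$, so the induction hypothesis is restored with $j{+}1$ in place of $j$.

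Finally I would assemble the conclusion. If the loop terminates, it does so after some finite number $m$ of iterations (in both executions simultaneously, by lock step), and applying the induction up to $j=m$ with the terminating case gives $(\AGREEM\sqcap\AGREEM_b)(\state_1',\state_2')$, from which $(\AGREEM\sqcap\AGREEM_b)(\state'_1,\state'_2)$ is exactly the stated conclusion (note the lemma's conclusion is phrased with $\AGREEM\sqcap\AGREEM_b$, which matches the post-condition of the rule). If one execution diverges, then by lock step both diverge and $\state_i'$ is undefined in both, so the implication holds vacuously in the standard partial-correctness reading used throughout the $\GSYSTEM$ (the triple semantics quantifies over states on which $\SEMANTICS{s}$ is defined).

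The main obstacle I anticipate is \emph{not} the induction itself but making the lock-step argument fully rigorous: one must be careful that the second premise is stated for the body $s$ under precondition $\PRED\wedge b$, so before invoking it at iteration $j$ we genuinely need both $\PRED$ (from the previous step, via $\PRED\Rightarrow s(\PRED)$ and the base hypothesis) \emph{and} $b$ (from the $\AGREEM_b$ component forcing the guard to agree and be true). A secondary subtlety is the treatment of non-termination — ensuring that divergence is synchronized so that the conclusion is either genuinely established or vacuously true; this relies precisely on $\AGREEM_b$ being preserved at every step, which is why it cannot be dropped from the invariant until the loop exits.
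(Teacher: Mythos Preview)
Your proposal is correct and follows essentially the same approach as the paper: both argue by iterating the body in lock-step, using $\AGREEM_b$ to force equal guard values, the triple premise to propagate $\AGREEM\sqcap\AGREEM_b$ through each execution of $s_w$, and $\PRED\Rightarrow s(\PRED)$ to re-establish $\PRED$ after each step. Your version is more explicit about the induction structure and the treatment of non-termination (the paper simply states that non-termination is not considered), but the underlying argument is the same.
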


\begin{proof}
  Let $\state_i^{n+1} = \SEMANTICS{s_w}(\state_i^n)$.  There are two
  cases:
  \begin{itemize}
  \item $\SEMANTICS{b}(\state_1^0) = \SEMANTICS{b}(\state_2^0) = {\it
    false}$: in this case, the body is not executed, and the result
    holds trivially;
  \item $\SEMANTICS{b}(\state_1^0) = \SEMANTICS{b}(\state_2^0) = {\it
    true}$: in this case, $\state_i^0 \models \PRED\wedge b$.  By the
    hypothesis of the rule, $\state_1^1$ and $\state_2^1$ agree on
    $\AGREEM\sqcap\AGREEM_b$, and $\PRED$ still holds since $\PRED
    \Rightarrow s_w(\PRED)$.
  \end{itemize}
  At every iteration, the hypotheses hold.  Moreover $\AGREEM_b$
  guarantees the same number of iterations in both executions.
  Consequently, for a terminating loop (non-termination is not
  considered), $\state_1^k$ and $\state_2^k$ will fall in the first
  case (false guard) after the same number $k$ of iterations.  These
  states are exactly $\state'_1$ and $\state'_2$, and agree on
  $\AGREEM\sqcap\AGREEM_b$ after the loop.
\end{proof}

\begin{theorem}[\GSOUNDNESS]
  \label{theorem:aSoundness}
  Let $s$ be a statement, $\AGREEM'$ be required after $s$, $\PRED$ be
  a predicate and $p$ be the program point before $s$.  Let also
  $\AGREEM$ be an agreement computed before $s$ by means of the
  \GSYSTEM.  Let $\trace_1$ and $\trace_2$ be two trajectories, and the
  states $\state_1 \in \trace_1[p]$ and $\state_2 \in \trace_2[p]$ satisfy
  $\AGREEM(\state_1,\state_2)$ and $\PRED$.  Then, the condition
  $\AGREEM'(\state'_1,\state'_2)$ holds, where $\state'_i =
  \SEMANTICS{s}(\state_i)$.
\end{theorem}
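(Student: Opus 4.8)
The plan is to prove the theorem by structural induction on the derivation of the augmented triple $\TRIPLEB{\AGREEM}{\PRED}{s}{\AGREEM'}$ in the \GSYSTEM, since "$\AGREEM$ computed before $s$ by means of the \GSYSTEM" means precisely that such a derivation exists. First I would observe that the trajectory wording is cosmetic: what is actually used about $\state_1\in\trace_1[p]$ and $\state_2\in\trace_2[p]$ is only that they are states satisfying $\PRED$ and $\AGREEM(\state_1,\state_2)$, so the claim collapses to the semantic reading of the augmented triple given just before Figure \ref{fig:gRules}, namely that $\state_1\models\PRED\wedge\state_2\models\PRED\wedge\AGREEM(\state_1,\state_2)$ implies $\AGREEM'(\SEMANTICS{s}(\state_1),\SEMANTICS{s}(\state_2))$. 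After this reduction I would go through the rules one by one.

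The structural and base-case rules are short. $\GRULENAME{skip}$ is immediate because $\SEMANTICS{\mbox{\CODE{skip}}}$ is the identity. $\GRULENAME{id}$ follows from determinism: the side condition $\AGREEM_{ID}(x){=}\IDDOM$ for all $x$ forces $\state_1=\state_2$, hence $\SEMANTICS{s}(\state_1)=\SEMANTICS{s}(\state_2)$, so any $\AGREEM'$ is trivially satisfied (the exclusion of \CODE{read} removes the only non-deterministic construct, and \CODE{read} only occurs at the program start anyway). $\GRULENAME{sub}$ is handled by monotonicity: strengthening $\PRED$, strengthening $\AGREEM$ and weakening $\AGREEM'$ all preserve validity. $\GRULENAME{concat}$ chains the two inductive hypotheses with the intermediate predicate $s(\PRED)$, whose correctness (it holds after $s$ whenever $\PRED$ holds before) is assumed from the strongest-postcondition calculus. $\GRULENAME{pp}$ is the triangle argument already sketched in the text: from $\AGREEM(\state_1,\state_2)$ and, by $\PRESERVESB{\PRED}{s}{\AGREEM}$, the two preservations $\AGREEM(\state_i,\SEMANTICS{s}(\state_i))$, one gets $\AGREEM(\SEMANTICS{s}(\state_1),\SEMANTICS{s}(\state_2))$ by symmetry and transitivity of equality of uco-images; here I would cite the soundness of the \PSYSTEM (Section \ref{sec:thePSystem}) as a black box. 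Finally $\GRULENAME{if1}$ is the easy argument noted in the text (its precondition implies both single-branch triples), while $\GRULENAME{if2}$ and $\GRULENAME{while}$ are exactly Lemmas \ref{lemma:aIfPPSoundness} and \ref{lemma:aWhileSoundness}, so I would just invoke them.

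The two content-bearing cases are $\GRULENAME{assign}$ and $\GRULENAME{fassign}$. For $x\mbox{\CODE{:=}}e$: from the first hypothesis $\forall y.\ \lnot(\ATOMDEPENDS{y}{}{\AGREEM'(x)}{e}{\AGREEM})^\PRED$, the ($\PRED$-relativised version of) Proposition \ref{theorem:equivalenza} and then Definition \ref{def:abstractDependencies} yield the formula $F$ quoted after the rule, i.e. $(\AGREEM'(x))(\SEMANTICS{e}(\state_1))=(\AGREEM'(x))(\SEMANTICS{e}(\state_2))$ for all $\state_1,\state_2\models\PRED$ agreeing on $\AGREEM$; this covers the updated variable $x$. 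For each $y\neq x$ the second hypothesis $\AGREEM(y)=\AGREEM'(y)$ and the fact that the assignment does not touch $y$ give $(\AGREEM'(y))(\state'_1(y))=(\AGREEM(y))(\state_1(y))=(\AGREEM(y))(\state_2(y))=(\AGREEM'(y))(\state'_2(y))$, the middle equality coming from $\AGREEM(\state_1,\state_2)$ (trivial if there is no condition on $y$). Combining over all variables gives $\AGREEM'(\state'_1,\state'_2)$. For $x.f\mbox{\CODE{:=}}e$ I would partition program variables, using the soundness of the external definite-aliasing and possible-sharing analyses, into $\DALIAS{x}$, $\SHARE{x}\setminus\DALIAS{x}$, and the rest: on $\DALIAS{x}$ the post-state of $y$ is obtained by the update $y.f\leftarrow\SEMANTICS{e}(\state)$, so $(*)$ applies directly; on $y\notin\DALIAS{x}$ the data structure of $y$ is either untouched (if $y\notin\SHARE{x}$) or modified along some admissible field sequence $\bar{g}$, so $(*{*}{*})$ and the universally-$\bar{g}$-quantified $(*{*})$ together handle it. Concatenating the three sub-arguments gives $\AGREEM'(\state'_1,\state'_2)$ and closes the induction.

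I expect the $\GRULENAME{fassign}$ case to be the main obstacle: making rigorous that the effect of a field update on $x$ on any possibly-sharing $y$ is captured by $y.\bar{g}\leftarrow\SEMANTICS{e}(\state)$ for some class-hierarchy-compatible $\bar{g}$ requires a careful appeal to the exact semantics of field update and to the precise soundness statements of the sharing, aliasing and (for the cyclicity domain) reachability analyses (\cite{DBLP:conf/sas/SecciS05,ZanardiniG15sh,Hind2001}) — in particular that $\SHARE{x}$ over-approximates and $\DALIAS{x}$ under-approximates in the pre-state, and that the relevant object for a cyclicity agreement is the data structure reachable from $y$ rather than $y$ itself (cf. the discussion around Example \ref{ex:cyclicityDomain}). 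A secondary, but routine, point is establishing the $\PRED$-relativised form of Proposition \ref{theorem:equivalenza} used in the $\GRULENAME{assign}$ case, which is just a restriction of that proposition to state pairs satisfying $\PRED$.
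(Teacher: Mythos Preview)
Your proposal is correct and follows essentially the same approach as the paper: structural induction on the \GSYSTEM derivation, invoking Lemmas \ref{lemma:aIfPPSoundness} and \ref{lemma:aWhileSoundness} for the $\GRULENAME{if2}$ and $\GRULENAME{while}$ cases, and handling the remaining rules by the per-rule arguments given in the text (the paper's own proof is a single sentence pointing to those lemmas and discussions, with $\GRULENAME{fassign}$ singled out as the delicate case). Your write-up is considerably more explicit than the paper's, but the structure and the key ingredients---Proposition \ref{theorem:equivalenza} for $\GRULENAME{assign}$, the three-way variable partition via $\DALIAS{x}$ and $\SHARE{x}$ for $\GRULENAME{fassign}$, and the triangle/transitivity argument for $\GRULENAME{pp}$---are exactly those the paper relies on.
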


\begin{proof}
  Easy from Lemmas \ref{lemma:aIfPPSoundness} and
  \ref{lemma:aWhileSoundness}, and the discussion explaining each rule
  (especially, $\GRULENAME{fassign}$).
\end{proof}

\subsubsection{The \PSYSTEM}
\label{sec:thePSystem}

\begin{figure}
  \begin{center}
    \[ \PRULE{\PRESERVESB{\PRED'}{s}{\AGREEM'} \quad \PRED\Rightarrow
      \PRED' \quad \AGREEM' \sqsubseteq
      \AGREEM}{\PRESERVESB{\PRED}{s}{\AGREEM}}{weak} \qquad
    \PRULE{}{\PRESERVESB{\PRED}{\mbox{\CODE{skip}}}{\AGREEM}}{skip} \]
    \[ \PRULE{\PRESERVESB{\PRED}{e}{\AGREEM} \qquad \forall
      \state.~\state \models \PRED \Rightarrow \AGREEM(x)(\state(x))
      = \AGREEM(x)(\EVAL{e}{\state})}
       {\PRESERVESB{\PRED}{\mbox{\IMPASSIGN{$x$}{$e$}}}{\AGREEM}}{assign} \]
       \[ \PRULE{
         \begin{array}{c@{~}r@{~}l}
           ({*}) & \forall y \in \DALIAS{x}. & \forall \state
           \models \PRED.~\AGREEM(\state,\state[y.f \leftarrow
             \SEMANTICS{e}(\state)]) \\
           ({*}{*}) & \forall y \in \SHARE{x}. &
           \forall \bar{g}, \forall
           \state \models \PRED.~\AGREEM(\state,\state[y.\bar{g}
             \leftarrow \SEMANTICS{e}(\state)])
       \end{array}}{\PRESERVESB{\PRED}{x.f\mbox{\CODE{:=}}e}{\AGREEM}}{fassign}
       \]
       \[ \PRULE{\PRESERVESB{\PRED}{s_1}{\AGREEM} \qquad
         \PRESERVESB{s_1(\PRED)}{s_2}{\AGREEM}}
          {\PRESERVESB{\PRED}{s_1\mbox{\CODE{;}}s_2}{\AGREEM}}{concat} \]
          \[ \PRULE{\PRESERVESB{\PRED}{s_1}{\AGREEM_1} \qquad \PRESERVESB{\PRED}{s_2}{\AGREEM_2}}
             {\PRES{\mbox{\IMPIF{$b$}{$s_1$}{$s_2$}}}{\AGREEM_1 \sqcup \AGREEM_2}}{if} \qquad
             \qquad
             \PRULE{\PRESERVESB{\PRED}{s}{\AGREEM}}{\PRESERVESB{\PRED}{\mbox{\IMPWHILE{$b$}{$s$}}}{\AGREEM}}{while} \]
  \end{center}
  \caption{The \PSYSTEM}
  \label{fig:thePSystem}
\end{figure}

Property preservation can be proved by means of a rule system, the
\PSYSTEM (Figure \ref{fig:thePSystem}).  Most rules are
straightforward or very similar to \GSYSTEM rules, and characterize
when executing a certain statement preserves the properties
represented by an agreement $\AGREEM$ (i.e., for every variable $x$,
the property/uco $\AGREEM(x)$ is preserved).

For example, rule $\PRULENAME{assign}$ allows proving that a certain
agreement is preserved when the initial value of $x$ cannot be
distinguished from the value of the expression (i.e., the new value of
$x$), when it comes to the property $\AGREEM(x)$.  In rule
$\PRULENAME{fassign}$, a mechanism similar to $\GRULENAME{fassign}$ is
used: definite aliasing and possible sharing can be used to identify
which variables are affected by the field update.  As for
$\GRULENAME{fassign}$, an optimization based on field-sensitive
sharing analysis (Section \ref{sec:useOfFieldSensitiveSharing}) can be
introduced, which makes it easier to prove property preservation on
field updates.

Indeed, a number of optimizations can be applied to the \PSYSTEM (for
example, one could think that property preservation on
$s_1\mbox{\CODE{;}}s_2$ does not require the preservation of the same
properties on both statements separately).  However, this rule system
is not the central part of this paper, and Figure \ref{fig:thePSystem}
is just a sensible way to infer property preservation.

\subsection{Agreements and slicing criteria}
\label{section:agreementsAndSlicingCriteria}

This approach to compute abstract slices follows the standard
conditioned, non-iteration-count form of backward slicing.  Therefore,
a slicing criterion $\crit$ takes the form
$(\cI,\cX,\{n\}{\times}\NATURALS,\false,\cA)$ where $n$ is the last
program point, and $\cA$ is a sequence of ucos assigning a property to
each variable in $\cX$.  The initial agreement can be easily computed
from the criterion and is such that $\AGREEM = \{
\AGRS{\cA_{x}}{x}~|~x\in \cX \}$, where $\cA_{x}$ is the element of
$\cA$ corresponding to $x$.  This shows that there is a close relation
between this specific kind of slicing criteria and agreements, and in
the following, these concepts will be used somehow interchangeably in
informal parts.  It will be shown that this makes sense, i.e.,
criteria and agreements define tightly related notions.  Next
definition defines the correctness of an abstract slice of this kind,
where the slicing criterion is intentionally confused with an
agreement.

\begin{mydefinition}[Abstract slicing condition]
  \label{def:abstractSlicingCondition}
  Let $\prog^s$ be the slice of $\prog$ with respect to an agreement
  (criterion) $\AGREEM$.  In order for $\prog^s$ to be correct,
  $\SEMANTICS{\prog}(\state)$ and $\SEMANTICS{\prog^s}(\state)$ must
  agree on $\AGREEM$ for every initial $\state$:
  $\AGREEM(\SEMANTICS{\prog}(\state), \SEMANTICS{\prog^s}(\state))$.
\end{mydefinition}

\subsection{Erasing statements}
\label{sec:erasingStatements}

The main purpose of the \GSYSTEM is to propagate a final agreement
backwards through the program code, in order to have a specific
agreement attached to each statement\footnote{Here, a statement is not
  only a piece of code, but also a position (program point) in the
  program, so that no two statements are equal, even if they are
  syntactically identical.  For the sake of readability, the program
  point is left implicit.}.  This is done as follows: a program can be
seen as a sequence $\overline{s} = s_0;...;s_k$ of $k+1$ statements,
where each $s_i$ can be either a simple statement (skip, assignment,
field update) or a compound one (conditional or loop), containing one
(the loop body) or two (the branches of the conditional) sequences of
statements (either simple or compound, recursively).  The way to
derive an agreement for every statement in the program is depicted in
the pseudocode of Figure \ref{fig:agreementPropagation}.  The
procedure \CODE{labelSequence} takes as input
\begin{enumerate}
\item a sequence of statements (in the first call, it is the whole
  program code\footnote{Strictly speaking, the sequence of statements
    is the program without the initial sequence of \CODE{read}
    statements (remember that \CODE{read} statements are basically
    meant to provide the input).};
\item a pair of agreements: (2.a) the first one, $\AGREEM_{in}$, refers
  to the beginning of the sequence, and, in the first call, is such
  that the abstraction on each variable is $\IDDOM$; and (2.b) the
  second one, $\AGREEM_{out}$, is the desired final agreement, which
  corresponds to the slicing criterion as discussed in Section
  \ref{section:agreementsAndSlicingCriteria}; and
\item a predicate on states which is supposed to hold at the beginning
  of the sequence.
\end{enumerate}
\CODE{labelSequence} goes backward through the program code inferring,
for each $s_i$, an agreement $\AGREEM_i$ which corresponds to the
program point \emph{after} $s_i$.  $\AGREEM_k$ will be the same
$\AGREEM_{out}$, whereas, for each $i$, $\AGREEM_{i-1}$ will be
inferred by using the \GSYSTEM: more specifically, it is a (ideally,
the best) precondition such that the tuple
$\TRIPLEB{\AGREEM_{i-1}}{\PRED_{i-1}}{s_i}{\AGREEM_i}$ holds.  Note
that, since the initial $\AGREEM_0$ is the identity on all variables,
$\TRIPLEB{\AGREEM_{0}}{\PRED_{0}}{s_1;..;s_i}{\AGREEM_i}$ trivially
holds for every $i$ (execution is deterministic); however, the
$\AGREEM_{in}$ argument plays an important role when dealing with loop
statements.

Importantly, statements inside compound statements (e.g., assignments
contained in the branch of a conditional) are also labeled with
agreements.  This is done by calling \CODE{labelSequence} recursively.
Note that, in this case, if $\overline{s}_t$ and $\overline{s}_f$
are. respectively, the sequences corresponding to the ``then'' and
``else'' branch of a conditional statement $s_j$, then
\CODE{labelSequence} is called with second argument
$(\AGREEM_{in},\AGREEM_j)$; this is so because the state does not
change when control goes from the end of a branch to the statement
immediately after $s_j$.

The treatment of loops follows closely the definition of
$\GRULENAME{while}$.  In the augmented triple, $\AGREEM_{i-1}$ appears
before and after the statement; this is consistent with the rule.  The
condition $\AGREEM_{i-1} \sqsubseteq \AGREEM_i \sqcap \AGREEM_b$
guarantees that the
$\TRIPLEB{\AGREEM_{i-1}}{\PRED_{i-1}}{s_i}{\AGREEM_i}$ can be proven
by applying $\RULENAME{sub}$.  Moreover, the recursive call on the
body $\overline{s}_l$ has $(\AGREEM_{i-1},\AGREEM_{i-1})$ as its
second argument.

\begin{figure}
  \begin{pseudocode}
    procedure labelSequence(`$s_1;...;s_k$', $(\AGREEM_{in},\AGREEM_{out})$, $\PRED$) {
      $\AGREEM_0$ = $\AGREEM_{in}$;
      $\AGREEM_k$ = $\AGREEM_{out}$;
      every $\PRED_i$ is $\overline{s}(\PRED)$ where $\overline{s} = `s_1;...;s_i$';
      // (remember the transformed predicate $s(\PRED)$)
      for $i$ = $k$ downto $1$ {
        if ($s_i$ is a conditional) {
          let $b$ be the guard;
          let $\overline{s}_t$ and $\overline{s}_f$ be its branches;  
          call labelSequence($\overline{s}_t$, $(\AGREEM_0,\AGREEM_i)$, $\PRED_{i-1}\wedge b$);
          call labelSequence($\overline{s}_f$, $(\AGREEM_0,\AGREEM_i)$, $\PRED_{i-1}\wedge \lnot b$);
          $\AGREEM_{i-1}$ is such that $\TRIPLEB{\AGREEM_{i-1}}{\PRED_{i-1}}{s_i}{\AGREEM_i}$;
        } else if ($s_i$ is a loop) {
          $\AGREEM_{i-1}$ is an agreement such that
          - $\AGREEM_{i-1} \sqsubseteq \AGREEM_i \sqcap \AGREEM_b$ // (see rule $\GRULENAME{while}$)
          - $\TRIPLEB{\AGREEM_{i-1}}{\PRED_{i-1}}{s_i}{\AGREEM_{i-1}}$
          let $b$ be the guard;
          let $\overline{s}_l$ be the loop body;
          call labelSequence($\overline{s}_l$, $(\AGREEM_{i-1},\AGREEM_{i-1})$, $\PRED_{i-1}\wedge b$);
        } else { // non-compound statement
          $\AGREEM_{i-1}$ is such that $\TRIPLEB{\AGREEM_{i-1}}{\PRED_{i-1}}{s_i}{\AGREEM_i}$;
        }
      }
    }
  \end{pseudocode}
  \caption{Labeling program code with agreements by using the
    \GSYSTEM}
  \label{fig:agreementPropagation}
\end{figure}
  
\COMMENT{ By applying \CODE{labelSequence} to a program $\prog$, an
  agreement $\AGREEM_s$ is attached to every statement $s$ in $\prog$.
  By construction, the judgment

  \centerline{$\TRIPLEB{\AGREEM_s}{\PRED_s}{\mathit{rest}(\prog,s)}{\AGREEM_{out}}$}
  holds, where
  \begin{itemize}
  \item $\AGREEM_{out}$ is the final agreement (second parameter of the
    initial call to \CODE{labelSequence});
  \item $\PRED_s$ is the predicate computed by \CODE{labelSequence} for
    the program point after $s$ (note that the procedure performs this
    kind of computation at line 3);
  \item $\mathit{rest}(\prog,s)$ is the code following $s$ according to
    Definition \ref{def:sRest}.
  \end{itemize}
  
  \begin{mydefinition}
    \label{def:sRest}
    Given a program $\prog$ and a statement $s$ contained in it, the
    \emph{rest} of $\prog$ w.r.t.~$s$ is defined as follows (the second
    argument is in boldface for better readability):
    \[ \begin{array}{rcll}
      \mathit{rest}(\prog,s) & = & \mathit{rest}'(s_1;..;s_k,s) \\
      & & \mbox{where $\prog$ is the sequence of statements
        $s_1;..;s_k$} \\
      \mathit{rest}'(s_1;..;s_{i-1};\mathbold{s};s_{i+1};..;s_k,\mathbold{s})
      & = & s_{i+1};..;s_k \\
      \mathit{rest}'(s_1;..;s_{i-1};s_{\mathit{IF}};s_{i+1};..;s_k,\mathbold{s})
      & = & \mathit{rest}'(s_t,\mathbold{s});s_{i+1};..;s_k \\
      & & \mbox{where $s_{\mathit{IF}} \equiv$ \CODE{if ($b$)
        }~$\overline{s}_t$~\CODE{else}~$\overline{s}_f$~and~$\mathbold{s}$~is in $\overline{s}_t$} \\
      \mathit{rest}'(s_1;..;s_{i-1};s_{\mathit{IF}};s_{i+1};..;s_k,\mathbold{s})
      & = & \mathit{rest}'(s_f,\mathbold{s});s_{i+1};..;s_k \\
      & & \mbox{where $s_{\mathit{IF}} \equiv$ \CODE{if ($b$)
        }~$\overline{s}_t$~\CODE{else}~$\overline{s}_f$~and~$\mathbold{s}$~is in $\overline{s}_f$} \\
      \mathit{rest}'(s_1;..;s_{i-1};s_{\mathit{LOOP}};s_{i+1};..;s_k,\mathbold{s})
      & = & \mathit{rest}'(s_l,\mathbold{s});s_{i+1};..;s_k \\
      & & \mbox{where $s_{\mathit{LOOP}} \equiv$ \CODE{while ($b$)}~\CODE{do}~$\overline{s}_l$~and~$\mathbold{s}$~is in $\overline{s}_l$}
    \end{array} \]
  \end{mydefinition}
  
  \begin{proposition}
    \label{prop:erasureSoundness}
    Any two executions agreeing on $\AGREEM_s$ after $s$ will finally
    agree on $\AGREEM_{out}$.
  \end{proposition}
  
  \begin{proof}
    The result is easy by construction, the only issue being statements
    contained in loop bodies.  In this case, suppose the loop is a
    statement $s_i$ of the sequence corresponding to a program $\prog$
    (nested loops can be dealt with by structural induction).  Due to
    the way $\AGREEM_s$ is computed, it is guaranteed that both
    executions $\trace_1$ and $\trace_2$ will still enter the same
    number (zero or more) of loop iterations, and the agreement
    $\AGREEM_i$ holds after each of them.  Therefore, the final
    agreement on $\AGREEM_{out}$ follows easily from the agreement on
    $\AGREEM_i$ after the loop.
  \end{proof}
  
  \begin{proposition}
    \label{prop:erasureSoundness1}
    Given an agreement $\AGREEM_{out}$ and a program $\prog \equiv
    s_1;..;s_k$, the tuple
    
    \centerline{$\TRIPLEB{\AGREEM_s}{\PRED_s}{\mathit{rest}'(s_1;..;s_k,s)}{\AGREEM_{out}}$}
    \noindent holds for every statement $s$ in $\prog$, where $\PRED_s$
    and $\AGREEM_s$ are computed by using \CODE{labelSequence}.
  \end{proposition}
  
  \begin{proof}
    This proof by structural induction relies on the soundness of the
    \GSYSTEM (Theorem \ref{theorem:aSoundness}).  There are three cases
    to be considered: (1) when $s$ is exactly one of the $s_i$; (2) when
    it is contained in a branch of a conditional statement; or (3) when
    it is contained in a loop body.
    
    Case (1): the augmented triple
    $\TRIPLEB{\AGREEM_s}{\PRED_s}{s_{i+1};..;s_k}{\AGREEM_{out}}$ easily
    holds as it is obtained by the repeated use of the rule
    $\GRULENAME{concat}$.
    
    Case (2): suppose $s$ is contained in the ``then'' branch
    $\overline{s}_t$ of a conditional statement $s_i$ (the dual case of
    the ``else'' branch is similar).  By inductive hypothesis,
    $\TRIPLEB{\AGREEM_s}{\PRED_s}{\mathit{rest}'(\overline{s}_t,s)}{\AGREEM_i}$
    holds, where $\AGREEM_i$ is the agreement computed by
    \CODE{labelSequence} for the program point immediately after $s_i$.
    Also, $\TRIPLEB{\AGREEM_i}{\PRED_i}{s_{i+1};..;s_k}{\AGREEM_{out}}$
    holds as in case (1).  Then, since $\mathit{rest}'(s_1;..;s_k,s)$ is
    the concatenation of $\mathit{rest}'(\overline{s}_t,s)$ and
    $s_{i+1};..;s_k$, the result holds for transitivity.
    
    Case (3): if $s$ is contained in the body $\overline{s}_l$ of a loop
    $s_i$, then
    $\TRIPLEB{\AGREEM_s}{\PRED_s}{\mathit{rest}'(\overline{s}_l,s)}{\AGREEM'}$
    holds by inductive hypothesis, where, according to
    \CODE{labelSequence}, $\AGREEM'$ is an agreement satisfying the
    condition $\AGREEM' \sqsubseteq \AGREEM_i \sqcap \AGREEM_b$.  By
    using the rule $\GRULENAME{sub}$,
    $\TRIPLEB{\AGREEM_s}{\PRED_s}{\mathit{rest}'(s_l,s)}{\AGREEM_i}$
    also holds, and the rest follows similarly to case (2).
  \end{proof}
  
  \begin{corollary}
    \label{prop:erasureSoundnessCorollary}
    (Under the same hypotheses as Proposition
    \ref{prop:erasureSoundness}) Every time two executions agree on
    $\AGREEM_s$ after $s$, they finally agree on $\AGREEM$.
  \end{corollary}
  
  This result, which may seem far from obvious when $s$ is in a loop
  body, comes actually from the way $\AGREEM_{i-1}$ is chosen at lines
  14-16: the second condition at line 16 ensures that it makes no
  difference if the execution will exit the loop immediately or enter
  another iteration.
}

Now, suppose that the judgment $\PRESERVESB{\PRED}{s}{\AGREEM_s}$ can
be proved, where $\AGREEM_s$ and $\PRED_s$ are computed by
\CODE{labelSequence}, and $\PRED$ refers to the program point before
$s$.  In this case, let $\prog'$ be the program $\prog$ where $s$ has
been replaced by $\CODE{skip}$, and $\state$ be an initial state.
Then, the following holds: $\SEMANTICS{\prog}(\state)$ agrees with
$\SEMANTICS{\prog'}(\state)$ with respect to $\AGREEM_{out}$, provided
that executions terminate (non-termination is not considered).

\begin{proposition}
  \label{prop:soundnessPreservation}
  Given a statement $s$ in $\prog$ such that
  $\PRESERVESB{\PRED}{s}{\AGREEM_s}$, the output states obtained by
  executing both $\prog$ and $\prog'$ on some $\state$ agree on the
  agreement $\AGREEM_{out}$ corresponding to the desired slicing
  criterion if the execution terminates.
\end{proposition}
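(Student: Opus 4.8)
The plan is to prove the statement by structural induction on how the erased statement $s$ is nested inside $\prog$, carrying an inner induction on the number of loop iterations for the loop case. Beyond the construction of \CODE{labelSequence}, only two tools are needed: soundness of the \GSYSTEM (Theorem~\ref{theorem:aSoundness}), which propagates agreements forward along the code surrounding $s$, and the property-preservation hypothesis $\PRESERVESB{\PRED}{s}{\AGREEM_s}$, which is invoked only at the single point where $\prog$ and $\prog'$ differ. Two elementary observations should be isolated first. (i) For every agreement $\AGREEM$, the relation ``$\state_1$ and $\state_2$ agree on $\AGREEM$'' is an equivalence relation: by Definition~\ref{def:agreement} it relates exactly those pairs of states on which every closure occurring in $\AGREEM$ yields equal values, hence it is reflexive, symmetric and transitive. (ii) $\prog$ and $\prog'$ consist of syntactically identical statements except that the designated occurrence $s$ is \CODE{skip} in $\prog'$; therefore, started from the same input, the two executions coincide statement by statement until control first reaches $s$, and they run identical code again once control leaves the position of $s$. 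Consequently the only step that can introduce a real difference between the runs is the execution of $s$ versus \CODE{skip}, and $\PRESERVESB{\PRED}{s}{\AGREEM_s}$ asserts precisely that any $\state\models\PRED$ and $\SEMANTICS{s}(\state)$ agree on $\AGREEM_s$ --- which is exactly the agreement \CODE{labelSequence} attaches to the program point after $s$.

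The induction is best carried out on the following two-state strengthening (a single-state claim would fail to traverse a loop body, since after the first iteration the two runs are no longer in equal states, only in agreeing ones): for any sequence of statements $\overline{t}$ containing $s$, with in/out agreements $\AGREEM_{in},\AGREEM_{out}$ and predicate as passed to the corresponding \CODE{labelSequence} call, if two states agree on $\AGREEM_{in}$ and both satisfy that predicate, then running $\overline{t}$ from the first and the sliced $\overline{t}'$ from the second produces terminating executions whose final states agree on $\AGREEM_{out}$. When $s$ is the $i$-th statement of $\overline{t}=s_1;\dots;s_k$: propagating $\AGREEM_{in}$ forward through $s_1;\dots;s_{i-1}$ (which are identical in $\overline{t}$ and $\overline{t}'$) by Theorem~\ref{theorem:aSoundness} gives a pair $\sigma_1,\sigma_2$ agreeing on the label $\AGREEM_{i-1}$ that \CODE{labelSequence} attaches before $s$; then $\TRIPLEB{\AGREEM_{i-1}}{\PRED_{i-1}}{s}{\AGREEM_i}$ (which holds by construction) gives that $\SEMANTICS{s}(\sigma_1)$ and $\SEMANTICS{s}(\sigma_2)$ agree on $\AGREEM_i$, while $\PRESERVESB{\PRED_{i-1}}{s}{\AGREEM_i}$ gives that $\sigma_2$ and $\SEMANTICS{s}(\sigma_2)$ agree on $\AGREEM_i$; by transitivity (observation (i)), $\SEMANTICS{s}(\sigma_1)$ --- the $\prog$-state after $s$ --- and $\sigma_2$ --- the $\prog'$-state after \CODE{skip} --- agree on $\AGREEM_i=\AGREEM_s$, and propagating this agreement forward through the common suffix $s_{i+1};\dots;s_k$, again by Theorem~\ref{theorem:aSoundness}, reaches agreement on $\AGREEM_k=\AGREEM_{out}$. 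When $s$ is nested in a conditional $s_i$, both runs reach the same state before $s_i$ and take the same branch; the branch not containing $s$ leaves the runs identical, and on the branch containing $s$ the inductive hypothesis (applied to that branch with the out-agreement \CODE{labelSequence} uses there) yields agreement after $s_i$, whence we finish as above.

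The loop case is where the argument requires real work, and is the main obstacle. Let $s$ lie in the body $\overline{s}_l$ of a loop $s_i$ with guard $b$; \CODE{labelSequence} picks $\AGREEM_{i-1}$ with $\AGREEM_{i-1}\sqsubseteq\AGREEM_i\sqcap\AGREEM_b$ and labels $\overline{s}_l$ with the pair $(\AGREEM_{i-1},\AGREEM_{i-1})$. Both runs enter the loop in the same state, hence agree on $\AGREEM_{i-1}$, and I would show by an inner induction on the iteration index that they agree on $\AGREEM_{i-1}$ before every iteration. Agreement on $\AGREEM_{i-1}\sqsubseteq\AGREEM_b$ forces $b$ to evaluate equally in the two states, so the runs either both exit or both iterate once more in lock-step; and one application of the inductive hypothesis to $\overline{s}_l$ (whose in-agreement is $\AGREEM_{i-1}$, and inside which the $s$-versus-\CODE{skip} step is handled exactly as in the previous paragraph) re-establishes $\AGREEM_{i-1}$ at the end of the body. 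Since the loop terminates, after its last iteration the two states agree on $\AGREEM_{i-1}\sqsubseteq\AGREEM_i$, and the common suffix $s_{i+1};\dots;s_k$ carries this to $\AGREEM_{out}$. Instantiating the strengthened statement at the outermost \CODE{labelSequence} call --- where $\AGREEM_{in}$ is the identity agreement, so a single initial state trivially provides the agreeing pair, and $\AGREEM_{out}$ is the criterion agreement --- yields the proposition: $\AGREEM(\SEMANTICS{\prog}(\state),\SEMANTICS{\prog'}(\state))$ for every terminating input $\state$.

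One side condition should be stated explicitly rather than glossed over: the predicates $\PRED_j$ computed by \CODE{labelSequence} are derived along the execution of $\prog$, so applying the triples and the preservation judgment to the $\prog'$-side states requires those states to satisfy $\PRED_j$ as well. This is automatic when all predicates are taken to be $\true$, which --- as noted in Section~\ref{section:theLogicForPropagatingAgreements} --- is always a sound choice; in general one should either restrict to predicates insensitive to the erasure, or argue that every $\prog'$-side state on which a triple is used is, by Theorem~\ref{theorem:aSoundness}, in agreement with a genuine $\prog$-state that does satisfy the predicate. Modulo this point, combining the structural induction with the inner induction on loop iterations gives the full statement.
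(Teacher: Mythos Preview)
Your proof is correct and follows essentially the same route as the paper's: both arguments hinge on (i) the transitivity of the agreement relation, (ii) the property-preservation hypothesis at the single point where $\prog$ and $\prog'$ differ, and (iii) \GSOUNDNESS\ to carry the resulting agreement through the surrounding code, with the loop case handled by an inner induction on iterations using $\AGREEM_{i-1}\sqsubseteq\AGREEM_b$ to keep the two runs in lock-step.

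There are two minor differences worth noting. First, you package the argument as a structural induction with an explicit two-state strengthening of the inductive hypothesis; the paper instead works directly with the two trajectories from a common initial state and leaves the ``repeat this mechanism'' step informal. Your formulation is the cleaner way to make the loop induction go through, since after the first iteration the two runs are no longer in identical states. Second, the paper organizes the transitivity chain slightly differently: it applies $\PRESERVESB{\PRED}{s}{\AGREEM_s}$ to the $\prog$-side state $\state_{in}[j]$ (obtaining $\AGREEM_s(\state_{in}[j],\state_{out}[j])$) and combines this with $\AGREEM_s(\state_{in}[j],\state'_{in}[j])$ and the \CODE{skip} step, whereas you apply preservation to the $\prog'$-side state and bridge via the $\RULENAME{g}$-triple for $s$. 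Both chains are valid; the paper's choice has the small advantage that the state on which preservation is invoked is always a genuine $\prog$-state, so the predicate hypothesis $\state\models\PRED$ is automatically met on that side---this partially mitigates (though does not eliminate) the side condition you correctly flag in your last paragraph.
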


\begin{proof}
  Let $\trace$ and $\trace'$ be two trajectories coming from
  executing, respectively, $\prog$ and $\prog'$ from the initial state
  $\state$.  Let $\state_{in}[j]$ and $\state'_{in}[j]$ be the states
  of $\trace$ and $\trace'$, respectively, when control reaches the
  program point before $s$ (or \CODE{skip}, in the case of $\prog'$)
  for the $j$-th time, and $\state_{out}[j]$ and $\state'_{out}[j]$ be
  their corresponding states after $s$ (or \CODE{skip}).  If $s$ is
  not contained in any loop, then $j$ can only be $1$, and the proof
  is trivial.  Otherwise, it can be any number up to some $k_s$ (a
  non-negative number).

  If $k_s = 0$, then the loop is never executed on the input $\state$,
  and the proof follows trivially.

  Otherwise, $\state_{in}[1]$ and $\state'_{in}[1]$ are identical
  because both executions went exactly through the same statements; as
  a consequence, they certainly agree on $\AGREEM_s$.  On the other
  hand, $\state_{out}[1]$ and $\state'_{out}[1]$ are in general not
  identical, but they still agree on $\AGREEM_s$: in fact, the
  following holds:
  \[ \begin{array}{cl@{\qquad}l}
    (1) & \AGREEM_s(\state_{in}[1],\state'_{in}[1]) & \mbox{(they are
      identical)} \\
    (2) & \AGREEM_s(\state'_{in}[1],\state'_{out}[1]) &
    \mbox{(semantics of \CODE{skip})}
    \\ (3) & \AGREEM_s(\state_{in}[1],\state_{out}[1]) & \mbox{(property
      preservation)} \\ (4) &
    \AGREEM_s(\state_{out}[1],\state'_{out}[1]) &
    \mbox{(transitivity applied to (1), (2) and (3))}
  \end{array} \]
  Due to how the program is labeled with agreements,
  $\AGREEM_s(\state_{out}[1],\state'_{out}[1])$ implies an agreement
  of both executions at the end of the loop body with respect to the
  agreement $\AGREEM_{end}$ labeling that program point.  This also
  means that both executions will still agree on $\AGREEM_{end}$ at
  the beginning of the next iteration, and (again, by construction)
  they will also agree on $\AGREEM_s$ when $s$ is reached for the
  second time.  This mechanism can be repeated until $k_s$ is reached,
  and it is easy to realize that the agreement on $\AGREEM_s$ at the
  last iteration implies the final agreement on $\AGREEM_{out}$ (see
  Figure \ref{fig:soundnessPreservation}).

  The crux of this reasoning is that, by construction, the agreements
  labeling each program point imply that, when a statement can be
  removed from the loop body, this means that the original program and
  the slice will execute the loop body the same number of times.  In
  general, this does not mean that every loop will be executed the
  same number of times (some property-preserving loops could be even
  sliced out completely).
\end{proof}

\begin{figure}
  \begin{center}
    \begin{tikzpicture}
      \node (pi) at (0,0) {$\trace$};
      \node (pip) at (4,0) {$\trace'$};
      \node (lb) at (-5.5,-3) {body of loop $s_i$};
      \draw (-3.5,-1) -- (-3.7,-1) -- (-3.7,-5) -- (-3.5,-5);
      \node (begin) at (0,-1) {$.$};
      \node (beginp) at (4,-1) {$.$};

      \draw[dotted,->] (pi) -- (begin);
      \draw[dotted,->] (pip) -- (beginp);

      \draw[<->] (begin) -- node[above] {$\IDDOM$ (first time),
        $\AGREEM_{i-1}$} (beginp);

      \node (end) at (0,-5) {$.$};
      \node (endp) at (4,-5) {$.$};

      \node (s0) at (0,-2.7) {};
      \node (s) at (0,-3) {s};
      \node (s1) at (0,-3.3) {};
      \draw[dotted,->] (begin) -- (s0);
      \draw[dotted,->] (s1) -- (end);
       
      \node (skip0) at (4,-2.7) {};
      \node (skip) at (4,-3) {\CODE{skip}};
      \node (skip1) at (4,-3.3) {};
      \draw[dotted,->] (beginp) -- (skip0);
      \draw[dotted,->] (skip1) -- (endp);
      
      \draw[<->] (s0) -- node[above] {$\AGREEM_s$} (skip0);
      \draw[<->] (s1) -- node[below] {$\AGREEM_s$} (skip1);
      \draw[<->] (end) -- node[above] {$\AGREEM_{i-1}$} (endp);

      \draw[<->] (s0) .. controls (-0.7,-2.5) and (-0.7,-3.5) .. node[left]
           {$\AGREEM_s$} (s1);
      \draw[<->] (skip0) .. controls (4.7,-2.5) and (4.7,-3.5) .. node[right]
           {$\IDDOM$, $\AGREEM_s$} (skip1);
      
      \draw[dotted,->] (end) .. controls (-2.8,-4.5) and (-2.8,-1.5)
      .. (begin);
      \draw[dotted,->] (endp) .. controls (6.8,-4.5) and (6.8,-1.5) .. (beginp);
    \end{tikzpicture}
  \end{center}
  \caption{Graphical representation of some aspects of Proposition
    \ref{prop:soundnessPreservation}}
  \label{fig:soundnessPreservation}
\end{figure}
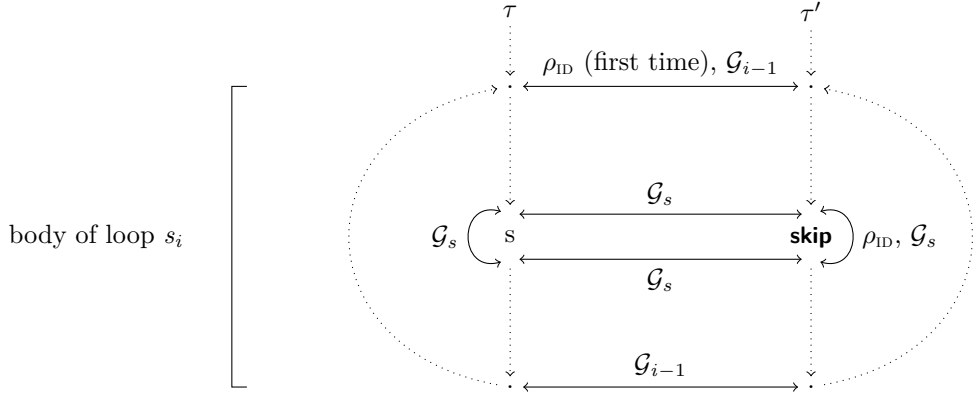

This proposition can be used in order to remove all statements for
which such a property-preservation judgment can be proved.  In
general, the slice is computed by replacing all statements $s$ such
that $\PRESERVESB{\PRED'}{s}{\AGREEM_s}$ holds by $\CODE{skip}$, or,
equivalently removing all of them from the original code.  It is easy
to observe that this corresponds exactly to the notion of backward
abstract slicing given in Section
\ref{section:AbstractProgramSlicing}.

\begin{example}
  \label{ex:conditionsToSlice}
  Consider the following code, and let the final nullity of \xx be the
  property of interest (corresponding to the agreement $\{
  \AGRS{\NULLDOM}{x} \}$ after line 15): {\em
        \begin{lstlisting}[firstnumber=8]
   ...
   n := n*2;       // $\{ \AGRS{\ZERODOM}{n} \}$
   C x := new C(); // $\{ \AGRS{\ZERODOM}{n} \}$
   if (n=0) {      // $\{\}$ (final result on this branch always null)
     x := null;    // $\{ \AGRS{\NULLDOM}{x} \}$
   } else {        // $\{\}$ (final result on this branch never null)
     x := new C(); // $\{ \AGRS{\NULLDOM}{x} \}$
   }               // $\{ \AGRS{\NULLDOM}{x} \}$
      \end{lstlisting}
      }
  \noindent
  Each agreement on the right-hand side is the label after the
  statement at the same line.  Both lines $9$ and $10$ can be removed
  from the slice because:
  \begin{itemize}
  \item the final nullity of \xx only depends on the equality of \nn
    to $0$ before line 11, which is captured by the $\ZERODOM$ domain;
  \item line 10 does not affect \nn; and
  \item multiplying a number by 2 preserves the property of being
    equal to $0$.
  \end{itemize}
  Note that, although agreements after lines 11 and 13 are both empty
  (no matters how the state is $x$ will be null after line 12 and
  non-null after line 14), the one after line 10 is not because of
  rule $\GRULENAME{if2}$.
\end{example}

\begin{example}
  Consider again the code of Example \ref{ex:abstractSlicing}.
  Starting from the final agreement $\{ \AGRS{\CYCLEDOM}{\mbox{\CODE{list}}}
  \}$, the code is annotated as follows:

  {\em
    \begin{lstlisting}[firstnumber=33]
                        // $\{\AGRS{\CYCLEDOM}{list}\}$
   y := null;           // $\{\AGRS{\CYCLEDOM}{list}\}$
   x := list;           // $\{\AGRS{\CYCLEDOM}{list},\AGRS{\CYCLEDOM}{x}\}$
   while (pos>0) {      // $\{\AGRS{\CYCLEDOM}{list},\AGRS{\CYCLEDOM}{x}\}$
     y := x;            // $\{\AGRS{\CYCLEDOM}{list},\AGRS{\CYCLEDOM}{x}\}$
     x := x.next;       // $\{\AGRS{\CYCLEDOM}{list},\AGRS{\CYCLEDOM}{x}\}$
     pos := pos-1;
   }                    // $\{\AGRS{\CYCLEDOM}{list},\AGRS{\CYCLEDOM}{x}\}$
   z := new Node(elem); // $\{\AGRS{\CYCLEDOM}{list},\AGRS{\CYCLEDOM}{x}\}$
   z.next := x;         // $\{\AGRS{\CYCLEDOM}{list},\AGRS{\CYCLEDOM}{z},\AGRS{\CYCLEDOM}{x}\}$
   if (y = null) {      // $\{\AGRS{\CYCLEDOM}{z},\AGRS{\CYCLEDOM}{x}\}$
     list := z;         // $\{\AGRS{\CYCLEDOM}{list}\}$
   } else {             // $\{\AGRS{\CYCLEDOM}{list},\AGRS{\CYCLEDOM}{z},\AGRS{\CYCLEDOM}{x}\}$
     y.next = z;        // $\{\AGRS{\CYCLEDOM}{list}\}$
   }                    // $\{\AGRS{\CYCLEDOM}{list}\}$
    \end{lstlisting}
  }

  \noindent To prove the necessary tuples, it is important to note
  that \CODE{next} is the only reference field selector in the class
  \CODE{Node}, so that any cycle has to traverse it.  Moreover, to
  prove that the cyclicity of \CODE{list} after line 46 is equivalent
  to the cyclicity of \CODE{list} and \CODE{z} needs non trivial
  reasoning about data structures; concretely, it is necessary to have
  some \emph{reachability analysis} \cite{tcs13}\footnote{Note that
    pair-sharing-based cyclicity analysis
    \cite{DBLP:conf/vmcai/RossignoliS06} is not enough since \CODE{y}
    and \CODE{z} are sharing before line 46.} capable to detect that
  there is no path from \CODE{y} to \CODE{y} (otherwise, a cycle could
  be created by \CODE{y.next = z}).  This information should be
  available as $\PRED$, and allows to say that \CODE{y} and
  \CODE{list} (note that \CODE{list} is affected because it is sharing
  with \CODE{y}) are cyclic after line 46 if and only if \CODE{z} or
  \CODE{x} or \CODE{list} were before that line.  The same happens at
  line 42.
  
  Moreover, the agreement does not change in the loop body at lines
  36--40 because (1) data structures are not modified; (2) \CODE{list}
  is not affected in any way; and (3) the value of \CODE{x} changes,
  but its cyclicity does not (by executing \CODE{x:=x.next}, there is
  no way to make unreachable a cycle which was reachable before, or
  the other way around).  Importantly, this also means that the
  cyclicity of \CODE{list} and \CODE{x} is \emph{preserved} by the
  loop, so that it can be safely removed from the slice (the preserved
  property is the same as the agreement after line 40).
  
  On the other hand, the conditional statement at lines 43--47 cannot
  be removed directly because it is not possible to prove that the
  cyclicity of \CODE{list} is preserved through it (actually, it is
  not).  In order to prove that the whole code between lines 34 and 47
  preserves the cyclicity of \CODE{list}, a kind of \emph{case-based}
  reasoning could be used: (1) the initially acyclicity of \CODE{list}
  implies its final acyclicity; and (2) the initially cyclicity of
  \CODE{list} implies its final cyclicity.  Both these results can be
  proved by standard static-analysis techniques \cite{tcs13}.
\end{example}

Needless to say, slices could be sub-optimal (for example, the
requirement about executing loops the same number of times needs not
be satisfied by any correct slice).  It is not difficult to see that,
if a ``concrete'' slicing criterion would be considered instead of an
abstract one, then agreements would only be allowed to contain
conditions $\AGRS{\IDDOM}{x}$ for a certain set of variables.  The
\CODE{labelSequence} would work exactly the same way, with an
important difference: when trying to compute the precondition of a
tuple $\TRIPLEB{\_}{\PRED}{s}{\AGREEM}$, the possible outcome
could, again, contain only conditions $\AGRS{\IDDOM}{x}$.

\begin{example}
  \label{ex:conditionsToSlice1}
  In Example \ref{ex:conditionsToSlice}, suppose the final agreement
  be $\{ \AGRS{\IDDOM}{x} \}$, corresponding to a concrete slicing
  criterion interested in (the exact value of) $x$.  In this case, the
  agreement after line 10 could only be $\{ \AGRS{\IDDOM}{n} \}$,
  since (1) $\{\}$ would not be correct, and (2) no other abstract
  domain can appear in agreements.  This way, line 9 could not be
  sliced out since it does not preserve the $\IDDOM$ property of $n$.
\end{example}

Semantically, abstract slices are in general smaller than concrete
slices.  This is also the case of
Example \label{ex:conditionsToSlice1}.  Clearly, this does \emph{not}
imply that \emph{every} abstract slicing algorithm would remove more
statements than \emph{every} concrete slicing algorithm.

\subsection{Practical issues and optimizations}
\label{sec:practicalIssuesAndOptimizations}

This section discusses how the analysis can realistically deal with
the computation of abstract dependencies and the propagation of
agreements, and an optimization based on recent work on sharing.

\subsubsection{Agreements and ucos}
\label{sec:agreementsAndUcos}

One of the major challenges of the whole approach is how agreements
are propagated backwards through the code as precisely as possible,
i.e., being able to detect that agreement on some ucos before $s$
implies agreement on (possibly) other ucos after $s$.

Ideally, given $s$, $\PRED$ and $\AGREEM'$, the \GSYSTEM should find
the \emph{best} $\AGREEM$ such that
$\TRIPLEB{\AGREEM}{\PRED}{s}{\AGREEM'}$.  However, it is clearly
unrealistic to imagine that the static analyzer will always be able to
find the best ucos without going into severe scalability (even
decidability) issues: in general, there exist infinite possible
choices for an input agreement satisfying the augmented triple.  In
practice, an implementation of this slicing algorithm will be equipped
with a library of ucos among which the satisfaction of augmented
triples can be checked.  The wider the library, the more precise the
results.  Rule of the \PSYSTEM and the \GSYSTEM can be specialized
with respect to the ucos at hand.

\begin{example}
  \label{ex:conditionsToSlice2}
  Consider this code already presented in Example
  \ref{ex:conditionsToSlice}, where the slicing criterion is the final
  nullity of \xx:
  {\em \begin{lstlisting}[firstnumber=8]
   ...
   n := n*2;
   C x := new C();
   if (n=0) {
     x := null;
   } else {
     x := new C();
   }
  \end{lstlisting}}
  \noindent
  In order to be able to remove lines 9 and 10 from the slice, an
  analyzer has to ``know'' that, after merging the result of both
  branches, the uco $\ZERODOM$ precisely describes the agreement
  before line 11.  In other words, the analyzer must know both
  $\ZERODOM$ and $\NULLDOM$ in order to be able to manipulate
  information about them.  Moreover, given a library of ucos, rules
  can be optimized for some recurrent programming patterns like guards
  (\CODE{n=0}) or (\CODE{x=null}), or statements \CODE{m:=0}.
\end{example}

It is clear that to design of an analyzer which is able to deal with
all possible ucos is infeasible.  However, the combination of some
simple numeric or reference domains like the one described in this
paper would already lead to meaningful results.

\subsubsection{Use of Field-Sensitive Sharing}
\label{sec:useOfFieldSensitiveSharing}

As already mentioned, \emph{field-sensitive sharing analysis}
\cite{ZanardiniG15sh} is able to keep track of fields which are
involved in \emph{converging paths} from two variables to a common
location in the heap (the \emph{shared} location).  A
\emph{propositional formula} is attached to each pair of variables and
each program point, and specifies the fields involved in \emph{every}
pair of converging paths reaching a common location.  For example, if
the formula $\lnot \fpropositionl{\mbox{\CODE{f}}} \wedge
\fpropositionr{\mbox{\CODE{g}}}$ is attached to a pair of variables $(x,y)$
at a certain program point $n$ (written $S_n(x,y) = \lnot
\fpropositionl{\mbox{\CODE{f}}} \wedge \fpropositionr{\mbox{\CODE{g}}}$), this means
that the analysis was able to detect that, for \emph{every} two paths
$\pi_1$ and $\pi_2$ in the heap starting from $x$ and $y$,
respectively, and both ending in the same (shared) location,
\begin{itemize}
\item $\pi_1$ certainly does not traverse field \CODE{f}, as dictated
  by $\lnot \fpropositionl{\mbox{\CODE{f}}}$ (arrows from top-left to
  bottom-right refer to paths from $x$, i.e., the first variable in
  the pair under study); and
\item $\pi_2$ certainly traverses \CODE{g}, as prescribed by
  $\fpropositionr{\mbox{\CODE{g}}}$ (arrows from top-right to bottom-left
  refer to paths from $y$).
\end{itemize}
In presence of such an analysis, two kinds of improvements can be
potentially obtained when analyzing a field update:
\begin{itemize}
\item the number of variables which can be actually affected by an
  update is, in general, reduced since it is possible to guarantee
  that some (traditionally) sharing variables will not be affected;
\item even for variables which are (still) possibly sharing with $x$,
  the set of field sequences $\bar{g}$ to be considered can be
  substantially smaller.
\end{itemize}

\begin{example}
  \label{ex:fieldSensitiveSharing}
  Suppose that field-sensitive sharing analysis is able to guarantee
  the following at a program point before the field update
  \CODE{x.f:=e}:
  \begin{itemize}
  \item The formula $\lnot \fpropositionl{\mbox{\CODE{f}}} \wedge
    \fpropositionr{\mbox{\CODE{g}}}$ correctly describes the sharing between
    \xx and \yy; and
  \item The formula $\fpropositionr{\mbox{\CODE{h}}}$ correctly describes the
    sharing between \xx and \zz.
  \end{itemize}
  According to the traditional notion of sharing, both \yy and \zz may
  share with \xx.  However,
  \begin{itemize}
  \item the assignment \CODE{x.f:=e} provably does \emph{not} affect
    \yy because no path from \xx traversing \CODE{f} will reach a
    location that is also reachable from \yy; and
  \item when considering all the possible field sequences starting
    from \zz, only those containing \CODE{h} have to be considered.
  \end{itemize}
\end{example}

The rule \GRULENAME{fassign} can be refined by using field-sensitive
sharing, as follows.  Let $n$ be the program point before the field
update.
\begin{itemize}
\item in pre-condition $({*}{*})$, the only sharing variables that
  have to be dealt with are those for which it cannot be proved that
  they are unaffected by the update; this can be done by defining a
  new set $\FSHARE{f}{x}$ of variables which are possibly sharing with
  $x$ in such a way that some path from $x$ to a shared location could
  traverse $f$:
  \[ \FSHARE{f}{x} = \left\{ y~|~S_n(x,y) \not\models \lnot
  \fpropositionl{f} \right\} \] This means that $y$ is considered as
  potentially affected by the update when the propositional formula
  describing how it shares with $x$ does not entail that paths from
  $x$ to shared locations do not traverse $f$.
\item in the same pre-condition $({*}{*})$, the universal
  quantification on field sequences can be restricted to those
  compatible with field-sensitive information.  More formally, a field
  sequence $\FSEQ{.f_1.f_2.....f_n}$ has to be considered only if it
  is possible that a path from $y$ traversing exactly those fields
  ends in a shared location, or, equivalently, if the set $\{
  \fpropositionl{f}, \fpropositionr{f_1}, \fpropositionr{f_2}, .. ,
  \fpropositionr{f_n} \}$ is a \emph{model} of $S_n(x,y)$.  That such
  a set is a model of $S_n(x,y)$ is equivalent to say that the
  field-sensitive information is compatible with the existence of a
  pair of paths $\pi_1$ and $\pi_2$ such that (1) $\pi_1$ starts from
  $x$; (2) $\pi_2$ starts from $y$; (3) $\pi_1$ only traverses $f$;
  (4) $\pi_2$ traverses all and only the fields $f_1, f_2, ... f_n$;
  and (5) both paths end in the same shared location.  Let
  $\FSET{\bar{g}}$ be the set of fields contained in the field
  sequence $\bar{g}$.  Then, the above condition can be written as 
  \[ \SEQSHARE{f}{\bar{g}}{x}{y} \qquad \equiv \qquad
  \bigwedge_{p \in X} p \wedge \bigwedge_{q \notin X} \lnot q \quad
  \models \quad S_n(x,y) \] where $X = \{ \fpropositionl{f} \} \cup \{
  \fpropositionr{g} ~|~ g \in \FSET{\bar{g}} \}$, and $q \notin X$ means
  that $q$ is any proposition $\fpropositionl{h}$ or $\fpropositionr{h}$
  (for some field $h$) not included in $X$.
\end{itemize}

The refined \GRULENAME{fassign} rule, called \GRULENAME{fassign2},
comes to be

\[
\GRULE{
  \begin{array}{c@{~}r@{~}l}
    ({*}) & \forall y \in \DALIAS{x}. & \forall
    \state_1 \models \PRED,\state_2 \models
    \PRED. \\ 
    & & ~~\AGREEM(\state_1,\state_2)
    \Rightarrow \\
    & & ~~\AGREEM'(\state_1[y.f \leftarrow
      \SEMANTICS{e}(\state_1)],\state_2[y.f \leftarrow
      \SEMANTICS{e}(\state_2)]) \\
    ({*}{*}) & \forall y \in \FSHARE{f}{x}. &
    \forall \bar{g}.~\SEQSHARE{f}{\bar{g}}{x}{y} \Rightarrow
    (\forall
    \state_1 \models \PRED,\state_2 \models
    \PRED. \\
    & & ~~\AGREEM(\state_1,\state_2)
    \Rightarrow \\
    & & ~~\AGREEM'(\state_1[y.\bar{g} \leftarrow
      \SEMANTICS{e}(\state_1)],\state_2[y.\bar{g} \leftarrow
      \SEMANTICS{e}(\state_2)])) \\
    ({*}{*}{*}) & \forall y \notin \DALIAS{x}. & \AGREEM(y)
    \sqsubseteq \AGREEM'(y)
  \end{array}
}{
  \TRIPLEB{\AGREEM}{\PRED}{x.f\mbox{\CODE{:=}}e}{\AGREEM'}
}{fassign2}
\]

\subsection{Comparison with related algorithms}
\label{sec:comparisonWithRelatedAlgorithms}

The Tukra Abstract Program Slicing Tool
\cite{DBLP:conf/icsoft/HalderC12} implements the computation of a
Dependence Condition Graph \cite{DBLP:journals/scp/HalderC13} for
performing abstract slicing.  To use a Program Dependence Graph is
somehow alternative to the computation of agreements.  As far as the
author make it possible to understand, Tukra only deals with numerical
values, and it is not clear which properties are supported (i.e.,
which is the ``library'' of ucos mentioned in Section
\ref{sec:agreementsAndUcos}).

Moreover, the authors of that tool point out that their approach is
able to exclude some dependencies that were not ruled out in previous
work introducing abstract dependencies \cite{MastroeniZanardini}.
However, they do not consider that a rule system for computing
agreements (essentially, the \GSYSTEM described in the present paper)
was introduced \cite{Zanardini} before Tukra was developed, and does
not suffer from the limitations they describe.

\section{Related Work}
\label{sec:relatedWork}
\COMMENT{ The standard approach for characterizing slices and the
  corresponding relation \emph{being slice of} is based on the notion
  of program dependency graph \cite{horPR89,reps91}, as described by
  Binkley and Gallagher \cite{BinGalla96}.  \emph{Program Dependency
    Graphs} (PDGs) can be built out of programs, and describe how data
  propagate at runtime.  In program slicing, we could be interested in
  computing dependencies on statements: $s''$ depends on $s'$ if some
  variables which are used inside $s''$ are defined inside $s'$, and
  definitions in $s'$ reach $s''$ through at least one possible
  execution path.  Also, $s$ depends \emph{implicitly} on an
  if-statement or a loop if its execution depends on the boolean
  guard.
  \begin{example}
    \label{example:statementDependency}
    Consider the program below and the derived PDG (edges which can be
    obtained by transitivity are omitted):
    \begin{figure}[h]
      \center{ 
\begin{picture}(0,0)%
\includegraphics{depGraph.pdf}%
\end{picture}%
\setlength{\unitlength}{3947sp}%
\begingroup\makeatletter\ifx\SetFigFont\undefined%
\gdef\SetFigFont#1#2#3#4#5{%
  \reset@font\fontsize{#1}{#2pt}%
  \fontfamily{#3}\fontseries{#4}\fontshape{#5}%
  \selectfont}%
\fi\endgroup%
\begin{picture}(3918,1557)(1651,-1221)
\put(2540,-1185){\makebox(0,0)[lb]{\smash{{\SetFigFont{8}{9.6}{\familydefault}{\mddefault}{\updefault}{\color[rgb]{0,0,0}\IMPASSIGN{y}{v+1}}%
}}}}
\put(4593,252){\makebox(0,0)[lb]{\smash{{\SetFigFont{8}{9.6}{\familydefault}{\mddefault}{\updefault}{\color[rgb]{0,0,0}$s_1$}%
}}}}
\put(5277, 47){\makebox(0,0)[lb]{\smash{{\SetFigFont{8}{9.6}{\familydefault}{\mddefault}{\updefault}{\color[rgb]{0,0,0}$s_6$}%
}}}}
\put(5140,-227){\makebox(0,0)[lb]{\smash{{\SetFigFont{8}{9.6}{\familydefault}{\mddefault}{\updefault}{\color[rgb]{0,0,0}$s_7$}%
}}}}
\put(4388,-227){\makebox(0,0)[lb]{\smash{{\SetFigFont{8}{9.6}{\familydefault}{\mddefault}{\updefault}{\color[rgb]{0,0,0}$s_3$}%
}}}}
\put(4388,-637){\makebox(0,0)[lb]{\smash{{\SetFigFont{8}{9.6}{\familydefault}{\mddefault}{\updefault}{\color[rgb]{0,0,0}$s_4$}%
}}}}
\put(4388,-1048){\makebox(0,0)[lb]{\smash{{\SetFigFont{8}{9.6}{\familydefault}{\mddefault}{\updefault}{\color[rgb]{0,0,0}$s_5$}%
}}}}
\put(5140,-1048){\makebox(0,0)[lb]{\smash{{\SetFigFont{8}{9.6}{\familydefault}{\mddefault}{\updefault}{\color[rgb]{0,0,0}$s_8$}%
}}}}
\put(4046,-158){\makebox(0,0)[lb]{\smash{{\SetFigFont{8}{9.6}{\familydefault}{\mddefault}{\updefault}{\color[rgb]{0,0,0}$s_2$}%
}}}}
\put(1925,-364){\makebox(0,0)[lb]{\smash{{\SetFigFont{8}{9.6}{\familydefault}{\mddefault}{\updefault}{\color[rgb]{0,0,0}\IMPASSIGN{w}{3}}%
}}}}
\put(1651,-364){\makebox(0,0)[lb]{\smash{{\SetFigFont{8}{9.6}{\familydefault}{\mddefault}{\updefault}{\color[rgb]{0,0,0}$s_2$}%
}}}}
\put(3156,-364){\makebox(0,0)[lb]{\smash{{\SetFigFont{8}{9.6}{\familydefault}{\mddefault}{\updefault}{\color[rgb]{0,0,0}$s_6$}%
}}}}
\put(3430,-364){\makebox(0,0)[lb]{\smash{{\SetFigFont{8}{9.6}{\familydefault}{\mddefault}{\updefault}{\color[rgb]{0,0,0}\IMPASSIGN{z}{3}}%
}}}}
\put(1651,-500){\makebox(0,0)[lb]{\smash{{\SetFigFont{8}{9.6}{\familydefault}{\mddefault}{\updefault}{\color[rgb]{0,0,0}$s_3$}%
}}}}
\put(1651,-774){\makebox(0,0)[lb]{\smash{{\SetFigFont{8}{9.6}{\familydefault}{\mddefault}{\updefault}{\color[rgb]{0,0,0}$s_5$}%
}}}}
\put(1651,-637){\makebox(0,0)[lb]{\smash{{\SetFigFont{8}{9.6}{\familydefault}{\mddefault}{\updefault}{\color[rgb]{0,0,0}$s_4$}%
}}}}
\put(1925,-774){\makebox(0,0)[lb]{\smash{{\SetFigFont{8}{9.6}{\familydefault}{\mddefault}{\updefault}{\color[rgb]{0,0,0}\IMPASSIGN{v}{z+w}}%
}}}}
\put(3430,-500){\makebox(0,0)[lb]{\smash{{\SetFigFont{8}{9.6}{\familydefault}{\mddefault}{\updefault}{\color[rgb]{0,0,0}\IMPASSIGN{v}{4}}%
}}}}
\put(1925,-637){\makebox(0,0)[lb]{\smash{{\SetFigFont{8}{9.6}{\familydefault}{\mddefault}{\updefault}{\color[rgb]{0,0,0}\IMPASSIGN{w}{z+4}}%
}}}}
\put(1925,-500){\makebox(0,0)[lb]{\smash{{\SetFigFont{8}{9.6}{\familydefault}{\mddefault}{\updefault}{\color[rgb]{0,0,0}\IMPASSIGN{z}{1}}%
}}}}
\put(3156,-500){\makebox(0,0)[lb]{\smash{{\SetFigFont{8}{9.6}{\familydefault}{\mddefault}{\updefault}{\color[rgb]{0,0,0}$s_7$}%
}}}}
\put(2267, 47){\makebox(0,0)[lb]{\smash{{\SetFigFont{8}{9.6}{\familydefault}{\mddefault}{\updefault}{\color[rgb]{0,0,0}$s_1$}%
}}}}
\put(2540, 47){\makebox(0,0)[lb]{\smash{{\SetFigFont{8}{9.6}{\familydefault}{\mddefault}{\updefault}{\color[rgb]{0,0,0}$(x \leq y)?$}%
}}}}
\put(2267,-1185){\makebox(0,0)[lb]{\smash{{\SetFigFont{8}{9.6}{\familydefault}{\mddefault}{\updefault}{\color[rgb]{0,0,0}$s_8$}%
}}}}
\end{picture}%
}
    \end{figure}
    $s_8$ depends on both $s_5$ and $s_7$ (and, by transitivity, $s_1$)
    since $v$ is not known statically when entering $s_8$.  On the
    other hand, there is \emph{no} dependency of $s_8$ on either (i)
    $s_6$, since $z$ is not used in $s_8$; or (ii) $s_2$, since $w$ is
    always redefined before $s_8$.  The dependency of $s_7$ on $s_1$
    is implicit since $4$ does not depend on $x$ nor $y$, but $s_7$ is
    executed conditionally on $s_1$.
  \end{example}
  There exist several techniques for building and analyzing
  dependency graphs, allowing to study how information propagates
  among statements.  Usually, the basic rules for detecting a
  dependency between $s_{1}$ and $s_{2}$ are
\begin{itemize}
\item \emph{Control dependency edges:} $s_{1}$ represents a control
  predicate and $s_{2}$ represents a program component immediately
  nested within $s_{1}$;
\item \emph{Flow dependency edges:} $s_{1}$ defines a variable $x$
  which is used in $s_{2}$, i.e., $x \in
  \DEFSET{s_1}\cap\REFSET{s_{2}}$, and there is a path from $s_{1}$ to
  $s_{2}$ where $x$ is not redefined..
\end{itemize}
In principle, a statement $s_1$ belongs to a slice if the slicing
criterion is interested in some variables $X$ at $s_2$, $s_1$ defines
some $x \in X$, and there is a path in the PDG (i.e., a dependency)
from $s_1$ to $s_2$.}

The formal framework referred to in this paper \cite{AForm} is not the
only attempt to provide a unified mathematical framework from program
slicing.  In \cite{WardZedan}, the authors have precisely this aim.
In this work, the authors unify different approaches to program
slicing by defining a particular semantic relation, based on the
weakest precondition semantics, called \emph{semirefinement} such
that, given a program $\prog$, the possible slices are all the
programs that are semirefinements of $\prog$.  In this framework,
different forms of slicing are modeled as program transformations.
Hence, a program $\progq$ is a slice of $\prog$ if the transformation
of $\progq$ (corresponding to the particular form of slicing to
compute) is a semirefinement of the same transformation of $\prog$.
This approach is extremely interesting, but does not really allow to
compare the different forms of slicing, feature that we consider
fundamental for introducing the new abstract forms of slicing as
generalizations of the existing ones.  It may surely deserve further
research to study whether also abstract slicing could be modeled in
this framework.

As far as the relation between slicing and dependencies is concerned,
there are at least two works that are related with our ideas in
different ways.  One of the first works aiming at formalizing a
semantic approach to dependency, leading to a semantic computation of
slicing, is the \emph{information-flow logic} by Amtoft and Banerjee
\cite{AB07}.  This logic allows us to formally derive, by structural
induction, the set of all the \emph{independencies} among variables.
In Figure \ref{AB04-fig}, the original notation proposed by the
authors is used, where $[x\ltimes y]$ is to be read as ``the current
value of $x$ is independent of the initial value of $y$'', and holds
if, for each pair of \emph{initial} states which agree on all the
variables but $y$, the corresponding \emph{current} states agree on
$x$.  Hence, $T^{\#}$ stands for sets of independencies, and $G$ is a
set of variables representing the \emph{context}, i.e., (a superset
of) the variables on which at least one test surrounding the
statements depends on.

\begin{figure}[h]
  \begin{center}
    \framebox{ $\begin{array}{l}
        G\vdash\{T_{0}^{\#}\}\ x:=e\ \{T^{\#}\} \\
        \qquad\mbox{if}\ \forall [y\ltimes w]\in T^{\#}.\ (x\neq y\ \Ra\ [y\ltimes w]\in T_{0}^{\#})\\
        \qquad(x=y\ \Ra\ (w\notin G \wedge\ \forall {z \in \VARS{e}}.\:[z\ltimes w]\in T_{0}^{\#}))\\
        \\
        \irule{G_{0}\vdash\{T_{0}^{\#}\}s_{1}\{T^{\#}\}\ \
          G_{0}\vdash\{T_{0}^{\#}\}s_{2}\{T^{\#}\}}
        {G\vdash\{T_{0}^{\#}\}\ifc\ e\ \thenc\ s_{1}\ \elsec\ s_{2}\{T^{\#}\}}\\
        \qquad\mbox{if}\ G\subseteq G_{0}\ \wedge\ (w\notin G_{0}\ \Ra\
        \forall {x \in \VARS{e}}.\:[x\ltimes w]\in
        T_{0}^{\#})\\
        \\
        \irule{G_{0}\vdash\{T^{\#}\}s\{T^{\#}\}}
        {G\vdash\{T^{\#}\}\while\ e\ \dow\ s\{T^{\#}\}}\\
        \qquad\mbox{if}\ G\subseteq G_{0}\ \wedge\ (w\notin G_{0}\ \Ra\
        \forall {x \in \VARS{e}}.\:[x\ltimes w]\in T^{\#})
      \end{array}$
    }
  \end{center}
  \caption{A fragment of the independency logic}
  \label{AB04-fig}
\end{figure}

\noindent
In our aim of defining slicing in terms of dependencies, the first
thing we have to observe in this logic is that it always computes
(in)dependencies from the \emph{initial} values of variables. This
makes its use for slicing not so straightforward, since it loses the
\emph{local} dependency between statements.  Consider for example the
program fragment $\prog~=~\mbox{\CODE{w:=x+1; y:=w+2; z:=y+3}}$.  At
the end of this program, we know that \zz only depends on the initial
value of \xx, but, by using the logic in Fig.~\ref{AB04-fig}, we lose
the trace of (in)dependencies which, in this case, would involve all
the three assignments.  As a matter of fact, this logic is more
suitable for forward slicing, which is the one considered by the
authors \cite{AB07}, since it fixes the criterion \emph{on the input}.
In the trivial example given above, if we consider as criterion the
input of \xx, then we obtain that all the statements depend on \xx.
Therefore, any slice of the original program contains all statements
\cite{AB07}.
In the logic, more explicitly, this notion of dependency is used for
characterizing the set of independencies holding during the execution
of a program.

Another, more recent, approach to slicing by means of dependencies is
\cite{Danicic11}.  In this work, the authors propose new definitions
of control dependencies: non-termination sensitive and insensitive.
These new semantic notions of dependencies are then used for computing
more precise standard slices.  It could be surely interesting to study
the semantic relation between their notion of dependencies and the
ones we propose in this paper.

Finally, a related algorithm for computing abstract slices has been
already discussed in Section
\ref{sec:comparisonWithRelatedAlgorithms}.  It is necessary to point
out that the agreement-based approach to abstract slicing
\cite{Zanardini} was introduced before the Tukra tool.

\section{Conclusion and Future Work}
The present paper formally defines the notion of abstract program
slicing, a general form of slicing where properties of data are
observed instead of their exact value.  A formal framework is
introduced where the different forms of abstract slicing can be
compared; moreover, traditional, non-abstract forms of slicing are
also included in the framework, allowing to prove that non-abstract
slicing is a special case of abstract slicing where no abstraction on
data is performed.

Algorithms for computing abstract dependencies and program slices are
given.  Future work includes an implementation of this analysis for an
Object-Oriented programming language where properties may refer either
to numerical or reference values (to the best of our knowledge,
existing tools only deal with integer variables).  On the other hand,
we observed that the provided notion of abstract dependency is not
suitable for slicing computation by using PDGs.  We believe that it is
possible to further generalize the notion of abstract dependencies
allowing to characterize a recursive algorithm able to track backwards
both the variables that affect the criterion, and the abstract
properties of these variables affecting the abstract criterion.

Another interesting line of research is to understand how other
approaches to slicing can be extended in order to include abstract
slicing.  As noted before, it would be interesting to study whether it
is possible to model abstract slicing as a program transformation,
allowing us to define also abstract slicing in term of
semirefinement \cite{WardZedan}.  Another, more algorithmic,
interesting approach is the one proposed in \cite{Barros10}, where
weakest precondition and strongest postcondition semantics are
combined in a new more precise algorithm for standard slicing.  It
could be very interesting to understand whether this approach could be
extended in order to cope also with the computation of abstract forms
of slicing.

\bibliographystyle{acmsmall} 

\section{APPENDIX: THE FORMAL SLICING FRAMEWORK}
In this section, we first provide a better intuition of the differences between the forms of slicing introduced in Section~\ref{section:Background} by means of examples and then we recall the main notions introduced in \cite{AForm,TheoFoun} that has been generalized in this paper in the abstract form.

\subsection*{Different forms of slicing: some examples}
\begin{example}\label{Ex:crit}
  Consider the program on the left in Figure~\ref{Exfig:crit}.
  Suppose that the execution start with an initial value $2$ for
  \CODE{n} (written $\mbox{\nn}\la 2$).  States are denoted as
  $(m^k,\memory)$, where $m$ is the program point of the executed statement, $k$ is its
  current iteration (i.e., the statement at $m$ is being executed for the $k$-th time in the loop unrolling),
  and $\memory$ is the actual memory, represented by a list of
  pairings $x\la v$). In the picture, $m^k$ is depicted in the first (fully colored) box, while the memory is depicted in the remaining boxes, one for each variable. A program state that is not executed in a trace is depicted by overwriting a grey cross on each box (program point and variables).
The execution trajectory of the program is the following:
\begin{center}
\includegraphics[scale=.4]{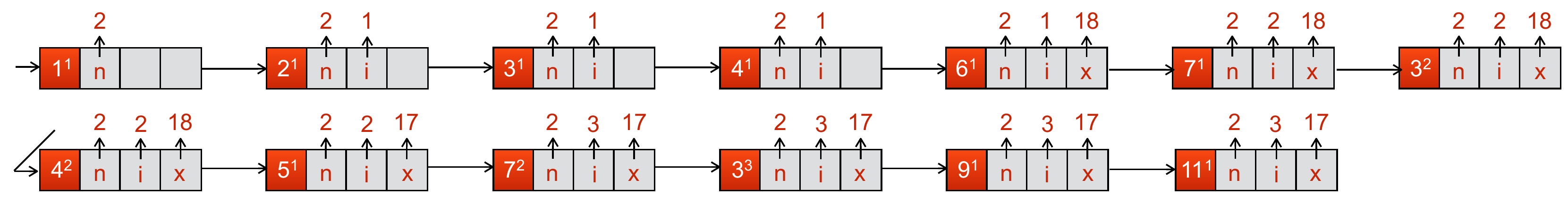}
\end{center}
  
\noindent  
  Consider now the code in the center, whose execution trajectory is the following:
\begin{center}
\includegraphics[scale=.4]{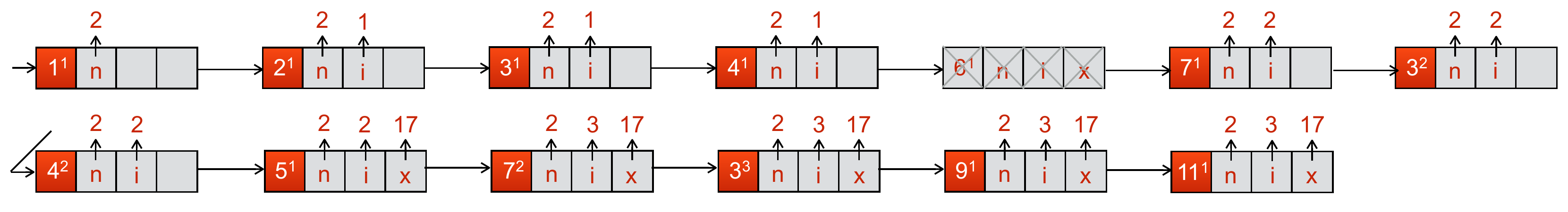}
\end{center}
  We can observe that the semantics of the second program follows
  precisely the same path on the statements which are in both programs
  (the only difference is the execution, in the original program of the statement at point $6$, erased in the "candidate" slice), hence it is a slice according both to the standard and the $\KL$
  form, namely w.r.t.\ $\crit=(\{\mbox{\nn}\la 2\},\{\xx\},\{\tuple{11,\NATURALS}\},\psi)$ for both the possible values of $\psi$.  \\
Suppose now we are interested in an $\IC$ form of slice, variable \xx at the second iteration of the  program point $3$. In this case, the program on the right is not a dynamic slice, since the value of
  \xx in the original program is $18$, while in the candidate slice it is
  undefined. In other words, this program is not slice of the program on the left w.r.t.\ the criterion $\crit=(\{\mbox{\nn}\la 2\},\{\xx\},\{\tuple{3,\{2\}}\},\psi)$ ($\psi$ may be both true or false). \\ 
  Finally, let us consider the execution of the program on the right in Fugure~\ref{Exfig:crit}:
  \begin{center}
\includegraphics[scale=.4]{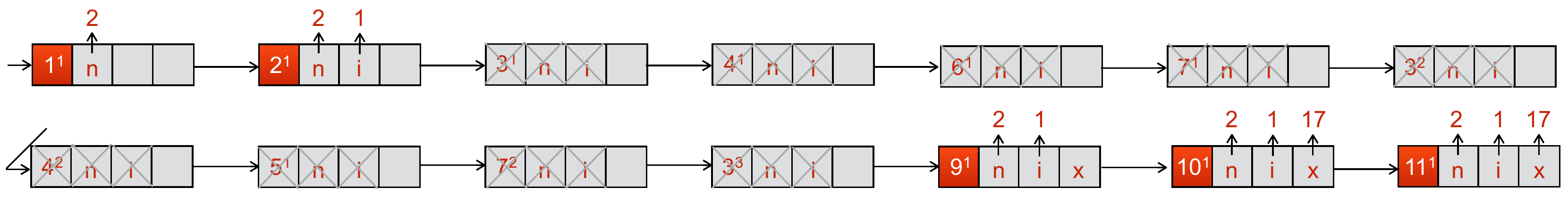}
\end{center}
 This last program is a standard dynamic slice since the
  final value of the variable of interest \xx is the same, but it is
  not a slice in the $\KL$ form, since in this last program the statement at program point $10$ is executed, while in the original program it is not executed. Namely, it is a slice w.r.t.\ the criterion $\crit=(\{\mbox{\nn}\la 2\},\{\xx\},\{\tuple{11,\NATURALS}\},\psi)$ only for $\psi=\false$.  
  \begin{figure}
    \begin{tabular}{c|c|c}
      \begin{lstlisting}
  read(n);
  i := 1;
  while (i <= n) do {
    if (i mod 2 = 0) {
      x := 17; }
    else { x := 18; }
    i := i + 1;       
  }
  if (i = 1) {
    x = 17; }
  write(i,n,x);
      \end{lstlisting}
      &
      \begin{lstlisting}
  read(n);
  i := 1;
  while (i <= n) {
    if (i mod 2 = 0) {
      x := 17; }
    
    i := i + 1;       
  }
  if (i = 1) {
    x = 17; }
  write(i,n,x);
      \end{lstlisting}
      &
      \begin{lstlisting}
  read(n);
  i := 1;
    
  if (i = 1) { 
    x = 17; }
  write(i,n,x);
      \end{lstlisting}
    \end{tabular}
    \caption{Programs of Example~\ref{Ex:crit}}\label{Exfig:crit}
  \end{figure}
\end{example}

The following example shows the difference between standard and $\KL$
forms of {\em static} slicing.

\begin{example}\label{ex:stkl}
  Consider the program $\prog$ on the left of Figure~\ref{fig:stkl}.
  Suppose static slicing is considered, i.e., all the possible initial
  memories are taken into account.  Given an input $v$, the state
  trajectory is:
   \begin{center}
\includegraphics[scale=.4]{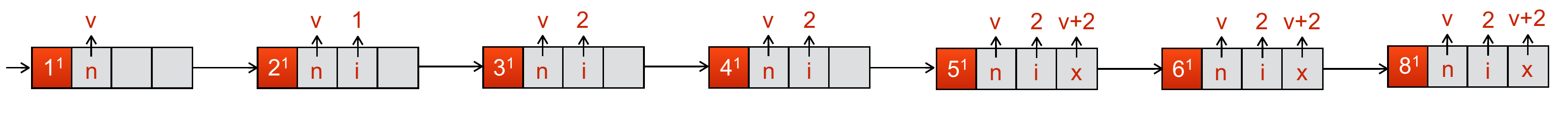}
\end{center}
  Consider now the code on the right: its state trajectory is
  \begin{center}
\includegraphics[scale=.4]{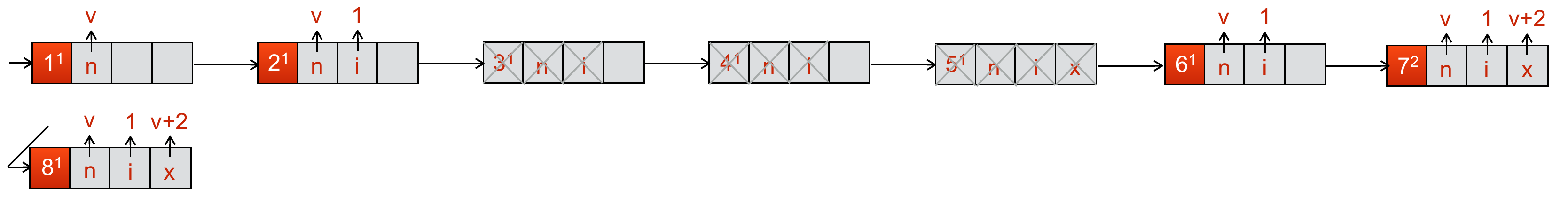}
\end{center}
  \begin{figure}
    \begin{center}
      \begin{tabular}{c|c}
        \begin{lstlisting}
  read(n);
    i := 1;
    i := 2;
    if (i mod 2 = 0) {
       x := i + n;  }
    if (i mod 2 = 1) {
       x := i + n + 1; }
  write(i,n,x);
        \end{lstlisting}
        &
        \begin{lstlisting}
  read(n);
    i := 1;
    
    if (i mod 2 = 1) {
       x := i + n + 1; }
  write(i,n,x);
        \end{lstlisting}
      \end{tabular}
    \end{center}
    \caption{Programs of Example~\ref{ex:stkl}}\label{fig:stkl}
  \end{figure}

  \noindent Consider the standard form of static slicing
  interested in \xx at program point $8^1$, i.e., $\crit=(\{\mbox{\nn}\la \NATURALS\},\{\xx\},\{\tuple{8,\{1\}}\},\false)$.  Then the program
  on the right is a slice of $\prog$ w.r.t.\ $\crit$, since in $8^1$ the value of \xx
  is $v+2$ in both cases.  On the other hand, if we consider the $\KL$
  form, $\crit=(\{\mbox{\nn}\la \NATURALS\},\{\xx\},\{\tuple{8,\{1\}}\},\true)$, then the program is no more a slice
  of $\prog$ since there is a program point, $7^1$, which is not
  reached in $\prog$.
\end{example}

\subsection*{The unified equivalence}

The first step for defining the formal framework is to define an
equivalence relation between programs, determining when a program is a
slice of another.  First, a \emph{restricted memory} is obtained from
a memory by restricting its domain to a set of variables.  More
formally, the restriction of $\memory$ with respect to a set of
variables $\cX$ is defined as $\MEM{\memory}{\cX}$ such that
$(\MEM{\memory}{\cX})(x)$ is equal to $\memory(x)$ if $x\in \cX$, and
undefined otherwise.  This restriction is used to project the trace
semantics only on those points of interest where we have to check the
correspondence between the original program and the candidate slice.

The \emph{trajectory projection} operator modifies a state trajectory
by removing all those states which do not contain occurrences of
program points which are relevant for the slicing criterion.

\begin{mydefinition}[Trajectory Projection \cite{AForm}]
  \label{def:Proj}
  Let $\crit=(\cI,\cX,\cO,\psi)$, and $\cL\subseteq\lnums$ such that $\cL\neq\emptyset$ if $\psi=\true$, $\cL=\emptyset$ otherwise. For any $n \in \lnums$, $k \in \NATURALS$, $\memory \in \memories$, we define the function $\Proj_0$ as:
  \[ \Proj^0_{(\cX,\cO,\cL)}(n^k,\memory) \defi \left\{
  \begin{array}{ll}
    \tuple{n^k, \MEM{\memory}{\cX}} & \mbox{if }(\exists \tuple{n,K} \in \cO.\:k \in K \\ \tuple{n^k,\bot}
      & \mbox{if }(\nexists \tuple{n,K} \in \cO.\:k \in K)\mbox{ and }n\in\cL \\ \varepsilon & \mbox{otherwise}.
  \end{array}
  \right.
  \]
  where $\varepsilon$ is the empty sequence.  The trace projection $\Proj$
  is the extension of $\Proj^0$ to sequences ($\circ$ is sequence
  concatenation):
  \[ \begin{array}{rl}
    \Proj_{(\cX,\cO,\cL)} (\tuple{(n^{k_1}_1, \memory_1), \ldots, 
      (n^{k_l}_l, \memory_l)})  =
     \Proj^0_{(\cX,\cO,\cL)}(n^{k_1}_1, \memory_1) \circ
      \ldots \circ \Proj^0_{(\cX,\cO,\cL)}(n^{k_l}_l, \memory_l)
  \end{array}
  \]
\end{mydefinition}

\noindent $\Proj^0$ takes a state from a state trajectory, and returns
either one pair or an empty sequence $\varepsilon$.  If
$n^k$ is an occurrence of interest, then it returns
$\tuple{(n^k, \MEM{\memory}{\cX})}$.  This means that, at $n$, we
consider exact values of variables in $\cX$.  If $n^k$ is not an
occurrence of interest, but, due
to a $\KL$ form, the projection has to keep trace of a set $\cL$ of executed statements (even if the variables in that point are not of interest), then $\Proj^0$ returns
$\tuple{(n^k, \MEM{\memory}{\emptyset})}$, meaning that we require
the execution of $n$, but we are not interested in the values of
variables in $\cX$.

Trajectory projection allows us to define all the semantic equivalence
relations characterizing on what a program and its slices have to agree due to the chosen criterion..  Given two programs $\prog$ and
$\progq$, we can say that $\progq$ is a slice of $\prog$ if it contains a subset of the original
statements and $\progq$ is \emph{equivalent} to $\prog$ with respect
to the semantic equivalence relation induced by chosen the slicing criterion.

\begin{mydefinition}[\cite{AForm}]
  \label{def:UnifiedEquivalence} Let $\prog$ and $\progq$ be
  executable programs, and $\crit=(\cI,\cX,\cO,\psi)$ be a slicing
  criterion.  Let $\lnums_\prog\subseteq\lnums$ be the set of program
  points of $\prog$, and $\cL=\lnums_\prog\cap\lnums_\progq$ if $\psi=\true$\footnote{Note that,
  when $\lnums_\progq \subseteq \lnums_\prog$, as we suppose in this paper, then $\cL=\lnums_\progq$.  We provide the general definition since
  the original definition of dynamic slicing \cite{KorelLaski}
  does not require that all the line of $\progq$ are included in
  $\cL$; however, our choice follows the paths taken in the original
  framework \cite{TheoFoun}.}
  ($\cL = \emptyset$ if $\psi=\false$).  Then $\prog$
  is \emph{equivalent} to $\progq$ w.r.t.\ $\crit$ if and only
  if \[ \forall\memory\in \cI.~~ \Proj_{(\cX, \cO, \cL)}(\trace_\prog^\memory)
  = \Proj_{(\cX,\cO,\cL)}(\trace_\progq^\memory)\] The function $\cE$ maps any criterion $\crit$ to the r
 to the corresponding semantic
  equivalence relation, hence, in this case, we write $\tuple{\prog,\progq}\in \cE(\crit)$.
\end{mydefinition}

\begin{example}
  \label{ex:formSl}
  Consider the Program $\prog$ in the left of
  Figure~\ref{fig:exsl}; let the input for \nn be $2$.
 \noindent
  Suppose we want to compute a non iteration-count $\KL$ form of
  dynamic slicing, i.e., $\crit=(\{\mbox{\nn}\la \{2\}\},\{\mbox{\ii},\mbox{\ss}\},\tuple{8,\NATURALS},\true)$. Namely, the variables of interest are
  \ii and \ss, which are observed at the program point $8$ each time it is reached, and
   the slice has not to execute statements not executed in the original program.  The program on the right of Figure~\ref{ex:formSl} is a slice w.r.t.\ $\crit$.
    In Figure~\ref{ExTrace} we have the execution trajectory of the original program (on the top), the execution trajectory of the candidate slice (in the middle) and the (same) projection of the two trajectories due to the chosen criterion (on the bottom).
  \begin{figure}
    \begin{center}
      \includegraphics[scale=.4]{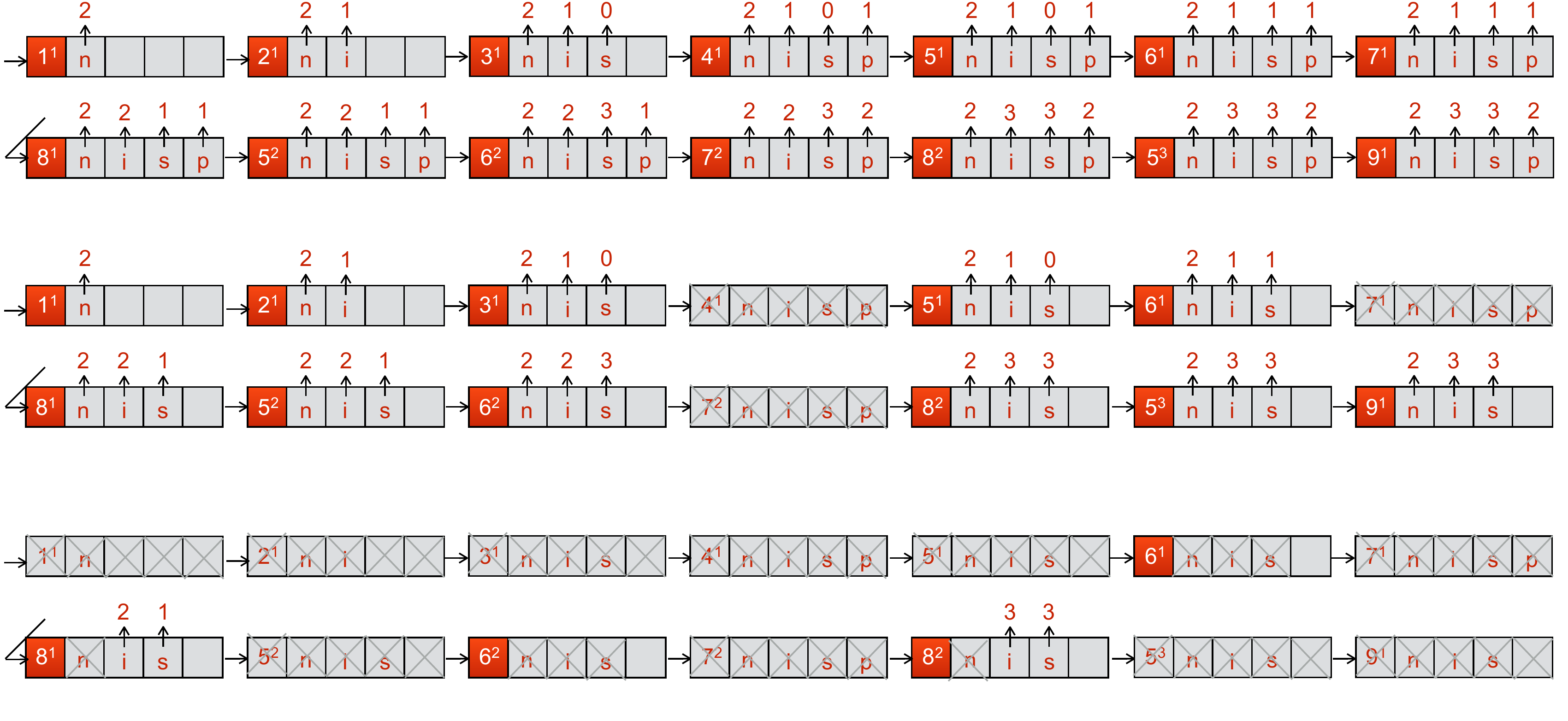}
    \end{center}
    \caption{Execution trace of $P$ in Example~\ref{ex:formSl} and of the static slice and of their projection.}\label{ExTrace}
  \end{figure}
\end{example}

The formal framework proposed in \cite{AForm,TheoFoun} represents
different forms of slicing by means of a $(\sqsubseteq,\cE)$ pair: a
\emph{syntactic preorder}, and a \emph{function from slicing criteria
  to semantic equivalences}.  The \emph{preorder} fixes a syntactic
relation between the program and its slices.  In traditional slicing,
 slices are obtained from the
original program by removing zero or more statements.  This preorder
is called \emph{traditional syntactic ordering}, simply denoted by
$\sqsubseteq$, and it is defined as follows: $\progq \sqsubseteq \prog
\Leftrightarrow \lnums_\progq \subseteq \lnums_\prog$
The second component $\cE$ fixes the semantic constraints that a
subprogram has to respect in order to be a slice of the original
program.  As we have seen before, the equivalence relation is uniquely
determined by the chosen slicing criterion determining also a specific form of slicing.
This way, Binkley \etal are able to characterize eight forms of
non-$\SIM$ slicing, and twelve forms of $\SIM$ slicing.

Finally, this framework is used to formally compare the different notions of
slicing.  First of all, it is defined a binary relation on slicing criteria
$\rightarrow$ \cite{AForm}: Let $\crit^1=\tuple{\cI^1,\cX^1,\cO^1,\psi^1}$ and $\crit^2=\tuple{\cI^2,\cX^2,\cO^2,\psi^2}$
\begin{equation}\label{eq:defrelcrit}
\crit^1 \rightarrow \crit^2\qquad \mbox{iff}\ 
\qquad \cI^1\subseteq\cI^2,\ \cX^1\subseteq\cX^2, \cO^1\subseteq\cO^2, \psi^1\Ra\psi^2
\end{equation}
At this point, we say that a form of
slicing $(\sqsubseteq, \cE^1)$ is \emph{weaker than} $(\sqsubseteq,
\cE^2)$ w.r.t. \small$\rightarrow$ \normalsize iff $\forall \crit^1,
\crit^2$, slicing criteria such that $\crit^1 \rightarrow \crit^2$,
and $\forall \prog, \progq$, if $\progq$ is a slice of $\prog$
w.r.t.\ $(\sqsubseteq, \mathcal{E}^2(\crit^2))$, then $\progq$ is a
slice of $\prog$ w.r.t.\ $(\sqsubseteq, \mathcal{E}^1(\crit^1))$ as
well. In this case we say that $(\sqsubseteq, \mathcal{E}^1)$ subsumes $(\sqsubseteq, \mathcal{E}^2)$.
\begin{figure}[tbp]
  \centering
  \includegraphics[scale=.8,viewport=4in 6in 5.6in 9.2in]{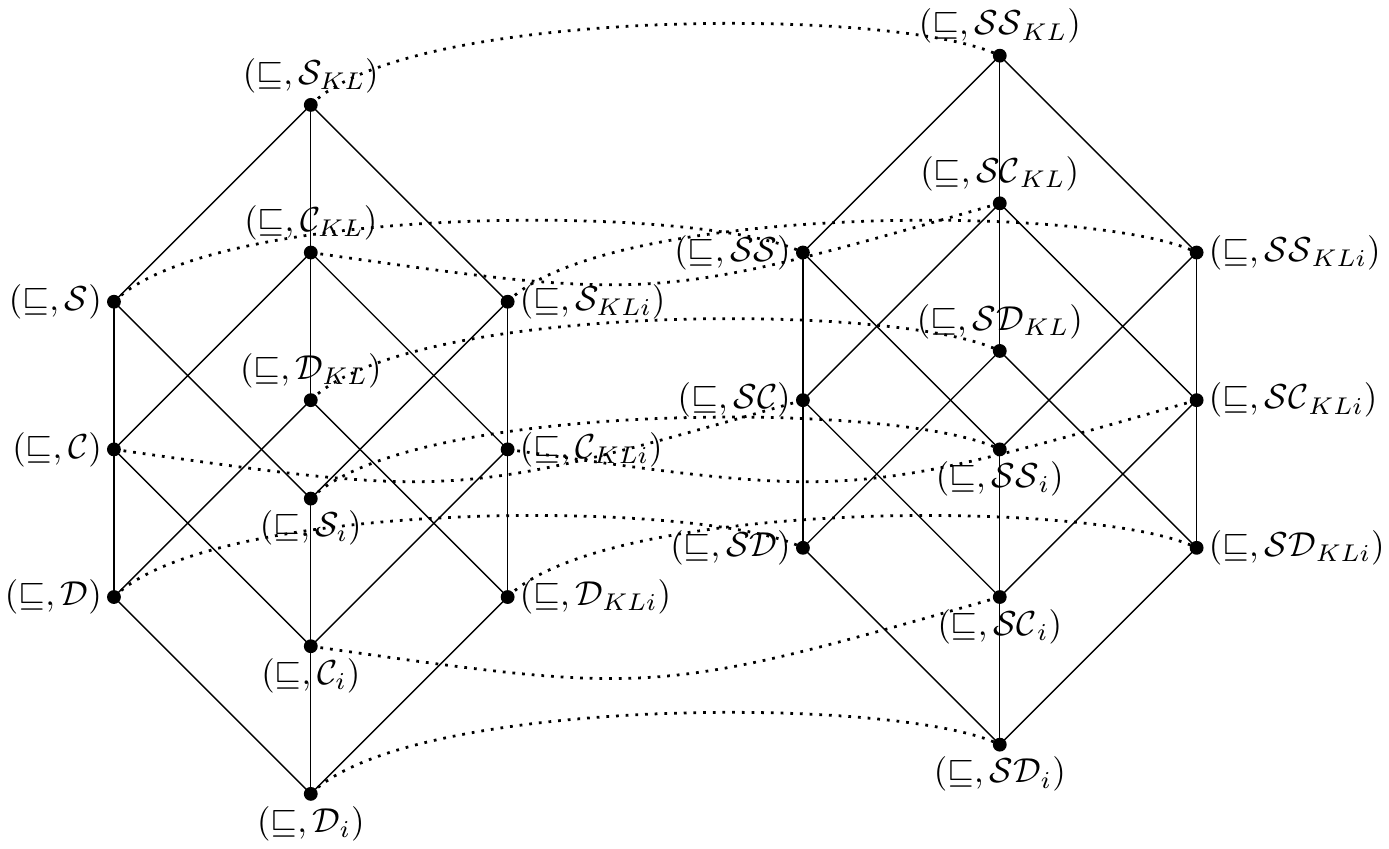}
  \caption{Given two forms $A$ and $B$, both (non-)$\SIM$., $A$ is
    weaker than $B$ if $A$ is connected to $B$ by a solid line and it
    is below $B$. If $A$ is non-$\SIM$. and $B$ is $\SIM$., then $A$
    is weaker than $B$ if $A$ is connected to $B$ by a dotted line and
    it is to the left of $B$.\label{fig:Ret1}}
\end{figure}
Following this definition, Binkley et al.~show that all forms of
slicing introduced in~\cite{AForm} are comparable in the way shown in
Figure~\ref{fig:Ret1}, where the symbols $\cS$, $\cC$, $\cD$,
$\cS\cS$, $\cS\cC$ and $\cS\cD$ represent static, conditioned,
dynamic, static $\SIM$, conditioned $\SIM$ and dynamic $\SIM$ types of
slicing, respectively.  Subscripts $i$, $\KL$ and $\KLi$ represent
$\IC$, $\KL$ and $\KLi$ forms of slicing, respectively; the absence of
subscripts denotes the standard forms of slicing.  In
Figure~\ref{fig:Ret1} we explicitly provide both the hierarchy
concerning $\SIM$ and non-$\SIM$ forms of conditioned slicing
constructed in~\cite{AForm}.

\label{appendix}

\end{document}